\numberwithin{equation}{section}
\renewcommand{\c}{\cdot}
\newcommand{\pr}{\partial}
\def\ub{{\underline{u}}}
\def\cb{\mathbf{c}}
\newcommand{\x}{\mathbf{x}}
\newcommand{\mo}{\mathcal{O}}
\newcommand{\omb}{\underline{\omega}}
\newcommand{\etab}{\underline{\eta}}
\def\sld{\slashed{d}}
\renewcommand{\div}{\sdiv}
\def\Rl{R_{l}}
\def\Rr{R_{r}}
\def\NNN{\mathbb{N}}
\def\Fb{{\underline{F}}}
\def\slgc{\widecheck{\slg}}
\def\ins{\slashed{\in}}
\def\Des{\slashed{\De}}
\def\Nb{\underline{N}}
\def\UU{\mathcal{U}}
\def\slg{\slashed{g}}
\def\y{\mathbf{y}}
\def\sdivs{\slashed{\sdiv}}
\def\curls{\slashed{\curl}}
\def\cuvss{{H_{u_0}^{(\de,\ub)}}}
\def\ucuvss{{\Hb_\de^{(u_0,u)}}}
\def\af{a^\frac{1}{2}}
\def\RRR{\mathbb{R}}
\def\Rk{\mathfrak{R}}
\def\Rkb{{\underline{\Rk}}}
\def\Ok{\mathfrak{O}}
\def\Hb{\underline{H}}
\def\aa{{\underline{\a}}}
\def\sk{\mathfrak{s}}
\def\th{\theta}
\def\nabs{\slashed{\nab}}
\def\slg{{\slashed{g}}}
\def\slgo{\overset{\circ}{\slg}}
\def\afd{a^{-\frac{1}{2}}}
\def\trcht{\widetilde{\trch}}
\def\trchbt{\widetilde{\trchb}}
\newcommand{\mr}{\mathcal{R}}
\newcommand{\ur}{\underline{\mr}}
\newcommand{\M}{\mathcal{M}}
\newcommand{\D}{\mathbf{D}}
\newcommand{\cuv}{{H_u^{(0,\ub)}}}
\newcommand{\ucuv}{{\Hb_\ub^{(u_\infty,u)}}}
\newcommand{\cuvs}{{H_u^{(\de,\ub)}}}
\newcommand{\ucuvs}{{\Hb_\ub^{(u_0,u)}}}
\newcommand{\dk}{{\mathfrak{d}}}
\newcommand{\g}{\mathbf{g}}
\newcommand{\R}{{\mathbf{R}}}
\newcommand{\K}{\mathbf{K}}
\def\ep{\varepsilon}
\def\eps{\epsilon}
\def\hot{\widehat{\otimes}}
\def\Lb{\underline{L}}
\def\II{\mathcal{I}}
\def\la{\lambda}
\def\MM{\mathcal{M}}
\def\E{\mathbf{E}}
\def\P{\mathbf{P}}
\def\C{\mathbf{C}}
\def\J{\mathbf{J}}
\def\Q{\mathbf{Q}}
\def\V{\mathbf{V}}
\def\chib{{\underline{\chi}}}
\def\om{\omega}
\def\ze{\zeta}
\def\Om{\Omega}
\def\bb{{\underline{\b}}}
\def\Si{\Sigma}
\def\si{\sigma}
\def\mub{{\underline{\mu}}}
\def\dkb{{\underline{\slashed{\dk}}}}
\def\dkbb{\slashed{\dk}}
\def\ga{\gamma}
\def\Ga{\Gamma}
\def\xib{\underline{\xi}}
\def\a{\alpha}
\def\b{\beta}
\def\hch{\widehat{\chi}}
\def\hchb{\widehat{\chib}}
\def\trch{\tr\chi}
\def\trchb{\tr\chib}
\def\trchc{\widecheck{\trch}}
\def\trchbc{\widecheck{\trchb}}
\def\de{\delta}
\def\De{\Delta}
\def\nab{\nabla}
\def\ov{\overline}
\def\bbb{\underline{b}}
\def\Gab{\Ga_b}
\def\Gag{\Ga_g}
\def\les{\lesssim}
\def\RR{\mr}
\def\slD{\slashed{D}}
\newcommand{\blue}{\textcolor{blue}}
\DeclareMathOperator{\curl}{curl}
\DeclareMathOperator{\grad}{grad}
\DeclareMathOperator{\supp}{supp}
\DeclareMathOperator{\tr}{tr}
\DeclareMathOperator{\sdiv}{div}
\def\bdiv{\mathbf{Div}}
\def\Ric{\mathbf{Ric}}
\def\gt{\widetilde{g}}
\def\kt{\widetilde{k}}
\def\etabf{\bm{\eta}}
\def\gBL{g_{BL}}
\newtheorem{thm}{Theorem}[section]
\newtheorem{prop}[thm]{Proposition}
\newtheorem{lem}[thm]{Lemma}
\newtheorem{cor}[thm]{Corollary}
\newtheorem{rk}[thm]{Remark}
\newtheorem{df}[thm]{Definition}
\newtheorem{que}[thm]{Question}
\newcommand{\red}{\textcolor{red}}
\title{Formation of multiple Black Holes from Cauchy data}
\author{Dawei Shen and Jingbo Wan}
\begin{document}
\maketitle
\begin{abstract}
We construct a family of asymptotically flat Cauchy initial data for the Einstein vacuum equations that contain no trapped surfaces, yet whose future development admits multiple causally independent trapped surfaces. Assuming the weak cosmic censorship conjecture, this implies the formation of multiple black holes in finite time. The initial data are obtained by surgically modifying a vacuum geometrostatic manifold equipped with the Brill-Lindquist metric: the data agree exactly with the Brill-Lindquist metric outside a collection of balls centered at its multiple poles, and each ball is replaced by a constant-time slice of a well-prepared dynamical spacetime. The construction is based on Christodoulou's short-pulse framework, a stability analysis in the finite future of the short-pulse region, the geometry of geometrostatic manifolds with small mass and large separation, and the obstruction-free annular gluing method developed by Mao-Oh-Tao. The absence of trapped surfaces in the initial data is verified via a standard mean curvature comparison argument.

{\centering\subsubsection*{\small Keywords}}
\small\noindent short-pulse method; Minkowski stability; obstruction-free gluing; Brill-Lindquist metric; $N$--body problem.
\end{abstract}
\tableofcontents
\section{Introduction}
\subsection{Einstein vacuum constraint equations}
Let $(\Si,g)$ be a $3$--dimensional Riemannian manifold and $k$ be a symmetric 2-tensor on $\Si$. The \emph{Einstein vacuum constraint equations} take the form:
\begin{align}
    \begin{split}\label{constrainteq}
        R(g)+(\tr_g k)^2 - |k|^2_g &= 0,\\
        \sdiv_g k - \nab \tr_g k &= 0,
    \end{split}
\end{align}
where $\nab$ denotes the Levi-Civita connection of $g$, and $R(g)$ its scalar curvature. By the local existence theorem of Choquet-Bruhat and Choquet-Bruhat-Geroch \cite{cb,cbg}, any triplet $(\Si, g, k)$ satisfying \eqref{constrainteq} gives rise to a unique, maximal global hyperbolic development $(\M, \g)$ solving the Einstein vacuum equations:
\begin{equation}\label{EVE}
    \Ric(\g)_{\mu\nu} = 0,
\end{equation}
in which $(\Si,g)$ embeds isometrically, with $k$ as its second fundamental form.

The global behavior of $(\M,\g)$ depends sensitively on the nature of the initial data. For small perturbations of Minkowski space, the groundbreaking work of Christodoulou and Klainerman \cite{Ch-Kl} proved that all future-directed causal geodesics are complete, implying the absence of singularities. In contrast, for large data, gravitational collapse can lead to the formation of \emph{trapped surfaces}, i.e. spacelike $2$--surfaces whose outgoing and incoming null expansions are both negative.

A key result in this context is the celebrated incompleteness theorem of Penrose:
\begin{thm}[Penrose \cite{Penrose}]\label{Penrosesingularity}
Let $(\M, \g)$ be a spacetime solving \eqref{EVE}, with a non-compact Cauchy hypersurface. If $\M$ contains a compact trapped surface, then $(\M, \g)$ is future causally geodesically incomplete.
\end{thm}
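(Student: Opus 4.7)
The plan is to argue by contradiction: assume every future-directed causal geodesic in $(\M,\g)$ is complete, and derive an incompatibility between the compactness of the trapped surface $T$ and the non-compactness of the Cauchy hypersurface $\Si$. The central object of study is the future causal boundary $\pr J^+(T)\subset\M$, which I will show must simultaneously be compact (forced by focusing of the null geodesic congruences normal to $T$) and admit a continuous image that is an open subset of $\Si$ (forced by global hyperbolicity). These two features cannot coexist when $\Si$ is a connected non-compact manifold.

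First I would analyze the two future-directed null geodesic congruences normal to $T$. By the trapped condition, the corresponding null expansions $\th$ are strictly negative on the compact $T$, hence bounded above by some $-\th_0<0$. The Raychaudhuri equation, combined with the vacuum condition $\Ric(\g)=0$, yields the differential inequality $d\th/ds\leq -\tfrac{1}{2}\th^2$ along each null generator parametrized by affine length $s$. This forces $\th\to -\infty$ within affine parameter at most $2/\th_0$, so every null generator of $\pr J^+(T)$ reaches a focal point in uniformly bounded affine time and leaves $\pr J^+(T)$ thereafter. Under the standing completeness assumption, the exponential map is defined on the entire segment $[0,2/\th_0]$ of each generator, exhibiting $\pr J^+(T)$ as the continuous image of a compact subset of the normal null bundle of $T$; in particular $\pr J^+(T)$ is compact.

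Next I would invoke the global hyperbolicity of $(\M,\g)$, which follows from the existence of the Cauchy hypersurface $\Si$. Standard causal theory then implies that $\pr J^+(T)$ is a closed achronal topological $3$--manifold (without boundary). Pick a smooth timelike vector field $V$ on $\M$; by global hyperbolicity, each integral curve of $V$ meets $\Si$ exactly once, so the flow of $V$ produces a continuous injection $\Phi\colon \pr J^+(T)\hookrightarrow\Si$. By invariance of domain, $\Phi(\pr J^+(T))$ is an open subset of $\Si$; being the continuous image of a compact set it is also compact, hence closed and non-empty. A connected non-compact manifold admits no such subset, which is the required contradiction (if $\Si$ is not connected the same argument applies to the component intersecting $\Phi(\pr J^+(T))$).

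The main difficulty I anticipate is not any single step individually but the careful causal-theoretic bookkeeping underlying the compactness claim: one has to verify that (i) every point of $\pr J^+(T)$ lies on a past-inextendible null generator starting at $T$, which combines achronality with global hyperbolicity; (ii) each generator can be cut off at its first focal point, which requires the sharp Raychaudhuri focusing above together with the completeness assumption; and (iii) $\pr J^+(T)$ is a topological $3$--manifold so that invariance of domain can be invoked. Each of these is classical (see, e.g., Hawking-Ellis or O'Neill), but each uses the global hyperbolic structure induced by $\Si$ in an essential way.
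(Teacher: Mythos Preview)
The paper does not provide its own proof of this statement: Theorem~\ref{Penrosesingularity} is stated as a classical result attributed to Penrose \cite{Penrose} and simply cited, with no argument given. Your proposal is a correct outline of the standard proof (Raychaudhuri focusing along the null normal congruences, compactness of $\partial J^+(T)$ under the completeness hypothesis, and the invariance-of-domain contradiction via projection to $\Sigma$), so there is nothing in the paper to compare it against.
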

Assuming the \emph{weak cosmic censorship conjecture},\footnote{See \cite{Chr99,PenroseWCC}; see also Conjecture 1 in \cite{An24}. The conjecture asserts that for generic asymptotically flat initial data, the maximal development of Einstein’s field equations possesses a complete future null infinity $\II^+$ and hides any formed singularities within a black hole region causally disconnected from $\mathcal{I}^+$.} such singularities must lie within a black hole region, invisible from null infinity. This connects the problem of black hole formation in general relativity to the dynamical formation of trapped surfaces. While the formation of a single trapped surface has been established in various settings, the possibility of dynamically forming \emph{multiple} trapped surfaces---each causally disjoint from the others---has remained a major open problem in mathematical relativity.

In this work, we construct the first example of smooth, asymptotically flat vacuum initial data for which future development contains multiple causally disjoint trapped surfaces. This provides a rigorous framework for studying multi-centered gravitational collapse and offers new insights into the $N$--body problem in general relativity.
\subsection{State of the art}
This work builds on three major advances in mathematical general relativity: the dynamical formation of trapped surfaces, the stability of Minkowski spacetime, and the construction of initial data via geometric gluing methods. By combining these tools, we construct smooth, asymptotically flat vacuum initial data whose future development exhibits multiple causally disjoint trapped surfaces, a phenomenon not previously realized in the vacuum setting. In what follows, we review the key results in each of these areas that form the foundation of our construction.
\subsubsection{Trapped surface formation}
A major breakthrough in the study of Einstein vacuum equations \eqref{EVE} was achieved by Christodoulou \cite{Chr}, who showed that trapped surfaces can form dynamically from regular initial data in vacuum. His short-pulse method demonstrated that a sufficiently concentrated burst of gravitational radiation leads to the formation of a trapped surface.

This method was further developed by Klainerman-Rodnianski \cite{kr} via parabolic rescaling and later generalized by An \cite{An12}. An-Luk \cite{AnLuk} established a \emph{scale-critical threshold} for the formation of trapped surfaces, identifying a severe condition for collapse. An \cite{An} also provided a simplified proof using a signature function $s_2$. In a different direction, Klainerman-Luk-Rodnianski \cite{KLR} introduced a fully anisotropic mechanism that relaxes the lower bounds required in Christodoulou’s original setup. More recently, Chen-Klainerman \cite{Chen-K} proposed a new approach using a non-integrable frame.

These results are mostly formulated on null hypersurfaces and analyzed via double null foliations. They all rigorously establish the formation of a \emph{single} trapped surface from localized incoming gravitational waves. In contrast, Li-Yu \cite{LY} constructed asymptotically flat Cauchy data with precise asymptotics and no trapped surfaces initially, and showed that the resulting spacetime contains a trapped surface. This represents one of the few results demonstrating trapped surface formation from genuinely spacelike data.

In contrast to dynamical formation, there are striking results on the \emph{existence} of black holes based purely on the Riemannian geometry of the initial data. Schoen-Yau \cite{SYbh} showed that sufficient matter condensation on a spacelike hypersurface under the dominant energy condition implies the existence of a marginally outer trapped surface (MOTS), via blow-up of solutions to the Schoen-Yau-Jang equation. Yau \cite{Yau02} extended this to cases with lower bounds on boundary mean curvature. See also \cite{HKKZ} for generalizations to higher dimensions. These results show that scalar and mean curvature bounds alone can guarantee black hole regions without reference to evolution.

Trapped surface formation has also been studied in the Einstein equations coupled with matter fields (see, e.g. \cite{AnAth,AMY,LL,ShenWan,Yucmp,Yu,Zhao}). In our recent work \cite{ShenWan}, we established a scale-critical result for the Einstein-Maxwell-charged scalar field system, showing a charging process and the formation of a trapped surface without any symmetry assumptions. The $|u|^p$--weighted estimate method developed there will also be used in the present paper.

All of the above works concern the formation of a \emph{single} trapped surface. In this paper, we construct asymptotically flat Cauchy initial data for Einstein vacuum equations that are initially free of trapped surfaces but whose evolution leads to the formation of \emph{multiple}, causally independent trapped surfaces. Our approach combines the short-pulse method with a multi-localized, geometric, obstruction-free gluing construction. This extends the scale-critical analysis of gravitational collapse to a new regime featuring spatially separated collapse regions.

To the best of our knowledge, this is the \emph{first rigorous result} demonstrating the formation of multiple trapped surfaces from a single evolution of vacuum Cauchy data. Our construction provides a new geometric framework to the study of $N$–body problems in general relativity, where multiple localized gravitational interactions are essential.

\subsubsection{Stability of Minkowski}
For initial data prescribed on a spacelike hypersurface, the study of \eqref{EVE} in the small data regime has been highly successful. In 1993, Christodoulou and Klainerman \cite{Ch-Kl} proved the global stability of Minkowski for the Einstein vacuum equations, a milestone in the domain of mathematical general relativity.

In 2003, Klainerman and Nicol\`o \cite{kn} gave a second proof of this result in the exterior of an outgoing null cone using double null foliation. Bieri \cite{Bieri} gave a new proof of the global stability of Minkowski requiring one less derivative and fewer vector fields compared to \cite{Ch-Kl}. Lindblad-Rodnianski \cite{lr1,lr2} gave a new proof of the stability of Minkowski using wave coordinates and showing that the Einstein equations verify the so-called weak null structure in that gauge. Huneau \cite{huneau} proved the nonlinear stability of Minkowski spacetime with a translation Killing field using generalized wave coordinates. Using the framework of Melrose’s $b$--analysis, Hintz-Vasy \cite{hv} reproved the stability of Minkowski spacetime. Later, Graf \cite{graf} proved the global stability of Minkowski in the context of the spacelike-characteristic Cauchy initial data. Using the $r^p$--weighted estimates introduced by Dafermos-Rodnianski \cite{Da-Ro}, \cite{Shen22} reproved the stability of Minkowski in the exterior region. In the framework of \cite{hv}, Hintz \cite{Hintz} reproved the stability of Minkowski in the exterior region. More recently, the stability of Minkowski under very weak decay and regularity assumptions has been proved in \cite{Shen23,Shen24}.

There are also numerous Minkowski stability results for Einstein's equations coupled with matter fields. We refer the reader to the introductions of \cite{Shen22,Shen23,Smulevici} for further results on the stability of Minkowski spacetime.

In this work, we apply the systematic $r^p$--weighted estimates framework developed in \cite{Shen22} for the stability of Minkowski spacetime to construct a transition region to the future of the short-pulse region, designed to facilitate a subsequent gluing construction. A similar idea of preparing a gluing-compatible transition region also appears in the work of Li–Yu \cite{LY} in the context of forming a single trapped surface from Cauchy data.
\subsubsection{Construction of initial data}
The Einstein vacuum constraint equations \eqref{constrainteq} form a nonlinear undetermined system for the initial data $(g,k)$. A central challenge in their analysis is to construct physically meaningful solutions with the desired geometric or asymptotic properties. Multiple frameworks have been developed for this purpose.

A classical approach to solving the constraint equations \eqref{constrainteq} is the conformal method, introduced by Lichnerowicz \cite{Lich}. However, as shown by Bartnik \cite{Bartnik}, elliptic estimates degenerate in weighted Sobolev spaces with strong decay, leading to large cokernels. To address this challenge, Fang–Szeftel–Touati \cite{FST, FSTKerr} developed a localized construction within the conformal method framework, based on a reduction to transverse-traceless perturbations and a nonlinear fixed point scheme. Their method allows for small deformations of Minkowski and Kerr slices, while carefully managing the analytic difficulties posed by linear obstructions from the cokernels of the underlying elliptic operators.

To overcome these difficulties in a more flexible way, gluing methods were introduced. Corvino’s pioneering work \cite{Cor} and its extensions by Corvino and Schoen \cite{CSglue} produce new initial data by gluing a general interior solution to a model solution (e.g. Kerr) in an exterior annular region. This framework was used by Li-Yu \cite{LY} to construct initial data for the formation of a single trapped surface, where a prescribed asymptotically flat end is matched with a carefully prepared interior. Carlotto and Schoen \cite{CS} later introduced a conical gluing method that allows for nontrivial deformations supported within cones, thereby enabling localized yet non-compactly supported modifications of the initial data and greatly increasing geometric flexibility. This approach was further refined by Mao-Tao \cite{MaoTao} using conic-type operators, inspired in part by solution operator techniques developed in the Yang-Mills setting by Oh and Tataru \cite{OhTataru}. However, these constructions are sensitive to the linearized obstructions of the underlying equations and remain rigid, making them insufficiently flexible to accommodate more general or geometric configurations.

Earlier contributions to multi-localized constructions in the spacelike Cauchy setting include the work of Chruściel–Corvino–Isenberg \cite{CCI}, who developed a method to construct $N$–body initial data by gluing multiple asymptotically flat ends in vacuum. More recently, Anderson–Corvino–Pasqualotto \cite{ACP} constructed a class of time-symmetric initial data sets modeling multiple almost isolated sources of gravitational radiation. Their construction uses a gluing scheme based on the Brill–Lindquist metric introduced in \cite{BL} and produces families of data sets that remain well controlled under large spatial separation.

A significant advance came from obstruction-free gluing. In the characteristic setting, Czimek-Rodnianski \cite{CR} developed a new approach to the gluing problem that exploits the nonlinear structure of the Einstein equations to bypass the 10-dimensional space of obstructions previously known in both spacelike and null settings. This builds on earlier developments in null gluing by Aretakis \cite{Are1, Are2} and a series of works with Czimek and Rodnianski \cite{ACR23, ACR2, ACR3}, which still operated within the constraints of linear obstructions. A recent major application of this earlier gluing framework is Kehle-Unger \cite{KU}, in which characteristic gluing is used to construct black hole spacetimes that settle to extremal Reissner-Nordstr\"om after finite advanced time, thereby disproving the third law of black hole thermodynamics. The obstruction-free result of \cite{CR} goes further, showing that exact charge matching is not necessary and offering significantly more flexibility, provided sufficient control over global geometric quantities.

In the Cauchy setting, Mao-Oh-Tao \cite{MOT} developed a purely spacelike, obstruction-free gluing method that enables controlled gluing in annular regions between two vacuum initial data sets. A key technical innovation in their work is the construction of Bogovskii-type solution operators with prescribed support-extending the classical result of Bogovskii \cite{Bogovskii}, which provides a right inverse to the divergence operator. These tools enable controlled gluing in annular regions. Their main result shows that, under smallness and sharp nonlinear positivity assumptions on the mismatch of ADM energy-momentum vectors, two localized initial data sets can be glued while prescribing the full set of physical charges up to controlled error. The construction is sharp in regularity and decay and enjoys stability under perturbation. Unlike the more rigid Corvino-Schoen \cite{CSglue} framework used in \cite{LY}, the method in \cite{MOT} allows multiple insertions of geometric structures with nearly independent asymptotics. See also the recent work of Hintz in \cite{Hin1, Hin2, Hin3, Hin4, Hin5} for an alternative microlocal approach to gluing problems.

Our construction of Cauchy data for the formation of multiple black holes is based on the space-like obstruction-free gluing framework of Mao-Oh-Tao \cite{MOT}. In contrast to earlier work such as \cite{LY}, which was limited to producing a single trapped surface by matching one interior to a well-chosen Kerr metric, the method of \cite{MOT} provides the necessary flexibility to insert multiple localized geometric structures, each with independently controlled asymptotics.
\subsection{Statement of the main theorem}
A simple version of our main theorem is stated as follows.
\begin{thm}\label{maintheoremintro}
Let $N\in\mathbb{N}$ and let $\gBL$ be the Brill-Lindquist metric defined by
\begin{align}\label{gBLintro}
    \gBL:=\left(1+\sum_{I=1}^N\frac{m_I}{2|\x-\cb_I|}\right)^4e,
\end{align}
where $e$ denotes the Euclidean metric. Assume that $\gBL$ satisfies the following small-mass and large-separation condition:
\begin{equation}\label{Mdintro}
    M:=\sum_{I=1}^Nm_I\ll 1,\qquad\quad d_I:=\min_{J\ne I}|\cb_I-\cb_J|\gg m_I^{-1},\quad\forall\; I\in\{1,2,\dots,N\}.
\end{equation}
Then, there exists a $2N$--parameter $\{(\de_I,a_I)\}_{I=1}^N$ family of triplets $(\Si,g,k)$ solving \eqref{constrainteq}, free of trapped surfaces, and satisfying the following properties for all $I\in\{1,2,\dots,N\}$:
\begin{enumerate}
\item We have
\begin{align*}
    (g,k)&=(\gBL,0)\qquad \mbox{ in }\;B_{32}^c(\cb_I),\\
    (g,k)&=(e,0)\qquad\quad\,\mbox{ in }\;B_{1-2\de_I}(\cb_I),
\end{align*}
where $B_r(\cb_I):=\{\x/\,|\x-\cb_I|<r\}$ and $0<\de_I\ll 1$ is a fixed constant.
\item The annulus $B_1(\cb_I)\setminus\ov{B_{1-2\de_I}(\cb_I)}$ is called a \emph{short-pulse annulus}, which is obtained from the evolution of a short-pulse characteristic initial data of length $\de_I$ and size $a_I$, see Theorems \ref{shortpulsecone} and \ref{interiorsolution}.
\item The annulus $B_2(\cb_I)\setminus\ov{B_{1}(\cb_I)}$ is called a \emph{barrier annulus}, which is obtained by taking a constant-time slice in the \emph{transition region}, see Theorems \ref{mainstability} and \ref{interiorsolution}.
\item We have detailed controls of $(g-e,k)$ on the annulus $B_{64}(\cb_I)\setminus\ov{B_1(\cb_I)}$, see \eqref{64-1control}.
\item Trapped surfaces will form in $D^+(B_1(\cb_I))$.\footnote{Here, $D^+(B_1(\cb_I))$ denotes the future domain of dependence of $B_1(\cb_I)$.}
\end{enumerate}
See Figure \ref{IntroID} for a geometric illustration of $(\Si,g,k)$ in the particular case $N=3$.
\end{thm}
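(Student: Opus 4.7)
The large-separation assumption in \eqref{Mdintro} guarantees that the balls $B_{64}(\cb_I)$ are pairwise disjoint, so the construction can be carried out independently at each pole $\cb_I$ and then reassembled into the global Brill--Lindquist background. Near a fixed pole $\cb_I$, the conditions $d_I\gg m_I^{-1}$ and $M\ll 1$ imply that on $B_{64}(\cb_I)$ the metric $\gBL$ is an explicit small perturbation of the one-body Schwarzschild-conformal metric centered at $\cb_I$, and in particular of the Euclidean metric $e$. This reduces the theorem to a single-pole problem: replace the ball $B_{32}(\cb_I)$ by a triplet $(g,k)$ that agrees with $(e,0)$ in $B_{1-2\de_I}(\cb_I)$, carries a short-pulse in $B_1(\cb_I)\setminus\ov{B_{1-2\de_I}(\cb_I)}$, and glues smoothly to $(\gBL,0)$ near $\pr B_{32}(\cb_I)$.

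\textbf{Short-pulse annulus and evolution to the barrier.} Following Christodoulou's short-pulse framework \cite{Chr} in the form of Theorem \ref{shortpulsecone}, I prescribe characteristic data of pulse length $\de_I$ and amplitude $a_I$ on an incoming null cone whose tip sits near $\cb_I$, arranged so that the future development reduces to standard flat data in $B_{1-2\de_I}(\cb_I)$. Reading off a constant-time slice of this evolution produces the short-pulse annulus $B_1(\cb_I)\setminus\ov{B_{1-2\de_I}(\cb_I)}$ of item (2), and the trapped surface in $D^+(B_1(\cb_I))$ required for item (5) follows from Christodoulou's theorem. Next, using the Minkowski stability estimates of Theorem \ref{mainstability}, based on the $r^p$--weighted framework of \cite{Shen22}, I propagate the solution to the future of the short-pulse region until it becomes a small perturbation of Minkowski with sharp decay, and extract a second constant-time slice providing the barrier annulus $B_2(\cb_I)\setminus\ov{B_1(\cb_I)}$ of item (3). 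Together with the flat interior, this yields a solution of \eqref{constrainteq} on $B_2(\cb_I)$ whose asymptotics near $\pr B_2(\cb_I)$ are fully explicit and $O(m_I)$-close to Euclidean.

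\textbf{Obstruction-free gluing and exclusion of trapped surfaces.} On $B_{64}(\cb_I)\setminus\ov{B_2(\cb_I)}$ I glue the interior data above to the exterior data $(\gBL,0)$ using the spacelike obstruction-free annular gluing of Mao--Oh--Tao \cite{MOT}. The main analytic task is to verify the smallness and nonlinear positivity hypotheses of \cite{MOT} on the mismatch of ADM energy--momentum vectors between the interior barrier slice and the single-pole Brill--Lindquist piece; this is where the assumptions $M\ll 1$ and $d_I\gg m_I^{-1}$ are used decisively, since the barrier slice has been prepared so that all relevant global charges are $O(m_I)$ and controllable by the stability estimates of the previous step. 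Combined with a Bogovskii-type solution operator to handle the divergence constraint with prescribed support, this produces the detailed controls claimed in item (4) and in \eqref{64-1control}. Finally, the absence of trapped surfaces on $(\Si,g,k)$ is verified by a standard mean curvature comparison: in $B_{1-2\de_I}(\cb_I)$ and in $B_{32}^c(\cb_I)$ the data are flat and Brill--Lindquist respectively, so neither supports a trapped $2$--sphere once $M\ll 1$ and $d_I\gg m_I^{-1}$, while in the short-pulse, barrier, and gluing annuli the data are close enough to their respective models that a direct computation of $\tr\chi$ and $\tr\chib$, using the estimates from the three construction steps, shows that both null expansions cannot be simultaneously negative.

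\textbf{Main obstacle.} The delicate step is the gluing. The barrier slice has to satisfy the smallness and nonlinear positivity hypotheses of the Mao--Oh--Tao scheme \emph{simultaneously} with the pole-by-pole decoupling produced by the large-separation regime, and with the mean-curvature comparison used to rule out trapped surfaces on $\Si$. Choosing the parameters $(\de_I,a_I)$ consistently across these three constraints, and propagating the estimates of Theorems \ref{shortpulsecone} and \ref{mainstability} through the gluing error without losing derivatives or weights, is the technical heart of the argument.
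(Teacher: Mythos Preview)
Your proposal is correct and follows essentially the same four-step architecture as the paper: short-pulse construction, stability in the transition region to produce the barrier annulus, obstruction-free gluing \`a la Mao--Oh--Tao to the Brill--Lindquist exterior, and mean-curvature comparison to exclude trapped surfaces on the slice.

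One point deserves sharpening. You write that the barrier slice is prepared ``so that all relevant global charges are $O(m_I)$''; in fact the interior charges on $A_1^{(I)}$ are $O(a_I^{-1})$, not $O(m_I)$, while the exterior Brill--Lindquist energy on $A_{32}^{(I)}$ is $8\pi m_I+O(m_IM+Md_I^{-1})$. The gluing hypotheses \eqref{eq:obs-free-unit:EP}--\eqref{eq:obs-free-unit:data} therefore force the hierarchy $a_I^{-1}\ll m_I\ll 1$ (together with $\de_I\le a_I^{-2}$), so that $\Delta\E\simeq m_I$ dominates $|\Delta\P|,|\Delta\C|,|\Delta\J|$ and the Sobolev norms. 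This is the precise content of the ``consistent choice of $(\de_I,a_I)$'' you allude to, and it is the mechanism that makes the Mao--Oh--Tao positivity condition \eqref{eq:obs-free-unit:EP} hold. A minor slip: the short-pulse data sit on the \emph{outgoing} cone $H_{u_0}$ (with Minkowski on the incoming $\Hb_0$), not on an incoming cone.
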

\begin{rk}
Since the balls $\{B_{1}(\cb_I)\}_{I=1,\ldots,N}$ are mutually disjoint, the corresponding trapped surfaces are causally unrelated. By the Penrose incompleteness theorem (Theorem~\ref{Penrosesingularity}), this implies the occurrence of multiple, causally independent instances of geodesic incompleteness within finite time. Assuming the weak cosmic censorship conjecture, we conclude that $N$ black holes must form within finite time.
\end{rk}
\begin{rk}
Theorem \ref{maintheoremintro} shows that for any $N\in\mathbb{N}$, there exists a family of triplets $(\Si,g,k)$, which solves \eqref{constrainteq} and evolves to $N$ trapped surfaces. The proof also applies for the formation of countably many trapped surfaces under the following small-mass and large-separation condition:
\begin{equation}\label{Mdnumintro}
    M:=\sum_{I=1}^\infty m_I\ll 1,\qquad\quad d_I:=\min_{J\ne I}|\cb_I-\cb_J|\gg m_I^{-1},\quad \forall\; I.
\end{equation}
See Remark \ref{rkcountable} for more explanations.
\end{rk}
\begin{figure}[H]
\centering
\tdplotsetmaincoords{70}{120} 
\begin{tikzpicture}[scale=0.46, tdplot_main_coords]
\fill[blue!20, opacity=0.3] (-13,-12,0) -- (12,-12,0) -- (12,12,0) -- (-13,12,0) -- cycle; 
\node at (2,-7,0) {\scalebox{1}{$\gBL$}};
\newcommand{\AnnulusUnit}[4]{%
    \begin{scope}[shift={(#1,#2,0)}, scale=#3]
\begin{scope}
    \fill[blue!40, opacity=0.8] (0,0,0) ellipse (5 and 5);
    \clip (0,0,0) ellipse (2 and 2);
    \fill[white] (0,0,0) ellipse (2 and 2); 
\end{scope}
\begin{scope}
    \fill[red!90, opacity=0.8] (0,0,0) ellipse (2 and 2);
    \clip (0,0,0) ellipse (1.75 and 1.75);
    \fill[white] (0,0,0) ellipse (1.75 and 1.75); 
\end{scope}
\fill[black] (0,0,0) circle (2pt);
\node[left] at (0,0,0) {\scalebox{0.7}{$\cb_{#4}$}};
\draw (0,0,0) ellipse (2 and 2); 
\draw (0,0,0) ellipse (1.75 and 1.75); 
\draw (0,0,0) ellipse (5 and 5);
\draw (0,0,0) ellipse (8 and 8);
\draw[->, thin, rounded corners=4pt] (0,-1,8) to[out=-90, in=90] (0,1,0);
\node[above] at (0.1,-1.2,8) {\scalebox{0.7}{Euclidean}};
\draw[->, thin, rounded corners=4pt] (0,4,11) to[out=-90, in=90] (0,1.9,0);
\node[above] at (0,4,11) {\scalebox{0.7}{\red{Short-pulse annulus}}};
\draw[->, thin, rounded corners=4pt] (0,-4,5) to[out=-90, in=90] (0,-3,0);
\node[above] at (0,-4,5) {\scalebox{0.7}{\blue{Barrier annulus}}};
\draw[->, thin, rounded corners=4pt] (0,-7,2) to[out=-90, in=90] (0,-6,0);
\node[above] at (0,-8,2) {\scalebox{0.7}{Gluing region}};
\draw[dashed] (0,0,-0.25) ellipse (2*11/12 and 2*11/12); 
\foreach \angle in {-40,110}
{\draw[dashed] (0,0,-3) -- ({2*cos(\angle)},{2*sin(\angle)},0);}
\end{scope}
}
\newcommand{\AnnulusUnitplain}[4]{%
    \begin{scope}[shift={(#1,#2,0)}, scale=#3]
\begin{scope}
    \fill[blue!40, opacity=0.8] (0,0,0) ellipse (5 and 5);
    \clip (0,0,0) ellipse (2 and 2);
    \fill[white] (0,0,0) ellipse (2 and 2); 
\end{scope}
\begin{scope}
    \fill[red!90, opacity=0.8] (0,0,0) ellipse (2 and 2);
    \clip (0,0,0) ellipse (1.75 and 1.75);
    \fill[white] (0,0,0) ellipse (1.75 and 1.75); 
\end{scope}
\fill[black] (0,0,0) circle (2pt);
\node[left] at (0,0.8,0) {\scalebox{0.5}{$\cb_{#4}$}};
\draw (0,0,0) ellipse (2 and 2); 
\draw (0,0,0) ellipse (1.75 and 1.75); 
\draw (0,0,0) ellipse (5 and 5);
\draw (0,0,0) ellipse (8 and 8);
\draw[dashed] (0,0,-0.25) ellipse (2*11/12 and 2*11/12); 
\foreach \angle in {-40,110}
{\draw[dashed] (0,0,-3) -- ({2*cos(\angle)},{2*sin(\angle)},0);}
\end{scope}
}
\AnnulusUnit{5}{5.4}{0.73}{1}
\AnnulusUnitplain{-8}{6}{0.45}{2}
\AnnulusUnitplain{-7}{-8}{0.36}{3}
 \draw[->] (9,-8,0) -- (9,-8,9) node[anchor=south]{\scalebox{0.5}{$t$}};
\end{tikzpicture}
\caption{$(\Si,g,k)$ in Theorem \ref{maintheoremintro}.}
\label{IntroID}
\end{figure}
\subsection{Outline of the proof of the main theorem}
We now sketch the heuristics of the proof of Theorem \ref{maintheoremintro}. The proof is divided into the following $4$ main intermediary steps. The first two steps are similar to \cite{LY}, in which we construct suitable annuli. The annuli will play the role of interior solutions. The third step is to deduce the main properties of the Brill-Lindquist metric \eqref{gBLdfintro}, which plays the role of the exterior solution. The last step is to use the spacelike obstruction-free gluing established in \cite{MOT} to construct the desired initial data.
\subsubsection{Trapped surface formation in a short-pulse cone}
The first step is to prove an analog result of \cite{AnLuk,Chr,kr}, see Theorem \ref{shortpulseconeintro} below. We introduce a double null foliation $(u,\ub)$ where $u$ and $\ub$ are optical functions. We use $H_u$ (resp. $\Hb_\ub$) to denote the outgoing (resp. incoming) null hypersurfaces that are the level sets of $u$ (resp. $\ub$). We also denote $S_{u,\ub}:=H_u\cap\Hb_\ub$.\footnote{See Section \ref{doublenullfoliation} for more details of the double null foliation and the geometry quantities.}
\begin{thm}\label{shortpulseconeintro}
    There exists a sufficiently large $a_0>0$. Let $a>a_0$, $0<\de\leq a^{-1}$ and $u_0\in[-2,-1]$. Assume that there is an initial data on $H_{u_0}\cup\Hb_{0}$ that satisfies:
    \begin{itemize}
    \item The null shear satisfies $\hch\simeq\af$ along $H_{u_0}^{(0,1)}$.
    \item Minkowskian initial data along $\Hb_0$.
    \end{itemize}
Then \eqref{EVE} admits a unique solution in the short-pulse region $\{(u,\ub)\in[u_0,-\frac{\de a}{4}]\times[0,\de]\}$. Moreover, we have:
\begin{itemize}
    \item The $2$--sphere $S_{-\frac{\de a}{4},\de}$ is a trapped surface.
    \item The following estimates hold for $u\in[u_0,-\frac{1}{2}]$:
\begin{align*}
    |\a|&\les\de^{-1}\afd,\qquad |\b|\les\af, \qquad |\rho,\si|\les\de a,\qquad |\bb|\les\de^2a^\frac{3}{2},\qquad |\aa|\les\de^3a^2,\\
    |\hch|&\les\af,\qquad\qquad|\om|\les 1,\qquad\qquad\quad\;\;\;|(\eta,\etab,\ze,\hchb)|\les\de\af,\qquad\;\,|\omb|\les\de^2a,\\
    |\log\Om|&\les\de,\qquad\qquad\; |\bbb^A|\les\de^2\af,\qquad\quad\;\;\; |\slg_{AB}-\slgo_{AB}|\les\de\af,\\
    |\trchc|&\les\de a,\qquad\quad\, |\trchbc|\les\de.
\end{align*}
Moreover, the same estimates also hold for their higher order derivatives. See Figure \ref{3Dshortpulseconeintro} for a geometric illustration.
\end{itemize}
\end{thm}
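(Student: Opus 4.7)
The plan is to carry out a scale-critical short-pulse analysis in the spirit of Christodoulou \cite{Chr} and An-Luk \cite{AnLuk}, using the $|u|^p$-weighted energy method developed in our earlier work \cite{ShenWan}. I would work in a double null foliation $(u,\ub)$ on the rectangle $[u_0,-\tfrac{\de a}{4}]\times[0,\de]$, in a gauge where $\Om$ stays close to $1$, and introduce a signature-based hierarchy that assigns to each Ricci coefficient and each curvature component its expected order in the parameters $\de$ and $a$, exactly matching the right-hand sides in the theorem.

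The heart of the argument is a continuity/bootstrap scheme. I would postulate all the stated $L^\infty$ bounds with an inflated constant $K$, and then recover them with a strictly smaller constant using the smallness coming from $a^{-1}\le a_0^{-1}\ll 1$. To control the Ricci coefficients $\hch,\trchc,\eta,\etab,\om,\omb,\hchb,\trchbc,\ze$ and the metric quantities $\Om,\,\slg-\slgo,\,\bbb^A$, I integrate the null structure equations along the two null directions: the $\nab_4$-equations over $\ub\in[0,\de]$ starting from the short-pulse data on $H_{u_0}$, and the $\nab_3$-equations over $u\in[u_0,-\tfrac{\de a}{4}]$ starting from the Minkowskian data on $\Hb_0$. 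The curvature components $\a,\b,\rho,\si,\bb,\aa$ are then controlled by null Bel-Robinson-type energy estimates, obtained by pairing the Bianchi identities with $r^p$- and $|u|^p$-weighted multipliers and integrating over the rectangle; the interior elliptic Hodge systems on $S_{u,\ub}$ are closed in the weighted framework of \cite{ShenWan}. Higher derivative estimates are obtained by commuting the null structure equations and Bianchi identities with $\nabs,\nab_3,\nab_4$ and re-running the same bootstrap at each commutation level.

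Once the bounds close up to the sphere $S_{-\de a/4,\de}$, the trapped surface claim follows from integrating the Raychaudhuri equation
\[
\nab_4 \trch = -\tfrac{1}{2}(\trch)^2 - |\hch|^2 + \text{(controlled lower-order)}
\]
along $H_{-\frac{\de a}{4}}$ from $\ub=0$ to $\ub=\de$, using the Minkowskian initial value $\trch(-\tfrac{\de a}{4},0)$ on $\Hb_0$ together with the prescribed shear lower bound $|\hch|\gtrsim\af$. The shear-squared production dominates the initial mean curvature precisely at the scale-critical location $|u|=\tfrac{\de a}{4}$, forcing $\trch(-\tfrac{\de a}{4},\de)<0$; the inequality $\trchb<0$ is inherited and preserved from the Minkowskian data on $\Hb_0$, so $S_{-\de a/4,\de}$ is trapped.

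The main obstacle will be closing the scale-critical bootstrap in a single hierarchy: the shear $\hch\sim\af$ is large, the anomalous components $\a$ and $\b$ saturate their natural bounds, and the weighted estimates must simultaneously accommodate short-pulse production in the $\ub$-direction and the $|u|$-weighted decay over $|u|\in[\tfrac{\de a}{4},2]$. The delicate point is to organize the Bianchi pairings and the transport estimates so that all error terms involving $|\hch|$ multiplied by a curvature component are absorbed by the smallness from $a^{-1}$ without losing derivatives in the elliptic Hodge systems; this is the role played by the $|u|^p$-weighted framework of \cite{ShenWan}.
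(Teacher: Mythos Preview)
Your proposal is essentially the same approach as the paper: a bootstrap on scale-invariant norms (the paper uses An's signature $s_2$ and the $|u|^p$-weighted Bianchi estimates from \cite{ShenWan}), transport of Ricci coefficients along the null structure equations, and Raychaudhuri for the trapped surface. Two small points worth noting. First, the paper does not run the bootstrap directly on $[u_0,-\tfrac{\de a}{4}]\times[0,\de]$; it proves an unrescaled version on $[u_\infty,-\tfrac{a}{4}]\times[0,1]$ with $u_\infty\le -a$ (Theorem~\ref{maintrapped}) and then applies the spacetime rescaling $u'=\de u$, $\ub'=\de\ub$ to land on the $\de$-version---this keeps the $\de$-bookkeeping out of the energy estimates. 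Second, in your trapped-surface step you invoke ``the prescribed shear lower bound $|\hch|\gtrsim\af$'' on $H_{-\de a/4}$, but the lower bound is only prescribed on $H_{u_0}$; the paper spends a paragraph propagating it by integrating $e_3(|u|^2|\hch|^2)$ via \eqref{nab3hch} and showing the loss is $O(a^{3/2}/|u|)$, which is harmless for $|u|\ge\tfrac{\de a}{4}$. Neither point is a real obstruction to your plan.
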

    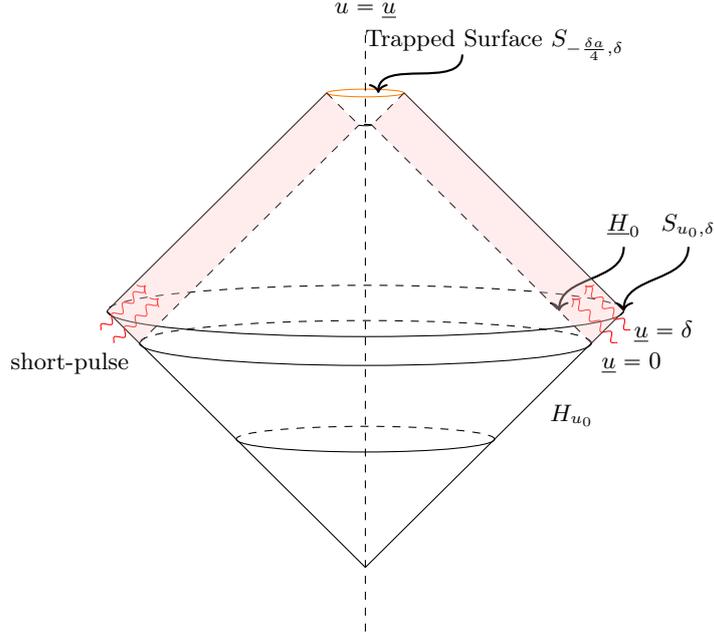
\begin{figure}[H]
    \centering
\begin{tikzpicture}[scale=0.85, decorate]
    \draw[orange] (0,3.4) ellipse (0.6 and 0.06);
    \draw[->, thick, rounded corners=8pt] (1.5,4) to[out=-90, in=90] (0.2,3.4);
    \node[above] at (2,3.75) {\footnotesize Trapped Surface $S_{-\frac{\de a}{4},\de}$};
    \draw[dashed] (0.1,2.9) arc[start angle=0,end angle=180,x radius=0.1,y radius=0.01];
    \draw (-0.1,2.9) arc[start angle=180,end angle=360,x radius=0.1,y radius=0.01];
    \draw[dashed] (4,0) arc[start angle=0,end angle=180,x radius=4,y radius=0.4];
    \draw (-4,0) arc[start angle=180,end angle=360,x radius=4,y radius=0.4];
\begin{scope}
    \fill[white] 
        (3,0) arc[start angle=0,end angle=180,x radius=3,y radius=0.3] --
        (-3,0) arc[start angle=180,end angle=360,x radius=3,y radius=0.3] -- cycle;
\end{scope}
    \draw[dashed] (3.5,-0.5) arc[start angle=0,end angle=180,x radius=3.5,y radius=0.35];
    \draw (-3.5,-0.5) arc[start angle=180,end angle=360,x radius=3.5,y radius=0.35];
    \draw[dashed] (2,-2) arc[start angle=0,end angle=180,x radius=2,y radius=0.2];
    \draw (-2,-2) arc[start angle=180,end angle=360,x radius=2,y radius=0.2];
    \draw[dashed] (0,-5) -- (0,4.4) node[above]{\footnotesize $u=\ub$};
    \draw (0,-4) -- (4,0);
    \draw (-4,0) -- (0,-4);
    \draw (4,0) -- (0.6,3.4);
    \draw (-4,0) -- (-0.6,3.4);
    \draw[dashed] (3.5,-0.5) -- (0.1,2.9);
    \draw[dashed] (0.6,3.4) -- (0.1,2.9);
    \draw[dashed] (-0.6,3.4) -- (-0.1,2.9);
    \draw[dashed] (-3.5,-0.5) -- (-0.1,2.9);
    \draw[->, thick, rounded corners=8pt] (4,1) to[out=-90, in=90] (3,0);
    \node[above] at (4,1) {\footnotesize $\Hb_0$};
    \draw[->, thick, rounded corners=8pt] (5,1) to[out=-90, in=90] (4,0);
    \node[above] at (5,1) {\footnotesize $S_{u_0,\de}$};
    \node[below right] at (4,0) {\footnotesize $\ub=\de$};
    \node[below right] at (3.5,-0.5) {\footnotesize $\ub=0$};
    \node[below right] at (2.7,-1.3) {\footnotesize $H_{u_0}$};
\draw[->, decorate, decoration={snake, amplitude=0.5mm, segment length=2mm}, thin, red]  (4.1,-0.3) -- (3.4,0.4);
\draw[->, decorate, decoration={snake, amplitude=0.5mm, segment length=2mm}, thin, red]  (3.9,-0.5) -- (3.2,0.2);
\draw[->, decorate, decoration={snake, amplitude=0.5mm, segment length=2mm}, thin, red]  (-4.1,-0.3) -- (-3.4,0.4);
\draw[->, decorate, decoration={snake, amplitude=0.5mm, segment length=2mm}, thin, red]  (-3.9,-0.5) -- (-3.2,0.2);
\node[below left] at (-3.5,-0.5)  {\footnotesize short-pulse};
    \fill[red!20,opacity=0.35](3.5,-0.5) -- (0.1,2.9) -- (0.6,3.4) -- (4,0) -- cycle;
    \fill[red!20,opacity=0.35](-3.5,-0.5) -- (-0.1,2.9) -- (-0.6,3.4) -- (-4,0) -- cycle;
\end{tikzpicture}
\caption{Short-pulse cone. The red region is called the short-pulse region where the existence result and estimates are established; the orange circle denotes the trapped surface $S_{-\frac{\de a}{4},\de}$.}
\label{3Dshortpulseconeintro}
\end{figure}
Theorem \ref{shortpulseconeintro} is restated as Theorem \ref{shortpulsecone} in Section \ref{sectrapped}. We first prove a scale-invariant trapped surface formation result, similar to \cite{An}. This is done by applying the $|u|^p$--weighted estimates introduced in \cite{ShenWan} to the Bianchi equations and integrating the null structure equations along the outgoing and incoming null cones. We then apply a standard rescaling argument to prove Theorem \ref{shortpulseconeintro}.
\begin{rk}
In \cite{LY}, the vectorfield method has been used to prove the semi-global existence of the short-pulse region. Here, we provide a simpler self-contained proof of trapped surface formation by the signature $s_2$ introduced in \cite{An} and the $|u|^p$--weighted estimates developed in \cite{ShenWan}. 
\end{rk}
\subsubsection{Stability result in the transition region}
In order to construct a spacelike initial data which evolves to trapped surfaces, we expect to take a constant-time slice in the short-pulse cone constructed in Theorem \ref{shortpulseconeintro}. However, the abnormal behavior in the short-pulse region makes it difficult to apply the gluing theorem in \cite{MOT}. To this end, we proceed as in Section 4 of \cite{LY} to construct the transition region. This is done by proving the following stability result.
\begin{thm}\label{mainstabilityintro}
    There exists a sufficiently large $a_0>0$. Let $a>a_0$, $0<\de\leq a^{-2}$. Assume that there are initial data on $H_{-\frac{3}{2}+\de}\cup\Hb_{0}$ that satisfy: \begin{itemize}
    \item $H_{-\frac{3}{2}+\de}^{(0,1)}$ is endowed with an outgoing geodesic foliation.
    \item The null shear satisfies $\hch\simeq\af$ along $H_{-\frac{3}{2}+\de}^{(0,\de)}$ and vanishes along $H_{-\frac{3}{2}+\de}^{(\de,1)}$.
    \item Minkowskian initial data along $\Hb_0$.
    \end{itemize}
    Then \eqref{EVE} admits a unique solution in $\{(u,\ub)\in[-\frac{3}{2}+\de,-\frac{1}{2}]\times[0,1]\}$. Moreover
    \begin{itemize}
        \item All quantities are controlled by $a^{-1}$ in the transition region $\{(u,\ub)\in[u_0,-\frac{1}{2}]\times[\de,1]\}$.
        \item The constant-time slice $\Si:=\{u+\ub=-1+2\de\}$ satisfies
        \begin{equation}\label{g-eest}
            \|(g-e,k)\|_{C^2(\Si,e)}\les a^{-1}.
        \end{equation}
        \item Denoting
        \begin{equation*}
            B_r:=\{p\in\Si/\, |r(p)|<r\},\qquad\quad r:=\ub-u,
        \end{equation*}
        we call $B_1\setminus\ov{B_{1-2\de}}$ the \emph{short-pulse annulus} and $B_2\setminus\ov{B_1}$ the \emph{barrier annulus}.
    \end{itemize}
    See Figure \ref{fig:shortpulse+stabintro} for a geometric illustration. 
\end{thm}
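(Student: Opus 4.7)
Decompose the region along $\ub=\de$ into the \emph{short-pulse subregion} $\{\ub\in[0,\de]\}$ and the \emph{transition subregion} $\{\ub\in[\de,1]\}$. The former is handled by direct invocation of Theorem \ref{shortpulsecone}: its existence interval $[u_0,-\frac{\de a}{4}]$ contains $[u_0,-\frac12]$ under $\de\le a^{-2}$, and all the pointwise bounds stated there carry over verbatim. What remains is to extend the solution into $\ub\in[\de,1]$ while showing every geometric quantity is of order $a^{-1}$.

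\textbf{Stability in the transition subregion.} The initial data for this sub-problem are $H_{u_0}^{(\de,1)}$, where the shear vanishes by hypothesis and the data are therefore exactly Minkowski in an outgoing geodesic foliation, and $\Hb_\de$, whose data are inherited from the short-pulse side. Specializing the estimates of Theorem \ref{shortpulsecone} at $\ub=\de$ and using the sharper assumption $\de\le a^{-2}$ turns every power of $\de a$ into a power of $a^{-1}$: for instance $|\rho,\si|\les\de a\le a^{-1}$, $|\eta,\etab,\ze,\hchb|\les\de\af\le a^{-3/2}$, $|\bb|\les\de^2a^{\frac{3}{2}}\le a^{-3/2}$, $|\aa|\les\de^3a^2\le a^{-4}$, and similarly for $\log\Om$, $\bbb$, $\trchc,\trchbc$ and $\slg-\slgo$. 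One then runs a bootstrap argument with ansatz $\|\Psi\|,\|\Ga\|\les K a^{-1}$ for all null curvature components $\Psi\in\{\a,\b,\rho,\si,\bb,\aa\}$ and Ricci coefficients $\Ga$, together with their higher derivatives obtained by commuting with $(\nab_3,\nab_4,\nabs)$. The Bianchi system is controlled through the $r^p$--weighted flux method of \cite{Shen22}, adapted to the present bounded null--null region (as in \cite{ShenWan}) by using $\Hb_\de$ as the inner flux boundary; the null structure equations are then integrated along $H_u$ and $\Hb_\ub$ to convert flux bounds into pointwise bounds on $\Ga$ and on the metric quantities $\slg,\Om,\bbb$. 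Standard closure of the bootstrap improves $Ka^{-1}$ to $\frac12 Ka^{-1}$, giving existence on the whole transition subregion with every quantity $\les a^{-1}$.

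\textbf{Constant-time slice and annuli.} On $\Si=\{u+\ub=-1+2\de\}$ the coordinate $r:=\ub-u$ ranges over $[1-2\de,2]$, and the locus $r=1$ is precisely $\ub=\de$; hence the short-pulse annulus $\{1-2\de\le r\le 1\}$ lies in the short-pulse subregion while the barrier annulus $\{1\le r\le 2\}$ lies in the transition subregion, justifying the nomenclature. Parametrize $\Si$ by Cartesian-like coordinates $\x$ with $|\x|=r$; the induced metric $g$ and second fundamental form $k$ are expressible in terms of $\slg,\Om,\bbb,\trch,\trchb,\hch,\hchb,\om,\omb$, all of which are controlled with two derivatives by the bootstrap. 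A routine change-of-frame computation converts these null-frame bounds into $\|(g-e,k)\|_{C^2(\Si,e)}\les a^{-1}$.

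\textbf{Main obstacle.} The decisive technical point is the matching across $\ub=\de$. Several short-pulse quantities (most notably $|\a|\les\de^{-1}\afd$) blow up as $\de\to 0$ and appear manifestly incompatible with any near-Minkowski analysis. What saves the situation is that these large quantities are tied to the shear $\hch$, which is identically turned off on $H_{u_0}$ past $\ub=\de$, and that the sharper restriction $\de\le a^{-2}$ converts every factor of $\de a$ into $a^{-1}$; hence every quantity genuinely propagated onto $\Hb_\de$ is of order $a^{-1}$. Verifying this compatibility with care, and then adapting the $r^p$--weighted framework of \cite{Shen22} (originally set up for exterior Minkowski with true null infinity) to the bounded null--null region at hand with $\Hb_\de$ playing the role of an artificial flux boundary, constitute the main technical content of the proof.
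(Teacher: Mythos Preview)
Your strategy matches the paper's: decompose along $\ub=\de$, invoke Theorem~\ref{shortpulsecone} for the short-pulse strip, and run a bootstrap in the transition region via energy estimates for the Bianchi pairs (the paper uses the $r^p$--method with $p=0$) together with transport integration of the null structure equations. Two points deserve sharpening.

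First, the claim that the data on $H_{u_0}^{(\de,1)}$ are ``exactly Minkowski in an outgoing geodesic foliation'' is false: vanishing of $\hch$ for $\ub>\de$ does not undo the short-pulse effect on the other quantities. For instance the Raychaudhuri equation $\nabs_4\trch+\tfrac12(\trch)^2=-|\hch|^2-2\om\trch$ forces $\trchc=O(\de a)$ at $\ub=\de$, and this persists along $H_{u_0}^{(\de,1)}$. The paper handles this (Proposition~\ref{Hu0}) by integrating the $\nabs_4$--equations along $H_{u_0}^{(\de,1)}$ from the inherited values at $S_{u_0,\de}$, using \eqref{vanishingconditions} to get $|\dk^{\le s+3}(\b,\rho,\si,\bb,\trcht,\trchbt,\hchb,\omb,\ze)|\les\ep_0$.

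Second, you correctly flag the abnormal quantities $\hch,\om,\b,\a$ on $\Hb_\de$ as the main obstacle, but your stated mechanism (``specializing the estimates of Theorem~\ref{shortpulsecone} at $\ub=\de$'') does not apply to them: Theorem~\ref{shortpulsecone} gives $|\hch|\les\af$, $|\b|\les\af$, $|\om|\les 1$, $|\a|\les\de^{-1}\afd$, none of which is $O(a^{-1})$. The actual fix is the one in Proposition~\ref{Hbde} (see also Remark~1.6): integrate the $\nabs_3$--transport equations for $\hch,\om,\b$ along $\Hb_\de$ starting from $S_{u_0,\de}$, where $\hch=\om=0$ and $|\b|\les\ep_0$ by Codazzi. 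The right-hand sides of these equations involve only the already-small quantities from Theorem~\ref{shortpulsecone} (e.g.\ $\nabs_3\hch=O(\de\af)\les a^{-1}$), so the integration over $u\in[u_0,-\tfrac12]$ yields $|\hch,\om,\b|\les\ep_0$ on all of $\Hb_\de$; one more $\nabs_3$--integration then controls $\a$.
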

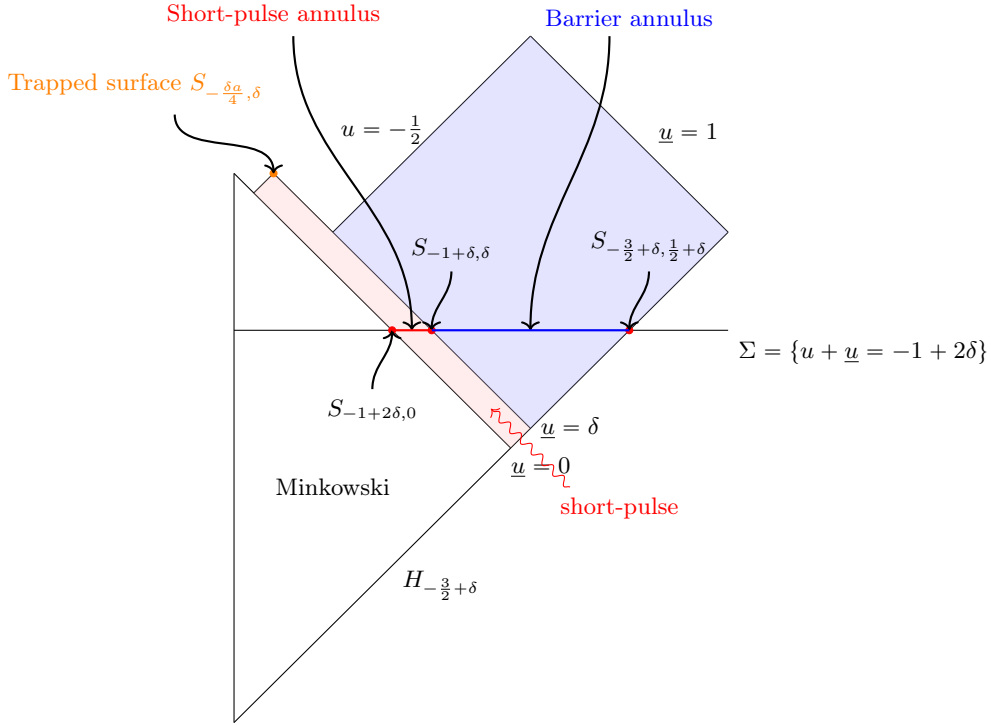
\begin{figure}[H]
        \centering
        \begin{tikzpicture}[scale=2.6, decorate]
  \coordinate (A) at (0,-2);
  \coordinate (B) at (0,0);
  \coordinate (C) at (0,0.8);
  \coordinate (D) at (0.1,0.7);
  \coordinate (E) at (0.2,0.8);
  \coordinate (F) at (1.4,-0.6);
  \coordinate (G) at (1.5,-0.5);
  \coordinate (H) at (2,0);
  \coordinate (I) at (2.5,0.5);
  \coordinate (J) at (1,0);
  \coordinate (K) at (0.5,0.5);
  \coordinate (L) at (1.5,1.5);
  \coordinate (M) at (2.5,0);
  \coordinate (N) at (0.8,0);
  \draw (A) -- (C);
  \draw (A) -- (I);
  \draw (F) -- (C);
  \draw (D) -- (E);
  \draw (G) -- (E);
  \draw (K) -- (L);
  \draw (I) -- (L);
  \draw (B) -- (M);
\fill[red!20,opacity=0.35](G) -- (E) -- (D) -- (F) -- cycle;
\fill[blue!30,opacity=0.35](G) -- (I) -- (L) -- (K) -- cycle;
  \node[right] at (1.35,-0.7) {\footnotesize $\ub = 0$};
  \node[right] at (G) {\footnotesize $\ub = \de$};
  \node[above] at (2.3,0.9) {\footnotesize $\ub = 1$};
  \node[above] at (0.75,0.9) {\footnotesize $u = -\frac{1}{2}$};
  \node[below right] at (M) {\footnotesize $\Si=\{u+\ub=-1+2\de\}$};
  \node at (0.5, -0.8) {\footnotesize Minkowski};
  \node[right] at (0.8, -1.3) {\footnotesize $H_{-\frac{3}{2}+\de}$};
  \node[below right] at (1.6,-0.8) {\footnotesize \red{short-pulse}};
\filldraw[red] (J) circle (0.5pt);
\filldraw[red] (H) circle (0.5pt);
\filldraw[red] (N) circle (0.5pt);
\filldraw[orange] (E) circle (0.5pt);
    \draw[->, thick, rounded corners=8pt] (1.1,0.3) to[out=-90, in=90] (J);
    \node[above] at (1.1,0.3) {\footnotesize $S_{-1+\de,\de}$};
    \draw[->, thick, rounded corners=8pt] (0.7,-0.3) to[out=90, in=-90] (N);
    \node[below] at (0.7,-0.3) {\footnotesize $S_{-1+2\de,0}$};
   \draw[->, thick, rounded corners=8pt] (2.1,0.3) to[out=-90, in=90] (H);
   \node[above] at (2.1,0.3) {\footnotesize $S_{-\frac{3}{2}+\de,\frac{1}{2}+\de}$};
    \draw[->, thick, rounded corners=8pt] (-0.3,1.1) to[out=-90, in=90] (E);
    \node[above] at (-0.5,1.1) {\footnotesize {\color{orange}Trapped surface $S_{-\frac{\de a}{4},\de}$}};
    \draw[->, thick, rounded corners=8pt] (0.3,1.5) to[out=-90, in=90] (0.9,0);
    \node[above] at (0.2,1.5) {\footnotesize \red{Short-pulse annulus}};
    \draw[->, thick, rounded corners=8pt] (1.9,1.5) to[out=-90, in=90] (1.5,0);
    \node[above] at (2,1.5) {\footnotesize \blue{Barrier annulus}};
    \draw[->, decorate, decoration={snake, amplitude=0.5mm, segment length=2mm}, thin, red]  (1.7,-0.8) -- (1.3,-0.4);
    \draw[draw=blue, thick] (J) -- (H);
    \draw[draw=red, thick] (N) -- (J);
\end{tikzpicture}
        \caption{The red region marks the short-pulse domain, and the blue region indicates the transition zone. The hypersurface $\Si$ represents the target constant-time slice. The red line-segment denotes the short-pulse annulus, while the blue line-segment corresponds to the barrier annulus.}
        \label{fig:shortpulse+stabintro}
    \end{figure}
Theorem \ref{mainstabilityintro} follows from Theorems \ref{mainstability} and \ref{interiorsolution}, which are proved in Section \ref{secstability}. The proof can be considered as a finite-region version of the stability of Minkowski. Based on the $r^p$--weighted estimates introduced by Dafermos-Rodnianski \cite{Da-Ro} and applied to Minkowski stability in \cite{Shen22,Shen23,Shen24}, we provide a simple proof of Theorem \ref{mainstabilityintro} in this finite transition region applying the particular case $p=0$.
\begin{rk}
    The initial data on $\Hb_\de$, obtained as a consequence of Theorem \ref{shortpulseconeintro}, is in the small data regime. In fact, even if the null shear satisfies $\hch\simeq\af$ in the short-pulse region $\{(u,\ub)\in[-\frac{3}{2}+\de,-\frac{1}{2}]\times [0,\de]\}$. We have from Theorem \ref{shortpulseconeintro}
    \begin{equation}\label{nab3hchsmall}
        \nabs_3\hch=O(\de\af)=O(a^{-1}).
    \end{equation}
    Recalling that $\hch$ vanishes along $H_{u_0}^{(\de,1)}$, we deduce by integrating \eqref{nab3hchsmall} along $\Hb_\de$.
    \begin{equation*}
        |\hch|\les a^{-1},\qquad\mbox{ on }\;\Hb_\de^{(-\frac{3}{2}+\de,-\frac{1}{2})}.
    \end{equation*}
    The smallness of other abnormal quantities on $\Hb_\de$ can be deduced in a similar way.
\end{rk}
\begin{rk}
    In \cite{LY}, an additional condition on $\hch$ is introduced to prove that the metric $\g$ in the transition region is close to a Schwarzschild metric $\g_{m_0}$. This is necessary to apply the gluing theorem of Corvino-Schoen \cite{CSglue}, see Section 5 in \cite{LY}. Here, in Theorem \ref{mainstabilityintro} we prove that in the transition region, the metric $\g$ is close to Minkowski with an error $a^{-1}$. This is sufficient to apply the obstruction-free gluing theorem of Mao-Oh-Tao \cite{MOT}, see Section \ref{secgluing} for more details.
\end{rk}
\subsubsection{Vacuum geometrostatic manifolds}
The constant-time slice $\Si$ obtained in Theorem \ref{mainstabilityintro} will evolve to trapped surfaces in finite time. In order to construct Cauchy data evolving to multiple trapped surfaces, we should find multi-centered solutions of the constraint equations \eqref{constrainteq}.
\begin{thm}\label{BrillLindquistintro}
    We introduce the following Brill-Lindquist metric:
    \begin{align}\label{gBLdfintro}
        \gBL:=\left(1+\sum_{I=1}^N\frac{m_I}{2|\x-\cb_I|}\right)^4e,
    \end{align}
which satisfies the small-mass and large-separation condition:
\begin{equation}\label{smallMlarged}
M:=\sum_{I=1}^Nm_I\ll 1,\qquad\quad d_I:=\min_{J\ne I}|\cb_I-\cb_J|\gg m_I^{-1},\quad\forall\; I\in\{1,2,\dots,N\}.
\end{equation}
Then, for any fixed parameters $N\in\mathbb{N}$, $m_I>0$ and $\cb_I\in\RRR^3$, $(\gBL,0)$ is a solution of the constraint equations \eqref{constrainteq} on $\RRR^3\setminus\{\cb_I\}_{I=1}^N$. The Riemannian manifold $(\RRR^3\setminus \{\cb_I\}_{I=1}^N,\gBL)$ is called a \emph{vacuum geometrostatic manifold}. Moreover, we have for any $I\in\{1,2,\dots,N\}$:
\begin{itemize}
\item The averaged charges satisfy:\footnote{Remark that the energy $\E$, the momentum $\P$, the center of mass $\C$, and the angular momentum $\J$ are defined in Section \ref{sseccharges}.}
\begin{align*}
\E\left[(\gBL,0);A_{32}^{(I)}\right]&=8\pi m_I+O(m_IM+Md_I^{-1}),\\
\P_l\left[(\gBL,0);A_{32}^{(I)}\right]&=0,\\
\C_l\left[((\gBL,0);A_{32}^{(I)}\right]&=(\cb_I)_{l}\big(8\pi m_I+O(m_IM+Md_I^{-1})\big)+O(m_IM+Md_I^{-1}),\\
\J_l\left[(\gBL,0);A_{32}^{(I)}\right]&=0,
\end{align*}
with $l=1,2,3$ and
\begin{equation*}
   A_{32}^{(I)}:=A_{32}\left(\cb_I\right)=B_{64}\left(\cb_I\right)\setminus\overline{B_{32}\left(\cb_I\right)}.
\end{equation*}
\item The following estimate holds for any $s\in\mathbb{N}$:
\begin{equation}\label{eq:Sobolev-boundintro}
\|\gBL-e\|_{H^{s}(A_{32}^{(I)})}\les m_I.
\end{equation}
\end{itemize}
\end{thm}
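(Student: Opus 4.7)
The plan is to proceed in three steps corresponding to the three conclusions of Theorem \ref{BrillLindquistintro}, each exploiting the conformally flat structure $\gBL = u^4 e$ with $u := 1 + \sum_{I=1}^N m_I/(2|\x-\cb_I|)$.

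\textbf{Constraint equations.} Since $k \equiv 0$, the momentum constraint in \eqref{constrainteq} is trivially satisfied, and the Hamiltonian constraint reduces to $R(\gBL) = 0$. Using the standard three-dimensional conformal identity
\begin{equation*}
R(u^4 e) = -8\,u^{-5}\,\Delta_e u,
\end{equation*}
it suffices to verify $\Delta_e u = 0$ on $\RRR^3\setminus\{\cb_I\}_{I=1}^N$, which is immediate because each $|\x - \cb_I|^{-1}$ is the Euclidean fundamental solution of $\Delta_e$ away from $\cb_I$.

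\textbf{Sobolev estimate.} Fix $I$ and decompose $u - 1 = u_I^{\mathrm{loc}} + u_I^{\mathrm{rem}}$ with
\begin{equation*}
u_I^{\mathrm{loc}} := \frac{m_I}{2|\x-\cb_I|},\qquad u_I^{\mathrm{rem}} := \sum_{J\ne I}\frac{m_J}{2|\x-\cb_J|}.
\end{equation*}
On $A_{32}^{(I)}$ one has $|\x-\cb_I|\in[32,64]$, so $u_I^{\mathrm{loc}}$ and all its derivatives are pointwise $\lesssim m_I$; and $|\x-\cb_J|\gtrsim d_I$ for $J\ne I$, so $u_I^{\mathrm{rem}}$ and all its derivatives are pointwise $\lesssim M/d_I$. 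The large-separation hypothesis $d_I\gg m_I^{-1}$ together with $M\ll 1$ yields $M/d_I \ll M m_I \le m_I$. Thus $\|u-1\|_{H^s(A_{32}^{(I)})}\lesssim m_I$, and the polynomial factorization $u^4 - 1 = (u-1)(u^3 + u^2 + u + 1)$, combined with the uniform boundedness of $u$ on $A_{32}^{(I)}$ and standard product estimates, gives \eqref{eq:Sobolev-boundintro}.

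\textbf{Charges.} The ADM momentum and angular momentum integrands are linear in $k$, so $k\equiv 0$ forces $\P = 0$ and $\J = 0$ identically. For the energy and center of mass, I would substitute the expansion
\begin{equation*}
u^4 - 1 = 4(u-1) + 6(u-1)^2 + 4(u-1)^3 + (u-1)^4
\end{equation*}
into the averaged-charge formulae of Section \ref{sseccharges} and then split each factor $u-1 = u_I^{\mathrm{loc}} + u_I^{\mathrm{rem}}$. The linear contribution of $u_I^{\mathrm{loc}}$ reduces via integration by parts and the harmonicity $\Delta_e u_I^{\mathrm{loc}}=0$ on $A_{32}^{(I)}$ to the usual Schwarzschild-in-isotropic-coordinates computation, producing the leading terms $8\pi m_I$ for the energy and $8\pi m_I (\cb_I)_l$ for the center of mass. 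The contribution of $u_I^{\mathrm{rem}}$ is controlled pointwise by $M/d_I$, giving the error $O(Md_I^{-1})$. Each higher-order term in the polynomial expansion contains at least one factor of $u_I^{\mathrm{loc}}$ (of size $\lesssim m_I$) multiplied by at least one factor of $u-1$ (of size $\lesssim M$), producing the cross error $O(m_I M)$.

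\textbf{Main obstacle.} The most delicate step is the center-of-mass computation, whose integrand is weighted by the position vector $\x$ and is therefore naively of size $d_I$ throughout $A_{32}^{(I)}$ whenever $\cb_I$ is far from the origin. Extracting the clean translation rule $\C_l \approx 8\pi m_I (\cb_I)_l$ with only an \emph{additive} (rather than multiplicative) error $O(m_I M + Md_I^{-1})$ requires recentering coordinates at $\cb_I$, using the harmonicity of each $u_J^{\mathrm{loc}}$ on the annulus $A_{32}^{(I)}$ to kill leading position-weighted contributions via the mean-value property, and then peeling off the translation piece $(\cb_I)_l \cdot \E$ explicitly before estimating the remainder term by term.
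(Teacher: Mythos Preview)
Your proposal is correct and follows essentially the same route as the paper: conformal identity for the constraints, splitting $u-1$ into the local pole $m_I/(2|\x-\cb_I|)$ and the remote sum $\sum_{J\ne I}$ controlled by $M/d_I$, and recentering at $\cb_I$ to handle the center of mass by peeling off $(\cb_I)_l\,\E$.

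Two places where you over-elaborate relative to the paper. First, for the energy there is no integration by parts: since $(\gBL)_{ij}=U^4\delta_{ij}$, the energy integrand collapses to $-\nu_I(U^4)=-4U^3\nu_I(U)$, and one simply plugs in the pointwise expansions $U=1+m_I/(2R)+O(M/d_I)$ and $\nu_I(U)=-m_I/(2R^2)+O(M/d_I^2)$ on $\partial B_R^{(I)}$ to read off $8\pi m_I$ directly. The polynomial expansion of $u^4-1$ is unnecessary; the product $4U^3\nu_I(U)$ already separates the leading term from the $O(m_IM)$ cross terms. Second, for the center of mass the mechanism killing the leading position-weighted term after subtracting $(\cb_I)_l\,\E$ is not the harmonic mean-value property but simply the odd-symmetry identity $\int_{\partial B_R^{(I)}}(\y_I)_l\,dS=0$ for $\y_I=\x-\cb_I$; the residual integrand is $\frac{4m_I(\y_I)_l}{R^2}+O(m_IM+Md_I^{-1})$ and the first piece integrates to zero on the centered sphere. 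Your worry about a multiplicative $d_I$-sized error is thus resolved by elementary symmetry rather than anything harmonic-analytic.
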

Theorem \ref{BrillLindquistintro} follows from Corollary \ref{cor:BL-annulus} and Proposition \ref{prop:BL-Sobolev}, which are proved in Section \ref{secBL}. The proof is done by expanding the explicit form \eqref{gBLdfintro} of the Brill-Lindquist metric around the poles $\{\cb_I\}_{I=1}^N$, see Section \ref{secBL} for detailed computations.
\subsubsection{Gluing construction of the Cauchy data}
Using the vacuum geometrostatic manifold $(\RRR^3\setminus\{\cb_I\}_{I=1}^N, \gBL)$ introduced in Theorem \ref{BrillLindquistintro} and the constant-time slice $\Si$ constructed in Theorem \ref{mainstabilityintro}. We now provide a sketch of the proof of Theorem \ref{maintheoremintro}.
\begin{enumerate}
    \item Let $(\Si,g,k):=(\Si_{\de,a},g_{\de,a},k_{\de,a})$ be the constant-time slice constructed by Theorem \ref{mainstabilityintro} with parameters $(\de,a)$. Then, we have from \eqref{g-eest}
    \begin{equation}\label{interiorQintro}
        |\Q[(g,k);A_1]|\les a^{-1},\qquad A_{1}:=B_2\setminus\ov{B_1},
    \end{equation}
    where $\Q[(g,k);A_1]$ is defined in \eqref{eq:charge-avg}. We also have from \eqref{g-eest}
    \begin{equation}\label{interiorsobolevintro}
        \|(g-e,k)\|^2_{H^2\times H^1(A_1)}\les a^{-2}.
    \end{equation}
    \item Let $\gBL$ be the Brill-Lindquist metric defined in \eqref{gBLintro}. For any $I\in\{1,2,\dots,N\}$, let $(x^1,x^2,x^3)$ be a coordinate system centered at $\cb_I$.\footnote{In other words, we assume that $\cb_I=\bm{0}$ in this coordinate system.} Then, we have from Theorem \ref{BrillLindquistintro} and \eqref{Mdintro}
    \begin{align}
    \begin{split}\label{exteriorintro}
        \E\left[(\gBL,0);A_{32}^{(I)}\right]&=8\pi m_I+ O(m_IM+Md_I^{-1})\simeq m_I,\\
        \left|(\P,\C,\J)\left[(\gBL,0);A_{32}^{(I)}\right]\right|&\les m_IM+Md_I^{-1}\ll m_I,\\
        \|\gBL-e\|_{H^s(A_{32}^{(I)})}&\les m_I.
    \end{split}
    \end{align}
    \item For $I=1$, we define
    \begin{equation*}
        (g_{in},k_{in}):=(g_{\de_1,a_1},k_{\de_1,a_1}),\qquad\quad (g_{out},k_{out}):=(\gBL,0),
    \end{equation*}
    where $(\Si_{\de_1,a_1},g_{\de_1,a_1},k_{\de_1,a_1})$ is the constant-time slice obtained by applying Theorem \ref{shortpulseconeintro} centered at $\cb_1$ and the parameters $(\de_1,a_1)$ satisfy $a_1\gg m_1^{-1}\gg 1$ and $0<\de_1\leq a_1^{-2}$. Denoting
    \begin{equation*}
        \De\Q^{(1)}:=\Q\left[(g_{out},k_{out});A_{32}^{(1)}\right]-\Q\left[(g_{in},k_{in});A_1^{(1)}\right],
    \end{equation*}
     we obtain from \eqref{interiorQintro} and \eqref{exteriorintro}
     \begin{align*}
         |\De\P,\De\C,\De\J|\ll \De \E,\qquad\quad \De\E\ll 1.
     \end{align*}
     We also have from \eqref{interiorsobolevintro} and \eqref{exteriorintro}
     \begin{align*}
          \|(g-e,k)\|^2_{H^2\times H^1(A_1^{(1)})}+\|(\gBL-e,0)\|^2_{H^2\times H^1(A_{32}^{(1)})}\les m_I^2+a_I^{-2}\ll m_I.
     \end{align*}
     Hence, all the hypotheses in Theorem 1.7 of \cite{MOT} are valid\footnote{See Theorem \ref{thm:MOT1.7} for the restatement of Theorem 1.7 in \cite{MOT}.}, we deduce that there exists $(g^1,k^1)$, which solves \eqref{constrainteq} and satisfies
     \begin{align*}
         (g^1,k^1)&=(g_{\de_1,a_1},k_{\de_1,a_1}),\qquad \mbox{ in } B_2(\cb_1),\\
         (g^1,k^1)&=(\gBL,0),\qquad \qquad\;\,\mbox{ in } B_{32}^c(\cb_1).
     \end{align*}
     \item Proceeding as in the previous steps, we deduce by induction that, for any $J\in\{1,2,\dots,N\}$, there exists $(g^J,k^J)$ and $2J$ parameters $\{(\de_I,a_I)\}_{I=1}^J$, such that
     \begin{itemize}
         \item $(g^J,k^J)$ solves \eqref{constrainteq} in $\RRR^3\setminus \{\cb_I\}_{I=J+1}^N$.
         \item For any $I\in\{1,2,\dots,J\}$:
     \begin{align*}
         (g^J,k^J)&=(g_{\de_I,a_I},k_{\de_I,a_I}),\qquad \mbox{ in } B_2(\cb_I),\\
         (g^J,k^J)&=(\gBL,0),\qquad \qquad\;\,\mbox{ in } B_{32}^c(\cb_I).
     \end{align*}
     \end{itemize}
Taking $(g,k)=(g^N,k^N)$, we obtain the desired Cauchy data.
\item We have by construction that $N$ trapped surfaces will form in $\bigcup\limits_{I=1}^ND^+(B_1(\cb_I))$. Moreover, $(\Si,g,k)$ can be shown to be free of trapped surfaces by a standard mean curvature comparison lemma, see Proposition \ref{notrapped}. This concludes the proof of Theorem \ref{maintheoremintro}.
\end{enumerate}
\subsection{Notations and conventions}
Throughout the paper, we use the following notation and conventions.
\begin{itemize}
\item We have the following types of manifolds in this paper: $4$--dimensional spacetime $\M$, $3$--dimensional spacelike hypersurface $\Si$, and $2$--spheres $S$. Every manifold has its metric, Levi-Civita connection and curvature tensor:
\begin{align*}
 (\M,\g,\D,\R),\qquad\quad(\Si,g,\nab,R),\qquad\quad (S,\slg,\nabs,\K),
\end{align*}
where $\K$ is the Gauss curvature of $S$.
\item We use capital Latin letters $A,B,C,...$ to denote an index from $1$ to $2$, lowercase Latin letters $i,j,k...$ to denote an index from $1$ to $3$, Greek letters $\a,\b,\ga,\la,\mu,\nu...$ to denote an index from $1$ to $4$, and capital Latin letters $I,J,...$ to denote an index from $1$ to $N$.
\end{itemize}
\subsubsection{Key parameters}
The following parameters will be frequently used throughout this paper:
\begin{itemize}
    \item $s\in\NNN$ denotes the regularity of the Cauchy data, which is fixed at the beginning.
    \item $N\in\NNN$ denotes the number of poles of the Brill-Lindquist metric \eqref{gBLintro}.
    \item $\{m_I\}_{I=1}^N\in\RRR^N$ denote the mass parameters of \eqref{gBLintro}. The total mass $M$ is defined as follows:
    \begin{equation*}
        M:=\sum_{I=1}^N m_I.
    \end{equation*}
    \item $\{\cb_I\}_{I=1}^N\in\RRR^{3N}$ denote the positions of the poles of \eqref{gBLintro}.
    \item $d>0$ denotes the minimal distance between the poles:
    \begin{align*}
        d:=\min_{1\leq I<J\leq N}|\cb_I-\cb_J|.
    \end{align*}
    \item $\{a_I\}_{I=1}^N\in\RRR^N$ denotes the size of short-pulse  near $\cb_I$.
    \item $\{\de_I\}_{I=1}^N\in\RRR^N$ denotes the length of the short-pulse posed on the characteristic initial data near $\cb_I$.
\end{itemize}
\subsubsection{Smallness constants}
For any quantities $A$ and $B$:
\begin{itemize}
    \item We write $A\les B$ to indicate that there exists a universal constant $C(s)>0$ that depends only on the regularity parameter $s$ such that $A\leq C(s) B$.
    \item We write $A=O(B)$ to indicate that $|A|\les |B|$.
    \item We write $A\simeq B$ to indicate that $|A|\les |B|\les |A|$.
    \item We write $A\ll B$ to indicate that $CA<B$ where $C$ is the largest universal constant among all the constants involved in the paper by $\les$.
\end{itemize}
Throughout this paper, the constants are chosen such that for all $I\in\{1,2,\dots,N\}$:
\begin{align}\label{smallconstant}
0<\de_I\leq a_I^{-2},\qquad\quad a_I^{-1},d_I^{-1}\ll m_I\leq M\ll 1.
\end{align}
\subsection{Structure of the paper}
The remainder of this paper is organized as follows.
\begin{itemize}
    \item In Section \ref{secpre}, we first recall the fundamental notions of double null foliation and state the null structure equations and the Bianchi equations. We then state the obstruction-free gluing theorem for almost flat annuli, which is established in \cite{MOT}. Finally, we recall the definitions of trapped surfaces and marginally outer trapped surfaces (MOTS).
    \item In Section \ref{sectrapped}, we first provide a self-contained proof of a scale-critical result of trapped surface formation. This has been done by applying the $|u|^p$--weighted estimates introduced in \cite{ShenWan} to the Bianchi equations and integrating the null structure equations along the outgoing and incoming null cones. We then apply a standard rescaling argument to deduce a trapped surface formation result in a finite region, which will be used in Section \ref{secstability}.
    \item In Section \ref{secstability}, we prove a stability result in the finite future of the short-pulse region constructed in Section \ref{sectrapped}. The new region obtained is called the transition region. Taking a constant-time slice in the transition region, we construct a barrier annulus, which will serve as the interior annuli in the gluing construction of the Cauchy data.
    \item In Section \ref{secBL}, we first recall the definitions of the vacuum geometrostatic manifold and the Brill--Lindquist metric introduced in \cite{BL}. We then compute the mean curvature of the coordinate spheres, evaluate the local ADM charges, and estimate the Sobolev norms on annular regions centered at the poles. These annuli will serve as the outer annuli regions in the gluing construction of the Cauchy data.
    \item In Section \ref{secgluing}, we apply the celebrated obstruction-free gluing theorem in \cite{MOT} to construct the desired Cauchy data. We show that the Cauchy data will evolve to multiple trapped surfaces. We also show that the Cauchy data are free of trapped surfaces.
\end{itemize}
\subsection{Acknowledgments}
The authors would like to thank Sergiu Klainerman for introducing the problem of the formation of multiple trapped surfaces and for many valuable discussions. The authors also thank Xinliang An for insightful discussions on the short-pulse method.
\section{Preliminaries}\label{secpre}
\subsection{Geometric setup}\label{geometricsetup}
We first introduce the geometric setup of double null foliation, which will be used in Sections \ref{sectrapped} and \ref{secstability}.
\subsubsection{Double null foliation}\label{doublenullfoliation}
We denote $(\M,\g)$ a spacetime $\M$ with the Lorentzian metric $\g$ and $\D$ its Levi-Civita connection. Let $u$ and $\ub$ be two optical functions on $\M$, that is
\begin{equation*}
    \g(\grad u,\grad u)=\g(\grad\ub,\grad\ub)=0.
\end{equation*}
The spacetime $\M$ is foliated by the level sets of $u$ and $\ub$ respectively, and the functions $u,\ub$ are required to increase towards the future. We use $H_u$ to denote the outgoing null hypersurfaces which are the level sets of $u$ and use $\Hb_\ub$ to denote the incoming null hypersurfaces which are the level sets of $\ub$. We denote
\begin{equation}
    S(u,\ub):=H_u\cap\Hb_\ub,
\end{equation}
which are spacelike $2$--spheres. We introduce the vectorfields $L$ and $\Lb$ by
\begin{equation*}
    L:=-\grad u,\qquad\quad \Lb:= -\grad\ub.
\end{equation*}
We define a positive function $\Om$ by the formula
\begin{equation}\label{deflapse}
    \g(L,\Lb)=-2\Om^{-2},
\end{equation}
where $\Om$ is called the lapse function. We then define the normalized null pair $(e_3,e_4)$ by
\begin{equation*}
   e_3=\Om\Lb,\qquad\quad e_4=\Om L, 
\end{equation*}
and define another null pair by
\begin{equation*}
    \Nb=\Om e_3, \qquad\quad N=\Om e_4.
\end{equation*}
On a given $2$--sphere $S(u,\ub)$, we choose a local orthonormal frame $(e_1,e_2)$, we call $(e_1,e_2,e_3,e_4)$ a null frame.
\subsubsection{Principal quantities}
The spacetime metric $\g$ induces a Riemannian metric $\slg$ on $S(u,\ub)$. We use $\nabs$ to denote the Levi-Civita connection of $\slg$ on $S(u,\ub)$. Using $(u,\ub)$, we introduce a local coordinate system $(u,\ub,\phi^A)$ on $\M$ with $e_4(\phi^A)=0$. In that coordinates system, the metric $\g$ takes the form:
\begin{equation}\label{metricg}
\g=-2\Om^2 (d\ub\otimes du+du\otimes d\ub)+\slg_{AB}(d\phi^A-\bbb^Adu)\otimes (d\phi^B-\bbb^B du),
\end{equation}
and we have 
\begin{equation*}
   \underline{N}=\pr_u+\bbb,\qquad N=\pr_{\ub},\qquad \bbb:=\bbb^A \pr_{\phi^A}.
\end{equation*}
We recall the null decomposition of the Ricci coefficients and curvature components of the null frame $(e_1,e_2,e_3,e_4)$ as follows:
\begin{align}
\begin{split}\label{defga}
\chib_{AB}&=\g(\D_A e_3, e_B),\qquad\quad \chi_{AB}=\g(\D_A e_4, e_B),\\
\xib_A&=\frac 1 2 \g(\D_3 e_3,e_A),\qquad\quad\,\,  \xi_A=\frac 1 2 \g(\D_4 e_4, e_A),\\
\omb&=\frac 1 4 \g(\D_3e_3 ,e_4),\qquad\quad \,\,\,\,\;\om=\frac 1 4 \g(\D_4 e_4, e_3), \\
\etab_A&=\frac 1 2 \g(\D_4 e_3, e_A),\qquad\quad\;\eta_A=\frac 1 2 \g(\D_3 e_4, e_A),\\
\ze_A&=\frac 1 2 \g(\D_{e_A}e_4, e_3),
\end{split}
\end{align}
and
\begin{align}
\begin{split}\label{defr}
\a_{AB} &=\R(e_A, e_4, e_B, e_4) , \qquad \,\;\aa_{AB} =\R(e_A, e_3, e_B, e_3), \\
\b_{A} &=\frac 1 2 \R(e_A, e_4, e_3, e_4), \qquad\,\, \;\bb_{A}=\frac 1 2 \R(e_A, e_3, e_3, e_4),\\
\rho&= \frac 1 4 \R(e_3, e_4, e_3, e_4), \qquad\,\;\;\,\;\; \si =\frac{1}{4}{^*\R}( e_3, e_4, e_3, e_4),
\end{split}
\end{align}
where $^*\R$ denotes the Hodge dual of $\R$. The null second fundamental forms $\chi, \chib$ are further decomposed in their traces $\trch$ and $\trchb$, and traceless parts $\hch$ and $\hchb$:
\begin{align*}
\trch&:=\de^{AB}\chi_{AB},\qquad\quad \,\hch_{AB}:=\chi_{AB}-\frac{1}{2}\de_{AB}\trch,\\
\trchb&:=\de^{AB}\chib_{AB},\qquad\quad \, \hchb_{AB}:=\chib_{AB}-\frac{1}{2}\de_{AB}\trchb.
\end{align*}
We define the horizontal covariant operator $\nabs$ as follows:
\begin{equation*}
\nabs_X Y:=\D_X Y-\frac{1}{2}\chib(X,Y)e_4-\frac{1}{2}\chi(X,Y)e_3.
\end{equation*}
We also define $\nabs_4 X$ and $\nabs_3 X$ to be the horizontal projections:
\begin{align*}
\nabs_4 X&:=\D_4 X-\frac{1}{2} \g(X,\D_4e_3)e_4-\frac{1}{2} \g(X,\D_4e_4)e_3,\\
\nabs_3 X&:=\D_3 X-\frac{1}{2} \g(X,\D_3e_3)e_3-\frac{1}{2} \g(X,\D_3e_4)e_4.
\end{align*}
A tensor field $\psi$ defined in $\MM$ is called tangent to $S$ if it is defined a priori in spacetime $\M$ and all possible contractions of $\psi$ with $e_3$ or $e_4$ are zero. We use $\nabs_3 \psi$ and $\nabs_4 \psi$ to denote the projection to $S(u,\ub)$ of the usual derivatives $\D_3\psi$ and $\D_4\psi$.
\begin{prop}
The following identities hold for the double null foliation introduced in Section \ref{doublenullfoliation}:
\begin{align}
\begin{split}\label{nullidentities}
    \nabs\log\Om&=\frac{1}{2}(\eta+\etab),\qquad\;\;\;\,\om=-\frac{1}{2}\D_4(\log\Om), \qquad\;\;\;\,\omb=-\frac{1}{2}\D_3(\log\Om),\\ 
    \eta&=\ze+\nabs\log\Om,\qquad \etab=-\zeta+\nabs\log\Om,\qquad\quad\, \xi=\xib=0.
\end{split}
\end{align}
\end{prop}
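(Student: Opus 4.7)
The plan is to prove the six identities by direct computation using only the definitions in Section~\ref{doublenullfoliation}, the optical condition on $u,\ub$, the normalization $\g(e_3,e_4)=-2$ (from \eqref{deflapse}), and the symmetry of the covariant Hessian. No curvature identity or structure equation is needed; each identity reduces to a one-line manipulation once the right auxiliary formula is in hand.

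First I would establish that the null generators $L=-\grad u$ and $\Lb=-\grad\ub$ are affinely parametrized null geodesic vector fields. Indeed, since $u$ is optical, $\g^{\a\b}\pr_\a u\,\pr_\b u=0$, so differentiating this identity and using $\D_\a\pr_\b u=\D_\b\pr_\a u$ gives $\D_L L=0$, and similarly $\D_{\Lb}\Lb=0$. From $e_4=\Om L$ and the Leibniz rule I then obtain
\begin{equation*}
   \D_{e_4}e_4 \;=\; \Om\,\D_L(\Om L) \;=\; \Om L(\Om)\,L \;=\; e_4(\log\Om)\,e_4,
\end{equation*}
and analogously $\D_{e_3}e_3=e_3(\log\Om)\,e_3$. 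Taking the inner product with $e_A$ yields $\xi_A=\xib_A=0$ (since $e_A\perp e_4$ and $e_A\perp e_3$), while taking the inner product with $e_3$ (resp.\ $e_4$) and using $\g(e_3,e_4)=-2$ gives $\om=-\tfrac12 e_4(\log\Om)$ and $\omb=-\tfrac12 e_3(\log\Om)$.

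Next I would derive the identities for $\eta$ and $\etab$. The essential auxiliary observation is that $\g(\D_X\grad u,Y)$ is symmetric in $X,Y$, hence so is $\g(\D_X L,Y)$. Writing $e_4=\Om L$ one finds
\begin{equation*}
   \g(\D_{e_3}e_4,e_A) \;=\; e_3(\Om)\,\g(L,e_A)+\Om\,\g(\D_{e_3}L,e_A) \;=\; \Om\,\g(\D_{e_A}L,e_3),
\end{equation*}
and by converting $L$ back to $e_4$ in the right-hand side and using $\g(e_3,e_4)=-2$ one obtains $\Om\,\g(\D_{e_A}L,e_3)=\g(\D_{e_A}e_4,e_3)+2e_A(\log\Om)$. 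Recognizing the left-hand side as $2\eta_A$ and the first term on the right as $2\ze_A$ gives $\eta=\ze+\nabs\log\Om$. An entirely parallel computation, starting from $\etab_A=\tfrac12\g(\D_{e_4}e_3,e_A)$, writing $e_3=\Om\Lb$ and invoking the symmetry of $\Hess\ub$, yields $\etab=-\ze+\nabs\log\Om$; here the sign flip comes from the fact that $\g(\D_{e_A}e_3,e_4)=-\g(\D_{e_A}e_4,e_3)=-2\ze_A$ by differentiating the constant $\g(e_3,e_4)=-2$. Adding the two identities immediately gives $\nabs\log\Om=\tfrac12(\eta+\etab)$.

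There is no real obstacle in this proposition; the only thing to watch is bookkeeping of the $\Om$ factors and the sign conventions in the definitions \eqref{defga}. The most delicate (if any) step is the symmetry argument, where one must carefully distinguish $\D_X L$ from $\D_X e_4$ and track the derivative of $\Om$ produced by each transition. Once this is done cleanly for $\eta$, the computation for $\etab$ is essentially identical with $(u,e_4)\leftrightarrow(\ub,e_3)$, and the identity $\nabs\log\Om=\tfrac12(\eta+\etab)$ follows as a free byproduct rather than a separate statement.
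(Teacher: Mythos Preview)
Your proposal is correct. The paper itself does not prove this proposition; it simply cites (6) in \cite{kr}, so there is no ``paper's approach'' to compare against beyond the standard direct computation that you have reproduced. Your argument---using that $L=-\grad u$ and $\Lb=-\grad\ub$ are affinely geodesic to get $\xi=\xib=0$ and the $\om,\omb$ formulas, and then exploiting the symmetry of $\mathrm{Hess}\,u$ and $\mathrm{Hess}\,\ub$ together with careful tracking of the $\Om$ factors to derive the $\eta,\etab$ identities---is exactly the computation one finds in the cited reference, and your bookkeeping of signs and $\Om$ derivatives checks out line by line.
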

\begin{proof}
    See (6) in \cite{kr}.
\end{proof}
\subsection{Hodge systems}\label{ssec7.2}
\begin{df}\label{tensorfields}
For tensor fields defined on a $2$--sphere $S$, we denote by $\sk_0:=\sk_0(S)$ the set of pairs of scalar functions, $\sk_1:=\sk_1(S)$ the set of $1$--forms and $\sk_2:=\sk_2(S)$ the set of symmetric traceless $2$--tensors.
\end{df}
\begin{df}\label{def7.2}
Given $\xi\in\sk_1$, we define its Hodge dual
\begin{equation*}
    {^*\xi}_A := \ins_{AB}\xi^B.
\end{equation*}
Given $U \in \sk_2$, we define its Hodge dual
\begin{equation*}
{^*U}_{AB}:=\ins_{AC} {U^C}_B.
\end{equation*}
\end{df}
\begin{df}
    Given $\xi,\eta\in\sk_1$ and $U,V\in\sk_2$, we denote
\begin{align*}
    \xi\c \eta&:= \slg^{AB}\xi_A \eta_B,\qquad \xi\wedge \eta:= \ins^{AB} \xi_A \eta_B,\qquad (\xi\hot \eta)_{AB}:=\xi_A \eta_B +\xi_B \eta_A -\slg_{AB}\xi\c\eta,\\
    (\xi\c U)_A&:= \slg^{BC} \xi_B U_{AC},\qquad\qquad\qquad\qquad\qquad\,\,(U\wedge V)_{AB}:=\ins^{AB}U_{AC}V_{CB}.
\end{align*}
\end{df}
\begin{df}
    For a given $\xi\in\sk_1$, we define the following differential operators:
    \begin{align*}
        \sdivs \xi&:= \slg^{AB} \nabs_A\xi_B,\\
        \curls \xi&:= \ins^{AB} \nabs_A \xi_B,\\
        (\nabs\hot\xi)_{AB}&:=\nabs_A \xi_B+\nabs_B \xi_A-\slg_{AB}(\sdivs\xi).
    \end{align*}
\end{df}
\begin{df}\label{hodgeop}
    We define the following Hodge type operators:
    \begin{itemize}
        \item $\sld_1$ takes $\sk_1$ into $\sk_0$ and is given by:
        \begin{equation*}
            \sld_1 \xi :=(\sdivs\xi,\curls \xi),
        \end{equation*}
        \item $\sld_2$ takes $\sk_2$ into $\sk_1$ and is given by:
        \begin{equation*}
            (\sld_2 U)_A := \nabs^B U_{AB}, 
        \end{equation*}
        \item $\sld_1^*$ takes $\sk_0$ into $\sk_1$ and is given by:
        \begin{align*}
            \sld_1^*(f,f_*)_{A}:=-\nabs_A f +{\ins_A}^B \nabs_B f_*,
        \end{align*}
        \item $\sld_2^*$ takes $\sk_1$ into $\sk_2$ and is given by:
        \begin{align*}
            \sld_2^*\xi:=-\frac{1}{2}\nabs\hot\xi.
        \end{align*}
    \end{itemize}
\end{df}
We have the following identities:
\begin{align}
    \begin{split}\label{dddd}
        \sld_1^*\sld_1&=-\Des_1+\mathbf{K},\qquad\qquad \sld_1\sld_1^*=-\Des_0,\\
        \sld_2^*\sld_2&=-\frac{1}{2}\Des_2+\mathbf{K},\qquad\quad\; \sld_2 \sld_2^*=-\frac{1}{2}(\Des_1+\mathbf{K}).
    \end{split}
\end{align}
where $\mathbf{K}$ denotes the Gauss curvature on $S$, see (2.2.2) in \cite{Ch-Kl}. 
\begin{df}\label{Lpnorms}
For a tensor field $f$ on a $2$--sphere $S$, we denote its $L^p(S)$--norm:
\begin{equation}
    \|f\|_{L^p(S)}:= \left(\int_S |f|^p \right)^\frac{1}{p}.
\end{equation}
\end{df}
\begin{prop}\label{ellipticLp}
Assume that $S$ is an arbitrary compact 2-surface with positive bounded Gauss curvature and area radius $r$. Then the following statements hold:
\begin{enumerate}
\item Let $\phi\in\sk_0$ be a solution of $\Des \phi=f$. Then we have
\begin{align*}
    \|\nabs^2 \phi\|_{L^2(S)}+r^{-1}\|\nabs\phi\|_{L^2(S)}+r^{-2}\|\phi-\overline{\phi}\|_{L^2(S)}\les \|f\|_{L^2(S)}.
\end{align*}
\item Let $\xi\in\sk_1$ be a solution of $\sld_1\xi=(f,f_*)$. Then we have
\begin{align*}
    \|\nabs\xi\|_{L^2(S)}+r^{-1}\|\xi\|_{L^2(S)}\les \|(f,f_*)\|_{L^2(S)}.
\end{align*}
\item Let $U\in\sk_2$ be a solution of $\sld_2 U=f$. Then we have
\begin{align*}
\|\nabs U\|_{L^2(S)}+r^{-1}\|U\|_{L^2(S)}\les\|f\|_{L^2(S)}.
\end{align*}
\end{enumerate}
\end{prop}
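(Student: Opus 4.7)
The plan is to derive all three estimates by pairing the standard Bochner/Weitzenböck identities collected in \eqref{dddd} with an integration-by-parts argument, and then to invoke the hypothesis that $\mathbf{K}$ is positive with $\mathbf{K}\simeq r^{-2}$ (which is consistent with Gauss--Bonnet on a $2$--surface of area radius $r$) to absorb zeroth-order terms.

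\medskip

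For part (3), assume $\sld_2 U=f$. Using $\sld_2^*\sld_2=-\tfrac{1}{2}\Des_2+\mathbf{K}$ from \eqref{dddd} and integrating by parts,
\begin{align*}
\|f\|_{L^2(S)}^2=\int_S U\cdot\sld_2^*\sld_2 U=\tfrac{1}{2}\|\nabs U\|_{L^2(S)}^2+\int_S\mathbf{K}\,|U|^2.
\end{align*}
Since $\mathbf{K}\gtrsim r^{-2}$, this yields $\|\nabs U\|_{L^2(S)}+r^{-1}\|U\|_{L^2(S)}\les\|f\|_{L^2(S)}$. For part (2), assume $\sld_1\xi=(f,f_*)$. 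Using $\sld_1^*\sld_1=-\Des_1+\mathbf{K}$ and the same integration by parts,
\begin{align*}
\|(f,f_*)\|_{L^2(S)}^2=\|\nabs\xi\|_{L^2(S)}^2+\int_S\mathbf{K}\,|\xi|^2,
\end{align*}
and positivity of $\mathbf{K}$ again delivers the desired estimate.

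\medskip

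Part (1) requires one extra layer because $\Des$ has a nontrivial kernel (the constants). First, for the top-order term, I would use the scalar Bochner identity
\begin{align*}
\tfrac{1}{2}\Des(|\nabs\phi|^2)=|\nabs^2\phi|^2+\nabs\phi\cdot\nabs\Des\phi+\mathbf{K}\,|\nabs\phi|^2,
\end{align*}
which after integration gives $\|\nabs^2\phi\|_{L^2(S)}^2+\int_S\mathbf{K}\,|\nabs\phi|^2=\|f\|_{L^2(S)}^2$. Positivity of $\mathbf{K}$ yields both $\|\nabs^2\phi\|_{L^2(S)}\les\|f\|_{L^2(S)}$ and the intermediate bound $\|\nabs\phi\|_{L^2(S)}\les r\|f\|_{L^2(S)}$. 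To pass from $\|\nabs\phi\|_{L^2(S)}$ to $\|\phi-\overline\phi\|_{L^2(S)}$, I would apply the Poincaré inequality on a $2$--sphere with area radius $r$ and positive Gauss curvature, $\|\phi-\overline\phi\|_{L^2(S)}\les r\|\nabs\phi\|_{L^2(S)}$, and combine to conclude.

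\medskip

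There is no significant obstacle: the argument is entirely classical and every key identity is already stated in \eqref{dddd}. The only point of care is making sure the scaling in $r$ is consistent, which is why the hypothesis ``positive bounded Gauss curvature and area radius $r$'' is used as $\mathbf{K}\simeq r^{-2}$, and the Poincaré constant on $S$ then scales like $r$.
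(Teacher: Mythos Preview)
Your argument is correct. The paper itself does not give a proof but simply cites Corollary~2.3.1.1 in \cite{Ch-Kl}; your write-up is essentially the computation that lies behind that citation, using the identities \eqref{dddd} already recorded in the paper. One small point worth making explicit for part~(1): the Poincar\'e inequality $\|\phi-\overline{\phi}\|_{L^2(S)}\les r\|\nabs\phi\|_{L^2(S)}$ with the correct scaling follows from Lichnerowicz's eigenvalue bound $\lambda_1\geq 2\inf_S\mathbf{K}\gtrsim r^{-2}$, which is available precisely because of the positive-curvature hypothesis. Otherwise there is nothing to add.
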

\begin{proof}
See Corollary 2.3.1.1 in \cite{Ch-Kl}.
\end{proof}
\subsection{Main equations}\label{secmaineq}
We summarize the main equations for a double null foliation.
\subsubsection{Null structure equations}\label{sec-nullstr}
\begin{prop}\label{nulles}
We have the following null structure equations:
\begin{align}
\begin{split}\label{basicnull}
\nabs_4\eta&=-\chi\c(\eta-\etab)-\b,\\
\nabs_3\etab&=-\chib\c(\etab-\eta)+\bb,\\
\nabs_4\hch+(\trch)\hch&=-2\om\hch-\a,\\
\nabs_4\trch+\frac{1}{2}(\trch)^2&=-|\hch|^2-2\om\trch,\\
\nabs_3\hchb+(\trchb)\hchb&=-2\omb\hchb-\aa,\\
\nabs_3\trchb+\frac{1}{2}(\trchb)^2&=-|\hchb|^2-2\omb\trchb,\\
\nabs_4\hchb+\frac{1}{2}(\trch)\hchb&=\nabs\hot\etab+2\om\hchb-\frac{1}{2}\trchb\,\hch+\etab\hot\etab,\\
\nabs_3\hch+\frac{1}{2}(\trchb)\hch&=\nabs\hot\eta+2\omb\hch-\frac{1}{2}\trch\,\hchb+\eta\hot\eta,\\
\nabs_4\trchb+\frac{1}{2}(\trch)\trchb&=2\om\trchb+2\rho-\hch\c\hchb+2\sdivs\etab+2|\etab|^2,\\
\nabs_3\trch+\frac{1}{2}(\trchb)\trch&=2\omb\trch+2\rho-\hch\c\hchb+2\sdivs\eta+2|\eta|^2.
\end{split}
\end{align}
We also have the Codazzi equations:
\begin{align}
\begin{split}\label{codazzi}
\sdivs\hch&=\frac{1}{2}\nabs\trch-\ze\c\left(\hch-\frac{1}{2}\trch\right)-\b,\\ 
\sdivs\hchb&=\frac{1}{2}\nabs\trchb+\ze\c\left(\hchb-\frac{1}{2}\trchb\right)+\bb,
\end{split}
\end{align}
the torsion equation:
\begin{equation}\label{torsion}
\curls\eta=-\curls\etab=\si-\frac{1}{2}\hch\wedge\hchb,
\end{equation}
and the Gauss equation:
\begin{equation}\label{gauss}
    \K=-\frac{1}{4}\trch\trchb+\frac{1}{2}\hch\c\hchb-\rho.
\end{equation}
Moreover, we have
\begin{align*}
\nabs_4\omb&=2\om\omb+\frac{3}{4}|\eta-\etab|^2-\frac{1}{4}(\eta-\etab)\c(\eta+\etab)-\frac{1}{8}|\eta+\etab|^2+\frac{1}{2}\rho,\\
\nabs_3\om&=2\om\omb+\frac{3}{4}|\eta-\etab|^2+\frac{1}{4}(\eta-\etab)\c(\eta+\etab)-\frac{1}{8}|\eta+\etab|^2+\frac{1}{2}\rho.
\end{align*}
\end{prop}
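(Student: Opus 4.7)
The identities in Proposition \ref{nulles} are the classical null structure, Codazzi, torsion, and Gauss equations for a general double null foliation; they are derived in essentially the present notation in Chapter 7 of Christodoulou-Klainerman \cite{Ch-Kl} and in Klainerman-Rodnianski \cite{kr} (which is the reference used for Proposition \ref{nullidentities} above). My plan is to derive each identity from the definitions \eqref{defga}--\eqref{defr}, the algebraic relations \eqref{nullidentities}, and the Ricci identity
\begin{equation*}
\D_X\D_Y Z - \D_Y\D_X Z - \D_{[X,Y]}Z = \R(X,Y)Z,
\end{equation*}
applied to the null frame $\{e_1,e_2,e_3,e_4\}$ and then projected onto the horizontal distribution.

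For the transport equations along $\nabs_4$ and $\nabs_3$, the scheme is uniform: differentiate the defining relation of the relevant Ricci coefficient, commute covariant derivatives via the Ricci identity, and identify the resulting curvature term through \eqref{defr}. For example, to obtain $\nabs_4\hch + \trch\,\hch = -2\om\hch - \a$, I would differentiate $\chi_{AB}=\g(\D_A e_4, e_B)$ along $e_4$, rewrite $\D_4\D_A e_4 = \D_A\D_4 e_4 + \D_{[e_4,e_A]}e_4 + \R(e_4,e_A)e_4$, insert $\D_4 e_4 = -2\om\, e_4$ from \eqref{nullidentities}, expand the commutator $[e_4,e_A]$ via $\chi$ and $\etab$, and take the symmetric traceless (resp.\ trace) part to recover the $\hch$ (resp.\ $\trch$) equation. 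The conjugate equations on $e_3$ follow by interchanging $e_3\leftrightarrow e_4$. The mixed equations for $\nabs_4\hchb,\nabs_4\trchb,\nabs_3\hch,\nabs_3\trch$ use the same scheme but produce $\rho$ and $\nabs\hot\eta,\nabs\hot\etab$ from the curvature term. The transport equations for $\eta,\etab$ follow from applying $\nabs_4$ (resp.\ $\nabs_3$) to the definitions $\eta_A=\tfrac{1}{2}\g(\D_3 e_4,e_A)$, $\etab_A=\tfrac{1}{2}\g(\D_4 e_3,e_A)$. Finally, the equations for $\nabs_4\omb$ and $\nabs_3\om$ are obtained by applying $\D_3\D_4$ (resp.\ $\D_4\D_3$) to $\log\Om$, using $\om=-\tfrac{1}{2}\D_4\log\Om$ and $\omb=-\tfrac{1}{2}\D_3\log\Om$, and expanding the commutator $[\D_3,\D_4]\log\Om$ which generates the curvature invariant $\rho$ together with the $|\eta\pm\etab|^2$ cross terms.

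For the constraint-type identities, I would read off components of the Riemann curvature in the null frame. The Codazzi equation \eqref{codazzi} is the antisymmetrization in $A,B$ of $\R(e_A,e_B,e_C,e_4) = \nabs_A\chi_{BC}-\nabs_B\chi_{AC}+(\text{lower order in }\ze,\chi)$; contracting in $B,C$ and extracting trace/traceless parts yields the stated form with $\b$ on the right. The torsion equation \eqref{torsion} is the component $\R(e_A,e_B,e_3,e_4)$, whose antisymmetric part is $\curls\eta$ and whose null decomposition \eqref{defr} reads off $\si-\tfrac{1}{2}\hch\wedge\hchb$. The Gauss equation \eqref{gauss} is the classical Gauss equation for the embedding $S(u,\ub)\hookrightarrow\M$: the intrinsic Gauss curvature is related to $\R_{1212}$ minus extrinsic corrections from $\chi,\chib$, which in the null decomposition becomes $\K=-\rho+\tfrac{1}{2}\hch\c\hchb-\tfrac{1}{4}\trch\,\trchb$.

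The main obstacle is bookkeeping rather than anything conceptual. Because we are in a general gauge where both $\om$ and $\omb$ are nonzero and neither null congruence is geodesic, one cannot drop the lapse corrections; the identities \eqref{nullidentities} must be invoked at nearly every step to convert $\nabs\log\Om$ into $\tfrac{1}{2}(\eta+\etab)$ and to handle $\D_{3,4}\log\Om$. Since the derivations are entirely standard and appear in \cite{Ch-Kl, kr}, I would present the proof of Proposition \ref{nulles} by direct citation rather than reproducing the lengthy but routine computation.
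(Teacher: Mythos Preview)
Your proposal is correct and aligns with the paper's own proof, which simply cites (3.1)--(3.5) of \cite{kr}. Your added sketch of the derivation is accurate but unnecessary here, since the paper treats these identities as standard and defers entirely to the reference.
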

\begin{proof}
    See (3.1)--(3.5) of \cite{kr}.
\end{proof}
\subsubsection{Bianchi equations}\label{sec-Bianchi}
\begin{prop}\label{bianchiequations}
We have the following Bianchi equations:
\begin{align*}
\nabs_3\a+\frac{1}{2}\trchb\,\a&=\nabs\hot\b+4\omb\a-3(\hch\rho+{^*\hch}\si)+(\ze+4\eta)\hot\b,\\
\nabs_4\b+2\trch\,\b&=\sdivs\a-2\om\b+(2\ze+\etab)\c\a,\\
\nabs_3\b+\trchb\,\b&=\nabs\rho+{^*\nabs}\si+2\omb\b+2\hch\c\bb+3(\eta\rho+{^*\eta}\si),\\
\nabs_4\rho+\frac{3}{2}\trch\,\rho&=\sdivs\b-\frac{1}{2}\hchb\c\a+\ze\c\b+2\etab\c\b,\\
\nabs_4\si+\frac{3}{2}\trch\,\si&=-\curls\b+\frac{1}{2}\hchb\c{^*\a}-\ze\c{^*\b}-2\etab\c{^*\b},\\
\nabs_3\rho+\frac{3}{2}\trchb\,\rho&=-\sdivs\bb-\frac{1}{2}\hch\c\aa+\ze\c\bb-2\eta\c\bb,\\
\nabs_3\si+\frac{3}{2}\trchb\,\si&=-\curls\bb+\frac{1}{2}\hch\c{^*\aa}-\ze\c{^*\bb}-2\eta\c{^*\bb},\\
\nabs_4\bb+\trch\,\bb&=-\nabs\rho+{^*\nabs\si}+2\om\bb+2\hchb\c\b-3(\etab\rho-{^*\etab}\si),\\
\nabs_3\bb+2\trchb\,\bb&=-\sdivs\aa-2\omb\bb+(2\ze-\eta)\c\aa,\\
\nabs_4\aa+\frac{1}{2}\trch\,\aa&=-\nabs\hot\bb+4\om\aa-3(\hchb\rho-{^*\hchb}\si)+(\ze-4\etab)\hot\bb.
\end{align*}
\end{prop}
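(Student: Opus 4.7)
The plan is to derive the listed Bianchi equations from the twice-contracted second Bianchi identity applied to the Weyl/Riemann tensor in vacuum, using the null decomposition fixed in \eqref{defga}--\eqref{defr} and the identities \eqref{nullidentities}. Concretely, since $\Ric(\g)=0$ by \eqref{EVE}, the second Bianchi identity $\D_{[\mu}\R_{\nu\lambda]\a\b}=0$ is equivalent to the divergence form
\begin{equation*}
\D^\mu \R_{\mu\nu\a\b}=0.
\end{equation*}
Contracting this identity with the ten independent combinations of null frame vectors $(e_A,e_B,e_3,e_4)$ that appear on the right-hand sides of the statement (e.g.\ $\nu=3,\ \a=A,\ \b=3$ to produce an equation for $\nabs_3\a$, and so on) will yield one Bianchi equation per curvature component, each of the schematic form $\nabs_4 \Phi$ or $\nabs_3\Phi$ equals a horizontal Hodge operator applied to another component plus Ricci-times-curvature terms.

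First I would record the frame derivatives of the null basis in terms of the Ricci coefficients \eqref{defga}: namely expressions such as
\begin{equation*}
\D_4 e_3 = 2\etab^A e_A + 2\om\, e_3,\qquad \D_3 e_4=2\eta^A e_A+2\omb\, e_4,\qquad \D_A e_4= -\chi_{AB}\,\slg^{BC}e_C + \ldots,
\end{equation*}
(using $\xi=\xib=0$ from \eqref{nullidentities}), together with their $\eta,\etab,\ze,\chi,\chib,\om,\omb$ analogues for $\D_A e_3$, $\D_A e_B$, $\D_3 e_3$, $\D_4 e_4$. These are obtained by expanding $\D_X Y$ in the null frame and reading off components via \eqref{defga}. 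This is the book-keeping backbone and is purely algebraic.

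Next, for each targeted equation I would expand
\begin{equation*}
0=\D^\mu \R_{\mu\nu\a\b}=-\tfrac{1}{2}\D_3 \R_{4\nu\a\b}-\tfrac{1}{2}\D_4 \R_{3\nu\a\b}+\slg^{CD}\D_C \R_{D\nu\a\b},
\end{equation*}
substitute the null decomposition \eqref{defr} for $\R_{\mu\nu\a\b}$ to extract the relevant null curvature component on the left, and rewrite the right-hand side using the projected derivatives $\nabs_3,\nabs_4,\nabs$. The key transition from $\D_3,\D_4,\D_C$ to $\nabs_3,\nabs_4,\nabs_C$ is where the connection coefficients in \eqref{defga} produce the $\trch,\trchb,\hch,\hchb,\om,\omb,\eta,\etab,\ze$ factors in the linear transport terms and the Ricci-times-curvature lower-order terms. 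The horizontal divergence and curl pieces $\sdivs\a,\sdivs\b,\curls\b$, etc., arise from the tangential piece $\slg^{CD}\D_C\R_{D\nu\a\b}$ after projecting, while the anti-self-dual terms ${}^*\hchb\c\a$, ${}^*\eta\si$ come from rewriting contractions of the form $\slg^{AC}\ins^{BD}(\cdot)$ using the Hodge-dual conventions of Definition \ref{def7.2}.

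The principal obstacle is purely bookkeeping: for each of the ten equations one has to carefully collect antisymmetric combinations to identify $\nabs\hot$ and $\sdivs$ structures, and to track the numerical coefficients $\tfrac{1}{2},\tfrac{3}{2},2,3,4$ that multiply $\trch,\om$-type terms; miscounting a factor of $\tfrac{1}{2}$ at the step where $\D_4=\Om^{-1}N$ hits a curvature component would propagate throughout. To minimize errors I would treat in detail one ``outgoing'' case (the $\nabs_4\b$ equation, which has only $\sdivs\a$ on the right) and one ``incoming'' case (the $\nabs_3\aa$ equation), and then invoke the natural $3\leftrightarrow 4$ symmetry (which exchanges $\chi\leftrightarrow\chib$, $\eta\leftrightarrow\etab$, $\om\leftrightarrow\omb$, $\a\leftrightarrow\aa$, $\b\leftrightarrow -\bb$, $\si\leftrightarrow -\si$) to deduce the remaining equations. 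Since this is a standard computation identical to (3.6)--(3.7) in Klainerman--Rodnianski \cite{kr}, I would ultimately simply cite that reference for the derivation.
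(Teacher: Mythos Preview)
Your proposal is correct and takes essentially the same approach as the paper: these Bianchi equations are standard and the paper simply cites Proposition 3.2.4 in \cite{kn}, while you ultimately cite \cite{kr}. Your sketch of the derivation from $\D^\mu\R_{\mu\nu\a\b}=0$ is accurate but more than the paper provides; either reference suffices.
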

\begin{proof}
    See Proposition 3.2.4 in \cite{kn}.
\end{proof}
\subsection{Commutation identities}
We recall the following commutation formulae.
\begin{lem}\label{comm}
Let $U_{A_1...A_k}$ be an $S$-tangent $k$-covariant tensor on $(\M,\g)$. Then
\begin{align*}
    [\Om\nabs_4,\nabs_B]U_{A_1...A_k}&=-\Om\chi_{BC}\nabs_CU_{A_1...A_k}+\sum_{i=1}^k \Om(\chi_{A_iB}\,\etab_C-\chi_{BC}\,\etab_{A_i}+\ins_{A_iC}{^*\b}_B)U_{A_1...C...A_k},\\
    [\Om\nabs_3,\nabs_B]U_{A_1...A_k}&=-\Om\chib_{BC}\nabs_C U_{A_1...A_k}+\sum_{i=1}^k\Om(\chib_{A_iB}\,\eta_C-\chib_{BC}\,\eta_{A_i}+\ins_{A_iC}{^*\bb}_B)U_{A_1...C...A_k},\\
    [\Om\nabs_3,\Om\nabs_4]U_{A_1...A_k}&=4\Om^2\ze_B\nabs_B U_{A_1...A_k}+2\Om^2\sum_{i=1}^k(\etab_{A_i}\,\eta_C-\etab_{A_i}\,\eta_C+\ins_{A_iC}\si)U_{A_1...C...A_k}.
\end{align*}
\end{lem}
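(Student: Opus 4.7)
The plan is to verify each of the three identities by a direct computation from the definitions, using the null decomposition \eqref{defga}--\eqref{defr}, the identities \eqref{nullidentities} (in particular $\xi=\xib=0$ and $\eta-\etab=2\ze$), and the formal spacetime identity $[\D_X,\D_Y]T = \R(X,Y)\cdot T + \D_{[X,Y]}T$. The strategy is standard: first compute the brackets of the null frame vector fields, then establish the scalar case, and finally bootstrap to general $S$-tangent tensors via the Leibniz rule, tracking how each projection onto the horizontal distribution produces the advertised Ricci-coefficient terms.

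First, I would compute the frame commutators. Starting from $[e_4,e_B]=\D_4 e_B - \D_B e_4$ and decomposing on $(e_1,e_2,e_3,e_4)$ using \eqref{defga}, one obtains the horizontal component $-\chi_{BC}e_C$ together with null components governed by $\etab$, $\chi$ and $\nabs\log\Om$. Using \eqref{nullidentities} to rewrite the $e_B(\Om)$ terms as $\tfrac{\Om}{2}(\eta_B+\etab_B)$, the scalar identity
\[
 [\Om\nabs_4,\nabs_B]f = -\Om\chi_{BC}\nabs_C f
\]
follows by a direct expansion of $\Om\nabs_4(\nabs_B f) - \nabs_B(\Om\nabs_4 f)$. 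The analogous computations for $[e_3,e_B]$ and for $[e_3,e_4]$ (where the last one yields the key coefficient $4\Om^2 \ze_B$ after invoking $\eta-\etab=2\ze$) handle the scalar versions of the remaining two identities.

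Second, I would extend the identities to a general $k$-covariant $S$-tangent tensor $U_{A_1\ldots A_k}$ by Leibniz. Here the horizontal projection appearing in $\nabs_X Y = \D_X Y - \tfrac{1}{2}\chib(X,Y)e_4 - \tfrac{1}{2}\chi(X,Y)e_3$ forces corrections when one swaps the order of $\D_X$, $\D_Y$ acting on $U$: the derivatives of the frame vectors $\D_X e_3, \D_X e_4$ (given by \eqref{defga}) contribute exactly the Ricci-coefficient pieces $\chi_{A_i B}\etab_C - \chi_{BC}\etab_{A_i}$ in the first identity, and analogously with $\chib, \eta$ in the second. The curvature term $\R(X,Y)\cdot U$ must then be evaluated using the null decomposition \eqref{defr}; the components $\R_{A_i C 4 B}$ produce the factor $\ins_{A_i C}{}^*\b_B$, $\R_{A_i C 3 B}$ produces $\ins_{A_i C}{}^*\bb_B$, and $\R_{A_i C 3 4}$ produces $\ins_{A_i C}\si$, matching precisely the curvature terms stated.

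The main obstacle is the careful bookkeeping: each of the $k$ indices of $U$ generates a Leibniz contribution, and each of these must be projected back onto an $S$-tangent tensor, whereupon several independent sources (the frame commutator, the horizontal-projection correction in $\nabs$, and the pure-curvature piece from $[\D_X,\D_Y]$) combine into the single Ricci-coefficient expressions of the form $\chi_{A_i B}\etab_C - \chi_{BC}\etab_{A_i} + \ins_{A_i C}{}^*\b_B$. Sign conventions for the Hodge dual ${}^*$ of $\b,\bb$ and for $\ins_{AB}$ must be tracked consistently with Definition \ref{def7.2}. Once the scalar case is pinned down and the horizontal projection corrections are catalogued, the tensorial identities follow by a mechanical, if tedious, index manipulation; see the analogous derivations in Chapter 7 of \cite{Ch-Kl} and Proposition 4.8.1 of \cite{kn}.
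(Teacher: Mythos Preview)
Your proposal is correct and aligns with the paper's own proof, which simply states that the identities are a direct consequence of Lemma 7.3.3 in \cite{Ch-Kl} together with \eqref{nullidentities}. You have effectively sketched the computation underlying that reference---frame commutators, scalar case, then Leibniz extension with the curvature decomposition---so your approach is the same in substance, just more explicit than the paper's one-line citation.
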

\begin{proof}
It is a direct consequence of Lemma 7.3.3 in \cite{Ch-Kl} and \eqref{nullidentities}.
\end{proof}
\subsection{Charges of Cauchy initial data}\label{sseccharges}
\begin{df}\label{chargesMOT}
Let $\Si\subseteq\RRR^3$ and let $(\Si,g,k)$ be a solution of \eqref{constrainteq} equipped with a coordinate system $(x^1,x^2,x^3)$. We denote
\begin{align*}
    B_r:=\left\{(x^1,x^2,x^3)/\, |x|<r\right\}, \qquad\mbox{ where }\quad   |x|:=\sqrt{(x^1)^2+(x^2)^2+(x^3)^2}.
\end{align*}
We introduce the following charges of $(g,k)$ measured on the sphere $\pr B_r$:
    \begin{align*}
        \E[(g,k);\pr B_r]&:=\frac{1}{2}\int_{\pr B_r}(\pr_ig_{ij}-\pr_jg_{ii})\nu^jdS,\\
        \P_i[(g,k);\pr B_r]&:=\int_{\pr B_r}(k_{ij}-\de_{ij}\tr_ek)\nu^jdS,\\
        \C_l[(g,k);\pr B_r]&:=\frac{1}{2}\int_{\pr B_r}\big(x_l\pr_ig_{ij}-x_l\pr_jg_{ii}-\de_{il}(g-e)_{ij}+\de_{jl}(g-e)_{ii}\big)\nu^jdS,\\
        \J_l[(g,k);\pr B_r]&:=\int_{\pr B_r}(k_{ij}-\de_{ij}\tr_e k)Y_l^i\nu^jdS,
    \end{align*}
where $e$ is the Euclidean metric and we denote
    \begin{equation*}
        \pr_i:=\pr_{x^i},\qquad\quad \nu:=\frac{x^j}{|x|}\pr_j,\qquad\quad Y_i:={\in_{ij}}^lx^j\pr_l.
    \end{equation*}
    We put these together to form a $10$--vector:
\begin{align}\label{dfQ}
    \Q[(g,k);\pr B_r]=(\E,\P_1,\P_2,\P_3,\C_1,\C_2,\C_3,\J_1,\J_2,\J_3)[(g,k);\pr B_r].
\end{align}
\end{df}
Since we will be working with annuli, it is also convenient to introduce the following averaged charges. 
\begin{df}\label{ADMannulus}
    Let $\eta \in C^{\infty}_{c}(0, \infty)$ be a cutoff function such that 
    \begin{equation}\label{cutoffeta}
    \supp \eta \subseteq (1, 2),\qquad \int_1^2\eta(r) d r = 1,
    \end{equation}
    and we denote for any $r>0$
    \begin{align}\label{scaleeta}
        \eta_{r}(r') := r^{-1} \eta(r^{-1} r').
    \end{align}
    We then define
\begin{align}\label{eq:charge-avg}
\Q[(g, k);A_r]=\int_r^{2r}\eta_{r}(r')\Q[(g, k);\pr B_{r'}]\, dr',
\end{align}
where $A_r:=B_{2r}\setminus \ov{B_r}$ denotes the annulus between the spheres $\pr B_{2r}$ and $\pr B_r$.
\end{df}
\subsection{Obstruction-free gluing}
We state in the following the main result of \cite{MOT}, which plays an essential role in the construction of the Cauchy data in our main theorem.
\begin{thm}[Obstruction-free gluing for annuli \cite{MOT}] \label{thm:MOT1.7}
Given $s > \frac{3}{2}$ and $\Ga>1$, there exist $\eps_{o} =\eps_{o}(s,\Ga) > 0$, $\mu_o=\mu_o(s,\Ga)>0$ and $C_{o} = C_{o}(s,\Ga)>0$ such that the following holds. Let $(g_{in}, k_{in}) \in H^s\times H^{s-1} (A_{1})$ and $(g_{out}, k_{out}) \in H^{s} \times H^{s-1} (A_{32})$ be solutions of the constraint equations \eqref{constrainteq}. We define $\De\Q = (\De\E,\De\P,\De\C,\De\J)\in\RRR^{10}$ by
\begin{equation} \label{eq:Delta-Q}
	\De\Q = \Q[(g_{out}, k_{out}); A_{32}]-\Q[(g_{in}, k_{in});A_1],
\end{equation}
and assume that
\begin{align}
\De\E &> |\De \P|, \label{eq:obs-free-unit:EP} \\
\frac{\De\E}{\sqrt{(\De\E)^{2}-|\De\P|^{2}}} &< \Ga, \label{eq:obs-free-unit:Gamma} \\
\De\E &< \ep_{o}^{2}, \label{eq:obs-free-unit:ep} \\
|\De\C|+|\De\J|&<\mu_{o}\De\E, \label{eq:obs-free-unit:CJ}
\end{align}
and
\begin{equation}
\|(g_{in}-e, k_{in})\|_{H^s\times H^{s-1}(A_1)}^{2}+\|(g_{out}-e,k_{out})\|_{H^{s} \times H^{s-1}(A_{32})}^{2}<\mu_{o}\De\E\label{eq:obs-free-unit:data}.
\end{equation}
Then there exists $(g, k) \in H^{s} \times H^{s-1}(B_{64} \setminus \overline{B_{1}})$ solving \eqref{constrainteq} such that
\begin{equation}
(g, k) = (g_{in}, k_{in}) \quad \mbox{ on } A_1, \qquad (g, k)=(g_{out}, k_{out}) \quad \mbox{ on } A_{32}, \label{eq:obs-free-gluing}
\end{equation}
and
\begin{equation}
    \|(g-e, k)\|_{H^{s} \times H^{s-1}(B_{64} \setminus\overline{B_{1}})}^{2} < C_{o} \De\E. \label{eq:obs-free-conc}
\end{equation}
\end{thm}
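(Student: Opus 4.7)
The plan is to follow the strategy of Mao-Oh-Tao \cite{MOT}, combining a linear obstruction analysis with a nonlinear correction that exploits the positivity condition \eqref{eq:obs-free-unit:EP}. The proof divides naturally into four steps. \textbf{Step 1 (naive interpolation):} First I would construct a preliminary pair $(g_{0}, k_{0})$ on $B_{64}\setminus\overline{B_{1}}$ by patching $(g_{in}, k_{in})$ and $(g_{out}, k_{out})$ through a smooth cutoff supported in the gluing annulus $B_{32}\setminus\overline{B_{2}}$. This matches the prescribed data on $A_{1}$ and $A_{32}$ but fails \eqref{constrainteq}, producing a constraint defect $(\rho_{0}, J_{0})$ supported in the gluing annulus whose size is controlled by \eqref{eq:obs-free-unit:data}.

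\textbf{Step 2 (linearized solvability via Bogovskii-type operators):} Next I would linearize the constraint operator around Euclidean data and identify its cokernel on compactly-supported sources. This cokernel is $10$--dimensional, spanned by the conformal Killing data of Minkowski $(1, x^{l}, {\in_{ij}}^{l} x^{j})$, which is exactly dual to the ten classical charges forming $\Q$. To obtain a right inverse on the orthogonal complement, I would use the Bogovskii-type right inverse of \cite{MOT} for the divergence operator with prescribed support in $B_{32}\setminus\overline{B_{2}}$, extending the classical construction of \cite{Bogovskii}. This produces a linear correction $(h, \pi)$ supported in the gluing annulus which eliminates the defect modulo the $10$--dimensional obstruction.

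\textbf{Step 3 (nonlinear charge correction):} The residual obstruction is precisely $\De\Q$ defined in \eqref{eq:Delta-Q}. Rather than imposing $\De\Q=0$ as in the Corvino-Schoen framework \cite{CSglue}, I would insert explicit nonlinear building blocks---boosted, translated, and spinning Kerr-type or Bowen-York-type localizations supported in the gluing annulus---whose charges span the $10$ obstruction directions. The positivity condition \eqref{eq:obs-free-unit:EP} ensures that $(\De\E, \De\P)$ is a future-timelike four-vector, which is necessary and sufficient for it to arise as the ADM energy-momentum of a physical insertion; the bound \eqref{eq:obs-free-unit:Gamma} keeps the associated boost parameter in a compact range; and \eqref{eq:obs-free-unit:CJ} guarantees that $\De\C$ and $\De\J$ remain subcritical with respect to $\De\E$, so that the implicit function theorem applies.

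\textbf{Step 4 (contraction) and main obstacle:} Finally, I would close the system by a fixed-point iteration in $H^{s}\times H^{s-1}(B_{64}\setminus\overline{B_{1}})$ using the right inverse from Step 2 at each iterate, with \eqref{eq:obs-free-unit:ep} and \eqref{eq:obs-free-unit:data} providing contraction and the quantitative bound \eqref{eq:obs-free-conc}. The hardest part is Step 3: one must exhibit nonlinear families of solutions whose charge map, as a function of finitely many insertion parameters, is a diffeomorphism in the regime \eqref{eq:obs-free-unit:EP}--\eqref{eq:obs-free-unit:CJ}, and the sharpness of $\De\E>|\De\P|$ reflects the physical fact that only future-timelike ADM vectors can be realized as insertions. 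A secondary difficulty is carrying out Step 2 with the sharp regularity $H^{s}\times H^{s-1}$ and with support prescribed inside the finite annulus---this is precisely where the Bogovskii-type operators of \cite{MOT} are essential, since the classical Bogovskii construction does not a priori produce compactly supported correctors in the required class.
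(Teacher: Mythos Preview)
The paper does not prove this theorem; Theorem~\ref{thm:MOT1.7} is stated in Section~\ref{secpre} as a black-box citation of the main result of Mao--Oh--Tao \cite{MOT} and is then applied directly in Section~\ref{secgluing} (Proposition~\ref{propglue}) without any reproof. So there is no ``paper's own proof'' to compare your proposal against.

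That said, your four-step sketch is a reasonable high-level summary of the strategy in \cite{MOT}: naive interpolation producing a compactly supported defect, a Bogovskii-type right inverse handling the defect modulo the $10$--dimensional cokernel, nonlinear insertions to cancel $\De\Q$ under the timelike condition \eqref{eq:obs-free-unit:EP}, and a contraction to close. If you intend to actually carry this out rather than cite it, be aware that Step~3 is where essentially all the new content of \cite{MOT} lives, and your description (``insert Kerr-type or Bowen--York-type localizations'') underspecifies the construction: the insertions must themselves solve the constraints to the required order, be supported in the gluing annulus, and have a charge map that is locally surjective with quantitative bounds uniform in the regime \eqref{eq:obs-free-unit:EP}--\eqref{eq:obs-free-unit:CJ}. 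Simply invoking the implicit function theorem on an unspecified family is the gap. For the purposes of the present paper, however, a proof is neither needed nor given---the correct move is to cite \cite{MOT} and move on, which is exactly what the authors do.
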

\subsection{Trapped surfaces and MOTS}
We introduce the following definitions of trapped surfaces.
\begin{df}\label{dftrapped2+2}
    Let $(\M,\g)$ be a spacetime endowed with a double null $(u,\ub)$--foliation defined in Section \ref{doublenullfoliation}. Then, with respect to the given double null $(u,\ub)$--foliation, a leaf $S_{u,\ub}\subseteq\M$ is called 
    \begin{itemize}
    \item a trapped surface if the following hold on $S_{u,\ub}$:
    \begin{equation}\label{trappeduub}
        \trch<0,\qquad\quad \trchb<0;
    \end{equation}
    \item a marginally outer trapped surface (MOTS) if the following hold on $S_{u,\ub}$:
    \begin{equation}\label{motsuub}
        \trch=0,\qquad\quad \trchb<0.
    \end{equation}
    \end{itemize}
\end{df}
\begin{df}\label{dftrappedsurface}
Let $g$ be a Riemannian metric on $\Si$ and $k$ be a symmetric covariant $2$--tensor on $\Si$. Let $S\subset\Si$ be a compact, embedded smooth $2$--surface. Let $\th$ be the second fundamental form of $S$. We will say that $S$ is
\begin{itemize}
\item a trapped surface if the following hold on $S$:
\begin{equation}\label{dftrapped}
    \tr_\slg(-\th-k)<0,\qquad\quad\tr_\slg(\th-k)<0;
\end{equation}
\item a marginally outer trapped surface (MOTS) if the following hold on $S$:
    \begin{equation}\label{dfMOTS}
    \tr_\slg(-\th-k)<0,\qquad\quad\tr_\slg(\th-k)<0.
\end{equation}
\end{itemize}
\end{df}
\begin{rk}
Definition \ref{dftrappedsurface} is consistent with \cite{Ch-Kl}. More precisely, we have the following identities:
\begin{align}\label{chichib}
    \chi=\th-k,\qquad\quad \chib=-\th-k,
\end{align}
as in (7.5.2b) in \cite{Ch-Kl}. Notice that the convention of the sign of $k$ is different from Definition 3.3 in \cite{LM} and Section 5.3 in \cite{LY}. 
\end{rk}
\section{Short-pulse cone and formation of trapped surfaces}\label{sectrapped}
In this section, we first prove Theorem \ref{maintrapped}, which provides a scale-critical result for trapped surface formation. We then apply a standard rescaling argument to deduce Theorem \ref{shortpulsecone}, which constructs a family of finite short-pulse cones endowed with regular characteristic initial data, which will evolve to trapped surfaces in its future.
\subsection{Short-pulse initial data}
Throughout this section, we always denote
\begin{align*}
    V:=V(u,\ub):=\big\{(u',\ub')\in\left[u_\infty,u\right]\times[0,\ub]\big\},\qquad V_*:=V\left(-\frac{a}{4},1\right),
\end{align*}
where $a>0$ is a sufficiently large constant and $u_\infty\leq -a$. As taken in \cite{An,Chr}, we prescribe $\hch$ so that
\begin{align*}
    \sum_{i+j\leq s+10}\left|\nabs_4^j(|u_\infty|\nabs)^i\hch\right|\simeq \frac{a^\frac{1}{2}}{|u_\infty|}\qquad \mbox{ along }\; H_{u_\infty}^{(0,1)},
\end{align*}
where $s\in\mathbb{N}$ is a parameter that describes the regularity of the initial data. Following the same procedures as in Chapter 2 of \cite{Chr}, we obtain the following estimates on $H_{u_\infty}^{(0,1)}$:
\begin{align*}
    |\a|&\les\frac{a^\frac{1}{2}}{|u_\infty|},\qquad |\b|\les\frac{a^\frac{1}{2}}{|u_\infty|^2}, \qquad |(\rho,\si)|\les\frac{a}{|u_\infty|^3},\qquad |\bb|\les\frac{a}{|u_\infty|^4},\qquad\quad |\aa|\les\frac{a^\frac{3}{2}}{|u_\infty|^5},\\
    |\om|&\les\frac{1}{|u_\infty|},\quad|\trchc|\les\frac{a}{|u|^2},\qquad\quad\;\, |\trchbc|\les\frac{1}{|u_\infty|^2},\quad\;\;\;\,|\omb|\les\frac{a}{|u_\infty|^3},\quad|\eta,\etab,\hchb|\les\frac{a^\frac{1}{2}}{|u_\infty|^2},
\end{align*}
where we denote
\begin{align}\label{dftrchctrchbc}
    \trchc:=\trch-\frac{2}{|u|},\qquad\quad \trchbc:=\trchb+\frac{2}{|u|}.
\end{align}
Moreover, the analogy of the above estimates also holds for derivatives of $\nabs_4^j(|u_\infty|\nabs)^i$ with $i+j\leq s+5$.
\subsection{Signatures and scale invariant norms}\label{secsignatures}
It will be difficult to treat the above weights $|u|$ and $a$ term by term. We hope to design a \emph{scale-invariant} $L^\infty_{sc}(S_{u,\ub})$ with built-in weights $|u|$ and $a$ such that for most geometric quantities $\phi$, we have
\begin{align*}
    \|\phi\|_{L^\infty_{sc}(S_{u,\ub})}\les 1.
\end{align*}
To achieve this, we proceed as in \cite{An}. We first relax the above estimates for $\bb$ and $\aa$
\begin{align*}
    |\bb|\les\frac{a}{|u_\infty|^4}\les\frac{a^\frac{3}{2}}{|u_\infty|^4},\qquad\quad |\aa|\les\frac{a^\frac{3}{2}}{|u_\infty|^5}\les\frac{a^2}{|u_\infty|^5}.
\end{align*}
Keeping the other estimates for now, we find a systematic way to define $L^\infty_{sc}(S_{u,\ub})$.
\begin{df}\label{signature}
    We first introduce the signature for the decay rates to $\phi$, we assign the signature $s_2(\phi)$ according to the rule:
    \begin{align*}
        s_2(\phi)=\frac{1}{2} N_A(\phi)+N_3(\phi)-1,
    \end{align*}
    with $N_3(\phi)$ the number of times $e_3$ appears in the definition of $\phi$ and $N_3(\phi)$ the number of times $e_A$ appears in the definition of $\phi$.
\end{df}
\begin{rk}
Following Definition \ref{signature}, we have the following signature table:
\begin{center}
\begin{tabular}{|c|c|c|c|c|c|c|c|c|c|c|c|c|c|c|c|}
\hline
{} & $\a$ & $\b$ & $\rho$ & $\si$ & $\bb$ & $\aa$ & $\chi$ & $\om$& $\Om$ & $\ze$ & $\eta$ & $\etab$ & $\chib$ & $\omb$ & $\slg$ \\
$s_2$ & $0$ & $0.5$ & $1$ & $1$ & $1.5$ & $2$ & $0$ & $0$ & $0$ & $0.5$ & $0.5$ & $0.5$ & $1$ & $1$ & $0$ \\
\hline
\end{tabular}
\end{center}
Moreover, we have
\begin{equation}\label{s1s2}
s_2(\nabs_4\phi)=s_2(\phi),\qquad s_2(\nabs\phi)=s_2(\phi)+\frac{1}{2},\qquad s_2(\nabs_3\phi)=s_2(\phi)+1.
\end{equation}
\end{rk}
For any horizontal tensorfield $\phi$ with signature $s_2(\phi)$, we define the scale invariant norms:
\begin{align}
    \begin{split}\label{dfsc}
        \|\phi\|_{L^\infty_{sc}(S_{u,\ub})}&:=a^{-s_2(\phi)}|u|^{2s_2(\phi)+1}\|\phi\|_{L^\infty(S_{u,\ub})},\\
        \|\phi\|_{L^2_{sc}(S_{u,\ub})}&:=a^{-s_2(\phi)}|u|^{2s_2(\phi)}\|\phi\|_{L^2(S_{u,\ub})},\\
        \|\phi\|_{L^1_{sc}(S_{u,\ub})}&:=a^{-s_2(\phi)}|u|^{2s_2(\phi)-1}\|\phi\|_{L^1(S_{u,\ub})}.
    \end{split}
\end{align}
For convenience, we also define the scale-invariant norms along null hypersurfaces
\begin{align}
\begin{split}\label{fluxsc}
    \|\phi\|_{L^2_{sc}(\cuv)}^2&:=\int_0^\ub\|\phi\|_{L^2_{sc}(S_{u,\ub'})}^2d\ub',\\
    \|\phi\|_{L^2_{sc}(\ucuv)}^2&:=\int_{u_\infty}^u\frac{a}{|u'|^2}\|\phi\|_{L^2_{sc}(S_{u',\ub})}^2du'.
\end{split}
\end{align}
The following proposition is an immediate consequence of \eqref{dfsc} and the H\"older inequality.
\begin{prop}\label{Holder}
We have the following inequalities:
    \begin{align*}
    \|\phi_1\c\phi_2\|_{L^2_{sc}(S_{u,\ub})}&\leq\frac{1}{|u|}\|\phi_1\|_{L^\infty_{sc}(S_{u,\ub})}\|\phi_2\|_{L^2_{sc}(S_{u,\ub})},\\
    \|\phi_1\c\phi_2\|_{L^1_{sc}(S_{u,\ub})}&\leq\frac{1}{|u|}\|\phi_1\|_{L^\infty_{sc}(S_{u,\ub})}\|\phi_2\|_{L^1_{sc}(S_{u,\ub})},\\
    \|\phi_1\c\phi_2\|_{L^1_{sc}(S_{u,\ub})}&\leq\frac{1}{|u|}\|\phi_1\|_{L^2_{sc}(S_{u,\ub})}\|\phi_2\|_{L^2_{sc}(S_{u,\ub})}.
    \end{align*}
\end{prop}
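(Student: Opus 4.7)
The plan is to reduce each of the three inequalities to a classical pointwise Hölder inequality on the $2$--sphere $S_{u,\ub}$ and then convert back to scale-invariant norms using only the definitions in \eqref{dfsc}. The only non-trivial input is the multiplicative property of the signature under products, namely $s_2(\phi_1\cdot\phi_2)=s_2(\phi_1)+s_2(\phi_2)$, which is immediate from Definition \ref{signature} since $N_3$ and $N_A$ are both additive under tensor contraction of horizontal tensors.

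First I would start from the classical estimates on $S_{u,\ub}$, that is
\begin{equation*}
\|\phi_1\c\phi_2\|_{L^2(S_{u,\ub})}\leq \|\phi_1\|_{L^\infty(S_{u,\ub})}\|\phi_2\|_{L^2(S_{u,\ub})},
\end{equation*}
together with the analogous $L^\infty\cdot L^1\hookrightarrow L^1$ and $L^2\cdot L^2\hookrightarrow L^1$ inequalities. Multiplying by the appropriate prefactors of $a$ and $|u|$ dictated by \eqref{dfsc} and using $s_2(\phi_1\c\phi_2)=s_2(\phi_1)+s_2(\phi_2)$, the $a$--weights cancel exactly, while the $|u|$--weights carried by the two factors on the right differ from the weight carried by $\phi_1\c\phi_2$ on the left by exactly $-1$. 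This produces the claimed prefactor $|u|^{-1}$ in each of the three inequalities. The bookkeeping is essentially the same for all three cases, the only difference being whether the shift comes from the $+1$ in the $L^\infty_{sc}$--definition or from the $-1$ in the $L^1_{sc}$--definition.

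There is no serious obstacle; the whole point of the definition \eqref{dfsc} is to make the signature exponents conspire so that Hölder acquires the uniform weight $|u|^{-1}$. The only care needed is to check the arithmetic of exponents, and to observe that the pointwise inequality $|\phi_1\c\phi_2|\leq|\phi_1||\phi_2|$ (valid for any contraction of horizontal tensors with respect to $\slg$) is what allows the reduction to scalar Hölder in the first place. I would then remark that the same proof extends, with straightforward modifications, to other Hölder pairs, e.g.\ $L^4_{sc}\cdot L^4_{sc}\hookrightarrow |u|^{-1}L^2_{sc}$, which will be used implicitly in the estimates of Section \ref{sectrapped}.
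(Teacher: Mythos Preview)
Your approach is correct and coincides with what the paper intends: the proposition is stated as an immediate consequence of the definitions \eqref{dfsc} together with the classical H\"older inequality, which is exactly the reduction you carry out.

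One small correction to your justification: the additivity $s_2(\phi_1\c\phi_2)=s_2(\phi_1)+s_2(\phi_2)$ does \emph{not} follow from $N_3$ and $N_A$ being additive, because of the normalizing $-1$ in Definition~\ref{signature}; naively that would give $s_2(\phi_1\c\phi_2)=s_2(\phi_1)+s_2(\phi_2)+1$. Rather, additivity of $s_2$ on products is taken as the convention in this framework (as in An's original paper), and Definition~\ref{signature} is only applied directly to the fundamental quantities in the signature table. With that convention in place, your exponent bookkeeping is correct and the $|u|^{-1}$ factor drops out exactly as you describe.
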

\begin{rk}
Note that in the region $V_*$, we have $\frac{1}{|u|}\leq\frac{4}{a}\ll 1$. This means if all terms are normal, the nonlinear terms could be treated as lower order terms.
\end{rk}
\subsection{Fundamental norms}
In this section, we define the fundamental norms in $V_*$.
\subsubsection{Schematic notation \texorpdfstring{$\Gag$}{} and \texorpdfstring{$\Gab$}{}}
We introduce the following schematic notation.
\begin{df}\label{gammag}
We divide the Ricci coefficients into two parts:
\begin{align*}
    \Gag:=\left\{\frac{|u|}{a}\trchc,\,\frac{a}{|u|}\trchbc,\,\eta,\,\etab,\,\ze,\,\om,\,\omb\right\},\qquad\qquad\Gab:=\left\{\hch,\,\frac{a}{|u|}\hchb\right\},
\end{align*}
where $\trchc$ and $\trchbc$ are defined in \eqref{dftrchctrchbc}. We also denote:
\begin{align*}
\Gag^{(1)}:=(\af\nabs)^{\leq 1}\Gag\cup\{\b,\,\rho,\,\si,\,\bb,\,\aa\},\qquad\qquad\Gab^{(1)}:=(\af\nabs)^{\leq 1}\Gab\cup\{\a\}.
\end{align*}
Finally, we define for $i\geq 1$
\begin{align*}
    \Gag^{(i+1)}:=(\af\nabs)^{\leq 1}\Gag^{(i)},\qquad\qquad \Gab^{(i+1)}:=(\af\nabs)^{\leq 1}\Gab^{(i)}.
\end{align*}
\end{df}
\subsubsection{\texorpdfstring{$\mr$}{} norms (\texorpdfstring{$L^2$}{}--flux of curvature)}\label{secRnorms}
We define
\begin{align*}
    \mr_i(u,\ub)&:=\afd\left\|\a^{(i)}\right\|_{L^2_{sc}(\cuv)}+\left\|(\b,\rho,\si,\bb)^{(i)}\right\|_{L^2_{sc}(\cuv)},\\
    \ur_i(u,\ub)&:=\afd\left\|\b^{(i)}\right\|_{L^2_{sc}(\ucuv)}+\left\|(\rho,\si,\bb,\aa)^{(i)}\right\|_{L^2_{sc}(\ucuv)}.
\end{align*}
Then, we denote
\begin{align*}
    \mr:=\sup_{V_*}\sum_{i=0}^{s+5}\big(\mr_i(u,\ub)+\ur_i(u,\ub)\big).
\end{align*}
\subsubsection{\texorpdfstring{$\mo$}{} norms (\texorpdfstring{$L^2(S_{u,\ub})$}{}--norms of geometric quantities)}\label{secOnorms}
We define
\begin{align*}
    \mo_i(u,\ub):=\left\|\Gag^{(i)}\right\|_{L^2_{sc}(S_{u,\ub})}+\afd\left\|\Gab^{(i)}\right\|_{L^2_{sc}(S_{u,\ub})}.
\end{align*}
Finally, we denote
\begin{align*}
    \mo:=\sup_{V_*}\sum_{i=0}^{s+5}\mo_i(u,\ub).
\end{align*}
\subsubsection{\texorpdfstring{$\mo_{(0)}$}{} and \texorpdfstring{$\mr_{(0)}$}{} norms (Initial data)}\label{initialO0}
We introduce the following norms on $H_{u_\infty}$:
\begin{align*}
\mo_{(0)}:=\sup_{\ub\in[0,1]}\sum_{i=0}^{s+5}\mo_i(u_\infty,\ub),\qquad\quad\mr_{(0)}:=\sup_{\ub\in[0,1]}\sum_{i=0}^{s+5}\mr_i(u_\infty,\ub).
\end{align*}
\subsection{Main intermediate results of trapped surface formation}
The main goal of this section is to prove the following theorem.
\begin{thm}\label{maintrapped}
There exists a sufficiently large constant $a_0>0$ such that the following holds. For $a>a_0$ and $u_\infty \leq -a$, an initial data on $H_{u_\infty}$ that satisfies:
\begin{itemize}
    \item the following estimate hold along $H_{u_\infty}$:
    $$
    \sum_{i+j\leq s+10}\afd|u_\infty|\left\|\nabs_4^j(|u_\infty|\nabs)^i\hch\right\|_{L^\infty(S_{u_\infty,\ub})}\leq 1,\qquad \forall\; \ub\in [0,1];
    $$
    \item Minkowskian initial data along $\ub=0$.
\end{itemize}
Einstein vacuum equations \eqref{EVE} admit a unique solution in $V_*$ which satisfies:
\begin{equation}\label{finalestimates}
    \mr\les 1,\qquad\quad \mo\les 1.
\end{equation}
The estimate \eqref{finalestimates} implies, in particular, on every sphere $S_{u,\ub}\subseteq V_*$:
\begin{align*}
    |\hch,\a|&\les\frac{a^\frac{1}{2}}{|u|},\qquad |\b|\les\frac{a^\frac{1}{2}}{|u|^2}, \qquad |\rho,\si|\les\frac{a}{|u|^3},\qquad |\bb|\les\frac{a^\frac{3}{2}}{|u|^4},\qquad |\aa|\les\frac{a^2}{|u|^5},\\
    |\om,\trch,\log\Om|&\les\frac{1}{|u|},\quad|\trchbc|\les\frac{1}{|u|^2},\qquad\;\;\;|\omb|\les\frac{a}{|u|^3},\qquad\qquad\quad\;\;\,\,|\eta,\etab,\ze,\hchb|\les\frac{a^\frac{1}{2}}{|u|^2}.
\end{align*}
Moreover, analog estimates also hold for their $H^{s+4}(S_{u,\ub})$--norms.
\end{thm}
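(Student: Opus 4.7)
\medskip

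\noindent\textbf{Proof proposal.} The plan is to run a continuity/bootstrap argument in the region $V_\ast = V(-a/4,1)$ closing simultaneously the curvature flux norms $\mathcal{R}$ and the Ricci-coefficient norms $\mathcal{O}$ at the level of $s+5$ derivatives. Concretely, I would fix a large bootstrap constant $D$, assume
\begin{equation*}
    \mathcal{R}\leq D, \qquad \mathcal{O}\leq D \qquad \text{on } V(u,\ub)
\end{equation*}
for every $(u,\ub)\in V_\ast$ on which the solution is known to exist, and then improve these to $\mathcal{R}\lesssim 1$ and $\mathcal{O}\lesssim 1$ using the smallness $|u|^{-1}\leq 4a^{-1}\ll 1$ available in $V_\ast$. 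Standard local existence for the characteristic Cauchy problem (combined with the initial bounds $\mathcal{O}_{(0)}+\mathcal{R}_{(0)}\lesssim 1$ coming from the short-pulse ansatz $\afd|u_\infty|\hch\lesssim 1$) will then upgrade the improved estimates to a solution on all of $V_\ast$.

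First I would close the Ricci-coefficient bounds $\mathcal{O}$ from $\mathcal{R}$ by integrating the null transport equations of Proposition~\ref{nulles} in scale-invariant norms. The non-anomalous coefficients in $\Gag$ are estimated by integrating $\nabs_3$ equations on $\underline{H}_{\ub}$, using the Minkowskian data on $\ub=0$ to initialize and picking up an $a/|u|^2$-weight from the flux measure in \eqref{fluxsc}. The two anomalous coefficients $\hch$ and $\tfrac{a}{|u|}\hchb$ lumped into $\Gab$ are integrated along $H_u$ instead, using their initial values on $H_{u_\infty}$; here the $L^2_{sc}$ norm of the curvature source $\alpha$ (resp.\ the nonlinear source $\nabs\hat\otimes\etab+\dots$) is exactly at the level allowed by the $a^{\frac12}$-anomaly. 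Higher derivatives are handled by differentiating via the commutation identities of Lemma~\ref{comm} and using Proposition~\ref{Holder} to treat products, each nonlinear term picking up at least one factor $|u|^{-1}\lesssim a^{-1}$ so that it is absorbed. Elliptic gain of one derivative (Proposition~\ref{ellipticLp} applied to the Codazzi systems \eqref{codazzi} and the Hodge operators) is used in the standard way to trade $\nabs\Gag,\nabs\Gab$ against the curvature flux.

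Next I would close $\mathcal{R}$ using energy estimates for the Bianchi pairs of Proposition~\ref{bianchiequations}, in the scale-invariant $|u|^p$-weighted framework of \cite{ShenWan}. The key is that each pair $(\alpha,\beta)$, $(\beta,(\rho,\si))$, $((\rho,\si),\bb)$, $(\bb,\aa)$ satisfies an integration by parts identity in the region $V_\ast$ that produces the outgoing flux on $H_u$ and the incoming flux on $\underline{H}_{\ub}$ up to bulk error terms of the schematic form $\int\Gamma\cdot\Psi\cdot\Psi$. The built-in weights $a^{-s_2(\Psi)}|u|^{2s_2(\Psi)}$ are chosen precisely so that the $\ub$-integrated flux along $H_u$ and the $u$-integrated flux along $\underline{H}_{\ub}$ match on the common sphere $S_{u,\ub}$; the bulk errors are all controlled by $a^{-\frac12}\mathcal{O}\mathcal{R}+\dots$, hence absorbable. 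Higher derivatives of the Bianchi system are treated the same way after commuting with $\nabs,\nabs_3,\nabs_4$ using Lemma~\ref{comm}, where each commutator contribution again comes with an extra $|u|^{-1}$ or $a^{-1}$ factor.

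The hard part will be the energy estimate for $\alpha$. Because $\alpha$ has signature $s_2=0$ and appears as the source in the $\hch$ transport equation, one cannot avoid pairing $\alpha$ with $\beta$ in the very first Bianchi pair, and the bulk error $\int\hch\cdot\alpha\cdot\alpha$ is borderline anomalous-anomalous-normal in scale. Here I would import precisely the $|u|^p$-weighted estimate of \cite{ShenWan} with $p=0$: the chosen weight makes the dangerous $\trch\,\alpha$ term in $\nabs_3\alpha+\tfrac12\trchb\,\alpha=\nabs\hat\otimes\beta+\dots$ into a positive boundary contribution on $H_u$ rather than a bulk error, at the cost of a $4\omb\alpha$ term that is estimated by the $L^\infty$ bound on $\omb$ proved in the first step. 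Once the $\alpha$-estimate closes, the remaining Bianchi pairs are progressively better-behaved (their signatures increase), and one arrives at $\mathcal{R}\lesssim 1$. Combining with the improved $\mathcal{O}\lesssim 1$ and unwinding the scale-invariant norms via \eqref{dfsc} yields the pointwise estimates on each $S_{u,\ub}$ claimed in the statement, and the higher Sobolev bounds on $S_{u,\ub}$ follow from the $i\leq s+5$ version of the same argument together with the Sobolev embedding on $S_{u,\ub}$.
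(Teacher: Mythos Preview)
Your overall architecture---bootstrap, transport estimates for Ricci coefficients, and $|u|^p$--weighted energy estimates for the Bianchi pairs---matches the paper's. However, two points in your plan would not execute as written.

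First, the logical order is inverted. You propose to close $\mo$ first and then $\mr$, but the improvement of the anomalous quantity $\hch$ relies on already having the flux bound $\afd\|\a^{(i)}\|_{L^2_{sc}(\cuv)}\les 1$: integrating $\nabs_4\hch=-\a+\Gag\cdot\Gab$ with only the bootstrap $\mr\leq D$ yields $\afd\|\hch\|\les D$, not $\les 1$. The paper therefore interleaves: it first runs the $(\a,\b)$ energy estimate (Proposition~\ref{estab}), then uses that flux to improve $\Gab$ to $\|\Gab^{(s+5)}\|_{L^2_{sc}}\les\af$ (Proposition~\ref{esthchhchb}), and only then treats the remaining Bianchi pairs and the full $\mo$. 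Relatedly, the paper's bootstrap level is $a^{1/6}$ rather than a fixed $D$; a universal $D$ also works for $a$ large, but you should check that the triple--anomalous bulk term $\int\Gab\cdot\a\cdot\a$ in the $(\a,\b)$ estimate closes with your constant.

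Second, your description of which transport direction to use is internally inconsistent. Integrating ``$\nabs_3$ equations on $\Hb_\ub$'' means integrating in $u$ from $u_\infty$, so the initial value comes from $H_{u_\infty}$, not from the Minkowskian cone $\ub=0$; conversely ``along $H_u$'' means $\nabs_4$ integration from $\ub=0$. In the paper, $\hch,\trchc,\omb,\eta$ are estimated by $\nabs_4$--transport from $\ub=0$, while $\hchb,\trchbc,\om,\etab,\a$ use $\nabs_3$--transport from $H_{u_\infty}$ (Propositions~\ref{esthchhchb} and \ref{estom}--\ref{estetab}). Your sentence ``$\Gab$ integrated along $H_u$ using their initial values on $H_{u_\infty}$'' conflates the two. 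Once you sort out the directions and reorder to do $(\a,\b)$ before $\Gab$, your sketch becomes essentially the paper's proof; your account of the $\a$--estimate mechanism (the signed bulk term from the $\trchb\,\a$ contribution at $p=0$, not $\trch\,\a$) is the right idea modulo this typo.
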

The proof of Theorem \ref{maintrapped} is given in Section \ref{proofmain}. It is based on two intermediate results stated as follows, concerning the estimates for the norms $\mo$ and $\mr$ under bootstrap assumptions. 
\begin{thm}\label{M1}
Under the assumptions of Theorem \ref{maintrapped}. Assume in addition that
\begin{equation}
\mo_{(0)}\les 1,\qquad \mr_{(0)}\les 1,\qquad\mo\leq a^\frac{1}{6},\qquad \RR\leq a^\frac{1}{6}.
\end{equation}
Then, we have
\begin{equation}
    \mr\les 1.
\end{equation}
\end{thm}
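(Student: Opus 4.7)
The plan is to control $\mr$ by performing $L^2$ energy estimates on the null Bianchi system of Proposition \ref{bianchiequations}, using the four natural Bianchi pairs $(\a,\b)$, $(\b,(\rho,\si))$, $((\rho,\si),\bb)$, $(\bb,\aa)$, weighted with the powers of $|u|$ and $a$ dictated by the signatures of Section \ref{secsignatures}. Within $V_*$ we have $|u| \geq a/4 \gg 1$, and the smallness factor $|u|^{-1}$ appearing in Proposition \ref{Holder} will be the source of the gain that allows us to beat the bootstrap $\mo + \mr \leq a^{\frac{1}{6}}$. The estimate is precisely the $p = 0$ case of the $|u|^p$--weighted framework of \cite{ShenWan}, namely a flux-type estimate on the double null cones.

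First, I would commute each Bianchi equation with $(\af\nabs)^i$ for $0 \leq i \leq s+5$ using Lemma \ref{comm} to obtain evolution equations for the differentiated curvature components of the schematic form
\begin{align*}
\nabs_3 \psi_1^{(i)} + \tfrac{j_1}{2}\trchb\, \psi_1^{(i)} &= \mathcal{D} \psi_2^{(i)} + E_1^{(i)}, \\
\nabs_4 \psi_2^{(i)} + \tfrac{j_2}{2}\trch\, \psi_2^{(i)} &= \mathcal{D}^* \psi_1^{(i)} + E_2^{(i)},
\end{align*}
where $\mathcal{D}, \mathcal{D}^*$ are formally adjoint Hodge operators on $S_{u,\ub}$ and $E^{(i)}_\bu$ collects nonlinear and commutator errors. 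Pairing the first equation with $\psi_1^{(i)}$ and the second with $\psi_2^{(i)}$, multiplying by $\Om^2$ and integrating over $V(u,\ub)$, the divergence theorem and the formal adjoint cancellation produce the scale-invariant energy identity
\begin{align*}
\|\psi_1^{(i)}\|_{L^2_{sc}(H_u)}^2 + \|\psi_2^{(i)}\|_{L^2_{sc}(\Hb_\ub)}^2 \les \|\psi_1^{(i)}\|_{L^2_{sc}(H_{u_\infty})}^2 + \|\psi_2^{(i)}\|_{L^2_{sc}(\Hb_0)}^2 + \mathrm{Err}_i.
\end{align*}
The $\Hb_0$ flux vanishes by the Minkowskian assumption along $\ub = 0$, while the $H_{u_\infty}$ flux is controlled by $\mr_{(0)}^2 \les 1$. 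Summing over the four pairs and all $i \leq s+5$ and taking the supremum over $V_*$ yields $\mr^2 \les 1 + \sup_{V_*}\sum_i \mathrm{Err}_i$, reducing the theorem to showing $\mathrm{Err}_i \les a^{-\frac{1}{2}}$.

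The remaining task is then to bound the bulk errors using the bootstrap assumptions. Each term in $\mathrm{Err}_i$ is schematically a spacetime integral of a trilinear expression in which one factor is a Ricci coefficient (either normal from $\Gag^{(\leq i)}$, or anomalous from $\Gab^{(\leq i)}$) and the remaining two factors are curvature components from the Bianchi pair under consideration. Applying Proposition \ref{Holder} together with the definitions \eqref{dfsc}--\eqref{fluxsc}, a normal trilinear term gains a factor of $|u|^{-1} \les a^{-1}$ beyond the pointwise product $\mo \cdot \mr^2$, whereas an anomalous term gains only $\afd|u|^{-1} \les a^{-\frac{1}{2}}$. Using the bootstrap assumption $\mo \cdot \mr^2 \les a^{\frac{1}{2}}$, both types of error are therefore $\les a^{-\frac{1}{2}}$, and the estimate closes for $a$ sufficiently large.

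The principal obstacle will be the top-order ($i = s+5$) commutator terms generated by $[\nabs_{3,4}, (\af\nabs)^{s+5}]$, where the $\chi$-- and $\chib$--contributions in Lemma \ref{comm} produce terms involving $(\af\nabs)^{s+5}\hch$ at the same derivative level as the curvature. This is precisely the point at which the design of the signature function $s_2$ in Section \ref{secsignatures} and the $\afd$ weight built into the definitions of $\mr$ and of the $\Gab$--part of $\mo$ in Section \ref{secRnorms} become essential: because $s_2(\hch) = 0$ and $\hch$ enters the fundamental norms with an extra $\afd$, the resulting trilinear error still enjoys the $a^{-\frac{1}{2}}$ gain rather than a divergent $\af$ factor. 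A parallel difficulty, which is handled analogously, appears for the pair $(\a,\b)$, where $\a$ is itself anomalous; there the $\afd$ factor in the definition of $\mr$ restores balance in the energy identity.
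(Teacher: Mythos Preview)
Your overall strategy---energy estimates on the four Bianchi pairs with signature-adapted weights---is exactly what the paper does, and your description of the top-order commutator structure is accurate. However, there is a genuine gap in the error analysis that prevents the argument from closing as written.

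The problem is the term $\hchb\cdot\a$ in the $\nabs_4\rho$ (and $\nabs_4\si$) equation. In the schematic notation this is $\frac{|u|}{a}\Gab\cdot\a$, so after the $\frac{a}{|u|}$ weight from Proposition~\ref{keyintegral} it contributes
\[
\int_0^{\ub}\int_{u_\infty}^{u}\big\|\psi^{(i)}\cdot(\Gab\cdot\a)^{(i)}\big\|_{L^1_{sc}}\,du'\,d\ub'
\]
to the error for the pair $(\b,(\rho,\si))$. This is \emph{doubly} anomalous: both $\Gab$ and $\a$ carry an extra $\af$ relative to the fundamental norms. Your heuristic ``an anomalous term gains $\afd|u|^{-1}$'' accounts for only one anomalous factor; with two, the gain is $\af\cdot\af\cdot|u|^{-2}\cdot|u|$ integrated in $u$, which is $O(1)$, not $O(a^{-1/2})$. (Note also that $a^{-1/2}\cdot a^{1/2}=1$, not $a^{-1/2}$.) Concretely, using only the bootstrap $\|\Gab\|_{L^\infty_{sc}}\les a^{2/3}$ and $\|\a^{(i)}\|_{L^2_{sc}(H)}\les a^{2/3}$, the error is $\les a^{1/2}$, whereas the target for this pair is $\les 1$. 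The estimate does not close.

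The paper resolves this by an intermediate step that is \emph{not} an energy estimate: after proving the $(\a,\b)$ bound (which does close under the bootstrap, giving $\|\a^{(s+5)}\|_{L^2_{sc}(H)}\les\af$), it integrates the transport equations $\nabs_4\hch=-\a+\cdots$, $\nabs_3\hchb=-\aa+\cdots$, $\nabs_3\a=\nabs\hot\b+\cdots$ to \emph{improve} the sphere norms to $\|\Gab^{(s+5)}\|_{L^2_{sc}(S)}\les\af$ (Proposition~\ref{esthchhchb} and Remark following it). Only with this improved bound on $\Gab$---together with the improved $\a$-flux---do the remaining three Bianchi pairs close with error $\les 1$. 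You need to insert this transport step between the $(\a,\b)$ estimate and the others; without it the scheme loses exactly one factor of $a^{1/6}$ and merely reproduces the bootstrap.
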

Theorem \ref{M1} is proved in Section \ref{secR}. The proof is based on the $|u|^p$--weighted estimates for Bianchi equations established in \cite{ShenWan}. Similar arguments are also used in \cite{An12,An,AnLuk,holzegel,Shen22,Shen24}.
\begin{thm}\label{M2}
Under the assumptions of Theorem \ref{maintrapped}. Assume in addition that
\begin{align}
    \mo_{(0)}\les 1,\qquad \mr_{(0)}\les 1,\qquad \mo\leq a^\frac{1}{6},\qquad \RR\les 1.
\end{align}
Then, we have
\begin{equation}
    \mo\les 1.
\end{equation}
\end{thm}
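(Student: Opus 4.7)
\medskip

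\noindent\textbf{Proof proposal for Theorem \ref{M2}.} The strategy is to close $\mo\les 1$ by integrating the null structure equations of Proposition \ref{nulles} along the two null directions, using the bootstrap bounds on $\mo$ to handle nonlinear products and the assumed curvature flux bound $\Rk\les 1$ to control the curvature source terms. Throughout, we exploit the crucial fact that in $V_*$ we have $|u|^{-1}\le 4 a^{-1}\ll 1$, so that by Proposition \ref{Holder} every quadratic Ricci term $\phi_1\c\phi_2$ contributes an extra factor $|u|^{-1}\mo\les a^{-1}\cdot a^{1/6}=a^{-5/6}\ll 1$, allowing such terms to be absorbed or swept into acceptable remainders. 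The scale-invariant signature bookkeeping of Section \ref{secsignatures} guarantees that if one writes every transport equation in scale-invariant form, all the linear coefficients and source weights are $O(1)$, so the analysis reduces to an essentially Minkowskian one.

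\medskip

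\noindent\emph{Step 1: the $e_4$-quantities.} I would first close estimates for those Ricci coefficients transported in the $e_4$-direction, namely $\hch$, $\trchc$, $\om$, $\etab$. The equations
\[
\nabs_4\hch+\trch\,\hch=-2\om\hch-\a,\qquad \nabs_4\trchc=-\tfrac12(\trchc)^2-\tfrac{2}{|u|}\trchc-|\hch|^2-2\om\trch,
\]
and the analogous ones for $\om$ (via $\nabs_4\omb=\ldots$ combined with \eqref{nullidentities}) and for $\etab$ become, after multiplication by the appropriate signature weights, transport equations of the form $\nabs_4\psi=F$ with $F$ controlled in $L^2_{sc}(\cuv)$ by $\mr+\frac{1}{|u|}\mo^2$. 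Integration from $H_{u_\infty}$ using the initial bound $\mo_{(0)}\les 1$ yields the $L^2_{sc}(S_{u,\ub})$ control of these coefficients. The same scheme applied to $\hch$ gives the bad-direction contribution $\|\hch\|_{L^2_{sc}}\les \mo_{(0)}+\afd \Rk\les \af$, consistent with the signature weight for $\Ga_b$.

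\medskip

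\noindent\emph{Step 2: the $e_3$-quantities and Minkowski data at $\ub=0$.} For the quantities naturally transported in the $e_3$-direction, $\hchb$, $\trchbc$, $\omb$, $\eta$, one uses
\[
\nabs_3\hchb+\trchb\,\hchb=-2\omb\hchb-\aa,\qquad \nabs_3\trchbc+\tfrac12(\trchbc)^2+\tfrac{2}{|u|}\trchbc=-|\hchb|^2-2\omb\trchb,
\]
etc. These equations are integrated along $\Hb_\ub$, starting from the Minkowskian data at $\ub=0$ which contributes nothing. The key point is that the curvature sources ($\aa$, etc.) are controlled by $\ur\le \Rk\les 1$ after flux integration, while the nonlinear self-interactions gain the factor $|u|^{-1}\mo\les a^{-5/6}$. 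This yields $\|\hchb\|_{L^2_{sc}}\les \af$, $\|\trchbc\|_{L^2_{sc}}\les 1$, etc.

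\medskip

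\noindent\emph{Step 3: $\ze$ via Codazzi, commutations, and closing the bootstrap.} The torsion $\ze$ does not satisfy a clean transport equation in either direction; I would recover it from the Codazzi equations \eqref{codazzi} together with Proposition \ref{ellipticLp} (or alternatively from its $\nabs_4$-equation derived from $\eta=\ze+\nabs\log\Om$, $\etab=-\ze+\nabs\log\Om$). The metric quantities $\slg_{AB}-\slgo_{AB}$, $\log\Om$, $\bbb^A$ are then recovered by direct $\nabs_4$ or $\nabs_3$ integration. Higher-order derivatives up to $(\af\nabs)^{s+5}$ are obtained by repeatedly commuting the transport equations with $\nabs$ and $\af\nabs$ via Lemma \ref{comm}; the commutator terms are again quadratic in Ricci coefficients and carry an extra $|u|^{-1}$ factor, hence contribute $a^{-5/6}\mo^2\le a^{-1/2}$ and are harmless. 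Summing up yields $\mo\les \mo_{(0)}+\Rk+a^{-5/6}\mo^2\les 1$, which strictly improves the bootstrap bound $\mo\le a^{1/6}$.

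\medskip

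\noindent\emph{Main obstacle.} The delicate point is the simultaneous treatment of the ``anomalous'' quantities $\hch$ and $\hchb$, whose natural size is $\af\gg 1$. One must avoid any estimate in which a product $\hch\c\hchb$ or a commutator $[\nabs,\nabs_3]\hch$ is allowed to scale like $a\cdot|u|^{-1}$ without the redeeming $|u|^{-1}$ factor; the scale-invariant norm of Definition \ref{dfsc} together with the signature assignments in Definition \ref{gammag} is precisely designed so that after rewriting, every such product carries the factor $|u|^{-1}\cdot\mo\ll 1$. Verifying this term by term across the ten-plus structure equations, and carrying it through all commutations up to order $s+5$, is the only truly lengthy (but now routine) part of the proof.
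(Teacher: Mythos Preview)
Your overall strategy is correct and matches the paper: integrate the null structure equations in the appropriate null direction, use Proposition \ref{Holder} plus the bootstrap $\mo\le a^{1/6}$ to make all quadratic Ricci terms harmless (each picks up $|u|^{-1}\les a^{-1}$), and feed in the curvature flux $\mr\les 1$ for the linear source terms. The commutation scheme in Step 3 is also the right one.

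However, you have systematically \emph{swapped} the two groups of Ricci coefficients. In the null structure system of Proposition \ref{nulles} there are \emph{no} equations of the form $\nabs_4\om$, $\nabs_4\etab$, $\nabs_3\omb$, or $\nabs_3\eta$. One only has
\[
\nabs_3\om=\tfrac12\rho+\Gag\c\Gag,\qquad \nabs_3\etab=\bb+\ldots,\qquad \nabs_4\omb=\tfrac12\rho+\Gag\c\Gag,\qquad \nabs_4\eta=-\b+\ldots,
\]
so $(\om,\etab)$ must be recovered by $e_3$-transport and $(\omb,\eta)$ by $e_4$-transport --- exactly the opposite of your Steps 1 and 2. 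This is how the paper proceeds (Propositions \ref{estom}, \ref{estetab} via $\nabs_3$; Proposition \ref{estomb} via $\nabs_4$). Your initial-data assignment is correspondingly reversed: $e_4$-transport is along $H_u$ starting from $S_{u,0}\subset\Hb_0$, where the data is Minkowskian (trivial), while $e_3$-transport is along $\Hb_\ub$ starting from $S_{u_\infty,\ub}\subset H_{u_\infty}$, where the data is controlled by $\mo_{(0)}$. With the correct pairing the curvature fluxes also match: the $\nabs_3$-equations feed on $\|\rho,\bb\|_{L^2_{sc}(\ucuv)}$, the $\nabs_4$-equations on $\|\rho,\b\|_{L^2_{sc}(\cuv)}$.

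Two smaller omissions. First, $\ze=\tfrac12(\eta-\etab)$ by \eqref{nullidentities}, so there is no need for Codazzi once $\eta,\etab$ are done. Second, the curvature components $(\b,\rho,\si,\bb,\aa)$ belong to $\Gag^{(1)}$ and hence to $\mo$; the paper closes their $L^2_{sc}(S_{u,\ub})$ bounds separately (Proposition \ref{estR}) by integrating their $\nabs_4$-Bianchi equations and using $\|\a^{(i)}\|_{L^2_{sc}(\cuv)}\les\af$ from Proposition \ref{estab}. Your sketch should include this step.
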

Theorem \ref{M2} is proved in Section \ref{secO}. The proof is done by integrating the null structure equations and the Bianchi equations along the outgoing and incoming null cones.
\subsection{Bootstrap assumptions and first consequences}\label{secboot}
In the rest of Section \ref{sectrapped}, we always make the following bootstrap assumptions:
\begin{align}
    \mo\leq a^\frac{1}{6},\quad\qquad\mr\leq a^\frac{1}{6}.\label{B1}
\end{align}
The following consequences of \eqref{B1} will be used frequently throughout this paper.
\begin{lem}\label{evolution}
    For any $S$--tangent tensor field $\phi$, we have
    \begin{align*}
        \|\phi\|_{L^2_{sc}(S_{u,\ub})}\les\|\phi\|_{L^2_{sc}(S_{u,0})}+\int_0^\ub \|\nabs_4\phi\|_{L^2_{sc}(S_{u,\ub'})}d\ub'.
    \end{align*}
\end{lem}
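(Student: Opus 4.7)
The plan is to reduce the claimed scale-invariant inequality to a standard $L^2$ transport estimate on the $2$--sphere. The crucial observation is that along the outgoing cone $H_u$ the scale-invariant weight $a^{-s_2(\phi)}|u|^{2s_2(\phi)}$ is independent of $\ub$, so by definition \eqref{dfsc} the inequality is equivalent to the same statement with $\|\cdot\|_{L^2(S_{u,\ub})}$ in place of $\|\cdot\|_{L^2_{sc}(S_{u,\ub})}$. Once this reduction is made, only a routine evolution argument along $\ub$ combined with Gronwall remains.

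First, I would use the transport formula for the area form on $S_{u,\ub}$ under the flow of $N=\pr_{\ub}$, namely $\pr_{\ub}\sqrt{\det\slg}=\Om\trch\sqrt{\det\slg}$, together with the relation $e_4=\Om^{-1}N$ and the definition of $\nabs_4$ as the projection to $S_{u,\ub}$ of $\D_{e_4}$. Differentiating $\|\phi\|_{L^2(S_{u,\ub})}^{2}=\int_{S_{u,\ub}}|\phi|_{\slg}^{2}\,dA$ in $\ub$, the resulting expression involves $2\Om\,\slg(\nabs_4\phi,\phi)$, the trace term $\Om\trch |\phi|^{2}$, and corrections arising from contracting with $\pr_{\ub}\slg^{AB}=-2\Om\chi^{AB}$ that depend on the tensor type of $\phi$ but are all pointwise bounded by $|\chi||\phi|^{2}$. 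Applying Cauchy--Schwarz and dividing by $\|\phi\|_{L^2(S_{u,\ub})}$ yields
\begin{equation*}
\frac{d}{d\ub}\|\phi\|_{L^2(S_{u,\ub})}\;\les\;\|\nabs_4\phi\|_{L^2(S_{u,\ub})}+\||\chi|\|_{L^\infty(S_{u,\ub})}\|\phi\|_{L^2(S_{u,\ub})}.
\end{equation*}

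Second, the bootstrap assumption \eqref{B1}, together with the definition of the $\mo$--norm and a Sobolev embedding on $S_{u,\ub}$, gives $\||\chi|\|_{L^\infty(S_{u,\ub})}\les |u|^{-1}$ uniformly in $V_*$. Since $|u|\geq a/4$ and $\ub\in[0,1]$, the total transport coefficient satisfies $\int_{0}^{\ub}\||\chi|\|_{L^\infty(S_{u,\ub'})}\,d\ub'\les a^{-1}\ll 1$, so Gronwall's inequality produces a harmless constant factor and yields
\begin{equation*}
\|\phi\|_{L^2(S_{u,\ub})}\;\les\;\|\phi\|_{L^2(S_{u,0})}+\int_{0}^{\ub}\|\nabs_4\phi\|_{L^2(S_{u,\ub'})}\,d\ub'.
\end{equation*}
Multiplying through by the $\ub$--independent weight $a^{-s_2(\phi)}|u|^{2s_2(\phi)}$ and recalling \eqref{dfsc} and \eqref{s1s2} (which gives $s_2(\nabs_4\phi)=s_2(\phi)$) converts the estimate back into scale-invariant form, exactly as claimed.

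The only (minor) obstacle is justifying that the transport coefficient $\||\chi|\|_{L^\infty}$ is integrable in $\ub$ with a universal bound; this requires passing from the $L^2_{sc}$ bootstrap on $\Gag,\Gab$ to an $L^\infty$ bound on $\trch$ and $\hch$. Because $|u|^{-1}\les a^{-1}$ throughout $V_*$, even a polynomial loss $a^{1/6}$ from the bootstrap $\mo\leq a^{1/6}$ is absorbed by the large $|u|$, so no logarithmic or power losses propagate into the Gronwall factor.
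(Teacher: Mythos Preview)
Your proof is correct and follows exactly the standard route: reduce to the unweighted $L^2$ statement using the $\ub$--independence of the scale-invariant weight (which relies on $s_2(\nabs_4\phi)=s_2(\phi)$), derive the transport inequality for $\|\phi\|_{L^2(S_{u,\ub})}$ via $\pr_{\ub}\sqrt{\det\slg}=\Om\trch\sqrt{\det\slg}$, and close with Gronwall using the bootstrap bound on $\chi$. The paper itself does not give a proof but simply cites Proposition~3.6 of \cite{An}, whose argument is precisely the one you have written out; your handling of the $L^\infty$ bound on $\chi$ via Sobolev and the observation that any polynomial loss in $a$ is absorbed by $|u|\geq a/4$ is the correct way to justify the Gronwall step under the bootstrap \eqref{B1}.
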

\begin{proof}
    See Proposition 3.6 in \cite{An}.
\end{proof}
\begin{lem}\label{3evolution}
    Let $\phi$ and $F$ be $S$--tangent tensor fields satisfying the following transport equation:
    \begin{align*}
        \nabs_3\phi+\la_0\trchb\phi=F.
    \end{align*}
    Denoting $\la_1=2\la_0-1$, we have
    \begin{align*}
        |u|^{\la_1-2s_2(\phi)}\|\phi\|_{L^2_{sc}(S_{u,\ub})}\les |u_\infty|^{\la_1-2s_2(\phi)}\|\phi\|_{L^2_{sc}(S_{u_\infty,\ub})}+a\int_{u_\infty}^u|u'|^{\la_1-2s_2(F)}\|F\|_{L^2_{sc}(S_{u',\ub})}du'.
    \end{align*}
\end{lem}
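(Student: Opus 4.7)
The plan is to reduce the scale-invariant weighted estimate to a standard Gronwall inequality on unweighted $L^{2}$ norms and then restore the weights at the end. Algebraically, by the definitions in \eqref{dfsc},
\begin{equation*}
|u|^{\la_{1}-2s_{2}(\phi)}\|\phi\|_{L^{2}_{sc}(S_{u,\ub})}=a^{-s_{2}(\phi)}|u|^{\la_{1}}\|\phi\|_{L^{2}(S_{u,\ub})},
\end{equation*}
and since $s_{2}(\trchb)=1$ the transport equation forces $s_{2}(F)=s_{2}(\phi)+1$ by \eqref{s1s2}, so
\begin{equation*}
a\cdot|u'|^{\la_{1}-2s_{2}(F)}\|F\|_{L^{2}_{sc}(S_{u',\ub})}=a^{-s_{2}(\phi)}|u'|^{\la_{1}}\|F\|_{L^{2}(S_{u',\ub})}.
\end{equation*}
Therefore, after dividing through by $a^{-s_{2}(\phi)}$, the stated inequality is equivalent to
\begin{equation*}
|u|^{\la_{1}}\|\phi\|_{L^{2}(S_{u,\ub})}\les |u_{\infty}|^{\la_{1}}\|\phi\|_{L^{2}(S_{u_{\infty},\ub})}+\int_{u_{\infty}}^{u}|u'|^{\la_{1}}\|F\|_{L^{2}(S_{u',\ub})}\,du'.
\end{equation*}

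Next, I would derive the basic energy identity by differentiating $\int_{S_{u',\ub}}|\phi|^{2}\,d\mu_{\slg}$ along $\Nb=\pr_{u}+\bbb=\Om e_{3}$. Using $\mathcal{L}_{\Nb}d\mu_{\slg}=\Om\trchb\,d\mu_{\slg}$ on $\Hb_{\ub}$ and substituting the transport equation yields
\begin{equation*}
\frac{d}{du'}\int_{S_{u',\ub}}|\phi|^{2}\,d\mu_{\slg}=\int_{S_{u',\ub}}\Om\bigl(-\la_{1}\trchb|\phi|^{2}+2\phi\cdot F\bigr)\,d\mu_{\slg},\qquad \la_{1}=2\la_{0}-1.
\end{equation*}
I would then split $\trchb=-\tfrac{2}{|u'|}+\trchbc$ and $\Om=1+(\Om-1)$. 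Since $u'<0$, one has $\pr_{u'}|u'|^{2\la_{1}}=-2\la_{1}|u'|^{2\la_{1}-1}$, and the principal term $-\la_{1}\cdot(-2/|u'|)=2\la_{1}/|u'|$ precisely cancels this weight after multiplying the identity by $|u'|^{2\la_{1}}$. The remainders are $\trchbc$- and $(\Om-1)$-terms, which the bootstrap \eqref{B1}, combined with Sobolev embedding on the spheres $S_{u',\ub}$ and the lower bound $|u'|\geq a/4$ in $V_{*}$, bounds in $L^{\infty}$ by some $\eps(a)/|u'|$ with $\eps(a)\to 0$ as $a\to\infty$. Dividing by $2|u'|^{\la_{1}}\|\phi\|_{L^{2}(S_{u',\ub})}$ then produces
\begin{equation*}
\left|\tfrac{d}{du'}\bigl(|u'|^{\la_{1}}\|\phi\|_{L^{2}(S_{u',\ub})}\bigr)\right|\les|u'|^{\la_{1}}\|F\|_{L^{2}(S_{u',\ub})}+\frac{\eps(a)}{|u'|}\,|u'|^{\la_{1}}\|\phi\|_{L^{2}(S_{u',\ub})},
\end{equation*}
and a standard Gronwall argument absorbs the second term on the right, closing the reduced inequality.

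The main technical obstacle is the quantitative extraction of $L^{\infty}$ smallness of $\trchbc$ and $\Om-1$ from the scale-invariant bootstrap $\mo\leq a^{\frac{1}{6}}$. This requires carefully translating between $L^{2}$ and $L^{2}_{sc}$ norms, invoking the several-derivative control encoded in $\mo_{i}$ for $i\leq s+5$, and Sobolev embedding on the $2$-spheres; the lower bound $|u'|\geq a/4$ on $V_{*}$ together with the weights built into the definition of $\Gag$ ensures that the resulting constants genuinely decay with $a$, so that the $\eps(a)/|u'|$ error contribution is integrable with Gronwall constant close to one. Once this is in hand, the scale-invariant conclusion follows by reversing the algebraic identity established in the first paragraph.
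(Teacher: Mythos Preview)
Your argument is correct and is precisely the standard approach; the paper itself does not give a proof but simply cites Proposition~3.7 of \cite{An}, whose proof proceeds along the same lines (derive the energy identity along $\Hb_{\ub}$, observe that the choice $\la_{1}=2\la_{0}-1$ makes the weight $|u'|^{2\la_{1}}$ exactly absorb the principal part $-2/|u'|$ of $\trchb$, and close by Gronwall with errors from $\trchbc$ and $\Om-1$). Your algebraic reduction to the unweighted inequality is clean and correct, resting on the consistency relation $s_{2}(F)=s_{2}(\phi)+1$, which indeed holds in every application of the lemma.

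One small point worth making explicit: $\log\Om$ is \emph{not} an element of $\Gag$, so the bootstrap \eqref{B1} does not directly bound $\Om-1$ in $L^{\infty}$. You need the extra step of integrating $\nabs_{4}\log\Om=-2\om$ from the Minkowski data on $\Hb_{0}$; since $\om\in\Gag$ with $s_{2}(\om)=0$, the bootstrap plus Sobolev give $\|\om\|_{L^{\infty}}\les a^{1/6}/|u|$, hence $|\Om-1|\les a^{1/6}/|u|$ as required. Likewise, the bound $\|\trchbc\|_{L^{\infty}}\les a^{1/6}/|u'|^{2}$ follows from the fact that it is $\frac{a}{|u|}\trchbc$ (not $\trchbc$ itself) that sits in $\Gag$; unwinding the weight and the $s_{2}=1$ scaling gives exactly this, and then $\int_{u_{\infty}}^{-a/4}|\trchbc|\,du'\les a^{-5/6}$ independently of $u_{\infty}$. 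With these two observations in hand your Gronwall step goes through with constant $e^{O(a^{-5/6})}\leq 2$.
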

\begin{proof}
    See Proposition 3.7 in \cite{An}.
\end{proof}
\begin{prop}\label{sobolev}
    We have the following Sobolev inequality:
    \begin{align}
        \|\phi\|_{L^\infty_{sc}(S_{u,\ub})}\les\|\phi^{(2)}\|_{L^2_{sc}(S_{u,\ub})}.
    \end{align}
\end{prop}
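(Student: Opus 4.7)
The plan is to first establish the standard (non-scale-invariant) Sobolev embedding on the 2-surface $S_{u,\ub}$ and then repackage it using the weight definitions in \eqref{dfsc}. Throughout, I will exploit that the bootstrap assumptions \eqref{B1} force $(S_{u,\ub},\slg)$ to be close to a round sphere of radius $|u|$, so that the standard 2D Sobolev constants are uniform.

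First, I would show that the area radius $r(u,\ub)$ of $S_{u,\ub}$ is comparable to $|u|$, and moreover that $\slg$ is, up to a rescaling by $|u|^2$, close to the fixed round metric $\slgo$ on the unit sphere. This is obtained by integrating the transport equation for $\slg$ along $e_4$ starting from the Minkowskian data on $\{\ub = 0\}$, using the $L^\infty$-control of $\hch$ and $\trch$ implied by the bootstrap $\mo\leq a^{1/6}$ together with the scale-invariant definitions. At this stage one also has $|\mathbf{K} - |u|^{-2}|\ll |u|^{-2}$, so the Gauss curvature is positive and uniformly bounded when measured in units of $r^{-2}$.

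Next, on such a 2-surface of bounded geometry one has the standard Sobolev embedding $H^2 \hookrightarrow L^\infty$ in the form
\begin{equation*}
\|\phi\|_{L^\infty(S_{u,\ub})} \les |u|^{-1}\|\phi\|_{L^2(S_{u,\ub})} + \|\nabs\phi\|_{L^2(S_{u,\ub})} + |u|\|\nabs^2\phi\|_{L^2(S_{u,\ub})},
\end{equation*}
which follows, e.g., from the Gagliardo--Nirenberg inequality on $(S_{u,\ub},\slg)$ combined with the isoperimetric inequality, after rescaling to a surface of unit area radius. The weights of $|u|$ in front of each term are dictated by the scaling $r\sim|u|$.

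Finally, multiply both sides by the scale-invariant weight $a^{-s_2(\phi)}|u|^{2s_2(\phi)+1}$. Using \eqref{s1s2} and the definition of $\|\cdot\|_{L^2_{sc}}$, one checks directly that the three terms on the right become $\|\phi\|_{L^2_{sc}(S_{u,\ub})}$, $\|\af\nabs\phi\|_{L^2_{sc}(S_{u,\ub})}$, and $\|(\af\nabs)^2\phi\|_{L^2_{sc}(S_{u,\ub})}$, respectively, where the extra factors of $\af$ are absorbed using that $s_2(\af\nabs\psi)=s_2(\psi)+\frac{1}{2}$. Summing these three is precisely what $\|\phi^{(2)}\|_{L^2_{sc}(S_{u,\ub})}$ denotes (in the convention of Definition \ref{gammag}), yielding the claimed inequality.

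The only substantive step is the first one: ensuring that $(S_{u,\ub},\slg)$ really does have controlled area radius and Gauss curvature, so that the Sobolev constant is universal and the factor $|u|$ plays the role of a true area radius. Everything else is a bookkeeping translation between $\|\cdot\|_{L^p}$ and $\|\cdot\|_{L^p_{sc}}$.
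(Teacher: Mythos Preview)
Your proposal is correct and is essentially the argument that the cited reference (Proposition~3.11 in \cite{An}) contains; the paper itself gives no proof beyond the citation. The only point worth flagging is notational: at this stage of the paper the symbol $\phi^{(2)}$ has not been defined for a general tensor $\phi$ (Definition~\ref{gammag} treats only $\Gag^{(i)},\Gab^{(i)}$, and Definition~\ref{dfdkb} comes later and uses the Hodge operators $\dkb$ rather than $\af\nabs$), but under either reading your argument goes through, since the Hodge operators and $\nabs$ are interchangeable at this level by the elliptic identities~\eqref{dddd} and Proposition~\ref{ellipticLp}.
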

\begin{proof}
    See Proposition 3.11 in \cite{An}.
\end{proof}
\begin{prop}\label{commutation}
We have the following schematic commutation formulae:
\begin{align*}
    [\Om\nabs_4,\nabs]&=\Gab\c\nabs+\Gag^{(1)},\\
    [\Om\nabs_3,\nabs]&=-\frac{1}{2}\Om\trchb\nabs+\frac{|u|}{a}\Gab\c\nabs+\Gag^{(1)}.
\end{align*}
\end{prop}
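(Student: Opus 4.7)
The identities are a direct consequence of the exact commutator formulae in Lemma \ref{comm}, combined with the trace/traceless decomposition of the null second fundamental forms and the schematic conventions of Definition \ref{gammag}. First I would write $\chi = \hch + \tfrac{1}{2}\trch\,\slg$ and $\chib = \hchb + \tfrac{1}{2}\trchb\,\slg$, applied to the principal-order terms $-\Om\chi_{BC}\nabs_C$ and $-\Om\chib_{BC}\nabs_C$ appearing in the two commutators.

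For the first commutator, $-\Om\chi_{BC}\nabs_C = -\Om\hch_{BC}\nabs_C - \tfrac{1}{2}\Om\trch\,\nabs_B$. Since $\Om = O(1)$ on $V_*$ (from $|\log\Om|\les |u|^{-1}$) and $\hch\in\Gab$ by definition, the first piece is $\Gab\c\nabs$. For the trace piece, $\trch = \trchc + 2|u|^{-1}$; both are of size $\les a|u|^{-2}$, which in $V_*$ (where $|u|\geq a/4$) is strictly smaller than $|\Gab|\sim a^{1/2}|u|^{-1}$, so the $\trch$-contribution is schematically absorbed into the $\Gab\c\nabs$ term. The zeroth-order terms from Lemma \ref{comm}, namely $\Om(\chi_{A_iB}\etab_C - \chi_{BC}\etab_{A_i} + \ins_{A_iC}{}^*\b_B)$, split via the same decomposition into $\Om\hch\c\etab$, $\Om\trch\cdot\etab$, and $\Om\b$. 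The first two are products of Ricci coefficients, each bounded schematically by $\Gag^{(1)}$ (since $\Gag \subset \Gag^{(1)}$ and products of $\Gag,\Gab$ coefficients multiplied by $O(|u|^{-1})$ or $O(\Gab)$ remain in $\Gag^{(1)}$), while $\b\in\Gag^{(1)}$ by Definition \ref{gammag}. This yields the first identity.

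For the second commutator, the decomposition $\chib = \hchb + \tfrac{1}{2}\trchb\,\slg$ behaves differently: since $|\trchb|\sim 2|u|^{-1}$ dominates $|\hchb|\les a^{1/2}|u|^{-2}$ in $V_*$, the trace piece $-\tfrac{1}{2}\Om\trchb\,\nabs_B$ is the leading term and is kept explicitly. The traceless piece $-\Om\hchb_{BC}\nabs_C$ is rewritten using $\hchb = \tfrac{|u|}{a}\cdot(\tfrac{a}{|u|}\hchb)$ and the fact that $\tfrac{a}{|u|}\hchb\in\Gab$, producing precisely the $\tfrac{|u|}{a}\Gab\c\nabs$ term. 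The remaining lower-order contributions $\Om(\chib_{A_iB}\eta_C - \chib_{BC}\eta_{A_i} + \ins_{A_iC}{}^*\bb_B)$ are handled identically to the first case: the $\hchb\c\eta$ products and $\trchb\cdot\eta$ products are in $\Gag^{(1)}$, and $\bb\in\Gag^{(1)}$ by definition.

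The proof is essentially bookkeeping; no analytic input beyond Lemma \ref{comm} and the definitions is needed. The only subtle point---and the one I would be most careful with---is verifying that the various products of $\chi, \chib$ with $\eta, \etab$ and the stray $\trch, \trchb$ contributions, after applying the decompositions, fit into the schematic categories $\Gab\c\nabs$, $\tfrac{|u|}{a}\Gab\c\nabs$, and $\Gag^{(1)}$ in a way consistent with the signature and size conventions of Section \ref{secsignatures}. This amounts to checking that every resulting term either has coefficient in $\Gab$ (or $\frac{|u|}{a}\Gab$) contracted with $\nabs U$, or else is schematically bounded by a product involving quantities in $\Gag$, $\Gab$, or the curvature components $\b, \rho, \si, \bb, \aa$.
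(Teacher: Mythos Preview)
Your proposal is correct and follows exactly the approach the paper indicates (the paper's proof is the single line ``It follows directly from Lemma \ref{comm} and Definition \ref{gammag}''); you have simply supplied the bookkeeping that this line suppresses. One minor slip: the piece $2|u|^{-1}$ in $\trch$ is of size $|u|^{-1}$, not $\les a|u|^{-2}$ (that bound fails when $|u|>a$), but your absorption into $\Gab\c\nabs$ still goes through since $|u|^{-1}\ll a^{1/2}|u|^{-1}\sim|\Gab|$.
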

\begin{proof}
    It follows directly from Lemma \ref{comm} and Definition \ref{gammag}.
\end{proof}
\subsection{Energy estimates for curvature}\label{secR}
In this section, we prove Theorem \ref{M1} by the $|u|^p$--weighted estimate established in \cite{ShenWan}.
\subsubsection{Estimates for general Bianchi pairs}
The following lemma provides the general structure of Bianchi pairs.
\begin{lem}\label{keypoint}
Let $k=1,2$ and $a_{(1)}$, $a_{(2)}$ be real numbers. Then, we have the following properties.
\begin{enumerate}
    \item If $\psi_{(1)},h_{(1)}\in\sk_k$ and $\psi_{(2)},h_{(2)}\in \sk_{k-1}$ satisfying
    \begin{align}
    \begin{split}\label{bianchi1}
    \nabs_3(\psi_{(1)})+\left(\frac{1}{2}+s_2(\psi_{(1)})\right)\trchb\,\psi_{(1)}&=-k\sld_k^*(\psi_{(2)})+h_{(1)},\\
    \nabs_4(\psi_{(2)})&=\sld_k(\psi_{(1)})+h_{(2)}.
    \end{split}
    \end{align}
Then, the pair $(\psi_{(1)},\psi_{(2)})$ satisfies for any real number $p$
\begin{align}
\begin{split}\label{div}
       &\bdiv(|u|^p|\psi_{(1)}|^2e_3)+k\bdiv(|u|^p|\psi_{(2)}|^2e_4)+\left(p-4s_2(\psi_{(1)})\right)|u|^{p-1}|\psi_{(1)}|^2\\
       =&2k|u|^p\sdivs(\psi_{(1)}\c\psi_{(2)})
       +2|u|^p\psi_{(1)}\cdot h_{(1)}+2k|u|^p\psi_{(2)}\c h_{(2)}\\
       +&a^{-1}|u|^{p+1}\Gag|\psi_{(1)}|^2+|u|^p\Gag|\psi_{(2)}|^2.
\end{split}
\end{align}
    \item If $\psi_{(1)},h_{(1)}\in\sk_{k-1}$ and $\psi_{(2)},h_{(2)}\in\sk_k$ satisfying
\begin{align}
\begin{split}\label{bianchi2}
\nabs_3(\psi_{(1)})+\left(\frac{1}{2}+s_2(\psi_1)\right)\trchb\,\psi_{(1)}&=\sld_k(\psi_{(2)})+h_{(1)},\\
\nabs_4(\psi_{(2)})&=-k\sld_k^*(\psi_{(1)})+h_{(2)}.
\end{split}
\end{align}
Then, the pair $(\psi_{(1)},\psi_{(2)})$ satisfies for any real number $p$
\begin{align}
\begin{split}\label{div2}
    &k\bdiv(|u|^p|\psi_{(1)}|^2e_3)+\bdiv(|u|^p|\psi_{(2)}|^2e_4)+k(p-4s_2(\psi_{(1)}))|u|^{p-1}|\psi_{(1)}|^2\\
    =&2|u|^p\sdivs(\psi_{(1)}\c\psi_{(2)})
    +2k|u|^p\psi_{(1)}\c h_{(1)}+2|u|^p\psi_{(2)}\c h_{(2)}\\
    +&a^{-1}|u|^{p+1}\Gag|\psi_{(1)}|^2+|u|^p\Gag|\psi_{(2)}|^2.
\end{split}
\end{align}
\end{enumerate}
\end{lem}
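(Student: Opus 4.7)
The plan is a direct weighted-energy identity calculation, following the standard Bianchi-pair framework (as in \cite{Ch-Kl,kn,An}) adapted to the weight $|u|^p$. For case (1), I first contract the first equation of \eqref{bianchi1} pointwise with $2|u|^p\psi_{(1)}$ and the second with $2k|u|^p\psi_{(2)}$, using the identity $2\phi\cdot\nabs_X\phi = \nabs_X|\phi|^2$ valid for any $S$-tangent tensor $\phi$. The resulting two scalar identities have left-hand sides containing the transport terms $|u|^p\nabs_3|\psi_{(1)}|^2$ and $k|u|^p\nabs_4|\psi_{(2)}|^2$, the trace contribution $(1+2s_2(\psi_{(1)}))|u|^p\trchb|\psi_{(1)}|^2$, and cross terms $-2k|u|^p\psi_{(1)}\cdot\sld_k^*\psi_{(2)}$ and $2k|u|^p\psi_{(2)}\cdot\sld_k\psi_{(1)}$, plus the obvious $h$-forcings.

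Next I convert the transport terms into spacetime divergences via
\[
\bdiv(Fe_3) = e_3(F) + F\,\bdiv(e_3),\qquad \bdiv(e_3) = \trchb - 2\omb,
\]
and the analogous identity for $e_4$. Applied to $F = |u|^p|\psi_{(1)}|^2$, this produces $\bdiv(|u|^p|\psi_{(1)}|^2 e_3)$, a weight contribution $p|u|^{p-1}e_3(|u|)|\psi_{(1)}|^2$, and a $\bdiv(e_3)$-correction. Using the explicit metric \eqref{metricg} one computes $e_3(u) = \Om^{-1}$, hence $e_3(|u|) = -\Om^{-1} = -1 + O(\Gag)$ since $\log\Om\in\Gag$. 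Combining with the trace term via $\trchb = -\tfrac{2}{|u|}+\trchbc$ and the fact that $\tfrac{a}{|u|}\trchbc\in\Gag$ (by Definition \ref{gammag}), the clean Hardy coefficient $(p - 4s_2(\psi_{(1)}))|u|^{p-1}|\psi_{(1)}|^2$ emerges, while the $\omb$, $\trchbc$ and $\log\Om$ remainders fit into the error scheme $a^{-1}|u|^{p+1}\Gag|\psi_{(1)}|^2$ after rescaling.

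The cross terms are handled by the pointwise identity
\[
\psi_{(2)}\cdot\sld_k\psi_{(1)} \;-\; \psi_{(1)}\cdot\sld_k^*\psi_{(2)} \;=\; \sdivs\bigl(\psi_{(1)}\cdot\psi_{(2)}\bigr),
\]
which follows directly by expanding the definitions in Definition \ref{hodgeop} and applying the Leibniz rule on the $2$-sphere (here $\psi_{(1)}\cdot\psi_{(2)}$ denotes the natural $S$-tangent vector-valued bilinear product). Multiplying this identity by $2k|u|^p$ and summing the two scalar equations yields the $2k|u|^p\sdivs(\psi_{(1)}\cdot\psi_{(2)})$ contribution on the right-hand side of \eqref{div}, and completes the identity after absorbing the $\om$-correction from $\bdiv(e_4)$ into $|u|^p\Gag|\psi_{(2)}|^2$.

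The identity \eqref{div2} is proved by the same argument applied to \eqref{bianchi2}, with the weighting of the contractions swapped (so $\psi_{(1)}$ is paired with $2k|u|^p$ and $\psi_{(2)}$ with $2|u|^p$); the cross-term identity is used in the reorganized form $k\psi_{(1)}\cdot\sld_k^*\psi_{(2)} - \psi_{(2)}\cdot\sld_k\psi_{(1)} = -\sdivs(\psi_{(1)}\cdot\psi_{(2)})$. The main obstacle is purely bookkeeping: carefully tracking the Ricci-coefficient remainders coming from $\bdiv(e_3)$, $\bdiv(e_4)$, the weight derivative $e_3(|u|^p)$, and the difference $\trchb + \tfrac{2}{|u|}$, and verifying that they all fit into the advertised error scheme $a^{-1}|u|^{p+1}\Gag|\psi_{(1)}|^2 + |u|^p\Gag|\psi_{(2)}|^2$. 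This is where the rescalings built into $\Gag$ in Definition \ref{gammag} (notably $\tfrac{|u|}{a}\trchc$ and $\tfrac{a}{|u|}\trchbc$) are essential, as they ensure the correct powers of $a$ and $|u|$ appear on each error term.
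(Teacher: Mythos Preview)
Your proposal is correct and follows exactly the standard direct weighted-energy computation that the paper itself defers to \cite{ShenWan}, Lemma~5.1. The only slip is typographical, in your case~(2) cross-term identity: since there $\psi_{(1)}\in\sk_{k-1}$ and $\psi_{(2)}\in\sk_k$, the relevant pointwise adjoint relation is $\psi_{(1)}\cdot\sld_k\psi_{(2)} - \psi_{(2)}\cdot\sld_k^*\psi_{(1)} = \sdivs(\psi_{(1)}\cdot\psi_{(2)})$, not the form you wrote; the argument is otherwise sound.
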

\begin{proof}
    See Lemma 5.1 in \cite{ShenWan}.
\end{proof}
\begin{rk}
    Note that the Bianchi equations can be written as systems of equations of type \eqref{bianchi1} and \eqref{bianchi2}. In particular
    \begin{itemize}
        \item the Bianchi pair $(\a,\b)$ satisfies \eqref{bianchi1} with $k=2$,
        \item the Bianchi pair $(\b,(\rho,-\si))$ satisfies \eqref{bianchi1} with $k=1$,
        \item the Bianchi pair $((\rho,\si),\bb)$ satisfies \eqref{bianchi2} with $k=1$,
        \item the Bianchi pair $(\bb,\aa)$ satisfies \eqref{bianchi2} with $k=2$.
    \end{itemize}
\end{rk}
\begin{prop}\label{keyintegral}
Let $(\psi_{(1)},\psi_{(2)})$ be a Bianchi pair that satisfies \eqref{bianchi1} or \eqref{bianchi2}. Then, we have
\begin{align*}
&\;\|\psi_{(1)}\|_{L^2_{sc}(\cuv)}^2+\|\psi_{(2)}\|^2_{L^2_{sc}(\ucuv)}\\
\les&\;\|\psi_{(1)}\|_{L^2_{sc}(H_{u_\infty}^{(0,\ub)})}^2+\|\psi_{(2)}\|^2_{L^2_{sc}(\Hb_{0}^{(u_\infty,u)})}\\
+&\int_0^\ub\int_{u_\infty}^u\frac{a}{|u'|}\left\|\psi_{(1)}\c\left(h_{(1)}+\frac{|u'|}{a}\Gag\c\psi_{(1)}\right)\right\|_{L^1_{sc}(S_{u',\ub'})}du'd\ub'\\
+&\int_0^\ub\int_{u_\infty}^u\frac{a}{|u'|}\left\|\psi_{(2)}\c (h_{(2)}+\Gag\c\psi_{(2)})\right\|_{L^1_{sc}(S_{u',\ub'})}du'd\ub'.
\end{align*}
\end{prop}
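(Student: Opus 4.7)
}
The plan is to feed the divergence identity of Lemma \ref{keypoint} through Stokes' theorem on the spacetime rectangle $V(u,\ub)$, with a weight $p$ chosen to annihilate the bulk term, and then convert everything to the scale-invariant norms \eqref{dfsc}--\eqref{fluxsc}. The key arithmetic fact that makes the two norms on the left-hand side match is the signature relation $s_2(\psi_{(2)})=s_2(\psi_{(1)})+\tfrac12$, which holds for each of the four Bianchi pairs $(\a,\b)$, $(\b,(\rho,-\si))$, $((\rho,\si),\bb)$, $(\bb,\aa)$, as one checks from the signature table.

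First I would choose $p=4s_2(\psi_{(1)})$ in \eqref{div} (respectively \eqref{div2}). With this choice the coercive bulk term $(p-4s_2(\psi_{(1)}))|u|^{p-1}|\psi_{(1)}|^2$ on the left-hand side is identically zero, leaving only the two divergences and the nonlinear right-hand side. I would then integrate the identity over $V(u,\ub)$ against the spacetime volume form $\Om^2\, du\, d\ub\, d\mathrm{vol}_{\slg}$ and apply the divergence theorem. The term $\bdiv(|u|^p|\psi_{(1)}|^2 e_3)$ produces fluxes across the outgoing null cones: the flux on $H_u^{(0,\ub)}$ is the one we want to bound, and the flux on the initial cone $H_{u_\infty}^{(0,\ub)}$ enters as data. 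Symmetrically, $k\bdiv(|u|^p|\psi_{(2)}|^2 e_4)$ produces the desired flux on $\Hb_\ub^{(u_\infty,u)}$ and a data flux on $\Hb_0^{(u_\infty,u)}$. The tangential divergence $2k|u|^p\sdivs(\psi_{(1)}\c\psi_{(2)})$ on the right vanishes upon integration over each sphere $S_{u,\ub}$, so it drops out.

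Next I would rescale to obtain the scale-invariant norms. Integrating the $e_3$-flux gives $\int_0^{\ub}|u|^{p}\|\psi_{(1)}\|_{L^2(S_{u,\ub'})}^2 d\ub'$; with $p=4s_2(\psi_{(1)})$, this equals $a^{2s_2(\psi_{(1)})}\|\psi_{(1)}\|_{L^2_{sc}(\cuv)}^2$ by the definition \eqref{dfsc}--\eqref{fluxsc}. For the $e_4$-flux, $\int_{u_\infty}^{u}|u'|^{p}\|\psi_{(2)}\|_{L^2(S_{u',\ub})}^2 du'$; substituting $p=4s_2(\psi_{(1)})=4s_2(\psi_{(2)})-2$ and inserting the factor $a/|u'|^2$ from \eqref{fluxsc} produces exactly $a^{2s_2(\psi_{(1)})}\|\psi_{(2)}\|_{L^2_{sc}(\ucuv)}^2$. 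Dividing the whole identity by $a^{2s_2(\psi_{(1)})}$ then delivers the left-hand side of the claim (together with the data terms on outgoing $H_{u_\infty}$ and incoming $\Hb_0$).

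The remaining step is to rewrite the source terms $2|u|^p\psi_{(1)}\c h_{(1)}$, $2k|u|^p\psi_{(2)}\c h_{(2)}$, $a^{-1}|u|^{p+1}\Gag|\psi_{(1)}|^2$ and $|u|^p\Gag|\psi_{(2)}|^2$ in the scale-invariant form of the conclusion. Using the signature identities $s_2(h_{(1)})=s_2(\psi_{(1)})+1$ and $s_2(h_{(2)})=s_2(\psi_{(2)})$ forced by \eqref{bianchi1}--\eqref{bianchi2}, a short bookkeeping computation with \eqref{dfsc} shows that each such spacetime integral, once divided by $a^{2s_2(\psi_{(1)})}$, acquires the universal factor $a/|u'|$ and becomes a $\|\cdot\|_{L^1_{sc}(S_{u',\ub'})}$ integrated against $du'\,d\ub'$ over $V(u,\ub)$. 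Absorbing the $\Gag$-pieces from the bulk together with those coming from the $(\psi_{(1)},h_{(1)})$ and $(\psi_{(2)},h_{(2)})$ pairings produces precisely the right-hand side of the stated inequality. The main bookkeeping obstacle is keeping track of the relation between $p$, $s_2(\psi_{(1)})$ and $s_2(\psi_{(2)})$ so that both flux norms and both source structures align with the universal weight $a/|u'|$; the signature identity $s_2(\psi_{(2)})=s_2(\psi_{(1)})+\tfrac12$ is exactly what makes this alignment possible.
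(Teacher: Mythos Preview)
Your proposal is correct and matches the approach the paper relies on (the paper itself just cites Proposition~5.3 in \cite{ShenWan}, and what you have written is essentially a reconstruction of that argument). The choice $p=4s_2(\psi_{(1)})$ to kill the bulk term, the application of Stokes on $V(u,\ub)$, and the conversion to scale-invariant norms via the signature relation $s_2(\psi_{(2)})=s_2(\psi_{(1)})+\tfrac12$ are exactly the intended steps; the only minor omission is that the lapse factors $\Om$ appearing in the volume form and in the null fluxes must be controlled by the bootstrap bound on $\log\Om$, but this is routine.
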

\begin{proof}
See Proposition 5.3 in \cite{ShenWan}.
\end{proof}
\begin{df}\label{dfdkb}
We define the weighted angular derivatives $\dkb$ as follows:
\begin{align*}
    \dkb U &:= \af\sld_2 U,\qquad \forall U\in \sk_2,\\
    \dkb \xi&:=\af\sld_1 \xi,\qquad\,\,\, \forall \xi\in \sk_1,\\
    \dkb f&:= \af\sld_1^* f,\qquad \,\,\forall f\in \sk_0.
\end{align*}
We denote for any tensor $h\in\sk_k$, $k=0,1,2$,
\begin{equation*}
    h^{(0)}:=h,\qquad\quad h^{(i)}:=(h,\dkb h,...,\dkb^i h).
\end{equation*}
\end{df}
\subsubsection{Estimates for the Bianchi pair \texorpdfstring{$(\a,\b)$}{}}
\begin{prop}\label{estab}
We have the following estimate:
    \begin{align*}
        \|\a^{(s+5)}\|_{L^2_{sc}(\cuv)}+\|\b^{(s+5)}\|_{L^2_{sc}(\ucuv)}\les\af.
    \end{align*}
\end{prop}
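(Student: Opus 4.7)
The plan is to apply Proposition \ref{keyintegral} to the Bianchi pair $(\alpha,\beta)$ and to all of its $\dkb$--commuted versions, and then sum over derivatives up to order $s+5$. First I would recognise that, since $s_2(\alpha)=0$, the null Bianchi equations of Proposition \ref{bianchiequations} for $\alpha$ and $\beta$ have exactly the form \eqref{bianchi1} with $k=2$, where the lower-order sources $h_{(1)}, h_{(2)}$ are schematically of the form
\begin{align*}
h_{(1)} \sim \Gag\c(\alpha,\rho,\sigma) + \tfrac{|u|}{a}\Gab\c(\rho,\sigma),\qquad
h_{(2)} \sim \Gag\c(\alpha,\beta),
\end{align*}
which one reads off from Definition \ref{gammag}. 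The base case $i=0$ then follows from Proposition \ref{keyintegral} modulo the nonlinear integrals, together with the initial bounds $\mo_{(0)},\mr_{(0)}\les 1$ giving $\|\alpha\|_{L^2_{sc}(H_{u_\infty}^{(0,\ub)})}^2\les a$ and the Minkowskian assumption $\|\beta\|_{L^2_{sc}(\Hb_0^{(u_\infty,u)})}=0$.

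For $1\le i\le s+5$, I would commute the system $i$ times with $\dkb$, using Proposition \ref{commutation}. The commuted pair $(\dkb^i\alpha,\dkb^i\beta)$ still solves a Bianchi system of type \eqref{bianchi1} with $k=2$, but the new source terms contain commutator contributions of the schematic form
\begin{align*}
h_{(1)}^{(i)} &\sim \sum_{j+k\le i}\Gag^{(j)}\c (\alpha,\beta,\rho,\sigma,\bb)^{(k)} + \sum_{j+k\le i}\tfrac{|u|}{a}\Gab^{(j)}\c\nabs(\dkb^{k-1}\alpha),\\
h_{(2)}^{(i)} &\sim \sum_{j+k\le i}\Gag^{(j)}\c (\alpha,\beta)^{(k)} + \sum_{j+k\le i}\Gab^{(j)}\c\nabs(\dkb^{k-1}\beta),
\end{align*}
where the $\tfrac{|u|}{a}$--factor comes from $[\nabs_3,\nabs]$ in Proposition \ref{commutation}. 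Applying Proposition \ref{keyintegral} to each commuted pair reduces the estimate to controlling
\begin{align*}
\int_0^\ub\!\!\int_{u_\infty}^u\frac{a}{|u'|}\Big(\|\dkb^i\alpha\c h_{(1)}^{(i)}\|_{L^1_{sc}(S_{u',\ub'})}+\|\dkb^i\beta\c h_{(2)}^{(i)}\|_{L^1_{sc}(S_{u',\ub'})}\Big)\,du'd\ub'.
\end{align*}

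The core analytic step is then to absorb these nonlinear integrals using Proposition \ref{Holder} together with the bootstrap \eqref{B1}. Each scale-invariant triple product gains a factor $|u|^{-1}\le 4/a$ throughout $V_*$, so the bootstrap assumption $\mo,\mr\le a^{1/6}$ gives each such integral a total budget of $a^{1/6}\cdot a^{1/6}\cdot a^{1/6}\cdot a^{-1}\cdot a\ll 1$ relative to the target $\af$ (whose square is $a$). After summing over $0\le i\le s+5$, and using Lemma \ref{evolution} where necessary, Proposition \ref{keyintegral} closes the estimate.

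The main obstacle I anticipate is the pairing of the anomalous quantities $\hch\in\Gab$ and $\alpha\in\Gab^{(1)}$ at top order: the commutator $[\nabs_3,\nabs]\dkb^{i-1}\alpha$ produces a term $\tfrac{|u|}{a}\Gab\c\nabs(\dkb^{i-1}\alpha)$ in which both factors are of size $\af$, and naively this is dangerously close to the main term. The redemption is exactly the $\tfrac{|u|}{a}$ prefactor, which cancels the $a^{1/2}$ anomaly and restores the desired $|u|^{-1}$ smallness after $L^2_{sc}$ estimates; verifying this cancellation systematically via the signature calculus of \cite{An} and the $|u|^p$--weighted identities of \cite{ShenWan} is the bookkeeping that needs to be executed with care but is essentially routine once the Bianchi-pair structure is recognised.
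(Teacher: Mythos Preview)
Your strategy---commute the Bianchi pair $(\alpha,\beta)$ with $\dkb^i$, apply Proposition~\ref{keyintegral}, and bound the nonlinear error via Proposition~\ref{Holder} and the bootstrap---is exactly the paper's. One correction to your heuristic, however: in $V_*$ the prefactor $|u|/a$ satisfies $|u|/a\ge 1/4$ and therefore provides no smallness on its own; its actual role is to cancel the weight $a/|u'|$ built into Proposition~\ref{keyintegral}, leaving the bare integral $\int_0^\ub\!\int_{u_\infty}^u\|\a^{(i)}\cdot(\Gab\cdot\a)^{(i)}\|_{L^1_{sc}}\,du'd\ub'$. The paper then bounds this by $\les a$ (not $a^{1/2}$) using two factors of $|u'|^{-1}$ from Proposition~\ref{Holder}, the anomalous bootstrap sizes $\|\Gab^{(i)}\|_{L^2_{sc}},\,\|\a^{(i)}\|_{L^2_{sc}(\cuv)}\les a^{1/6}\af$, and the integrability $\int_{u_\infty}^u|u'|^{-2}\,du'\les a^{-1}$; your budget $a^{1/6}\cdot a^{1/6}\cdot a^{1/6}$ omits the $\af$ anomalies carried by $\Gab$ and $\alpha$, so the arithmetic as written would not close without this cancellation.
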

\begin{proof}
We have from Proposition \ref{bianchiequations}\footnote{We ignore $\b$ in the nonlinear terms since it decays better than $\a$.}
\begin{align*}
\nabs_3\a+\frac{1}{2}\trchb\,\a&=-2\sld_2^*\b+\Gab\c\a,\\
\nabs_4\b&=\sld_2\a+\Gag\c\a.
\end{align*}
Commuting it with $\dkb^i$ and applying Proposition \ref{commutation}, we infer\footnote{The lower order linear term of $\b$ on the R.H.S. comes from \eqref{dddd}.}
\begin{align*}
\nabs_3(\dkb^i\a)+\frac{i+1}{2}\trchb(\dkb^i\a)&=-\slD^*(\dkb^i\b)+|u|^{-2}\b^{(i-1)}+\frac{|u|}{a}(\Gab\c\a)^{(i)},\\
\nabs_4(\dkb^i\b)&=\slD(\dkb^i\a)+(\Gab\c\a)^{(i)},
\end{align*}
where $\slD$ denotes an elliptic Hodge operator which depends on $i$. Applying Proposition \ref{keyintegral} with $\psi_{(1)}=\dkb^i\a$ and $\psi_{(2)}=\dkb^i\b$, we deduce by induction
\begin{align*}
\|\dkb^i\a\|_{L^2_{sc}(\cuv)}^2+\|\dkb^i\b\|^2_{L^2_{sc}(\ucuv)}&\les\int_0^\ub\|\dkb^i\a\|_{L^2_{sc}(S_{u_\infty,\ub'})}^2d\ub'\\
&+\int_0^\ub\int_{u_\infty}^u\|\a^{(i)}\c(\Gab\c\a)^{(i)}\|_{L^1_{sc}(S_{u',\ub'})}du'd\ub'.
\end{align*}
Notice that we have from Propositions \ref{Holder}, \ref{sobolev} and \eqref{fluxsc}
\begin{align*}
    &\int_0^\ub\int_{u_\infty}^u\|\a^{(i)}\c(\Gab\c\a)^{(i)}\|_{L^1_{sc}(S_{u',\ub'})}du'd\ub'\\
    \les&\int_0^\ub\int_{u_\infty}^u\frac{1}{|u'|}\|\a^{(i)}\|_{L^2_{sc}(S_{u',\ub'})}\|\Gab^{(i)}\c\a\|_{L^2_{sc}(S_{u',\ub'})}+\frac{1}{|u'|}\|\a^{(i)}\|_{L^2_{sc}(S_{u',\ub'})}\|\Gab\c\a^{(i)}\|_{L^2_{sc}(S_{u',\ub'})}du'd\ub'\\
    \les&\int_0^\ub\int_{u_\infty}^u\frac{a^\frac{1}{6}\af}{|u'|^2}\|\a^{(i)}\|_{L^2_{sc}(S_{u',\ub'})}\|\a\|_{L^\infty_{sc}(S_{u',\ub'})}+\frac{a^\frac{1}{6}\af}{|u'|^2}\|\a^{(i)}\|_{L^2_{sc}(S_{u',\ub'})}^2du'd\ub'\\
    \les&\int_{u_\infty}^u\frac{a^\frac{2}{3}}{|u'|^2}\|\a^{(i)}\|_{L^2_{sc}(\cuv)}^2du'\\
    \les&\int_{u_\infty}^u\frac{a^2}{|u'|^2}du'\les a.
\end{align*}
Combining with the initial condition $\mr_{(0)}\les 1$, we obtain for $i\leq s+5$
\begin{align*}
\|\dkb^i\a\|^2_{L^2_{sc}(\cuv)}+\|\dkb^i\b\|^2_{L^2_{sc}(\ucuv)}\les a.
\end{align*}
This concludes the proof of Proposition \ref{estab}.
\end{proof}
\subsubsection{Estimate for \texorpdfstring{$\Gab$}{}}
\begin{prop}\label{esthchhchb}
We have the following estimates:
\begin{align}
    \|\hch^{(s+5)}\|_{L^2_{sc}(S_{u,\ub})}&\les\af,\label{esthch}\\
    \|\hchb^{(s+5)}\|_{L^2_{sc}(S_{u,\ub})}&\les\frac{|u|}{\af},\label{esthchb}\\
    \|\a^{(s+4)}\|_{L^2_{sc}(S_{u,\ub})}&\les\af.\label{esta}
\end{align}
\end{prop}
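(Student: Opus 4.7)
The three estimates are coupled through their transport equations, and the plan is to extract them in the order $\hch\to\a\to\hchb$, using the flux bounds from Proposition \ref{estab} as the starting input together with the bootstrap assumptions \eqref{B1}. At each step I integrate a null transport equation (either in $\nabs_4$ from $\Hb_0$, where the data vanish, or in $\nabs_3$ from $H_{u_\infty}$, where the initial data are bounded by $\mr_{(0)}\les 1$), commute with up to $s+5$ copies of $\dkb$ via Proposition \ref{commutation}, and handle the resulting commutators and nonlinear products through Proposition \ref{Holder} combined with the universal smallness $|u|^{-1}\leq 4a^{-1}$ available in $V_*$.

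For \eqref{esthch} I start from $\nabs_4\hch+\trch\,\hch=-2\om\hch-\a$, commute with $\dkb^i$ for $i\leq s+5$, and schematically absorb all commutator corrections into $\Gab\c\Gab^{(i)}$ and $\Gag^{(1)}\c\Gab^{(i-1)}$ terms. Since the initial data are Minkowskian on $\Hb_0$, Lemma \ref{evolution} gives
\begin{equation*}
\|\dkb^i\hch\|_{L^2_{sc}(S_{u,\ub})}\les\int_0^\ub\|\dkb^i\a\|_{L^2_{sc}(S_{u,\ub'})}\,d\ub'+\text{(nonlinear terms)}.
\end{equation*}
Cauchy--Schwarz in $\ub$ combined with the flux bound $\|\a^{(s+5)}\|_{L^2_{sc}(\cuv)}\les\af$ from Proposition \ref{estab} supplies the leading $\af$, and the nonlinear terms are absorbed using $\mo,\mr\leq a^{\frac{1}{6}}$ and the smallness $|u|^{-1}\les a^{-1}$.

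The estimate \eqref{esta} is obtained by transporting the Bianchi equation
\begin{equation*}
\nabs_3\a+\tfrac{1}{2}\trchb\,\a=\nabs\hot\b+4\omb\,\a-3(\hch\rho+{}^{*}\hch\,\si)+(\ze+4\eta)\hot\b
\end{equation*}
in the incoming direction via Lemma \ref{3evolution} with $\la_0=\tfrac{1}{2}$. Commuting with $\dkb^i$ for $i\leq s+4$ produces the top-order source $\nabs\dkb^{s+4}\b=\afd\,\dkb^{s+5}\b$, whose $\ucuv$--flux is controlled by Proposition \ref{estab}; the $u$--integral in Lemma \ref{3evolution} is then estimated by Cauchy--Schwarz against the weight $a|u'|^{-2}$ built into \eqref{fluxsc}. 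The borderline quadratic terms $\hch\rho$, ${}^{*}\hch\,\si$ and $(\ze+4\eta)\hot\b$ feed on the $\hch$ bound \eqref{esthch} just established, while $\omb\,\a$ and all commutator contributions carry at least one extra factor of $|u|^{-1}$ or a $\Gag$--coefficient and are small under the bootstrap.

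For \eqref{esthchb} I return to a $\nabs_4$ transport, using
\begin{equation*}
\nabs_4\hchb+\tfrac{1}{2}\trch\,\hchb=\nabs\hot\etab+2\om\hchb-\tfrac{1}{2}\trchb\,\hch+\etab\hot\etab.
\end{equation*}
Lemma \ref{evolution} reduces matters to integrating the right-hand side from $S_{u,0}$, on which $\hchb$ vanishes. The most sensitive source is $\trchb\,\hch$, which only involves quantities already under control by Step 1 together with the bootstrap on $\trchbc$; the terms $\nabs\hot\etab$ and $\etab\hot\etab$ are controlled through $\mo$ and the extra $|u|^{-1}$ smallness in $V_*$, while commutator errors are of the same type as in Step 1. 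The main obstacle throughout is to verify that commutators produced by iterating $\dkb$ never place a $\Gab$--factor at the top order, so that every nonlinear interaction carries the $a^{-1}$ smallness needed to close the scheme and to absorb the bootstrap losses into the $\les$ constants.
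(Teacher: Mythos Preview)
Your treatment of \eqref{esthch} and \eqref{esta} matches the paper's: $\nabs_4$ transport from $\Hb_0$ for $\hch$, and $\nabs_3$ transport from $H_{u_\infty}$ for $\a$, with the top-order sources supplied by the fluxes of Proposition \ref{estab}.

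The gap is in your route to \eqref{esthchb}. The $\nabs_4$ equation you propose for $\hchb$ contains the linear source $\nabs\hot\etab$. After commuting with $\dkb^{s+5}$ this produces a term of order $\afd\,\dkb^{s+6}\etab$, i.e.\ one more angular derivative of $\etab$ than the bootstrap supplies: the norm $\mo$ only controls $\Gag^{(s+5)}$, hence $\dkb^{\leq s+5}\etab$, and no elliptic estimate recovering that extra derivative has been established at this stage of the argument. So the claim that ``$\nabs\hot\etab$ \dots\ are controlled through $\mo$'' fails precisely at the top order you need. The paper sidesteps this by transporting $\hchb$ in the \emph{incoming} direction,
\[
\nabs_3\hchb+\trchb\,\hchb=-\aa-2\omb\,\hchb,
\]
whose principal source is the curvature component $\aa$, controlled at the full order $s+5$ by the incoming flux $\ur_{s+5}\leq\mr$. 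Lemma \ref{3evolution} together with the initial data on $H_{u_\infty}$ then closes \eqref{esthchb} directly. Switch $\hchb$ to the $\nabs_3$ route and your scheme goes through; the rest of your plan is sound.
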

\begin{rk}
    Proposition \ref{esthchhchb} implies that:
    \begin{align}\label{estGab}
        \|\Gab^{(s+5)}\|_{L^2_{sc}(S_{u,\ub})}\les\af.
    \end{align}
    In the remainder of this paper, we will use \eqref{estGab}, which improves the $\Gab$ bound in \eqref{B1}.
\end{rk}
\begin{proof}[Proof of Proposition \ref{esthchhchb}]
    We have from Proposition \ref{nulles}
    \begin{align*}
        \nabs_4\hch=-\a+\Gag\c\Gab.
    \end{align*}
    Differentiating it by $\dkb^i$ and applying Proposition \ref{commutation}, we infer
    \begin{align*}
        \nabs_4(\dkb^i\hch)=\a^{(i)}+(\Gag\c\Gab)^{(i)}.
    \end{align*}
    Applying Lemma \ref{evolution}, we deduce for $i\leq s+5$
    \begin{align*}
        &\quad\;\|\dkb^i\hch\|_{L^2_{sc}(S_{u,\ub})}\\
        &\les\int_0^\ub\|\a^{(i)}\|_{L^2_{sc}(S_{u,\ub'})}+\|(\Gag\c\Gab)^{(i)}\|_{L^2_{sc}(S_{u,\ub'})}d\ub'\\
        &\les\|\a^{(i)}\|_{L^2_{sc}(\cuv)}+\frac{1}{|u|}\int_0^\ub\|\Gag^{(i)}\|_{L^2_{sc}(S_{u,\ub'})}\|\Gab\|_{L^\infty_{sc}(S_{u,\ub'})}+\|\Gab^{(i)}\|_{L^2_{sc}(S_{u,\ub'})}\|\Gag\|_{L^\infty_{sc}(S_{u,\ub'})}d\ub'\\
        &\les\af+\frac{a^\frac{1}{3}\af}{|u|}\les\af,
    \end{align*}
    which implies \eqref{esthch}. Next, we have from Proposition \ref{nulles}
    \begin{align*}
        \nabs_3\hchb+\trchb\,\hchb=-\aa+\frac{|u|}{a}\Gag\c\Gab.
    \end{align*}
    Differentiating it by $\dkb^i$ and applying Proposition \ref{commutation}, we infer
    \begin{align*}
        \nabs_3(\dkb^i\hchb)+\frac{i+2}{2}\trchb(\dkb^i\hchb)=\aa^{(i)}+\frac{|u|^2}{a^2}(\Gab\c\Gab)^{(i)}.
    \end{align*}
    Applying Lemma \ref{3evolution}, we infer for $i\leq s+5$
    \begin{align*}
        |u|^{-1}\|\dkb^i\hchb\|_{L^2_{sc}(S_{u,\ub})}&\les|u_\infty|^{-1}\|\dkb^i\hchb\|_{L^2_{sc}(S_{u_\infty,\ub})}+a\int_{u_\infty}^u|u'|^{-3}\|\dkb^i\aa\|_{L^2_{sc}(S_{u',\ub})}du'\\
        &+\int_{u_\infty}^u\frac{1}{a|u'|^2}\|\Gab^{(i)}\|_{L^2_{sc}(S_{u',\ub})}\|\Gab\|_{L^\infty_{sc}(S_{u',\ub})}du'\\
        &\les\afd+\frac{a^\frac{1}{6}}{|u|}+\int_{u_\infty}^u\frac{a^\frac{1}{3}a}{a|u'|^2}du'\\
        &\les\afd,
    \end{align*}
    which implies \eqref{esthchb}. Finally, we have from Proposition \ref{bianchiequations}
    \begin{align*}
        \nabs_3\a+\frac{1}{2}\trchb\,\a=\b^{(1)}+(\Gag\c\Gab)^{(1)}.
    \end{align*}
    Differentiating it by $\dkb^{i-1}$ and applying Proposition \ref{commutation}, we infer
    \begin{align*}
        \nabs_3(\dkb^{i-1}\a)+\frac{i}{2}\trchb(\dkb^{i-1}\a)=\b^{(i)}+\frac{|u|}{a}(\Gab\c\Gab)^{(i)}.
    \end{align*}
    Applying Lemma \ref{3evolution}, we deduce for $i\leq s+5$
    \begin{align*}
        \|\dkb^{i-1}\a\|_{L^2_{sc}(S_{u,\ub})}&\les\|\dkb^{i-1}\a\|_{L^2_{sc}(S_{u_\infty,\ub})}+\int_{u_\infty}^u\frac{a}{|u'|^2}\|\dkb^i\b\|_{L^2_{sc}(S_{u',\ub})}du'\\
        &+a\int_{u_\infty}^u\frac{1}{a|u'|^2}\|\Gab^{(i)}\|_{L^2_{sc}(S_{u',\ub})}\|\Gab\|_{L^\infty_{sc}(S_{u',\ub})}du'\\
        &\les\af+a^\frac{1}{6}+\int_{u_\infty}^u\frac{a^\frac{1}{3}a^2}{a|u'|^2}du'\\
        &\les\af,
    \end{align*}
    which implies \eqref{esta}. This concludes the proof of Proposition \ref{esthchhchb}.
\end{proof}
\subsubsection{Estimates for the Bianchi pairs \texorpdfstring{$(\b,(-\rho,\si))$}{}, \texorpdfstring{$((\rho,\si),\bb)$}{} and \texorpdfstring{$(\bb,\aa)$}{}}
\begin{prop}\label{estbr}
Let $(\psi_1,\psi_2)\in\{(\b,(-\rho,\si)),((\rho,\si),\bb), (\bb,\aa)\}$. Then, we have
\begin{align*}
    \|\psi_1^{(s+5)}\|_{L^2_{sc}(\cuv)}+\|\psi_2^{(s+5)}\|_{L^2_{sc}(\ucuv)}\les 1.
\end{align*}
\end{prop}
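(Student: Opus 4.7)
The plan is to treat each of the three Bianchi pairs in turn, following exactly the template established in the proof of Proposition \ref{estab}: namely, commute with $\dkb^i$ for $0\le i\le s+5$, recognize the commuted system as an instance of \eqref{bianchi1} or \eqref{bianchi2} (modulo lower-order Hodge operator corrections from \eqref{dddd}), apply the master integral identity of Proposition \ref{keyintegral}, and absorb the resulting nonlinear errors using Proposition \ref{Holder}, Proposition \ref{sobolev}, the bootstrap \eqref{B1}, and the improved bound \eqref{estGab}. The initial data on $\Hb_0$ is Minkowskian so the incoming flux terms $\|\psi_{(2)}\|_{L^2_{sc}(\Hb_0)}$ vanish, and $\mr_{(0)}\lesssim 1$ supplies the $H_{u_\infty}$ flux.

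Concretely, using the Remark after Lemma \ref{keypoint}, the pair $(\b,(-\rho,\si))$ fits \eqref{bianchi1} with $k=1$, the pair $((\rho,\si),\bb)$ fits \eqref{bianchi2} with $k=1$, and $(\bb,\aa)$ fits \eqref{bianchi2} with $k=2$. After commutation with $\dkb^i$, the relevant forcing terms $h_{(1)}, h_{(2)}$ on the right-hand side take the schematic form $\Gag^{(\le i)}\c(\mr,\ur)^{(\le i)} + (\Gab\c\Gab)^{(\le i)}$ together with the genuine cross-pair coupling terms $\hchb\c\a$ (in the $(\rho,\si)$ equation), $\hch\c\aa$ (in the $(\rho,\si)$ equation), $\hchb\c\b$ (in the $\bb$ equation), and $\hchb\c\rho,\,\hchb\c\si$ (in the $\aa$ equation). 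For each pair, Proposition \ref{keyintegral} reduces the problem to bounding the spacetime integrals
\[
\int_0^\ub\!\!\int_{u_\infty}^u\frac{a}{|u'|}\bigl\|\psi_{(j)}\c\bigl(h_{(j)}+\tfrac{|u'|^{2-j}}{a^{2-j}}\Gag\c\psi_{(j)}\bigr)\bigr\|_{L^1_{sc}(S_{u',\ub'})}du'd\ub',\qquad j=1,2.
\]

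The main obstacle is the control of the cross-pair nonlinearities, because they couple the weaker-decaying $\Gab$ components to curvature components of different signature. The mechanism to close them is the standard one: by Proposition \ref{Holder} and Proposition \ref{sobolev} together with \eqref{estGab} and Proposition \ref{estab},
\[
\|\hchb\c\a\|_{L^2_{sc}(S)}\les\frac{1}{|u|}\|\hchb^{(2)}\|_{L^2_{sc}(S)}\|\a\|_{L^2_{sc}(S)}\les\frac{1}{|u|}\c\frac{|u|}{\af}\c\af=O(1),
\]
and analogously for $\hch\c\aa$, $\hchb\c\rho$, $\hchb\c\si$ one gets $O(1)$ in $L^2_{sc}(S)$. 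Substituting into Proposition \ref{keyintegral} yields an integrand bounded by $a|u'|^{-2}$ times $O(1)$ quantities, which is $u'$--integrable with finite integral (as in the final estimate of the proof of Proposition \ref{estab}, bounded by $a$).

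Finally, I would proceed by induction on $i=0,1,\dots,s+5$: at each stage the pure bootstrap cubic errors involving $\Gag\c\psi_{(j)}$ contribute $a^{1/6}$--small factors, allowing them to be absorbed on the left, while the cross-pair errors use only quantities already controlled by Propositions \ref{estab} and \ref{esthchhchb} (so they are honest source terms, not bootstrap terms). Summing the three resulting bounds produces the claimed
\[
\|\psi_1^{(s+5)}\|_{L^2_{sc}(\cuv)}+\|\psi_2^{(s+5)}\|_{L^2_{sc}(\ucuv)}\les 1,
\]
for each Bianchi pair, completing the proof.
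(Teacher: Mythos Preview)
Your overall template (commute with $\dkb^i$, invoke Proposition \ref{keyintegral}, estimate the nonlinear errors) matches the paper's, but the specific way you bound the cross-pair term $\hchb\cdot\a$ does not close, for two related reasons.

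First, at top order $i=s+5$ the product $(\hchb\cdot\a)^{(s+5)}$ contains $\hchb\cdot\dkb^{s+5}\a$, and the pointwise sphere bound $\|\dkb^{s+5}\a\|_{L^2_{sc}(S_{u,\ub})}\lesssim a^{1/2}$ is \emph{not} available: \eqref{esta} only controls $s+4$ angular derivatives of $\a$ on a sphere (its proof loses a derivative through $\b^{(1)}$). So the displayed calculation $\|\hchb\cdot\a\|_{L^2_{sc}(S)}=O(1)$ is unjustified at the top level of the induction.

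Second, the appeal to ``as in the proof of Proposition \ref{estab}, bounded by $a$'' confuses targets. In Proposition \ref{estab} the squared flux of $\a$ is allowed to be $O(a)$; here the goal is $\|\psi_{(j)}\|^2\lesssim 1$, so an $O(a)$ error would not close. The paper's argument avoids both problems by keeping $\a$ in its $L^2_{sc}(H_u)$ flux form (which \emph{is} available at order $s+5$ by Proposition \ref{estab}): after Cauchy--Schwarz in $u'$ the nonlinear error becomes
\[
\sup_{\ub}\|\psi^{(i)}\|_{L^2_{sc}(\ucuv)}\left(\int_{u_\infty}^u\frac{1}{|u'|^2}\|\a^{(i)}\|_{L^2_{sc}(H_{u'}^{(0,\ub)})}^2\,du'\right)^{\frac12},
\]
the first factor is absorbed into the left-hand side by Young's inequality (after summing over all three pairs so that every $\|\psi\|_{L^2_{sc}(\ucuv)}$ appears on the left), and the second factor is $\lesssim(a\cdot|u|^{-1})^{1/2}\lesssim 1$ since $|u|\geq a/4$. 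This is the step that replaces your sphere-norm bound on $\a$ and yields the required $O(1)$ estimate.
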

\begin{proof}
We have from Proposition \ref{bianchiequations}
    \begin{align*}
        \nabs_3\psi_1+\left(\frac{1}{2}+s_2(\psi_1)\right)\trchb\,\psi_1&=-\slD^*\psi_2+\frac{|u|}{a}\Gab\c\psi,\\
        \nabs_4\psi_2&=\slD\psi_1+\frac{|u|}{a}\Gab\c\a,
    \end{align*}
where $\slD$ denotes an elliptic Hodge operator which depends on the type of $\psi_1$ and $\psi_2$. Differentiating it by $\dkb^i$ and applying Proposition \ref{commutation}, we infer
    \begin{align*}
        \nabs_3(\dkb^i\psi_1)+\left(\frac{i+1}{2}+s_2(\psi_1)\right)\trchb(\dkb^i\psi_1)&=-\slD^*(\dkb^i\psi_2)+\frac{|u|}{a}(\Gab\c\psi)^{(i)},\\
        \nabs_4(\dkb^i\psi_2)&=\slD(\dkb^i\psi_1)+\frac{|u|}{a}(\Gab\c\a)^{(i)}.
    \end{align*}
Applying Proposition \ref{keyintegral}, we deduce
\begin{align*}
\|\dkb^i\psi_1\|_{L^2_{sc}(\cuv)}^2+\|\dkb^i\psi_2\|^2_{L^2_{sc}(\ucuv)}&\les\int_0^\ub\|\dkb^i\psi_1\|_{L^2_{sc}(S_{u_\infty,\ub'})}^2d\ub'\\
&+\int_0^\ub\int_{u_\infty}^u\|\psi^{(i)}\c(\Gab\c\a)^{(i)}\|_{L^1_{sc}(S_{u',\ub'})}du'd\ub',
\end{align*}
where $\psi:=(\psi_1,\psi_2)$. Notice that we have from \eqref{estGab} and \eqref{fluxsc}
\begin{align*}
    &\int_0^\ub\int_{u_\infty}^u\|\psi^{(i)}\c(\Gab\c\a)^{(i)}\|_{L^1_{sc}(S_{u',\ub'})}du'd\ub'\\
    \les&\int_0^\ub\int_{u_\infty}^u\frac{1}{|u'|}\|\psi^{(i)}\|_{L^2_{sc}(S_{u',\ub'})}\|(\Gab\c\a)^{(i)}\|_{L^2_{sc}(S_{u',\ub'})}du'd\ub'\\
    \les&\int_0^\ub\int_{u_\infty}^u\frac{\af}{|u'|^2}\|\psi^{(i)}\|_{L^2_{sc}(S_{u',\ub'})}\|\a^{(i)}\|_{L^2_{sc}(S_{u',\ub'})}du'd\ub'\\
    \les&\int_0^\ub\left(\int_{u_\infty}^u\frac{a}{|u'|^2}\|\psi^{(i)}\|^2_{L^2_{sc}(S_{u',\ub'})}du'\right)^\frac{1}{2}\left(\int_{u_\infty}^u\frac{1}{|u'|^2}\|\a^{(i)}\|_{L^2_{sc}(S_{u',\ub'})}^2du'\right)^\frac{1}{2}d\ub'\\
    \les&\sup_{0\leq\ub\leq 1}\|\psi^{(i)}\|_{L^2_{sc}(\ucuv)}\left(\int_{u_\infty}^u\frac{1}{|u'|^2}\|\a^{(i)}\|_{L^2_{sc}(H_{u'}^{(0,\ub)})}^2du'\right)^\frac{1}{2}.
\end{align*}
Moreover, we have from $\mr_{(0)}\les 1$ and $i\leq s+5$
\begin{align*}
    \int_0^\ub\|\dkb^i\psi_1\|_{L^2_{sc}(S_{u_\infty,\ub'})}^2d\ub'\les 1.
\end{align*}
Taking the sum for $0\leq i\leq s+5$ and $(\psi_1,\psi_2)\in\{(\b,(-\rho,\si)),((\rho,\si),\bb), (\bb,\aa)\}$ and applying Cauchy-Schwarz inequality, we deduce from Proposition \ref{estab}
\begin{align*}
\|\psi_1^{(s+5)}\|_{L^2_{sc}(\cuv)}^2+\|\psi_2^{(s+5)}\|^2_{L^2_{sc}(\ucuv)}\les 1+\int_{u_\infty}^u\frac{1}{|u'|^2}\|\a^{(s+5)}\|_{L^2_{sc}(H_{u'}^{(0,\ub)})}^2du'\les 1.
\end{align*}
This concludes the proof of Proposition \ref{estbr}.
\end{proof}
Combining Propositions \ref{estab} and \ref{estbr}, this concludes the proof of Theorem \ref{M1}.
\subsection{\texorpdfstring{$L^2(S_{u,\ub})$}{}--estimates for Ricci coefficients and curvature}\label{secO}
The goal of this section is to prove Theorem \ref{M2}.
\subsubsection{Estimate for \texorpdfstring{$\om$}{}}
\begin{prop}\label{estom}
    We have the following estimate:
    \begin{align*}
        \|\om^{(s+5)}\|_{L^2_{sc}(S_{u,\ub})}\les 1.
    \end{align*}
\end{prop}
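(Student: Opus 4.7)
The plan is to integrate the $\nabs_3$-transport equation for $\om$ along the incoming null cone from the initial slice $H_{u_\infty}$, following exactly the template used for $\hchb$ in the proof of Proposition \ref{esthchhchb}. From Proposition \ref{nulles} we have schematically
\begin{equation*}
\nabs_3\om = \tfrac{1}{2}\rho + \Gag\c\Gag,
\end{equation*}
so after commuting with $\dkb^i$ using Proposition \ref{commutation}, for each $0\le i\le s+5$ the quantity $\dkb^i\om$ satisfies a transport equation of the form
\begin{equation*}
\nabs_3(\dkb^i\om)+\tfrac{i}{2}\trchb(\dkb^i\om)=\rho^{(i)}+(\Gag\c\Gag)^{(i)}+\tfrac{|u|}{a}(\Gab\c\Gag^{(1)})^{(i-1)},
\end{equation*}
where the last term collects the commutator contributions.

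Apply Lemma \ref{3evolution} with $\la_0=i/2$, noting that $2s_2(\dkb^i\om)=i$ and that each forcing term on the right has signature $s_2=1+i/2$ (so that $\la_1-2s_2(F)=-3$). This gives
\begin{equation*}
|u|^{-1}\|\dkb^i\om\|_{L^2_{sc}(S_{u,\ub})}\les |u_\infty|^{-1}\|\dkb^i\om\|_{L^2_{sc}(S_{u_\infty,\ub})}+a\int_{u_\infty}^u|u'|^{-3}\|F\|_{L^2_{sc}(S_{u',\ub})}\,du'.
\end{equation*}
The initial term is controlled by $\mo_{(0)}\les 1$. For the curvature source $\rho^{(i)}$, write $|u'|^{-3}=(a|u'|^{4})^{-1/2}\cdot (a|u'|^{-2})^{1/2}$ and use Cauchy-Schwarz together with the definition \eqref{fluxsc} and the flux bound $\|\rho^{(s+5)}\|_{L^2_{sc}(\ucuv)}\les\mr\les 1$ (available under the hypotheses of Theorem \ref{M2} via Theorem \ref{M1}); the resulting contribution is $\les a^{1/2}|u|^{-3/2}$, which after multiplication by $|u|$ is $\les 1$ in $V_*$.

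For the nonlinear contributions, apply Proposition \ref{Holder} combined with the Sobolev embedding in Proposition \ref{sobolev} and the bootstrap \eqref{B1} to get $\|(\Gag\c\Gag)^{(i)}\|_{L^2_{sc}(S)}\les a^{1/3}/|u'|$; the sharper bound \eqref{estGab} handles the $\Gab$ commutator terms similarly. Plugging in and computing, these produce contributions bounded by $a^{-2/3}\ll 1$ in $V_*$, which is more than sufficient to close. Summing over $0\le i\le s+5$ yields the claim $\|\om^{(s+5)}\|_{L^2_{sc}(S_{u,\ub})}\les 1$. The main technical point is verifying that every forcing term shares the favorable signature $s_2=1+i/2$, so that Lemma \ref{3evolution} produces the $|u'|^{-3}$ weight that permits the Cauchy-Schwarz trick against the flux norm of $\rho$; once this bookkeeping is in place, the estimate is routine.
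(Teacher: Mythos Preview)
Your proof is correct and follows essentially the same approach as the paper: integrate the $\nabs_3$-transport equation for $\om$, commute with $\dkb^i$, apply Lemma \ref{3evolution}, and handle the $\rho^{(i)}$ source via Cauchy--Schwarz against the incoming flux norm while bounding the nonlinear terms with Proposition \ref{Holder} and the bootstrap. The only cosmetic difference is that the paper absorbs the $\Gag\c\Gag$ contribution into a single schematic term $\frac{|u|}{a}(\Gag\c\Gab)^{(i)}$ after commutation, whereas you track it separately; also note that under the hypotheses of Theorem \ref{M2} the bound $\mr\les 1$ is already assumed, so you need not invoke Theorem \ref{M1}.
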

\begin{proof}
    We have from Proposition \ref{nulles}
    \begin{align*}
        \nabs_3\om=\frac{1}{2}\rho+\Gag\c\Gag.
    \end{align*}
    Differentiating it by $\dkb^i$ and applying Proposition \ref{commutation}, we infer
    \begin{align*}
        \nabs_3(\dkb^i\om)+\frac{i}{2}\trchb(\dkb^i\om)=\rho^{(i)}+\frac{|u|}{a}(\Gag\c\Gab)^{(i)}.
    \end{align*}
    Applying Lemma \ref{3evolution}, we deduce for $i\leq s+5$
    \begin{align*}
        |u|^{-1}\|\dkb^i\om\|_{L^2_{sc}(S_{u,\ub})}&\les|u_\infty|^{-1}\|\dkb^i\om\|_{L^2_{sc}(S_{u_\infty,\ub})}+a\int_{u_\infty}^u|u'|^{-3}\|\rho^{(i)}\|_{L^2_{sc}(S_{u',\ub})}du'\\
        &+\int_{u_\infty}^u|u'|^{-2}\|(\Gag\c\Gab)^{(i)}\|_{L^2_{sc}(S_{u',\ub})}du'\\
        &\les\frac{1}{|u_\infty|}+\left(\int_{u_\infty}^u\frac{a}{|u'|^2}\|\rho^{(i)}\|^2_{L^2_{sc}(S_{u',\ub})}du'\right)^\frac{1}{2}\left(\int_{u_\infty}^u\frac{a}{|u'|^4}du'\right)^\frac{1}{2}+\int_{u_\infty}^u\frac{a^\frac{1}{3}\af}{|u'|^3}du'\\
        &\les\frac{1}{|u|}+\frac{\af}{|u|^\frac{3}{2}}+\frac{a^\frac{5}{6}}{|u|^2}\\
        &\les\frac{1}{|u|}.
    \end{align*}
    This concludes the proof of Proposition \ref{estom}.
\end{proof}
\subsubsection{Estimate for \texorpdfstring{$\log\Om$}{}}
\begin{prop}\label{estOm}
    We have the following estimate:
    \begin{align*}
        \|\log\Om\|_{L^2_{sc}(S_{u,\ub})}\les 1.
    \end{align*}
\end{prop}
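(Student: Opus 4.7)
The plan is to run a straightforward $e_4$-transport estimate, using the relation between $\log\Om$ and $\om$ from the null identities \eqref{nullidentities}. Recall that on scalars $\nabs_4$ coincides with $e_4$, so \eqref{nullidentities} gives
\begin{equation*}
    \nabs_4(\log\Om) = -2\om.
\end{equation*}
Both sides have signature $s_2 = 0$ (see the signature table and the convention \eqref{s1s2}), so this transport equation is compatible with the scale-invariant framework without any weight mismatch.

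Next, I would invoke Lemma \ref{evolution} with $\phi = \log\Om$ to obtain
\begin{equation*}
    \|\log\Om\|_{L^2_{sc}(S_{u,\ub})} \les \|\log\Om\|_{L^2_{sc}(S_{u,0})} + \int_0^\ub \|\nabs_4(\log\Om)\|_{L^2_{sc}(S_{u,\ub'})}\,d\ub'.
\end{equation*}
The boundary term at $\ub = 0$ vanishes: the hypothesis of Theorem \ref{maintrapped} stipulates Minkowskian initial data along $\Hb_0$, so $\Om \equiv 1$ there and $\log\Om = 0$ on $S_{u,0}$ for every $u \in [u_\infty,-\frac{a}{4}]$. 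For the integral term I substitute $\nabs_4(\log\Om) = -2\om$ and apply Proposition \ref{estom} (with $i=0$), which yields $\|\om\|_{L^2_{sc}(S_{u,\ub'})} \les 1$ uniformly. Integrating over $\ub' \in [0,\ub] \subseteq [0,1]$ gives the desired bound $\les 1$.

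There is no real obstacle here — the argument is a one-line transport estimate, made possible by the prior bound on $\om$ and the triviality of the data on $\Hb_0$. The only point worth verifying carefully is the signature bookkeeping: since $s_2(\log\Om) = s_2(\om) = 0$, the weights $a^{-s_2}|u|^{2s_2}$ in \eqref{dfsc} are all unity and the estimate closes cleanly with constant $\les 1$ on the right-hand side. Higher-derivative bounds for $\log\Om$, if needed later, would be obtained by commuting $\dkb^i$ through $\nabs_4$ using Proposition \ref{commutation} and reusing Proposition \ref{estom} together with the already-established bounds \eqref{estGab} on $\Gab$, but they are not required for the statement as written.
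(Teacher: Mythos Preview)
Your proposal is correct and follows essentially the same approach as the paper: both use the identity $\nabs_4\log\Om=-2\om$ from \eqref{nullidentities}, apply Lemma \ref{evolution}, note that the data on $\Hb_0$ is trivial, and then invoke Proposition \ref{estom} to bound the $\ub$-integral of $\|\om\|_{L^2_{sc}}$ by a constant. Your additional remarks on the signature bookkeeping and the higher-derivative commutation are accurate but not needed for the statement as written.
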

\begin{proof}
    We have from \eqref{nullidentities}
    \begin{align*}
        \nabs_4\log\Om=-2\om.
    \end{align*}
    Applying Lemma \ref{evolution} and Proposition \ref{estom}, we infer
    \begin{align*}
        \|\log\Om\|_{L^2_{sc}(S_{u,\ub})}\les \int_{0}^\ub \|\om\|_{L^2_{sc}(S_{u,\ub'})}d\ub'\les1.
    \end{align*}
    This concludes the proof of Proposition \ref{estOm}.
\end{proof}
\subsubsection{Estimates for \texorpdfstring{$\omb$}{} and \texorpdfstring{$\eta$}{}}
\begin{prop}\label{estomb}
    We have the following estimate:
    \begin{align*}
        \|(\omb,\eta)^{(s+5)}\|_{L^2_{sc}(S_{u,\ub})}\les 1.
    \end{align*}
\end{prop}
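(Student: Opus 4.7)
The plan is to propagate both $\omb$ and $\eta$ from the Minkowskian data on $\Hb_0$ along the outgoing $e_4$--direction, exactly as was done for $\om$ along $e_3$ in Proposition~\ref{estom}. Both quantities satisfy $\nabs_4$--transport equations from Proposition~\ref{nulles}: schematically
\begin{equation*}
\nabs_4\omb \;=\; \tfrac{1}{2}\rho + \Gag\c\Gag, \qquad \nabs_4\eta \;=\; -\b - \chi\c(\eta-\etab),
\end{equation*}
where in the $\eta$--equation we split $\chi = \hch + \tfrac{1}{2}\trch\,\slg = \Gab + \frac{1}{|u|}\slg + \frac{1}{2}\trchc\,\slg$ so that the only linear $S$--tangent term that does not come with a $\Gab$ or $\Gag$ weight is $\frac{1}{|u|}(\etab-\eta)$.

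I would commute with $\dkb^i$ for $i\le s+5$, using Proposition~\ref{commutation} to generate only $\Gab\c\nabs(\cdot)^{(i-1)}$ plus lower-order $\Gag^{(i)}$ commutator terms, both of which are absorbable by the bootstrap $\mo\le a^{1/6}$ together with the improved bound $\|\Gab^{(s+5)}\|_{L^2_{sc}}\les\af$ from Proposition~\ref{esthchhchb}. Writing $\psi\in\{\omb,\eta\}$, one obtains
\begin{equation*}
\nabs_4(\dkb^i\psi) \;=\; \text{(linear source in }\mr\text{--controlled curvature)} \;+\; \tfrac{1}{|u|}(\Gag)^{(i)} \;+\; \frac{|u|}{a}(\Gag\c\Gab)^{(i)} \;+\; \frac{|u|}{a}(\Gag\c\Gag)^{(i)},
\end{equation*}
and then integrates by Lemma~\ref{evolution} from $\ub=0$, where the Minkowskian data give $\dkb^i\psi\big|_{\ub=0}=0$.

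The curvature source is handled by Cauchy--Schwarz in $\ub$:
\begin{equation*}
\int_0^\ub \|(\rho,\b)^{(i)}\|_{L^2_{sc}(S_{u,\ub'})}\,d\ub' \;\le\; \|(\rho,\b)^{(i)}\|_{L^2_{sc}(\cuv)} \;\les\; 1,
\end{equation*}
by Theorem~\ref{M1}. The quadratic terms are estimated using Proposition~\ref{Holder} together with Proposition~\ref{sobolev}: for instance,
\begin{equation*}
\|(\Gag\c\Gag)^{(i)}\|_{L^2_{sc}(S_{u,\ub'})} \;\les\; \tfrac{1}{|u|}\|\Gag^{(i)}\|_{L^2_{sc}}\|\Gag\|_{L^\infty_{sc}} \;\les\; \tfrac{a^{1/3}}{|u|},
\end{equation*}
and similarly for $\Gag\c\Gab$, both of which integrate over $\ub\in[0,1]$ to something $\ll 1$ since $|u|\ge a/4$. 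Finally, the linear $\frac{1}{|u|}\etab^{(i)}$ term in the $\eta$--equation is absorbed by a Gronwall argument in $\ub$ over the interval of length at most $1$, and the factor $e^{O(1/|u|)}=1+O(a^{-1})$ is harmless.

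The only step that requires any genuine care is the treatment of the linear coupling $\frac{1}{|u|}(\eta-\etab)$ in the $\eta$--equation, because $\etab$ has not yet been estimated at this stage; the fix is standard, namely to estimate $\eta$ and $\etab$ simultaneously (the analogous $\nabs_3$--equation for $\etab$ is handled just as in Proposition~\ref{estomb}'s $\omb$ part, and one runs a small coupled Gronwall over $\ub\in[0,1]$). Once both families of estimates close, summing over $i\le s+5$ yields $\|(\omb,\eta)^{(s+5)}\|_{L^2_{sc}(S_{u,\ub})}\les 1$ as claimed.
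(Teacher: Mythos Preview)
Your core approach---integrating the $\nabs_4$--transport equations for $\omb$ and $\eta$ via Lemma~\ref{evolution}, with curvature sources controlled by the flux bound $\mr\les 1$---is exactly what the paper does. (Minor point: at this stage $\mr\les 1$ is an \emph{assumption} of Theorem~\ref{M2}, not something you invoke Theorem~\ref{M1} for.)

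Where you diverge is in the treatment of the linear piece $\frac{1}{|u|}(\eta-\etab)$ in the $\eta$--equation. You flag this as the ``only step that requires genuine care'' and propose a Gr\"onwall argument coupled to $\etab$, but this is unnecessary. The bootstrap assumption \eqref{B1} already gives $\|(\eta,\etab)^{(i)}\|_{L^2_{sc}}\le a^{1/6}$, so after integrating over $\ub\in[0,1]$ this term contributes at most $a^{1/6}/|u|\le 4a^{-5/6}\ll 1$. There is no circularity to break: on the right-hand side you may freely use the bootstrap bound, and the result is still $\les 1$. The paper simply absorbs every nonlinear and lower-order term into the worst-case schematic $\Gab\c\Gab$ and closes in one line using the improved bound $\|\Gab^{(s+5)}\|_{L^2_{sc}}\les\af$ from Proposition~\ref{esthchhchb}, giving $\af\cdot\af/|u|\les 1$.

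Your proposed ``coupled Gr\"onwall over $\ub\in[0,1]$'' with the $\nabs_3$--equation for $\etab$ is also confused as written: the $\nabs_3$--equation integrates in the $u$--direction, not $\ub$, so there is no coupled ODE system in a single variable. In the paper, $\etab$ is handled later and independently (Proposition~\ref{estetab}) via its own $\nabs_3$--transport equation; it plays no role here.
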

\begin{proof}
    We have from Proposition \ref{nulles}
    \begin{align*}
        \nabs_4(\omb,\eta)=(\rho,\b)^{(0)}+\Gab\c\Gab.
    \end{align*}
    Differentiating it by $\dkb^i$ and applying Proposition \ref{commutation}, we infer
    \begin{align*}
        \nabs_4(\dkb^i\omb,\dkb^i\eta)=(\rho,\b)^{(i)}+(\Gab\c\Gab)^{(i)}.
    \end{align*}
    Applying Lemma \ref{evolution}, we deduce for $i\leq s+5$
    \begin{align*}
        \|\dkb^i(\omb,\eta)\|_{L^2_{sc}(S_{u,\ub})}&\les \int_{0}^\ub\|(\rho,\b)^{(i)}\|_{L^2_{sc}(S_{u,\ub'})}+\|(\Gab\c\Gab)^{(i)}\|_{L^2_{sc}(S_{u,\ub'})}d\ub'\\
        &\les\left(\int_{0}^\ub\|(\rho,\b)^{(i)}\|^2_{L^2_{sc}(S_{u,\ub'})}d\ub'\right)^\frac{1}{2}+\int_0^\ub \frac{1}{|u|}\|\Gab^{(i)}\|_{L^2_{sc}(S_{u,\ub'})}\|\Gab\|_{L^\infty_{sc}(S_{u,\ub'})} d\ub'\\
        &\les 1+\frac{\af\af}{|u|}\\
        &\les 1,
    \end{align*}
    where we used \eqref{estGab} at the third step. This concludes the proof of Proposition \ref{estomb}.
\end{proof}
\subsubsection{Estimate for \texorpdfstring{$\trchc$}{}}
\begin{prop}\label{esttrchc}
    We have the following estimate:
    \begin{align*}
        \left\|\trchc^{(s+5)}\right\|_{L^2_{sc}(S_{u,\ub})}\les\frac{a}{|u|}.
    \end{align*}
\end{prop}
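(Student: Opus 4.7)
The proof follows the same template as Propositions \ref{estom} and \ref{estomb}: derive a $\nabs_4$-transport equation for $\trchc$, commute with $\dkb^i$ up to $i = s+5$, and integrate via Lemma \ref{evolution} from the trivial data at $\ub = 0$. Starting from $\nabs_4\trch + \frac{1}{2}\trch^2 = -|\hch|^2 - 2\om\trch$ in Proposition \ref{nulles}, the crucial observation is that $e_4(u) = 0$ (since $L = -\grad u$ is null), so $\nabs_4(2/|u|) = 0$, and substituting $\trch = \trchc + 2/|u|$ produces
$$\nabs_4\trchc + \frac{2}{|u|}\trchc = -|\hch|^2 - \frac{1}{2}\trchc^2 - 2\om\trchc - \frac{4\om}{|u|} - \frac{2}{|u|^2}.$$

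After commuting with $\dkb^i$ via Proposition \ref{commutation} (absorbing the commutator errors into schematic products of $\Gag$ and $\Gab$, as in the preceding proofs) and applying Lemma \ref{evolution}, the vanishing of the Minkowskian initial data on $\Hb_0$ reduces the estimate to controlling $\int_0^\ub \|\nabs_4(\dkb^i\trchc)\|_{L^2_{sc}(S_{u,\ub'})}\,d\ub'$ uniformly in $\ub\in[0,1]$. The dominant source is $|\hch|^2$: a Leibniz expansion combined with Proposition \ref{Holder}, the Sobolev embedding Proposition \ref{sobolev}, and the sharp bound $\|\hch^{(s+5)}\|_{L^2_{sc}(S_{u,\ub})}\les\af$ from Proposition \ref{esthchhchb} yields
$$\|\dkb^i(|\hch|^2)\|_{L^2_{sc}(S_{u,\ub'})}\les \frac{1}{|u|}\|\hch^{(i)}\|_{L^\infty_{sc}(S_{u,\ub'})}\|\hch^{(i)}\|_{L^2_{sc}(S_{u,\ub'})}\les\frac{a}{|u|},$$
which integrates over $\ub'\in[0,\ub]\subseteq[0,1]$ to the target bound $a/|u|$.

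All other contributions are strictly subdominant in $V_*$, where $|u|\geq a/4\gg 1$. The linear and quadratic self-interactions of $\trchc$ each gain at least one factor $|u|^{-1}\leq 4/a$ beyond the bootstrap size within which $(|u|/a)\trchc\in\Gag$ is controlled; the $\om$-contributions are handled by Proposition \ref{estom}; and the purely background term $-2/|u|^2$ has scale-invariant $L^2(S_{u,\ub})$-norm of order $1/|u|$, integrating to $O(1/|u|)\ll a/|u|$. The main conceptual point, and essentially the only subtlety, is the choice of subtraction: because $\nabs_4$ annihilates $2/|u|$, no spurious background source appears at the dangerous scale; the genuine source $|\hch|^2=O(a/|u|^2)$ is precisely of the right size to produce the claimed $a/|u|$ bound for $\trchc^{(s+5)}$ after one $\ub$-integration. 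Beyond careful bookkeeping of the Leibniz and commutator expansions at top order, no new difficulty arises.
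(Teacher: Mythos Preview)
Your proof is correct and follows essentially the same approach as the paper: derive the $\nabs_4$--transport equation for $\trchc$, commute with $\dkb^i$, apply Lemma \ref{evolution} from the trivial data on $\Hb_0$, and estimate the right-hand side using the $\Gab$--bound \eqref{estGab}. The paper compresses the entire right-hand side into the schematic expression $\Gab\c\Gab$, whereas you write out the equation explicitly and check by hand that the remaining terms $-\frac{2}{|u|}\trchc$, $-\frac{1}{2}\trchc^2$, $-2\om\trchc$, $-\frac{4\om}{|u|}$, $-\frac{2}{|u|^2}$ are all subdominant; both arrive at the same estimate with the $|\hch|^2$ term as the driver.
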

\begin{proof}
    We have from Proposition \ref{nulles}
    \begin{align*}
        \nabs_4\trchc=\Gab\c\Gab.
    \end{align*}
    Differentiating it by $\dkb^i$ and applying Proposition \ref{commutation}, we infer
    \begin{align*}
        \nabs_4(\dkb^i\trchc)=(\Gab\c\Gab)^{(i)}.
    \end{align*}
    Applying Lemma \ref{evolution}, we deduce for $i\leq s+5$
    \begin{align*}
        \|\dkb^i\trchc\|_{L^2_{sc}(S_{u,\ub})}&\les\int_0^\ub\|(\Gab\c\Gab)^{(i)}\|_{L^2_{sc}(S_{u,\ub'})}d\ub'\\
        &\les\int_0^\ub \frac{1}{|u|}\|\Gab^{(i)}\|_{L^2_{sc}(S_{u,\ub'})}\|\Gab\|_{L^\infty_{sc}(S_{u,\ub'})} d\ub'\\
        &\les\frac{a}{|u|},
    \end{align*}
    where we used \eqref{estGab} at the last step. This concludes the proof of Proposition \ref{esttrchc}.
\end{proof}
\subsubsection{Estimate for \texorpdfstring{$\trchbc$}{}}
\begin{prop}\label{esttrchbc}
    We have the following estimate:
    \begin{align*}
        \left\|\trchbc^{(s+5)}\right\|_{L^2_{sc}(S_{u,\ub})}\les\frac{|u|}{a}.
    \end{align*}
\end{prop}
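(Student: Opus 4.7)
The plan is to mirror the proof of Proposition~\ref{esttrchc}, but now integrating along the incoming direction $e_3$ via Lemma~\ref{3evolution} instead of along $e_4$ via Lemma~\ref{evolution}. First, I would derive a transport equation for $\trchbc$. Starting from the null structure equation
\begin{equation*}
\nabs_3\trchb+\tfrac{1}{2}(\trchb)^2 = -|\hchb|^2 - 2\omb\,\trchb,
\end{equation*}
substituting $\trchb=\trchbc-\frac{2}{|u|}$ and using $e_3(|u|)=-1$ to dispose of $\nabs_3(2/|u|)$, one finds after cancellation that $\trchbc$ solves a transport equation of the schematic form
\begin{equation*}
\nabs_3\trchbc + \trchb\,\trchbc \;=\; \tfrac{|u|}{a}\,\Gab\cdot\Gab + \Gag\cdot\Gag,
\end{equation*}
where the right-hand side is quadratic in Ricci coefficients and the $\frac{1}{|u|^2}$ background contribution is absorbed exactly into the linear $\trchb\trchbc$ term on the left.

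Next, I would commute with $\dkb^i$ for $i\leq s+5$ using Proposition~\ref{commutation}. The commutator $[\nabs_3,\nabs]$ produces a $-\frac{1}{2}\Om\trchb\nabs$ piece which changes the coefficient of the principal linear term from $\trchb$ to $\frac{i+2}{2}\trchb$, together with harmless lower-order $\frac{|u|}{a}\Gab\cdot\nabs(\cdots)$ and $\Gag^{(1)}$ corrections. The resulting equation is
\begin{equation*}
\nabs_3(\dkb^i\trchbc) + \tfrac{i+2}{2}\,\trchb\,(\dkb^i\trchbc) \;=\; \tfrac{|u|}{a}(\Gab\cdot\Gab)^{(i)} + (\Gag\cdot\Gag)^{(i)}.
\end{equation*}

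I would then apply Lemma~\ref{3evolution} with $\la_0 = \frac{i+2}{2}$, which yields a weight $|u|^{\la_1-2s_2(\trchbc)}$ on the left with $s_2(\trchbc)=1$. Using Proposition~\ref{esthchhchb} to estimate $\|\hchb^{(i)}\|_{L^2_{sc}}\les |u|/\af$ together with $\|\hchb\|_{L^\infty_{sc}}\les \af$ from Sobolev (Proposition~\ref{sobolev}), the $|\hchb|^2$ contribution, which is the potentially dangerous one due to its higher signature, is controlled by $\int_{u_\infty}^{u}\frac{a}{|u'|^2}\cdot\frac{|u'|}{\af}\cdot\af\,\frac{du'}{|u'|}\les 1$, and this produces exactly a factor of $\frac{|u|}{a}$ after multiplying by the weight. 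The remaining quadratic $\Gag\cdot\Gag$ terms are strictly better since each factor is bounded by Propositions~\ref{estom}, \ref{estomb}, and~\ref{esttrchc}. The boundary term at $u_\infty$ is handled by $\mo_{(0)}\les 1$.

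The main obstacle is bookkeeping: one must verify that the $\frac{1}{|u|^2}$ constant background term emerging from expanding $\frac{1}{2}(\trchb)^2$ cancels exactly against $\nabs_3(2/|u|)$, and that the signature assignments keep the $|\hchb|^2$ contribution at precisely the borderline level needed to produce the sharp factor $\frac{|u|}{a}$. Everything else is a direct adaptation of the argument already used in Propositions~\ref{esthchhchb} and~\ref{estomb}.
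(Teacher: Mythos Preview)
Your strategy matches the paper's exactly: derive a transport equation for $\trchbc$ from the Raychaudhuri equation, commute with $\dkb^i$ to produce the coefficient $\frac{i+2}{2}\trchb$, and apply Lemma~\ref{3evolution}. However, the bookkeeping you flagged as the ``main obstacle'' is indeed off in several places. First, it is $\frac{a}{|u|}\hchb$, not $\hchb$, that belongs to $\Gab$ (Definition~\ref{gammag}); hence $|\hchb|^2=\frac{|u|^2}{a^2}\Gab\c\Gab$ rather than $\frac{|u|}{a}\Gab\c\Gab$, and by Sobolev $\|\hchb\|_{L^\infty_{sc}}\les\frac{|u|}{\af}$, not $\af$. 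Second, expanding $-2\omb\trchb=-2\omb\trchbc+\frac{4}{|u|}\omb$ produces a \emph{linear} source $|u|^{-1}\Gag$ that is absent from your $\Gag\c\Gag$ schematic (and $e_3(|u|)=-\Om^{-1}$, not $-1$, yields an additional $|u|^{-2}$--type correction). The paper's schematic right-hand side is $|u|^{-1}\Gag+\frac{|u|^2}{a^2}\Gab\c\Gab$.

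Third, and most consequentially, your integral does not close as written. In Lemma~\ref{3evolution} one has $s_2(F)=s_2(\dkb^i\trchbc)+1=2+\frac{i}{2}$, so the weight is $a|u'|^{\la_1-2s_2(F)}=a|u'|^{-3}$, not $\frac{a}{|u'|^2}$. With the correct weight and the correct H\"older bound $\|(\hchb\c\hchb)^{(i)}\|_{L^2_{sc}}\les\frac{1}{|u'|}\cdot\frac{|u'|}{\af}\cdot\frac{|u'|}{\af}=\frac{|u'|}{a}$, the $|\hchb|^2$ contribution is $a\int_{u_\infty}^u|u'|^{-3}\cdot\frac{|u'|}{a}\,du'\les\frac{1}{|u|}\les a^{-1}$, which after removing the $|u|^{-1}$ weight on the left gives exactly $\|\dkb^i\trchbc\|_{L^2_{sc}}\les\frac{|u|}{a}$. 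Your integral as written evaluates to $O(1)$, a factor of $a$ too large. None of this reflects a missing idea; once these weights are corrected the argument is identical to the paper's.
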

\begin{proof}
    We have from Proposition \ref{nulles}
    \begin{align*}
        \nabs_3\trchb+\frac{1}{2}(\trchb)^2=|u|^{-1}\Gag+\frac{|u|^2}{a^2}\Gab\c\Gab.
    \end{align*}
    We also have
    \begin{align*}
        \nabs_3\left(\frac{2}{|u|}\right)-\frac{1}{2}\frac{4}{|u|^2}=|u|^{-2}\Gag.
    \end{align*}
    Taking the sum, we infer
    \begin{align*}
        \nabs_3\trchbc+\trchb\,\trchbc=|u|^{-1}\Gag+\frac{|u|^2}{a^2}\Gab\c\Gab.
    \end{align*}
    Differentiating it by $\dkb^i$ and applying Proposition \ref{commutation}, we infer
    \begin{align*}
        \nabs_3(\dkb^i\trchbc)+\frac{i+2}{2}\trchb(\dkb^i\trchbc)=|u|^{-1}\Gag^{(i)}+\frac{|u|^2}{a^2}(\Gab\c\Gab)^{(i)}.
    \end{align*}
    Applying Lemma \ref{3evolution} and \eqref{estGab}, we deduce for $i\leq s+5$
    \begin{align*}
        &|u|^{-1}\|\dkb^i\trchbc\|_{L^2_{sc}(S_{u,\ub})}\\
        \les\;&|u_\infty|^{-1}\|\dkb^i\trchbc\|_{L^2_{sc}(S_{u_\infty,\ub})}+\int_{u_\infty}^u\frac{a}{|u'|^4}\|\Gag^{(i)}\|_{L^2_{sc}(S_{u',\ub})}+\frac{1}{a|u'|}\|(\Gab\c\Gab)^{(i)}\|_{L^2_{sc}(S_{u',\ub})}du'\\
        \les\;&|u_\infty|^{-1}\frac{|u_\infty|}{a}+\frac{a^\frac{1}{6}a}{|u|^3}+\frac{\af\af}{a|u|}\\
        \les\;&a^{-1}.
    \end{align*}
    This concludes the proof of Proposition \ref{esttrchbc}.
\end{proof}
\subsubsection{Estimate for \texorpdfstring{$\etab$}{}}
\begin{prop}\label{estetab}
    We have the following estimate:
    \begin{align*}
        \|\etab^{(s+5)}\|_{L^2_{sc}(S_{u,\ub})}\les 1.
    \end{align*}
\end{prop}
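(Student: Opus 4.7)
The plan is to treat $\etab$ by running the transport equation along $e_3$, in direct parallel with the proofs of Propositions \ref{estom} and \ref{esttrchbc}. From Proposition \ref{nulles} we have $\nabs_3\etab=-\chib\c(\etab-\eta)+\bb$. Splitting $\chib=\hchb+\frac{1}{2}\trchb\,\slg$ and $\trchb=-\frac{2}{|u|}+\trchbc$, and rewriting $\hchb=\frac{|u|}{a}\Gab$ and $\trchbc=\frac{|u|}{a}\Gag$ in accordance with Definition \ref{gammag}, we obtain the schematic form
\begin{align*}
\nabs_3\etab+\frac{1}{2}\trchb\,\etab=\bb+|u|^{-1}\Gag+\frac{|u|}{a}\Gag\c\Gag+\frac{|u|}{a}\Gab\c\Gag.
\end{align*}

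Next, for $i\leq s+5$ we differentiate by $\dkb^i$ and apply the schematic commutator from Proposition \ref{commutation}, which produces (at the price of harmless lower-order linear terms absorbable into the $\Gag^{(i)}$ piece)
\begin{align*}
\nabs_3(\dkb^i\etab)+\frac{i+1}{2}\trchb(\dkb^i\etab)=\bb^{(i)}+|u|^{-1}\Gag^{(i)}+\frac{|u|}{a}(\Gag\c\Gag)^{(i)}+\frac{|u|}{a}(\Gab\c\Gag)^{(i)}.
\end{align*}
We then invoke Lemma \ref{3evolution} with $\la_0=(i+1)/2$, so $\la_1=i$, and with $s_2(\etab)=1/2$ giving the weight exponent $i-1$ on the left-hand side. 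The contribution from initial data on $H_{u_\infty}$ is controlled by $\mo_{(0)}\les 1$.

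The main task is then to bound the four types of source integrals. The linear term $|u|^{-1}\Gag^{(i)}$ is handled directly by the already established estimates for $\om,\omb,\eta,\trchc,\trchbc$ (Propositions \ref{estom}--\ref{esttrchbc}): it contributes $a\int|u'|^{-3}\|\Gag^{(i)}\|_{L^2_{sc}}du'\les a^{7/6}|u|^{-2}$, which is absorbed after multiplying back by $|u|$ since $|u|\simeq a$ in $V_*$. The quadratic terms $\frac{|u|}{a}(\Gag\c\Gag)^{(i)}$ and $\frac{|u|}{a}(\Gab\c\Gag)^{(i)}$ are estimated via Proposition \ref{Holder} together with \eqref{estGab} and the bootstrap $\mo\leq a^{1/6}$, yielding factors of $a^{-5/6}|u|$ that are harmless after integration against $a|u'|^{-3}$. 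The one step that requires slightly more care is the curvature term $\bb^{(i)}$, for which we have no pointwise $L^2(S)$ bound but rather the incoming flux bound $\|\bb^{(s+5)}\|_{L^2_{sc}(\ucuv)}\les 1$ coming from Theorem \ref{M1}; we convert it into a pointwise bound by the same Cauchy--Schwarz trick used in Proposition \ref{estom}, writing
\begin{align*}
a\int_{u_\infty}^u|u'|^{-3}\|\bb^{(i)}\|_{L^2_{sc}(S_{u',\ub})}du'\les\left(\int_{u_\infty}^u\frac{a}{|u'|^2}\|\bb^{(i)}\|^2_{L^2_{sc}(S_{u',\ub})}du'\right)^{1/2}\left(\int_{u_\infty}^u\frac{a}{|u'|^4}du'\right)^{1/2}\les\frac{a^{1/2}}{|u|^{3/2}}.
\end{align*}
Collecting all contributions and dividing by $|u|^{-1}$ yields $\|\dkb^i\etab\|_{L^2_{sc}(S_{u,\ub})}\les 1$, as desired. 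The main obstacle is essentially bookkeeping: ensuring that the loss of one factor of $|u|$ in the troublesome linear term $|u|^{-1}\Gag^{(i)}$ is compensated by the $a/|u|^2$ weight built into Lemma \ref{3evolution}, which is precisely the mechanism already exploited in the $\omb$ and $\trchbc$ arguments.
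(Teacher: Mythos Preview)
Your approach is essentially identical to the paper's: both integrate the $\nabs_3\etab$ equation via Lemma \ref{3evolution}, commute with $\dkb^i$, and handle the $\bb^{(i)}$ contribution by the same Cauchy--Schwarz against the incoming flux. Two small bookkeeping slips are worth flagging: the left-hand weight exponent from Lemma \ref{3evolution} is $\la_1-2s_2(\dkb^i\etab)=i-(1+i)=-1$, not $i-1$ (you need the signature of $\dkb^i\etab$, not of $\etab$); and for the linear term $|u|^{-1}\Gag^{(i)}$ the paper obtains the stronger weight $a|u'|^{-4}$, giving $a^{7/6}|u|^{-3}\les|u|^{-1}$, whereas your stated bound $a^{7/6}|u|^{-2}$ would not close (it is only $\les a^{1/6}|u|^{-1}$ since $|u|\ge a/4$, not $|u|\simeq a$). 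Either correct the weight to $|u'|^{-4}$, or, since the specific $\Gag$ here is $\eta$, use the already-established $\|\eta^{(s+5)}\|_{L^2_{sc}}\les 1$ from Proposition \ref{estomb} to drop the extra $a^{1/6}$.
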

\begin{proof}
    We have from Proposition \ref{nulles}
    \begin{align*}
        \nabs_3\etab+\frac{1}{2}\trchb\,\etab=\bb+|u|^{-1}\Gag+\frac{|u|}{a}\Gag\c\Gab.
    \end{align*}
    Differentiating it by $\dkb^i$ and applying Proposition \ref{commutation}, we infer
    \begin{align*}
        \nabs_3(\dkb^i\etab)+\frac{i+1}{2}\trchb(\dkb^i\etab)=\bb^{(i)}+|u|^{-1}\Gag^{(i)}+\frac{|u|}{a}(\Gag\c\Gab)^{(i)}.
    \end{align*}
    Applying Lemma \ref{3evolution}, Proposition \ref{estbr} and \eqref{estGab}, we deduce for $i\leq s+5$
    \begin{align*}
         |u|^{-1}\|\dkb^i\etab\|_{L^2_{sc}(S_{u,\ub})}&\les|u_\infty|^{-1}\|\dkb^i\etab\|_{L^2_{sc}(S_{u_\infty,\ub})}+\int_{u_\infty}^u\frac{a}{|u'|^3}\|\bb^{(i)}\|_{L^2_{sc}(S_{u',\ub})}du'\\
         &+\int_{u_\infty}^u\frac{a}{|u'|^4}\|\Gag^{(i)}\|_{L^2_{sc}(S_{u',\ub})}du'+\frac{1}{|u'|^2}\|(\Gag\c\Gab)^{(i)}\|_{L^2_{sc}(S_{u',\ub})}du'\\
         &\les\frac{1}{|u_\infty|}+\left(\int_{u_\infty}^u\frac{a}{|u'|^2}\|\bb^{(i)}\|_{L^2_{sc}(S_{u',\ub})}^2du'\right)^\frac{1}{2}\left(\int_{u_\infty}^u\frac{a}{|u'|^4}du'\right)^\frac{1}{2}+\frac{a^\frac{7}{6}}{|u|^3}+\frac{a^\frac{5}{6}}{|u|^2}\\
         &\les\frac{1}{|u|}+\frac{\af}{|u|^\frac{3}{2}}\\
         &\les\frac{1}{|u|}.
    \end{align*}
    This concludes the proof of Proposition \ref{estetab}.
\end{proof}
\subsection{Estimates for \texorpdfstring{$\b$}{}, \texorpdfstring{$\rho$}{}, \texorpdfstring{$\si$}{}, \texorpdfstring{$\bb$}{} and \texorpdfstring{$\aa$}{}}
\begin{prop}\label{estR}
We have the following estimate:
\begin{align*}
    \|(\b,\rho,\si,\bb,\aa)^{(s+4)}\|_{L^2_{sc}(S_{u,\ub})}\les 1.
\end{align*}
\end{prop}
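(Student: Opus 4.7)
The plan is to derive each sphere bound $\|\psi^{(s+4)}\|_{L^2_{sc}(S_{u,\ub})}\les 1$, for $\psi\in\{\b,\rho,\si,\bb,\aa\}$, by integrating the Bianchi equation for $\psi$ in whichever null direction pairs naturally with the flux estimate supplied by Proposition~\ref{estbr}. For $\b$ we integrate the $\nabs_3$--equation along $\Hb_\ub$, whose principal right-hand side $\nabs\rho+{^*\nabs}\si$ belongs to $L^2_{sc}(\ucuv)$; for $\rho,\si,\bb,\aa$ we integrate the respective $\nabs_4$--equations along $H_u$, whose principal parts $\sdivs\b$, $\curls\b$, $\nabs(\rho,\si)$, and $\nabs\hot\bb$ are all controlled in $L^2_{sc}(\cuv)$. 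The Minkowski data on $\Hb_0$ make the initial values of the $\nabs_4$--integrations vanish, while the $\nabs_3$--integration picks up only $\mr_{(0)}\les 1$ at $H_{u_\infty}$.

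Concretely, for each such $\psi$ we commute its transport equation with $\dkb^i$ via Proposition~\ref{commutation} to get an equation of the schematic form $\nabs_3(\dkb^i\b)+c_i\trchb\,(\dkb^i\b)=(\nabs\rho,\nabs\si)^{(i)}+\text{l.o.t.}$, and analogously in the $\nabs_4$--direction for the others, then apply Lemma~\ref{3evolution} or Lemma~\ref{evolution} to reduce the sphere norm to an integral of the right-hand side. The nonlinear terms from the Bianchi equations and from the commutators are bounded using Propositions~\ref{Holder} and~\ref{sobolev}, the $\mo$--bounds already obtained in Propositions~\ref{estom}--\ref{estetab}, and the improved $\Gab$--bound \eqref{estGab}; since $|u|/a\leq 4$ throughout $V_*$, these contributions are subleading. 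A representative nonlinear term is $\hchb\c\a$ in the $\nabs_4\rho$--equation, which is estimated by $|u|^{-1}\|\hchb\|_{L^\infty_{sc}}\|\a\|_{L^2_{sc}}\les 1$ via Propositions~\ref{esthchhchb} and~\ref{sobolev}, and analogously for its commuted versions.

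The decisive step is the conversion of the top-order flux bound into a sphere bound via Cauchy-Schwarz. For the $\b$--equation one encounters, after unfolding the scale-invariant weights,
\begin{align*}
\int_{u_\infty}^u\frac{a}{|u'|^2}\,\|(\rho,\si)^{(i+1)}\|_{L^2_{sc}(S_{u',\ub})}\,du'
&\les\Big(\int_{u_\infty}^u\tfrac{a}{|u'|^2}\,du'\Big)^{1/2}\|(\rho,\si)^{(i+1)}\|_{L^2_{sc}(\ucuv)}\\
&\les 1,
\end{align*}
since $\int_{u_\infty}^u\frac{a}{|u'|^2}du'\leq a/|u|\leq 4$ in $V_*$ and the flux factor is $\les 1$ by Proposition~\ref{estbr} for $i+1\leq s+5$. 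An entirely parallel Cauchy-Schwarz argument in the $\nabs_4$--direction, using $\int_0^\ub d\ub'\leq 1$ together with the fluxes $\|\b^{(s+5)}\|_{L^2_{sc}(\cuv)}$, $\|(\rho,\si)^{(s+5)}\|_{L^2_{sc}(\cuv)}$ and $\|\bb^{(s+5)}\|_{L^2_{sc}(\cuv)}$, all $\les 1$, yields the sphere bounds for $\rho,\si,\bb,\aa$. The main technical obstacle is the regularity bookkeeping: one must verify that for $i\leq s+4$ the top-order right-hand side is of order $i+1\leq s+5$, matching exactly the top-order flux furnished by Propositions~\ref{estab} and~\ref{estbr}, and one must track the scale-invariant weights carefully so that the Cauchy-Schwarz inequality produces precisely the flux norm on the appropriate null cone. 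Once this is done, summing over $i\leq s+4$ and over the five curvature components yields the stated estimate.
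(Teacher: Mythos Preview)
Your approach is correct and reaches the same conclusion, but it differs from the paper's proof in its organization. The paper treats all five components $\b,\rho,\si,\bb,\aa$ uniformly via their $\nabs_4$--Bianchi equations, writing a single schematic
\[
\nabs_4 R=\afd\,\a^{(1)}+\tfrac{|u|}{a}\,\Gab\cdot\Gab^{(1)},
\]
where the worst linear term $\sdivs\a$ (arising in $\nabs_4\b$) and the worst nonlinear term $\hchb\cdot\a$ serve as upper bounds for every case. After commuting with $\dkb^{i-1}$ and applying Lemma~\ref{evolution}, the paper closes with the single flux $\afd\|\a^{(s+5)}\|_{L^2_{sc}(\cuv)}\les 1$ from Proposition~\ref{estab} together with \eqref{estGab}. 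By contrast, you split off $\b$ and integrate its $\nabs_3$--equation in $u$, invoking $\|(\rho,\si)^{(s+5)}\|_{L^2_{sc}(\ucuv)}\les 1$, and then handle $\rho,\si,\bb,\aa$ separately via their individual $\nabs_4$--equations and the corresponding outgoing fluxes from Proposition~\ref{estbr}. Both routes are valid; the paper's is more economical (one schematic line covers all five), while yours makes the principal terms explicit case by case at the cost of more bookkeeping. One small correction: the initial data for your $\nabs_3$--integration of $\b$ at $S_{u_\infty,\ub}$ comes from $\mo_{(0)}$ (a sphere norm, since $\b\in\Gag^{(1)}$), not $\mr_{(0)}$ (a flux norm); both are $\les 1$, so the argument is unaffected.
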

\begin{proof}
    Given $R\in\{\b,\rho,\si,\bb,\aa\}$, we have from Proposition \ref{bianchiequations}\footnote{Here, we ignore all the terms which decay better.}
    \begin{align*}
        \nabs_4R=\afd\a^{(1)}+\frac{|u|}{a}\Gab\c\Gab^{(1)}.
    \end{align*}
    Differentiating it by $\dkb^{i-1}$ and applying Proposition \ref{commutation}, we infer
    \begin{align*}
        \nabs_4(\dkb^{i-1}R)=\afd\a^{(i)}+\frac{|u|}{a}(\Gab\c\Gab)^{(i)}.
    \end{align*}
    Applying Lemma \ref{evolution}, Proposition \ref{esthchhchb} and \eqref{estGab}, we deduce for $i\leq s+5$
    \begin{align*}
        \|\dkb^{i-1}R\|_{L^2_{sc}(S_{u,\ub})}&\les\afd\int_0^\ub\|\a^{(i)}\|_{L^2_{sc}(S_{u,\ub'})} d\ub'+a^{-1}\int_0^\ub\|\Gab^{(i)}\|_{L^2_{sc}(S_{u,\ub'})}\|\Gab\|_{L^\infty_{sc}(S_{u,\ub'})} d\ub'\\
        &\les\afd\left(\int_0^\ub\|\a^{(i)}\|^2_{L^2_{sc}(S_{u,\ub'})} d\ub'\right)^\frac{1}{2}+a^{-1}\af\af\\
        &\les 1.
    \end{align*}
This concludes the proof of Proposition \ref{estR}.
\end{proof}
Combining Propositions \ref{esthchhchb}, \ref{estom}--\ref{estR}, this concludes the proof of Theorem \ref{M2}.
\subsection{Proof of the Theorem \ref{maintrapped}}\label{proofmain}
We now use Theorems \ref{M1} and \ref{M2} to prove Theorem \ref{maintrapped}.
\begin{df}\label{bootstrap}
For any $u_\infty\leq u_*\leq -\frac{a}{4}$, let $\aleph(u_*)$ the set of spacetimes $V(u_*,1)$ associated with a double null foliation $(u,\ub)$ in which we have the following bounds:
\begin{align}
    \mo\leq a^\frac{1}{6},\qquad\quad\mr\leq a^\frac{1}{6}.\label{B2}
\end{align}
\end{df}
\begin{df}\label{defboot}
We denote by $\UU$ the set of values $u_*$ such that $\aleph(u_*)\ne\emptyset$.
\end{df}
The initial assumption and the results in Section \ref{secsignatures} imply that
\begin{equation*}
    \mo_{(0)}+\mr_{(0)}\les 1.
\end{equation*}
Combining with the local existence theorem, we deduce that \eqref{B2} holds if $u_*$ is sufficiently close to $u_\infty$. So, we have $\UU\ne\emptyset$.\\ \\
Define $u_*$ as the supremum of the set $\UU$. We assume by contradiction that $u_*<-\frac{a}{4}$. In particular, we may assume $u_*\in\mathcal{U}$. We consider the region $V(u_*,1)$. Applying Theorem \ref{M1}, we obtain
\begin{equation*}
    \mr\les 1.
\end{equation*}
Then, we apply Theorem \ref{M2} to obtain
\begin{equation*}
    \mo\les 1.
\end{equation*}
Applying local existence results,\footnote{See Theorem 1 in \cite{luk}.} we can extend $V(u_*,1)$ to $V(u_*+\nu,1)$ for a $\nu$ sufficiently small. We denote $\widetilde{\mo}$ and $\widetilde{\mr}$ the norms in the extended region. We have
\begin{equation*}
    \widetilde{\mo}\les1,\qquad\qquad \widetilde{\mr}\les1,
\end{equation*}
as a consequence of continuity. We deduce that $V(u_*+\nu,1)$ satisfies all the properties in Definition \ref{bootstrap}, and so $\aleph(u_*+\nu)\ne\emptyset$, which is a contradiction. Thus, we have $u_*=-\frac{a}{4}$. Moreover, we have
\begin{equation*}
    \mo\les 1,\qquad\quad \mr\les1\qquad\mbox{ in }\;\; V\left(-\frac{a}{4},1\right).
\end{equation*}
Combining with Proposition \ref{sobolev}, we obtain the detailed estimates stated in Theorem \ref{maintrapped}. This concludes the proof of Theorem \ref{maintrapped}.
\subsection{Formation of trapped surface}
\begin{thm}\label{thmtrappedsurface}
    Under the hypothesis of Theorem \ref{maintrapped}, we assume in addition that
    \begin{align}\label{geqathm}
        \int_0^1|u_\infty|^2|\hch|^2(u_\infty,\ub',x^1,x^2)d\ub'\geq a,\qquad \forall \;(x^1,x^2).
    \end{align}
    Then $S_{-\frac{a}{4},1}$ is a trapped surface.
\end{thm}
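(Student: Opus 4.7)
The plan is to verify both inequalities of Definition \ref{dftrapped2+2} at $S_{-a/4,1}$. The condition $\trchb<0$ is immediate from Theorem \ref{maintrapped}: writing $\trchb=-2/|u|+\trchbc$ and using $|\trchbc|\les 1/|u|^2$, at $u=-a/4$ we obtain $\trchb\leq -8/a+O(a^{-2})<0$ for $a$ large. All of the content therefore lies in proving $\trch(-a/4,1)<0$, which is where the short-pulse hypothesis \eqref{geqathm} enters. My strategy is to integrate the null structure equation
\[
\nabs_4\trch+\tfrac{1}{2}(\trch)^2 = -|\hch|^2 - 2\om\,\trch
\]
along the generators of $H_{-a/4}$ from $\Hb_0$ (where the Minkowskian data give $\trch=8/a$) to $\Hb_1$. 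Since $\trch,\om=O(1/|u|)$ by Theorem \ref{maintrapped}, the contributions of $(\trch)^2$ and $\om\,\trch$ to this integral are $O(a^{-2})$, so the whole game is to establish the pointwise lower bound $\int_0^1|\hch|^2(-a/4,\ub',\cdot)\,d\ub'\geq 15/a$, after which $\trch(-a/4,1)\leq 8/a-15/a+O(a^{-2})<0$.

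To transport the lower bound \eqref{geqathm} from $H_{u_\infty}$ to $H_{-a/4}$, I will use
\[
\nabs_3\hch+\tfrac{1}{2}\trchb\,\hch = \nabs\hot\eta+2\omb\,\hch-\tfrac{1}{2}\trch\,\hchb+\eta\hot\eta,
\]
rewritten via $\trchb=-2/|u|+\trchbc$ as an ODE for the renormalized quantity $|u|\hch$. The transport and principal part cancel thanks to $\nabs_3|u|=-1+O(|u|^{-1})$ (which follows from $|\log\Om|\les 1/|u|$), leaving $\nabs_3(|u|\hch)=|u|\,\mathcal{E}$ with $|\mathcal{E}|\les a^{1/2}/|u|^3$ by the pointwise estimates of Theorem \ref{maintrapped} combined with Proposition \ref{sobolev}. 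Integrating in $u$ from $u_\infty$ to $-a/4$ yields, pointwise in $\ub$ and angular direction,
\[
|u|\hch(-a/4,\ub,x^1,x^2) = |u_\infty|\hch(u_\infty,\ub,x^1,x^2) + O(a^{-1/2}).
\]
Squaring, absorbing the cross term via $|u_\infty\hch|(u_\infty,\cdot)\les a^{1/2}$, and integrating in $\ub\in[0,1]$, assumption \eqref{geqathm} gives $(a/4)^2\int_0^1|\hch|^2(-a/4,\ub',\cdot)\,d\ub'\geq a-O(1)$, which yields the required $\int_0^1|\hch|^2\geq 15/a$ for $a$ large.

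The main analytic obstacle is the quantitative accounting in the propagation step: the error $O(a^{-1/2})$ in $|u|\hch$ must remain genuinely small compared to the principal size $a^{1/2}$, so that the integrated lower bound of order $a$ supplied by \eqref{geqathm} survives transport from $H_{u_\infty}$ to $H_{-a/4}$. This in turn requires sharp pointwise control, with one angular derivative, of every nonlinear term on the right-hand side of the $\nabs_3\hch$ equation as well as of the correction $\trchbc\,\hch$, which is precisely what the $H^{s+4}(S_{u,\ub})$-estimates of Theorem \ref{maintrapped} furnish via Sobolev embedding on each leaf.
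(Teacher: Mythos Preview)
Your strategy matches the paper's: transport the lower bound on $\int_0^1|\hch|^2\,d\ub$ from $H_{u_\infty}$ to $H_{-a/4}$ via the $\nabs_3\hch$ equation, then integrate Raychaudhuri along $H_{-a/4}$; the quantitative accounting you describe is correct. There is, however, a gap in the transport step. You write $\nabs_3(|u|\hch)=|u|\,\mathcal{E}$ and then ``integrate in $u$, pointwise in $\ub$ and angular direction''. But $\nabs_3$ is not $\pr_u$---recall $\Om e_3=\pr_u+\bbb^A\pr_A$---and $\hch$ is a symmetric traceless $2$--tensor, so comparing its values at two $u$--slices for the \emph{same} $(x^1,x^2)$ requires a frame choice and brings in connection terms you have not controlled.

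The paper sidesteps both issues by contracting the $\nabs_3\hch$ equation with $\hch$ at the outset and working with the \emph{scalar} $|u|^2|\hch|^2$: one obtains $|e_3(|u|^2|\hch|^2)|\les a/|u|^2$, then converts $e_3\to\pr_u$ by estimating the shift term $\bbb^A\pr_A(|u|^2|\hch|^2)$ separately (this needs bounds on $|\bbb^A|$ and $|\nabs\hch|$, both supplied by Theorem~\ref{maintrapped}), yielding $|u|^2|\hch|^2\geq|u_\infty|^2|\hch|^2\big|_{H_{u_\infty}}-O(a^{3/2}/|u|)$ pointwise. With this modification---pass to $|\hch|^2$ before integrating, and account for $\bbb$---your argument goes through unchanged.
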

\begin{proof}
    We have from \eqref{basicnull}
    \begin{align}\label{nab3hch}
        \nabs_3\hch+\frac{1}{2}\trchb\,\hch=\nabs\hot\eta+2\omb\hch-\frac{1}{2}\trch\,\hchb+\eta\hot\eta.
    \end{align}
    Contracting it with $\hch$, we infer
    \begin{align*}
    e_3(|\hch|^2)+\trchb|\hch|^2=2\hch\c\nabs\hot\eta+4\omb|\hch|^2-\trch(\hch\c\hchb)+2\hch\c\eta\hot\eta.
    \end{align*}
Thus, we obtain
\begin{align}
\begin{split}\label{e3u2hch2}
e_3(|u|^2|\hch|^2)&=|u|^2\left(2\hch\c\nabs\hot\eta+4\omb|\hch|^2-\trch(\hch\c\hchb)+2\hch\c\eta\hot\eta\right)\\
&-|u|^2\left(\trchb+\frac{2}{\Om|u|}\right)|\hch|^2.
\end{split}
\end{align}
Applying Theorem \ref{maintrapped}, we have
    \begin{align*}
        \left|e_3(|u|^2|\hch|^2)\right|\les\frac{a}{|u|^2}.
    \end{align*}
    Recall that
    \begin{align}\label{Ome3bbb}
        \Om e_3=\pr_u+\bbb^A\pr_A,
    \end{align}
    where $\bbb^A$ satisfies for $A=1,2$
    \begin{align*}
        e_4(\bbb^A)=-4\Om\ze^A.
    \end{align*}
    Thus, we have
    \begin{align*}
        |\bbb^A|\les\int_0^\ub |4\Om\ze^A|d\ub' \les \frac{\af}{|u|^3}.
    \end{align*}
Then, we deduce from Theorem \ref{maintrapped}
\begin{align*}
\left|\pr_u(|u|^2|\hch|^2)\right|\les\left|\bbb^A\pr_A(|u|^2|\hch|^2)\right|+\left|e_3(|u|^2|\hch|^2)\right|\les\frac{a}{|u|^2}.
\end{align*}
Integrating it from $u_\infty$ to $u$, we deduce for $a\gg 1$
\begin{align*}
|u|^2|\hch|^2\geq|u_\infty|^2|\hch|^2\big|_{H_{u_\infty}}-\frac{a^\frac{3}{2}}{|u|}.
\end{align*}
Combining with \eqref{geqathm}, we obtain for $|u|\geq \frac{a}{4}$ and $a\gg 1$
\begin{align*}
    \int_0^1 |u|^2|\hch|^2(u,\ub',x^1,x^2)d\ub'&\geq \int_0^1|u_\infty|^2|\hch|^2(u_\infty,\ub',x^1,x^2)d\ub'-\frac{a^\frac{3}{2}}{|u|}\\
    &\geq a-\frac{a^\frac{3}{2}}{|u|}\\
    &\geq\frac{3a}{4}.
\end{align*}
Taking $u=-\frac{a}{4}$, we have
    \begin{align*}
        \int_0^1|\hch|^2\left(-\frac{a}{4},\ub',x^1,x^2\right)d\ub'\geq\frac{12}{a}.
    \end{align*}
    Finally, we have from Proposition \ref{nulles} and \eqref{nullidentities}
    \begin{align*}
        \pr_\ub(\Om^{-1}\trch)=-\frac{1}{2}(\trch)^2-|\hch|^2\leq -|\hch|^2.
    \end{align*}
    Recalling that we have on $\Hb_0$
    \begin{align*}
        \Om^{-1}\trch\left(-\frac{a}{4},0,x^1,x^2\right)=\frac{8}{a},
    \end{align*}
    we infer
    \begin{align}\label{trchne}
        \Om^{-1}\trch\left(-\frac{a}{4},1,x^1,x^2\right)\leq\frac{8}{a}-\int_0^1|\hch|^2\left(-\frac{a}{4},\ub',x^1,x^2\right)d\ub'\leq-\frac{4}{a}<0.
    \end{align}
    Next, we have from Theorem \ref{maintrapped} that
    \begin{align*}
        \trchb+\frac{2}{|u|}\les \frac{a}{|u|^3}\les a^{-2},
    \end{align*}
    which implies for $a\gg 1$
    \begin{align*}
        \Om\trchb\leq -\frac{2}{|u|}+\frac{4}{a}.
    \end{align*}
    Taking $u=-\frac{a}{4}$, we obtain
    \begin{align}\label{trchbne}
        \Om\trchb\left(-\frac{a}{4},1\right)\leq -\frac{8}{a}+\frac{4}{a}=-\frac{4}{a}<0.
    \end{align}
    Combining \eqref{trchne} and \eqref{trchbne}, we conclude that $S_{-\frac{a}{4},1}$ is a trapped surface. This concludes the proof of Theorem \ref{thmtrappedsurface}.
\end{proof}
\subsection{Short-pulse cone}
We have the following scaling version of Theorems \ref{maintrapped} and \ref{thmtrappedsurface}.
\begin{lem}\label{thmscaling}
     There exists a sufficiently large $a_0>0$. Let $a>a_0$, $0<\de\leq a^{-1}$ and $u_\infty\leq -\de a$. Assume that there are an initial data that satisfy: \begin{itemize}
    \item The following assumption holds along $u=u_\infty$:
    $$
    \sum_{i\leq 10,k\leq 4}\afd|u_\infty|\left\|(\de\nabs_{4})^k(|u_\infty|\nabs)^i\hch\right\|_{L^\infty(S_{u_\infty,\ub})}\leq 1,\qquad \forall\; \ub\in [0,\de].
    $$
    \item Minkowskian initial data along $\ub=0$.
    \item The following lower bound condition holds along $u=u_\infty$:
    \begin{align*}
        \int_0^\de |u_\infty|^2|\hch|^2(u_\infty,\ub,x^{1},x^{2})d\ub\geq \de a,\qquad\forall\; (x^1,x^2).
    \end{align*}
    \end{itemize}
    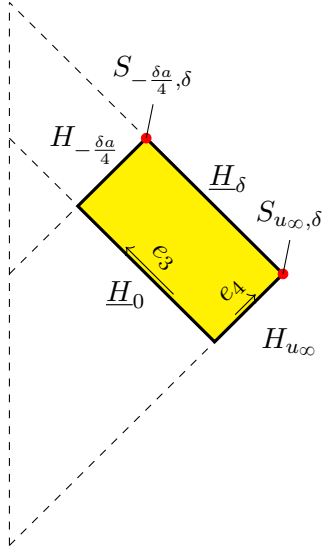
\begin{figure}[H]
    \centering
    \begin{tikzpicture}[scale=0.9]
\fill[yellow] (1,1)--(2,2)--(4,0)--(3,-1)--cycle;

\draw[very thick] (1,1)--(2,2)--(4,0)--(3,-1)--cycle;

\node[scale=1, black] at (4.1,-1) {$H_{u_\infty}$};
\node[scale=1, black] at (1.1,1.9) {$H_{-\frac{\de a}{4}}$};
\node[scale=1, black] at (3.2,1.4) {$\Hb_\de$};
\node[scale=1, black] at (1.7,-0.3) {$\Hb_0$};

\draw[dashed] (0,-4)--(0,4);
\draw[dashed] (0,-4)--(4,0)--(0,4);
\draw[dashed] (0,0)--(2,2);
\draw[dashed] (0,2)--(3,-1);

\draw[->] (3.3,-0.6)--(3.6,-0.3) node[midway, sloped, above, scale=1] {$e_4$};
\draw[->] (2.4,-0.3)--(1.7,0.4) node[midway, sloped, above, scale=1] {$e_3$};

\filldraw[red] (2,2) circle (2pt);
\filldraw[red] (4,0) circle (2pt);

\draw[thin] (2.1,2.5) node[above, scale=1] {$S_{-\frac{\de a}{4},\de}$} -- (2,2);
\draw[thin] (4.1,0.5) node[above, scale=1] {$S_{u_\infty,\de}$} -- (4,0);
\end{tikzpicture}
\caption{{\footnotesize The initial conditions in Lemma \ref{thmscaling} lead to trapped surface ($S_{-\frac{\de a}{4},\de}$) formation in the future of the $H_{u_\infty}$ and $\Hb_0$.}}
\label{scalingfigure}
\end{figure}
Then, \eqref{EVE} admits a unique solution in the colored region of Figure \ref{scalingfigure}. Moreover, we have:
\begin{itemize}
    \item The $2$--sphere $S_{-\frac{\de a}{4},\de}$ is a trapped surface.
    \item The following estimates hold in the colored region of Figure \ref{scalingfigure}:
\begin{align*}
    |\a|&\les\frac{\af}{\de|u|},\qquad |\b|\les\frac{a^\frac{1}{2}}{|u|^2}, \qquad |\rho,\si|\les\frac{\de a}{|u|^3},\qquad |\bb|\les\frac{\de^2 a^\frac{3}{2}}{|u|^4},\qquad |\aa|\les\frac{\de^3 a^2}{|u|^5},\\
    |\hch|&\les\frac{\af}{|u|},\qquad\;\,|\om|\les\frac{1}{|u|},\qquad\qquad\qquad\quad\,|\eta,\etab,\ze,\hchb|\les\frac{\de\af}{|u|^2},\qquad\;\,|\omb|\les\frac{\de^2a}{|u|^3},\\
    |\trchc|&\les\frac{\de a}{|u|^2},\quad\, |\trchbc|\les\frac{\de}{|u|^2}.
\end{align*}
\end{itemize}
Moreover, analog estimates also hold for their $H^{s+4}(S_{u,\ub})$--norms.
\end{lem}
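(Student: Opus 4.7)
The plan is to reduce Lemma~\ref{thmscaling} to Theorems~\ref{maintrapped} and~\ref{thmtrappedsurface} via a Christodoulou-style short-pulse rescaling. I would introduce the new optical functions $\widetilde{u} := \delta^{-1} u$, $\widetilde{\ub} := \delta^{-1}\ub$ together with the conformally rescaled spacetime metric $\widetilde{\g} := \delta^{-2}\g$. A direct computation then yields that the associated null pair and an $S$-tangent orthonormal frame rescale as $\widetilde{e}_3 = \delta e_3$, $\widetilde{e}_4 = \delta e_4$, $\widetilde{e}_A = \delta e_A$, with $\widetilde{\Om} = \Om$ and $\widetilde{\slg} = \delta^{-2}\slg$. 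Since $\delta$ is a positive constant, the Levi-Civita connection is unchanged and $\widetilde{\g}$ satisfies~\eqref{EVE} if and only if $\g$ does, so the rescaled problem is exactly the one addressed by Theorems~\ref{maintrapped} and~\ref{thmtrappedsurface}.

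First, I would verify that the rescaled initial data on $H_{\widetilde{u}_\infty}\cup\Hb_0$ fulfills the hypotheses of these theorems. Using the induced scalings $\widetilde{\hch} = \delta\hch$, $\widetilde{\nabs}_{\widetilde{4}} = \delta\,\nabs_4$, $|\widetilde{u}_\infty|\widetilde{\nabs} = |u_\infty|\nabs$ and $|\widetilde{u}_\infty| = \delta^{-1}|u_\infty| \geq a$, the weighted derivative bound in Lemma~\ref{thmscaling} translates exactly into the initial hypothesis of Theorem~\ref{maintrapped} (for a suitable fixed value of $s$). The Minkowskian data on $\Hb_0$ is preserved since Minkowski space is scale-invariant. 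A change of variables in the $\ub$-integral, combined with $|\widetilde{\hch}|^2 = \delta^2|\hch|^2$ and $d\widetilde{\ub} = \delta^{-1}\,d\ub$, yields
\[
\int_0^1 |\widetilde{u}_\infty|^2\,|\widetilde{\hch}|^2(\widetilde{u}_\infty,\widetilde{\ub},x^1,x^2)\,d\widetilde{\ub} \;=\; \delta^{-1}\int_0^\delta |u_\infty|^2\,|\hch|^2\,d\ub \;\geq\; a,
\]
which is exactly the lower bound~\eqref{geqathm} required by Theorem~\ref{thmtrappedsurface}.

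Applying Theorem~\ref{maintrapped} then produces a unique solution in the tilde region $V(-a/4,1)$, together with the full set of $L^\infty$ and $H^{s+4}$ estimates on curvature and Ricci coefficients, and Theorem~\ref{thmtrappedsurface} yields that the $2$-sphere $\{\widetilde{u} = -a/4,\,\widetilde{\ub}=1\}$, which corresponds precisely to $S_{-\delta a/4,\delta}$ in the original foliation, is a trapped surface. Translating each estimate back uses the scaling laws: every curvature component scales as $\delta^2$, so $|\a| = \delta^{-2}|\widetilde{\a}|$ etc., while each Ricci coefficient transported along $e_3$ or $e_4$ scales as $\delta$, combined with $|\widetilde{u}| = \delta^{-1}|u|$. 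A direct verification shows that every estimate of Theorem~\ref{maintrapped} matches the corresponding bound of Lemma~\ref{thmscaling}: e.g.\ $|\widetilde{\a}|\lesssim a^{1/2}/|\widetilde{u}|$ gives $|\a|\lesssim\delta^{-1}a^{1/2}/|u|$, $|\widetilde{\bb}|\lesssim a^{3/2}/|\widetilde{u}|^4$ gives $|\bb|\lesssim\delta^2 a^{3/2}/|u|^4$, and similarly through the full table.

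The only real obstacle is careful bookkeeping: one must confirm the scaling exponent for each geometric quantity and check that the weighted combination $(\delta\nabs_4)^k(|u_\infty|\nabs)^i$ appearing in the hypothesis of Lemma~\ref{thmscaling} is absorbed exactly into the unweighted derivative $\widetilde{\nabs}_{\widetilde{4}}^k(|\widetilde{u}_\infty|\widetilde{\nabs})^i$ of Theorem~\ref{maintrapped}, while the same absorption persists for the $H^{s+4}(S_{u,\ub})$-estimates through the volume-element rescaling $\dvol_{\widetilde{\slg}} = \delta^{-2}\,\dvol_{\slg}$. No new analytic input is required; all hard estimates are already supplied by the scale-critical Theorems~\ref{maintrapped} and~\ref{thmtrappedsurface}.
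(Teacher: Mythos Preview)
Your proposal is correct and follows essentially the same approach as the paper: the paper's proof simply states that the lemma follows from Theorem~\ref{maintrapped} (together with Theorem~\ref{thmtrappedsurface} for the trapped-surface conclusion) via the standard spacetime rescaling $u'=\delta u$, $\ub'=\delta\ub$, ${\phi'}^A=\delta\phi^A$, citing Section~8 of \cite{An} and Section~10 of \cite{ShenWan} for details. Your write-up spells out the scaling bookkeeping that the paper omits, but the underlying argument is identical.
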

\begin{proof}
It follows from Theorem \ref{maintrapped} and the following standard spacetime rescaling:
\begin{align*}
    u':=\de u,\qquad\quad\ub':=\de\ub,\qquad\quad{\phi'}^A:=\de\phi^A.
\end{align*}
See Section 8 in \cite{An} or Section 10 in \cite{ShenWan}.
\end{proof}
We are now ready to prove the following theorem, which constructs the short-pulse cone described in Figure \ref{3Dshortpulsecone}.
\begin{thm}\label{shortpulsecone}
    There exists a sufficiently large $a_0>0$. Let $a>a_0$, $0<\de\leq a^{-1}$ and $u_0\in[-2,-1]$. Assume that there is an initial data on $H_{u_0}\cup\Hb_{0}$ that satisfies:
    \begin{itemize}
    \item The following assumption holds along $H_{u_0}$:
    $$
    \sum_{i+j\leq s+10}\afd\left\|(\de\nabs_4)^j(|u_0|\nabs)^i\hch\right\|_{L^\infty(S_{u_0,\ub})}\leq 1.
    $$
    \item Minkowskian initial data along $\Hb_0$.
    \item The following lower bound condition holds along $H_{u_0}$:
    \begin{align}\label{geqa}
    \int_0^\de |u_0|^2|\hch|^2(u_0,\ub,x^{1},x^{2})d\ub\geq \de a,\qquad\forall\; (x^1,x^2).
    \end{align}
    \end{itemize}
    \begin{figure}[H]
    \centering
\begin{tikzpicture}[scale=1, decorate]
    \draw[orange] (0,3.4) ellipse (0.6 and 0.06);
    \draw[->, thick, rounded corners=8pt] (1.5,4) to[out=-90, in=90] (0.2,3.4);
    \node[above] at (2,3.75) {\footnotesize Trapped Surface $S_{-\frac{\de a}{4},\de}$};
    \draw[dashed] (0.1,2.9) arc[start angle=0,end angle=180,x radius=0.1,y radius=0.01];
    \draw (-0.1,2.9) arc[start angle=180,end angle=360,x radius=0.1,y radius=0.01];
    \draw[dashed] (4,0) arc[start angle=0,end angle=180,x radius=4,y radius=0.4];
    \draw (-4,0) arc[start angle=180,end angle=360,x radius=4,y radius=0.4];
\begin{scope}
    \fill[white] 
        (3,0) arc[start angle=0,end angle=180,x radius=3,y radius=0.3] --
        (-3,0) arc[start angle=180,end angle=360,x radius=3,y radius=0.3] -- cycle;
\end{scope}
    \draw[dashed] (3.5,-0.5) arc[start angle=0,end angle=180,x radius=3.5,y radius=0.35];
    \draw (-3.5,-0.5) arc[start angle=180,end angle=360,x radius=3.5,y radius=0.35];
    \draw[dashed] (2,-2) arc[start angle=0,end angle=180,x radius=2,y radius=0.2];
    \draw (-2,-2) arc[start angle=180,end angle=360,x radius=2,y radius=0.2];
    \draw[dashed] (0,-5) -- (0,4.4) node[above]{\footnotesize $u=\ub$};
    \draw (0,-4) -- (4,0);
    \draw (-4,0) -- (0,-4);
    \draw (4,0) -- (0.6,3.4);
    \draw (-4,0) -- (-0.6,3.4);
    \draw[dashed] (3.5,-0.5) -- (0.1,2.9);
    \draw[dashed] (0.6,3.4) -- (0.1,2.9);
    \draw[dashed] (-0.6,3.4) -- (-0.1,2.9);
    \draw[dashed] (-3.5,-0.5) -- (-0.1,2.9);
    \draw[->, thick, rounded corners=8pt] (4,1) to[out=-90, in=90] (3,0);
    \node[above] at (4,1) {\footnotesize $\Hb_0$};
    \draw[->, thick, rounded corners=8pt] (5,1) to[out=-90, in=90] (4,0);
    \node[above] at (5,1) {\footnotesize $S_{u_0,\de}$};
    \node[below right] at (4,0) {\footnotesize $\ub=\de$};
    \node[below right] at (3.5,-0.5) {\footnotesize $\ub=0$};
    \node[below right] at (2.7,-1.3) {\footnotesize $H_{u_0}$};
\draw[->, decorate, decoration={snake, amplitude=0.5mm, segment length=2mm}, thin, red]  (4.1,-0.3) -- (3.4,0.4);
\draw[->, decorate, decoration={snake, amplitude=0.5mm, segment length=2mm}, thin, red]  (3.9,-0.5) -- (3.2,0.2);
\draw[->, decorate, decoration={snake, amplitude=0.5mm, segment length=2mm}, thin, red]  (-4.1,-0.3) -- (-3.4,0.4);
\draw[->, decorate, decoration={snake, amplitude=0.5mm, segment length=2mm}, thin, red]  (-3.9,-0.5) -- (-3.2,0.2);
\node[below left] at (-3.5,-0.5)  {\footnotesize short-pulse};
    \fill[red!20,opacity=0.35](3.5,-0.5) -- (0.1,2.9) -- (0.6,3.4) -- (4,0) -- cycle;
    \fill[red!20,opacity=0.35](-3.5,-0.5) -- (-0.1,2.9) -- (-0.6,3.4) -- (-4,0) -- cycle;
\end{tikzpicture}
\caption{Short-pulse cone. The red wiggle lines are the short-pulse input; the red region is the region where semi-global existence result is established; the orange circle denotes the trapped surface $S_{-\frac{\de a}{4},\de}$.}
\label{3Dshortpulsecone}
\end{figure}
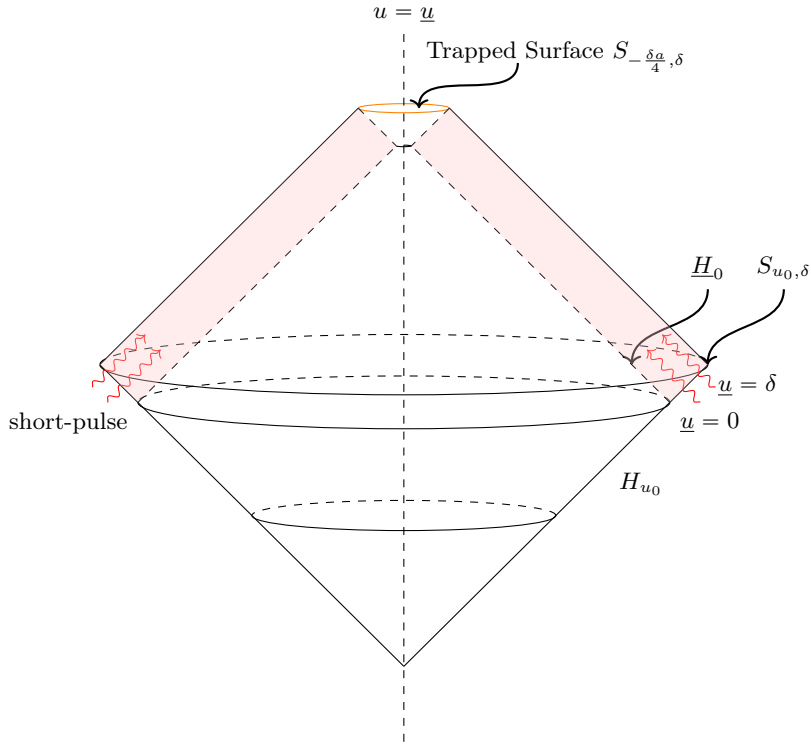
    Then \eqref{EVE} admits a unique solution in the region 
    $$
    \left\{(u,\ub)\in\left[u_0,-\frac{\de a}{4}\right]\times[0,\de]\right\}.
    $$
    Moreover, we have:
\begin{itemize}
    \item The $2$--sphere $S_{-\frac{\de a}{4},\de}$ is a trapped surface.
    \item The following estimates hold for $u\in[u_0,-\frac{1}{2}]$:
\begin{align*}
    |\a|&\les\de^{-1}\afd,\qquad |\b|\les\af, \qquad |\rho,\si|\les\de a,\qquad |\bb|\les\de^2a^\frac{3}{2},\qquad |\aa|\les\de^3a^2,\\
    |\hch|&\les\af,\qquad\qquad|\om|\les 1,\qquad\qquad\quad\;\;\;|(\eta,\etab,\ze,\hchb)|\les\de\af,\qquad\;\,|\omb|\les\de^2a,\\
    |\log\Om|&\les\de,\qquad\qquad\; |\bbb^A|\les\de^2\af,\qquad\quad\;\;\; |\slg_{AB}-\slgo_{AB}|\les\de\af,\\
    |\trchc|&\les\de a,\qquad\quad\, |\trchbc|\les\de.
\end{align*}
Moreover, the same estimates also hold for their $H^{s+4}(S_{u,\ub})$--norms.
\end{itemize}
\end{thm}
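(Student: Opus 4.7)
The theorem will be obtained as a direct corollary of the scaled result Lemma \ref{thmscaling}, supplemented by short transport arguments for the metric components. The plan is to apply Lemma \ref{thmscaling} with $u_\infty := u_0$. The hypothesis $u_\infty \le -\de a$ becomes $u_0 \le -\de a$, which holds since $u_0 \le -1$ and the smallness $\de \le a^{-1}$ forces $\de a \le 1$. The $L^\infty$ bound on $\hch$ required by Lemma \ref{thmscaling} is weaker than the $H^{s+10}$-controlled bound posed in Theorem \ref{shortpulsecone}, and the lower bound \eqref{geqa} together with the Minkowskian data on $\Hb_0$ are inherited verbatim.

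The conclusion of Lemma \ref{thmscaling} then yields unique existence on $\{(u,\ub)\in[u_0,-\de a/4]\times[0,\de]\}$, confirms that $S_{-\de a/4,\de}$ is a trapped surface, and produces pointwise estimates with explicit $|u|$-weights. Restricting to $u\in[u_0,-\tfrac{1}{2}]$, we have $|u|\simeq 1$, so each $|u|^{-k}$ factor is absorbed into the $\les$-constant. This converts every estimate from Lemma \ref{thmscaling} into the form claimed in Theorem \ref{shortpulsecone} for the curvature components $\a,\b,\rho,\si,\bb,\aa$, the Ricci coefficients $\hch,\om$, the anomalous $\eta,\etab,\ze,\hchb,\omb$, and the checked traces $\trchc,\trchbc$.

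The estimates for $|\log\Om|$, $|\bbb^A|$, and $|\slg_{AB}-\slgo_{AB}|$ are not explicit in Lemma \ref{thmscaling}, and I would derive them by integrating null structure equations along $e_4$-generators from $\Hb_0$, where the initial data are Minkowskian. The relation $\nabs_4\log\Om=-2\om$ from \eqref{nullidentities}, together with $|\om|\les 1$ and the Minkowski condition $\log\Om|_{\ub=0}=0$, yields $|\log\Om|\les\de$. The evolution $\nabs_4\bbb^A=-4\Om\,\ze^A$ combined with $|\ze|\les\de\af$ and $\bbb^A|_{\ub=0}=0$ gives $|\bbb^A|\les\de^2\af$. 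Finally, the evolution of $\slg_{AB}$ driven by $\chi_{AB}=\hch_{AB}+\tfrac{1}{2}\trch\,\slg_{AB}$ splits into a trace part, which (up to the $\trchc$ correction) matches the evolution of the reference $\slgo$, and a traceless part whose integration over a $\de$-interval using $|\hch|\les\af$ produces $|\slg_{AB}-\slgo_{AB}|\les\de\af$. The higher-order $H^{s+4}(S_{u,\ub})$ bounds follow from the Sobolev embedding of Proposition \ref{sobolev} applied to the $L^2_{sc}$-norms already controlled in Sections \ref{secR}--\ref{secO} and propagated through the rescaling underlying Lemma \ref{thmscaling}.

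The main obstacle, in the sense of new analytical difficulty for this theorem, is essentially absent: the bootstrap closure for $\RR$ and $\OO$, the $|u|^p$-weighted energy estimates for the Bianchi pairs, and the trapped-surface mechanism itself have all been established upstream in Theorems \ref{M1}, \ref{M2}, and \ref{thmtrappedsurface}. Theorem \ref{shortpulsecone} is therefore a packaging step, whose only substantive task is to verify that the hypotheses of Lemma \ref{thmscaling} align with the unit-scale data posed on $H_{u_0}$ with $u_0\in[-2,-1]$, and that the resulting estimates, once restricted to $|u|\simeq 1$, take exactly the form needed for the transition-region stability analysis of Section \ref{secstability} and the subsequent gluing construction of Section \ref{secgluing}.
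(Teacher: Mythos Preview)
Your proposal is correct and follows essentially the same route as the paper's own proof: apply Lemma \ref{thmscaling} with $u_\infty=u_0$ (checking $u_0\leq -\de a$ from $\de a\leq 1$), absorb the $|u|$-weights on the restricted range $u\in[u_0,-\tfrac12]$, and then integrate the transport equations $\pr_\ub(\log\Om)=-2\om$, $\pr_\ub(\bbb^A)=-4\Om^2\ze^A$, $\pr_\ub(\slg_{AB})=2\Om\chi_{AB}$ from the Minkowski slice $\Hb_0$ to obtain the remaining metric-component bounds. One minor correction: the $H^{s+4}(S_{u,\ub})$ bounds are not obtained via Sobolev embedding but are inherited directly from the higher-derivative $L^2_{sc}$ estimates already established in Theorem \ref{maintrapped} and carried through the rescaling in Lemma \ref{thmscaling}.
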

\begin{proof}
The fact that $S_{-\frac{\de a}{4},\de}$ is a trapped surface and the estimates for the curvature components and the Ricci coefficients follow by taking $u_\infty=u_0$ in Lemma \ref{thmscaling}. Next, we have
\begin{align*}
    \pr_\ub(\log\Om)=-2\om,\qquad \pr_\ub(\bbb^A)=-4\Om^2\ze^A,\qquad \pr_\ub(\slg_{AB})=2\Om\chi_{AB}.
\end{align*}
Integrating them from along $H_u^{(0,\ub)}$ and combining with the estimates for $\om$, $\trch$, $\hch$ and $\ze$, we infer
\begin{align*}
    \left\|\nabs^{\leq s+4}\log\Om\right\|_{L^2(S_{u,\ub})}&\les\de,\qquad\left\|\nabs^{\leq s+4}\bbb^A\right\|_{L^2(S_{u,\ub})}\les \de^2\af,\\
    \left\|\nabs^{\leq s+4}\left(\slg_{AB}-\slgo_{AB}\right)\right\|_{L^2(S_{u,\ub})}&\les\de\af.
\end{align*}
This concludes the proof of Theorem \ref{shortpulsecone}.
\end{proof}
\section{Transition region and barrier annulus}\label{secstability}
In this section, we first construct in Theorem \ref{mainstability} a transition region based on the short-pulse region obtained in Section \ref{sectrapped}. Then, in Theorem \ref{interiorsolution}, we construct a spacelike initial data set by taking a constant-time slice in the region obtained in Theorem \ref{mainstability}.
\subsection{Construction of transition region}
Throughout this section, we always assume that $u\in[u_0,-\frac{1}{2}]$\footnote{Recall that $u_0\in[-2,-1]$ is introduced in Theorem \ref{shortpulsecone}.} and $\ub\in[\de,1]$. We define
\begin{align*}
    V:=V(u,\ub):=\big\{(u',\ub')\in\left[u_0,u\right]\times[\de,\ub]\big\},\qquad V_*:=V\left(-\frac{1}{2},1\right).
\end{align*}
The main goal of this section is to prove the following theorem.
\begin{thm}\label{mainstability}
    There exists a sufficiently large $a_0>0$. Let $a>a_0$, $0<\de\leq a^{-2}$, $u_0\in[-2,-1]$ and we denote
    \begin{equation}\label{dfep0}
        \ep_0:=a^{-1}\geq \de a.
    \end{equation}
    Assume that there are initial data on $H_{u_0}\cup\Hb_{0}$ that satisfy: \begin{itemize}
    \item $H_{u_0}^{(0,1)}$ is endowed with an outgoing geodesic foliation.
    \item The following assumption holds along $H_{u_0}^{(0,\de)}$:
    \begin{equation}\label{hchassumption}
    \sum_{i+j\leq s+10}\afd\left\|(\de\nabs_4)^j(|u_0|\nabs)^i\hch\right\|_{L^\infty(S_{u_0,\ub})}\leq 1.
    \end{equation}
    \item The following conditions holds along $H_{u_0}^{(\de,1)}$:
    \begin{align}\label{vanishingconditions}
        \hch=0,\qquad\quad \a=0,\qquad\quad \om=0.
    \end{align}
    \item Minkowskian initial data along $\Hb_0$.
    \end{itemize}
    Then \eqref{EVE} admits a unique solution in $\left\{(u,\ub)\in\left[u_0,-\frac{1}{2}\right]\times[0,1]\right\}$ and the following estimates hold in $V_*$:\footnote{Here, $\slg$ denotes the induced metric of $\g$ on $S_{u,\ub}$.}
\begin{align*}
\left\|\nabs^{\leq s+2}(\a,\b,\rho,\si,\bb,\aa)\right\|_{L^2(S_{u,\ub})}&\les\ep_0,\\
\left\|\nabs^{\leq s+3}(\trcht,\trchbt,\hch,\hchb,\om,\omb,\eta,\etab,\ze,\log\Om,\bbb,\slgc)\right\|_{L^2(S_{u,\ub})}&\les\ep_0,
\end{align*}
where we denote
\begin{align}\label{dftrchttrchbt}
\trcht:=\trch-\frac{2}{r},\qquad\quad \trchbt:=\trchbt+\frac{2}{r},\qquad\quad r:=\ub-u.
\end{align}
\end{thm}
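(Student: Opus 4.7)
The plan is to treat Theorem \ref{mainstability} as a finite-region small-data stability problem for Minkowski in the transition region $V_*=[u_0,-\tfrac{1}{2}]\times[\de,1]$, with effective smallness parameter $\ep_0=a^{-1}$. The key observation is that while the short-pulse region exhibits highly anisotropic scaling, the incoming null hypersurface $\Hb_\de^{(u_0,-\frac{1}{2})}$ that bounds the transition region from the past is genuinely $O(\ep_0)$-close to flat data. Since $|u|,r\simeq 1$ throughout $V_*$, the scale-invariant apparatus of Section \ref{sectrapped} collapses to ordinary $L^2$-based stability estimates.

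First, I would derive matching data on $\Hb_\de^{(u_0,-\frac{1}{2})}$. Along $H_{u_0}^{(\de,1)}$, the vanishing assumptions \eqref{vanishingconditions} together with the outgoing geodesic foliation and Minkowskian data on $\Hb_0$ trivially propagate the flat values in the $e_4$ direction. On $\Hb_\de^{(u_0,-\frac{1}{2})}$, Theorem \ref{shortpulsecone} yields $|\nabs_3\hch|\les\de\afd=O(a^{-1})$; since $\hch|_{H_{u_0}^{(\de,1)}}=0$, integrating $\nabs_3\hch$ along $\Hb_\de$ gives $|\hch|\les\ep_0$ there, and similar arguments for the remaining Ricci coefficients and curvature components using Propositions \ref{nulles} and \ref{bianchiequations} yield $O(\ep_0)$ bounds for all of $(\hch,\hchb,\trcht,\trchbt,\eta,\etab,\ze,\om,\omb,\log\Om,\bbb,\slgc)$ and $(\a,\b,\rho,\si,\bb,\aa)$ on $\Hb_\de^{(u_0,-\frac{1}{2})}$, using crucially the inequality $\de a\leq \ep_0$ from \eqref{dfep0}.

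Next, I would run a bootstrap argument in $V_*$ with assumptions of the form $\|\cdot\|_{L^2(S_{u,\ub})}\leq \ep_0^{1/2}$ for the Ricci coefficients and curvature. Curvature flux bounds follow from the Bianchi pair energy identities of Lemma \ref{keypoint} and Proposition \ref{keyintegral} specialized to the weight exponent $p=0$, which suffices because all weights are uniformly bounded on $V_*$; the quadratic error terms are absorbed by the bootstrap since $\ep_0^{1/2}\cdot\ep_0^{1/2}=\ep_0$. Ricci coefficients are then recovered by integrating the null structure equations along $e_3$ or $e_4$: each equation has the schematic form $\nabs_4\Ga=R+\Ga\c\Ga$ or $\nabs_3\Ga+\la\trchb\,\Ga=R+\Ga\c\Ga$, so Lemmas \ref{evolution} and \ref{3evolution} (without the scale-invariant weights, which are now harmless) deliver $O(\ep_0)$ bounds. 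The metric components $(\log\Om,\bbb,\slgc)$ are integrated along $e_4$ from $\Hb_0$ via $\pr_\ub\log\Om=-2\om$, $\pr_\ub\bbb^A=-4\Om^2\ze^A$, $\pr_\ub\slg_{AB}=2\Om\chi_{AB}$. A continuity argument modeled on Section \ref{proofmain} extends the solution to all of $V_*$.

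The main obstacle is verifying that \emph{every} abnormal quantity on $\Hb_\de^{(u_0,-\frac{1}{2})}$ truly attains the size $\ep_0=a^{-1}$ rather than merely $\de$ or $\de a$. The scale-invariant bounds of Theorem \ref{shortpulsecone} only assert, for instance, $|\etab|\les\de\afd$ in the short-pulse region, which translates to $|\etab|\les\de^2a^{1/2}\ll\ep_0$ on $\Hb_\de^{(u_0,-\frac{1}{2})}$ after integrating the vanishing boundary data along $\Hb_\de$; a case-by-case check is necessary for each quantity and each derivative order up to $s+3$. A secondary subtlety is the compatibility of the outgoing geodesic foliation on $H_{u_0}^{(\de,1)}$ with the double null foliation in $V_*$, which requires ensuring the lapse $\Om$ stays close to $1$ throughout—this is handled by the transport equation $\nabs_4\log\Om=-2\om$ and the smallness of $\om$.
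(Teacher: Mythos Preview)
Your proposal is correct and follows essentially the same architecture as the paper: establish $O(\ep_0)$ data on $\Hb_\de^{(u_0,-\frac{1}{2})}\cup H_{u_0}^{(\de,1)}$ by integrating the $\nabs_3$-equations for the abnormal quantities $\hch,\om,\b,\a$ across the short-pulse region (this is the paper's Theorem \ref{MM0}/Propositions \ref{Hbde}--\ref{Hu0}), then run a bootstrap in $V_*$ using the $p=0$ Bianchi-pair energy identities for curvature (Theorem \ref{MM1}) and transport/elliptic estimates for Ricci coefficients (Theorem \ref{MM2}). Two minor deviations worth noting: the paper takes the bootstrap threshold $\ep=\ep_0^{2/3}$ rather than your $\ep_0^{1/2}$ so that the cubic error $\ep^3\les\ep_0^2$ closes in one pass, and at top order it recovers $\eta,\etab,\hch,\hchb$ via the mass-aspect/Codazzi Hodge systems (Propositions \ref{estetaetab}, \ref{hchhchbest}) rather than direct transport---your transport route also works here since the relevant curvature fluxes are available at order $s+3$, and your typo $\de\afd$ should read $\de\af$ but the conclusion $O(a^{-1})$ is unaffected.
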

\subsection{Fundamental norms}
In this section, we define the fundamental norms in $V_*$.
\subsubsection{Schematic notation \texorpdfstring{$\Ga$}{}, \texorpdfstring{$\Rl$}{} and \texorpdfstring{$\Rr$}{}}
The following schematic notations will be used frequently throughout Section \ref{secstability}.
\begin{df}\label{gamma}
We introduce the following schematic notations:
\begin{align*}
    \Ga&:=\left\{\trcht,\,\trchbt,\,\eta,\,\etab,\,\ze,\,\om,\,\omb,\,\log\Om\right\},\\
    \Rl&:=\left\{\a,\,\b,\,\rho,\,\si,\,\bb\right\},\\
    \Rr&:=\left\{\b,\,\rho,\,\si,\,\bb,\,\aa\right\}.
\end{align*}
We also denote:
\begin{align*}
\Ga^{(1)}:=\nabs^{\leq 1}\Ga\cup\{\a,\,\b,\,\rho,\,\si,\,\bb,\,\aa\}.
\end{align*}
Finally, we define for $i\geq 1$
\begin{align*}
    \Ga^{(i+1)}:=\nabs^{\leq 1}\Ga^{(i)},\qquad\quad \Rl^{(i+1)}:=\nabs^{\leq 1}\Rl^{(i)},\qquad\quad\Rr^{(i+1)}:=\nabs^{\leq 1}\Rr^{(i)}.
\end{align*}
\end{df}
\subsubsection{\texorpdfstring{$\Rk$}{} norms (\texorpdfstring{$L^2$}{}--flux of curvature)}\label{secRnor}
We define for $i\in\mathbb{N}$
\begin{align*}
\Rk_i(u,\ub):=\left\|\Rl^{(i)}\right\|_{L^2(\cuvs)},\qquad\quad \Rkb_i(u,\ub):=\left\|\Rr^{(i)}\right\|_{L^2(\ucuvs)}.
\end{align*}
Then, we denote
\begin{align*}
    \Rk:=\sup_{V_*}\sum_{i=0}^{s+3}\big(\Rk_i(u,\ub)+\Rkb_i(u,\ub)\big).
\end{align*}
\subsubsection{\texorpdfstring{$\Ok$}{} norms (\texorpdfstring{$L^2(S_{u,\ub})$}{}--norms of geometric quantities)}\label{secOnor}
We define for $i\in\mathbb{N}$
\begin{align*}
    \Ok_i(u,\ub):=\left\|\Ga^{(i)}\right\|_{L^2(S_{u,\ub})}.
\end{align*}
Then, we denote
\begin{align*}
    \Ok:=\sup_{V_*}\sum_{i=0}^{s+3}\Ok_i(u,\ub).
\end{align*}
\subsubsection{\texorpdfstring{$\Ok_{(0)}$}{} and \texorpdfstring{$\Rk_{(0)}$}{} norms (Initial data)}\label{initialO}
We introduce the following norms on $H_{u_0}^{(\de,1)}\cup\Hb_\de^{(u_0,-\frac{1}{2})}$:
\begin{align*}
\Ok_{(0)}&:=\sup_{\ub\in[\de,1]}\sum_{i=0}^{s+3}\Ok_i(u_0,\ub)+\sup_{u\in[u_0,-\frac{1}{2}]}\sum_{i=0}^{s+3}\Ok_i(u,\de),\\
\Rk_{(0)}&:=\sup_{\ub\in[\de,1]}\sum_{i=0}^{s+3}\Rk_i(u_0,\ub)+\sup_{u\in[u_0,-\frac{1}{2}]}\sum_{i=0}^{s+3}\Rkb_i(u,\de).
\end{align*}
\subsection{Main intermediate results}
Throughout Section \ref{secstability}, we always denote
\begin{align}\label{dfep}
    \ep:=\ep_0^\frac{2}{3},
\end{align}
which measure the size of bootstrap bounds.
\begin{thm}\label{MM0}
Under the assumptions of Theorem \ref{mainstability}. Assume in addition that
\begin{equation}
\Ok\leq \ep,\qquad\quad\Rk\leq\ep.
\end{equation}
Then, we have
\begin{equation}
    \Ok_{(0)}\les\ep_0,\qquad\quad \Rk_{(0)}\les\ep_0.
\end{equation}
\end{thm}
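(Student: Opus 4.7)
The plan is to estimate the initial-data norms on the two pieces of $H_{u_0}^{(\de,1)}$ and $\Hb_\de^{(u_0,-1/2)}$ separately, as they are governed by very different data (a vanishing-shear region versus the output of the short-pulse evolution), and then assemble them at the corner sphere $S_{u_0,\de}$ where the two pieces meet.

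\emph{Outgoing leg $H_{u_0}^{(\de,1)}$.} On this null hypersurface, $\hch = \a = \om = 0$, and the outgoing geodesic foliation forces $\Om \equiv 1$ and $\eta + \etab = 0$. The data at the endpoint $S_{u_0,\de}$ come from Theorem \ref{shortpulsecone}: for the regular quantities, $|\trchc| \les \de a$, $|(\eta,\etab,\ze,\hchb)| \les \de\af$, $|\omb| \les \de^2 a$, $|\trchbc| \les \de$ (together with the analogous $H^{s+4}$ bounds), so all components of $\Ga^{(\leq s+3)}$ at the corner are of size $O(\ep_0)$. The potentially problematic quantity is $\b$, whose naive shortpulsecone bound $|\b| \les \af$ overshoots $\ep_0 = a^{-1}$; but with $\hch = 0$ just past the corner, the Codazzi identity \eqref{codazzi} reduces to $\b = \tfrac{1}{2}\nabs\trch + \tfrac{1}{2}\ze\,\trch$, giving $\|\b\|_{L^2(S_{u_0,\de})} \les \de a \les \ep_0$ by the shortpulsecone estimates on $\nabs\trchc$ and on $\ze$, with higher-order estimates following by differentiating Codazzi. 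From these $O(\ep_0)$ data I integrate the $\nabs_4$ null-structure equations of Proposition \ref{nulles} and the $\nabs_4$ Bianchi equations of Proposition \ref{bianchiequations} in $\ub \in [\de,1]$ via Lemma \ref{evolution}; the vanishing of $\hch,\a,\om$ removes every forcing term larger than $\ep_0$, and the length of integration is $O(1)$, so all quantities in $\Ga^{(\leq s+3)}$ and $\Rl^{(\leq s+3)}$ remain of size $\ep_0$ uniformly. Higher derivatives follow by commuting with $\nabs$ via Proposition \ref{commutation}.

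\emph{Incoming leg $\Hb_\de^{(u_0,-1/2)}$.} Every point lies inside the short-pulse cone of Theorem \ref{shortpulsecone}, since $-\de a/4 \geq -1/2$ in the regime $\de \leq a^{-2}$. For the regular quantities $\trchc,\trchbc,\eta,\etab,\ze,\omb,\hchb,\log\Om,\bbb,\slg-\slgo,\rho,\si,\bb,\aa$, the shortpulsecone pointwise and $H^{s+4}(S_{u,\de})$ bounds are already $\les \de a \leq \ep_0$; integrating over $u \in [u_0,-1/2]$ (length $O(1)$) yields the required $\Ok_i(u,\de)$ and $\Rkb_i(u,\de)$ bounds. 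The remaining abnormal quantities on $\Hb_\de$ are $\hch,\om$ and (for $\Rkb$) $\b$, all of which are of size $O(\ep_0)$ at $S_{u_0,\de}$ by the Part 1 analysis (or by the exact vanishing on $H_{u_0}^{(\de,1)}$). For each of these I use the $\nabs_3$ structure or Bianchi equation schematically of the form $\nabs_3\psi + (\text{const})\trchb\,\psi = F$, with $\|F\|_{L^\infty(S)} \les \de\af \les \ep_0$ by the shortpulsecone estimates; applying the analog of Lemma \ref{3evolution} for unscaled $L^2(S_{u,\de})$ norms (the linear coefficient $\trchb \approx -2/|u|$ has the good sign) propagates the corner smallness along $\Hb_\de$, producing the required $L^2(S_{u,\de})$ bounds at level $\ep_0$ for all $u \in [u_0,-1/2]$. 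Higher derivatives follow once more by commutation.

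The main obstacle I anticipate is the treatment of the corner $S_{u_0,\de}$: the shortpulsecone pointwise bound for $\b$ there, namely $\les \af$, overshoots the target $\ep_0 = a^{-1}$ by a factor of $a^{3/2}$, and naive propagation would be ruinous. The key gain is the Codazzi identity on the vanishing-shear side, which trades $\b$ for angular derivatives of an essentially Minkowskian $\trch$ with a $\de a$-size correction. Once this one-time improvement at the corner is established, the remainder of the argument on both legs is standard null-transport analysis built on the tools of Sections \ref{secpre} and \ref{sectrapped}.
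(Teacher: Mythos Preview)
Your approach is essentially the same as the paper's, and the overall structure (treat the two legs separately, use Codazzi to fix $\b$ at the corner, propagate abnormal quantities along $\Hb_\de$ via their $\nabs_3$ equations) is correct.

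There is, however, one genuine gap: you omit $\a$ from your list of abnormal quantities on $\Hb_\de$. Since $\a\in\Ga^{(1)}$ (Definition~\ref{gamma}), it enters $\Ok_i(u,\de)$ for every $i\geq 1$ and hence is required for $\Ok_{(0)}$. The short-pulse cone bound from Theorem~\ref{shortpulsecone} is $|\a|\les\de^{-1}\afd$, which in the regime $\de\leq a^{-2}$ is at least $a^{3/2}$ --- far worse than any of the other abnormal quantities you list, and certainly not $\les\ep_0$. You cannot simply absorb $\a$ into the ``regular'' list.

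The fix follows the same template you already use for $\hch,\om,\b$: the vanishing condition \eqref{vanishingconditions} gives $\a=0$ at $S_{u_0,\de}$, and the Bianchi equation
\[
\nabs_3\a+\tfrac{1}{2}\trchb\,\a=\nabs\hot\b+4\omb\a-3(\hch\rho+{}^*\hch\si)+(\ze+4\eta)\hot\b
\]
has right-hand side of size $\les\ep_0$ once your $\b$ estimate on $\Hb_\de$ is in hand (note the $\omb\a$ term is harmless since $\omb\les\de^2a$). Integrating along $\Hb_\de$ gives $\|\nabs^{\leq s+2}\a\|_{L^2(S_{u,\de})}\les\ep_0$. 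The one-derivative loss is acceptable precisely because $\a$ enters $\Ga^{(i)}$ only for $i\geq 1$, so $\Ga^{(s+3)}$ requires only $\nabs^{\leq s+2}\a$. This is exactly what the paper does at the end of Proposition~\ref{Hbde}.
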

Theorem \ref{MM0} is proved in Section \ref{secII}. The estimates for normal quantities on $\Hb_\de$ follow directly from Theorem \ref{shortpulsecone}. The abnormal quantities can be estimated by their transport equations in the direction $e_3$. On the other hand, all quantities on $H_{u_0}$ can be estimated by integrating their equations along $H_{u_0}$ and combining with \eqref{vanishingconditions}.
\begin{thm}\label{MM1}
Under the assumptions of Theorem \ref{mainstability}. Assume in addition that
\begin{align}
    \Ok_{(0)}\les\ep_0,\qquad \Rk_{(0)}\les\ep_0,\qquad \Ok\leq\ep,\qquad \Rk\leq\ep.
\end{align}
Then, we have
\begin{equation}
    \Rk\les\ep_0.
\end{equation}
\end{thm}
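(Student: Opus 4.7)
\medskip

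\noindent\textbf{Proof proposal.} The plan is to mimic the Bianchi-pair energy scheme of Section \ref{secR} (Proposition \ref{keyintegral}), but in the finite transition region $V_*$ where $|u|$ and $r = \ub - u$ are bounded above and below by universal constants. In this regime the $|u|^p$ weight plays no role, so I take $p = 0$ and work with plain unweighted $L^2$ norms on the null hypersurfaces $\cuvs$ and $\ucuvs$. This corresponds exactly to the case $p = 0$ of the Dafermos--Rodnianski $r^p$--method mentioned in the paper's outline.

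\medskip

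First, I would organize the Bianchi system into the four pairs
$(\a,\b)$, $(\b,(-\rho,\si))$, $((\rho,\si),\bb)$, $(\bb,\aa)$,
each of which satisfies a system of the form \eqref{bianchi1} or \eqref{bianchi2}. For each pair I derive the unweighted divergence identity (the analog of \eqref{div} with $p = 0$) and integrate over the slab $V(u,\ub)$. This yields, for each pair $(\psi_{(1)}, \psi_{(2)})$,
\begin{align*}
\|\psi_{(1)}\|_{L^2(\cuvs)}^2 + \|\psi_{(2)}\|_{L^2(\ucuvs)}^2
\lesssim\;& \|\psi_{(1)}\|_{L^2(H_{u_0}^{(\de,\ub)})}^2 + \|\psi_{(2)}\|_{L^2(\Hb_\de^{(u_0,u)})}^2 \\
&+ \iint_{V(u,\ub)} \bigl(|\psi_{(1)}\c h_{(1)}|+|\psi_{(2)}\c h_{(2)}|+|\Ga|\,|\psi|^2\bigr)\,\dvol,
\end{align*}
where the $h_{(i)}$ are the nonlinear Bianchi right-hand sides, schematically of the form $\Ga \cdot R$. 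The boundary data on the right are controlled by $\Rk_{(0)} \lesssim \ep_0$ via Theorem \ref{MM0}.

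\medskip

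Second, I would commute each Bianchi equation with $\nabs^i$ for $i \leq s+3$, using the commutators of Proposition \ref{commutation} (or more precisely Lemma \ref{comm}) specialized to $V_*$. Commutator terms generate factors of $\Ga \cdot \nabs^{\leq i} R$ plus lower-order curvature, whose contributions to the bulk integral are schematically
\[
\iint_{V_*} |\Ga^{(i)}|\cdot |R^{(i)}|^2 \,\dvol
\quad\text{and}\quad
\iint_{V_*} |R^{(i)}|\cdot|\Ga^{(i)}|\cdot |R^{(i)}|\,\dvol,
\]
after integration by parts and use of Codazzi-type identities \eqref{codazzi} to replace any troublesome $\nabs R$ by $R^{(1)}$ plus $\Ga\cdot R$. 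The structure is cubic: one Ricci factor times two curvature factors. Applying the bootstrap bounds $\Ok \leq \ep$ and $\Rk \leq \ep$, together with Sobolev embedding on $S_{u,\ub}$ (Proposition \ref{sobolev} in its unweighted form, valid because the area radius is $\simeq 1$), these cubic errors are bounded by $\ep \cdot \ep^2 = \ep^3 = \ep_0^2$, which is strictly better than the target $\ep_0$. Combining this with the initial bound $\Rk_{(0)}^2 \lesssim \ep_0^2$ closes the estimate to give $\Rk \lesssim \ep_0 + \ep_0 \lesssim \ep_0$.

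\medskip

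The main obstacle I anticipate is the estimate for the outermost pair $(\a,\b)$, specifically the $\a$--component, because $\a$ is the curvature component most concentrated by the short-pulse and its data on $\Hb_\de$ is only small via the integration argument described in the remark after Theorem \ref{mainstabilityintro} (using $\hch = 0$ on $H_{u_0}^{(\de,1)}$ and $\nabs_3 \hch = O(\ep_0)$ on $\Hb_\de$). Verifying that this smallness $\|\a\|_{L^2(\Hb_\de^{(u_0,u)})} \lesssim \ep_0$ indeed follows from Theorem \ref{MM0} is the key compatibility check. Once this is in hand, the pair $(\a,\b)$ controls $\|\b\|_{L^2(\ucuvs)}$, which feeds into the next pair $(\b,(-\rho,\si))$, and the remaining pairs close in cascade, each using the previous pair's incoming flux as forcing. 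A minor secondary point is that the transition region is bounded by $\{u = -\tfrac12\}$ on the ``top''; this contributes only a favorable sign (or a controlled boundary term absorbable into the flux) thanks to the $p = 0$ choice. Finally, \eqref{dftrchttrchbt} with $r = \ub - u$ and the $\Ga$ convention of Definition \ref{gamma} are consistent with $|u|\simeq r\simeq 1$, so no weight bookkeeping interferes with the scheme.
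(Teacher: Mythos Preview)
Your proposal is essentially correct and follows the paper's approach (unweighted $p=0$ Bianchi-pair energy estimates, commuted to order $s+3$, cubic errors $\les\ep^3=\ep_0^2$, cascade through the four pairs). Two points are worth sharpening. First, your concern about $\|\a\|_{L^2(\Hb_\de)}$ is misplaced: in the pair $(\a,\b)$ the flux of $\a$ lives on the \emph{outgoing} cone $H_{u_0}^{(\de,\ub)}$, where $\a=0$ by \eqref{vanishingconditions}, while the incoming data needed is for $\b$ on $\Hb_\de$, which Theorem~\ref{MM0} controls directly. Second, and more substantively, your displayed integrated estimate is missing a linear bulk term that is the actual reason for the cascade: for the three pairs with $a_{(1)}>\tfrac12$ (i.e.\ $(\b,(-\rho,\si))$, $((\rho,\si),\bb)$, $(\bb,\aa)$) the term $(2a_{(1)}-1)\trchb|\psi_{(1)}|^2$ in the divergence identity has the wrong sign and produces an extra $\int_V|\psi_{(1)}|^2$ on the right (this is the distinction between cases \eqref{casepp} and \eqref{casenp} in the paper's Proposition~\ref{keyintegral4}). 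Each such bulk term is then controlled by writing $\int_V|\psi_{(1)}|^2=\int_\de^\ub\|\psi_{(1)}\|_{L^2(\Hb_{\ub'}^{(u_0,u)})}^2\,d\ub'$ and invoking the previous pair's incoming-flux bound. Your phrase ``using the previous pair's incoming flux as forcing'' gestures at this, but the mechanism is linear and structural, not a nonlinear forcing, and should be made explicit.
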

Theorem \ref{MM1} is proved in Section \ref{secRR}. The proof is based on the energy estimates for the Bianchi equations. The method used here can also be considered as the $r^p$--weighted estimates used in \cite{holzegel,Shen22,Shen24} in the particular case $p=0$.
\begin{thm}\label{MM2}
Under the assumptions of Theorem \ref{mainstability}. Assume in addition that
\begin{align}
    \Ok_{(0)}\les\ep_0,\qquad \Rk_{(0)}\les\ep_0,\qquad \Ok\leq\ep,\qquad \Rk\les\ep_0.
\end{align}
Then, we have
\begin{equation}
    \Ok\les\ep_0.
\end{equation}
\end{thm}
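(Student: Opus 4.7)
My plan is to deduce Theorem~\ref{MM2} by integrating the null structure equations of Proposition~\ref{nulles} along the $e_4$ or $e_3$ directions, using the initial data bounds from Theorem~\ref{MM0} and the curvature flux bounds from Theorem~\ref{MM1} to control the right-hand sides. The Ricci coefficients will be split into two families according to their natural transport direction: $\hch$, $\trcht$, $\omb$, $\eta$, $\log\Om$ are transported in the $e_4$ direction and integrated from $H_{u_0}^{(\de,1)}$, where their initial data is at size $\ep_0$ (in particular, by \eqref{vanishingconditions} several of them vanish identically there); $\hchb$, $\trchbt$, $\om$, $\etab$, $\ze$ are transported in $e_3$ and integrated from $\Hb_\de^{(u_0,-1/2)}$, where their data comes from the short-pulse analysis of Theorem~\ref{shortpulsecone}. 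The auxiliary quantities $\log\Om$, $\bbb$ and the metric correction $\slgc$ will be recovered from the ODEs $\nabs_4 \log\Om = -2\om$, $\nabs_4 \bbb^A = -4\Om^2 \ze^A$ and $\pr_\ub \slg_{AB} = 2\Om \chi_{AB}$.

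For a generic quantity $\phi \in \Ga$ satisfying a Minkowski-close transport equation of schematic form $\nabs_4 \phi = R + \Ga \cdot \Ga$, with $R$ a curvature source, the basic step is a Cauchy-Schwarz application in $\ub'$:
\begin{equation*}
    \|\phi\|_{L^2(S_{u,\ub})} \les \|\phi\|_{L^2(S_{u,\de})} + \|R\|_{L^2(\cuvs)} + \int_\de^\ub \|\Ga \cdot \Ga\|_{L^2(S_{u,\ub'})} \, d\ub'.
\end{equation*}
The initial term is controlled by $\Ok_{(0)} \les \ep_0$, the curvature flux by $\Rk \les \ep_0$, and the quadratic term by Sobolev embedding on $S_{u,\ub}$ and the bootstrap, giving $\|\Ga\|_{L^\infty(S)} \|\Ga\|_{L^2(S)} \les \ep^2 = \ep_0^{4/3} \ll \ep_0$ since $\ep_0 \ll 1$. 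I will then commute with $\nabs^i$ for $0 \leq i \leq s+3$ using Proposition~\ref{commutation} and close an induction on $i$, since the commutators contribute only terms of the schematic form $\Ga \cdot \Ga^{(i)}$ and $\Rl \cdot \Ga^{(\leq i-1)}$, which are either absorbed by the inductive hypothesis or bounded by $\Rk \les \ep_0$. The $e_3$-transported quantities will be handled analogously using Lemma~\ref{3evolution}, whose weight factors are now harmless since $|u|$ is of order one throughout $V_*$.

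The main obstacle will be the subtraction of the Minkowski background in $\trcht$ and $\trchbt$: the equation $\nabs_4 \trch + \tfrac{1}{2}(\trch)^2 = -|\hch|^2 - 2\om\trch$ does not close directly on $\trcht = \trch - 2/r$, and one must use $e_4(r) = \Om^{-1}$ to rewrite it as a coupled system for $(\trcht, \log\Om)$ whose right-hand side picks up a linear $\log\Om$ contribution of size $\ep_0$. The same issue arises for $\trchbt$ in the $e_3$ direction via $e_3(r) = -\Om^{-1}$. The coupling closes cleanly because $\log\Om$ itself satisfies $\nabs_4 \log\Om = -2\om$ and therefore inherits $\ep_0$-control from the estimate on $\om$, so the background-subtraction errors remain linear at size $\ep_0$ while the genuinely quadratic terms sit at size $\ep_0^{4/3}$ and can safely be absorbed. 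Piecing the two families together yields $\Ok \les \ep_0$, as desired.
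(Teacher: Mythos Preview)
Your transport-and-flux strategy is broadly the right one and matches the paper in spirit, but there are two issues worth flagging. First, a bookkeeping slip: integrating a $\nabs_4$-equation means integrating in $\ub$, so the initial data sits on $\Hb_\de$, not $H_{u_0}$; conversely $\nabs_3$-transport starts from $H_{u_0}$. The vanishing conditions \eqref{vanishingconditions} ($\hch=\a=\om=0$) live on $H_{u_0}^{(\de,1)}$ and are therefore the initial values for the $e_3$-transported quantities, not the $e_4$ ones. With this swap your assignment of quantities to directions is fine. You also need to separately establish the $L^2(S_{u,\ub})$ bounds on the curvature components $(\a,\b,\rho,\si,\bb,\aa)^{(s+2)}$, since these are part of $\Ok$ via $\Ga^{(1)}$; the paper does this first (Proposition~\ref{estL2R}) by integrating the Bianchi equations and invoking the flux bound $\Rk\les\ep_0$.

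The more substantive gap is your treatment of $\eta,\etab$. The schematic $\nabs_4\phi = R + \Ga\c\Ga$ is not valid here: from $\nabs_4\eta = -\chi\c(\eta-\etab) - \b$ and $\trch = \tfrac{2}{r}+\trcht$, the right-hand side contains the genuinely \emph{linear} term $-\tfrac{1}{r}(\eta-\etab)$, and likewise $\nabs_3\etab$ has $+\tfrac{1}{r}(\eta-\etab)$. These are $O(1)\cdot\Ga$, not $\Ga\c\Ga$, so a single pass through Lemma~\ref{evolutionlemma} only returns $\ep$ on the right, not $\ep_0$, and the bootstrap does not improve. One could in principle close this via a coupled two-variable Gr\"onwall in $(u,\ub)$, but you do not set that up. The paper sidesteps the coupling entirely by passing to the modified mass aspect functions $[\mu]=-\sdivs\eta-\rho+\tfrac{1}{2}\hch\c\hchb+\tfrac{1}{4}\trch\trchb$ and $[\mub]$, which satisfy decoupled transport equations $\nabs_4[\mu]+\trch[\mu]=\rho^{(0)}+\Ga\c\Ga^{(1)}$ with only curvature and quadratic sources; it then recovers $\eta^{(s+3)}$ from the div--curl system $(\sdivs\eta,\curls\eta)$ via Proposition~\ref{ellipticLp}. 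The paper also obtains $\hch,\hchb$ at top order from the Codazzi equations \eqref{codazzi} rather than transport, though for these your direct transport argument would go through since their $\nabs_4$ and $\nabs_3$ equations really are of the form $R+\Ga\c\Ga$ once the $\trch\hch$ and $\trchb\hchb$ terms are absorbed into the transport weight.
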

Theorem \ref{MM2} is proved in Section \ref{secOO}. The proof is done by integrating the null structure equations and the Bianchi equations along the outgoing and incoming null cones.
\subsection{Bootstrap assumptions and first consequences}\label{secbootboot}
In the rest of Section \ref{secstability}, we always make the following bootstrap assumptions:
\begin{align}\label{BB}
    \Ok\leq\ep,\qquad\quad\Rk\leq\ep.
\end{align}
The following consequences of \eqref{BB} will be used frequently throughout this paper.
\begin{lem}\label{decayGa}
    We have the following estimates:
    \begin{align*}
        \|\Ga^{(s+3)}\|_{L^2(S_{u,\ub})}\les\ep,\qquad \|\Rl^{(s+2)}\|_{L^2(\cuvs)}\les\ep,\qquad \|\Rr^{(s+2)}\|_{L^2(\ucuvs)}\les\ep.
    \end{align*}
\end{lem}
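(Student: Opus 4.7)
The assertions of Lemma \ref{decayGa} are bookkeeping consequences of the bootstrap assumptions \eqref{BB} together with the definitions of the norms $\Ok$ and $\Rk$ introduced in Sections \ref{secRnor} and \ref{secOnor}. My plan is to read off each of the three bounds directly from those definitions, with no dynamical input required.

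For the sphere-level estimate, the plan is to observe that by construction $\Ok_{s+3}(u,\ub) = \|\Ga^{(s+3)}\|_{L^2(S_{u,\ub})}$ is one of the non-negative summands in the defining sum $\Ok = \sup_{V_*} \sum_{i=0}^{s+3} \Ok_i$, so that $\Ok_{s+3}(u,\ub) \leq \Ok \leq \ep$ uniformly for $(u,\ub) \in V_*$. The two flux estimates follow in exactly the same manner from $\Rk \leq \ep$: by definition $\Rk_{s+2}(u,\ub) = \|\Rl^{(s+2)}\|_{L^2(\cuvs)}$ and $\Rkb_{s+2}(u,\ub) = \|\Rr^{(s+2)}\|_{L^2(\ucuvs)}$ are each single non-negative terms appearing in the defining sum of $\Rk$, hence both are bounded by the total $\Rk \leq \ep$.

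There is no technical obstacle. The role of this lemma is to single out a convenient working derivative budget --- one order below the top of the bootstrap curvature norms --- leaving a margin for the commutator terms that will appear in the Bianchi pair and transport arguments used to close the bootstrap in Theorems \ref{MM1}--\ref{MM2}. The drop by one derivative from $s+3$ to $s+2$ for the curvature fluxes is what will subsequently permit the use of the Hodge elliptic estimates of Proposition \ref{ellipticLp} and Sobolev embeddings on $2$--spheres without regularity loss, once nonlinear error terms of the form $\Ga \cdot \Ga^{(i)}$ or $\Ga \cdot \Rl^{(i)}$ have to be handled.
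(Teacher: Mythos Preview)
Your proposal is correct and matches the paper's own proof, which simply states that the lemma follows directly from the bootstrap assumptions \eqref{BB} and Definition \ref{gamma}. You have spelled out in more detail what the paper leaves implicit, but the argument is identical in substance.
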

\begin{proof}
    It follows directly from \eqref{BB} and Definition \ref{gamma}.
\end{proof}
\begin{lem}\label{evolutionlemma}
We have the following transport estimates:
\begin{enumerate}
\item Let $U,F\in\sk_k$ satisfying the outgoing transport equation
\begin{equation}
    \nabs_4 U +\la_0\trch\, U=F,
\end{equation}
where $\la_0\geq 0$. We have
\begin{equation}\label{transubU}
\|U\|_{L^2(S_{u,\ub})}\les\|U\|_{L^2(S_{u,\de})}+\int_\de^\ub\|F\|_{L^2(S_{u,\ub'})}d\ub'.
\end{equation}
\item Let $V,\Fb\in\sk_k$ satisfying the incoming transport equation
\begin{equation}\label{nabs3V}
\nabs_3 V+\la_0\trchb\, V=\Fb,
\end{equation}
where $\la_0\geq 0$. We have
\begin{equation}\label{transuV}
    \|V\|_{L^2(S_{u,\ub})}\les \|V\|_{L^2(S_{u_0,\ub})} + \int_{u_0}^{u}\|\Fb\|_{L^2(S_{u',\ub})}du'.
\end{equation}
\end{enumerate}
\end{lem}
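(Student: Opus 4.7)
The plan is to derive a Gr\"onwall-type differential inequality for $\|U\|_{L^2(S_{u,\ub})}^2$ (resp.\ $\|V\|_{L^2(S_{u,\ub})}^2$) in the transport direction and integrate on the bounded interval $[\de,\ub]\subseteq[0,1]$ (resp.\ $[u_0,u]\subseteq[-2,-1/2]$). This is the standard double-null transport-lemma computation; the only ingredient beyond the equations themselves is $L^\infty$ control of $\Om\trch$ and $\Om\trchb$, which is provided by the bootstrap assumption via Lemma \ref{decayGa} and Sobolev embedding.

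For part (1), I would use $\pr_\ub=\Om e_4$ together with $\pr_\ub\sqrt{|\slg|}=\Om\trch\sqrt{|\slg|}$ (which follows from $\pr_\ub\slg_{AB}=2\Om\chi_{AB}$) to obtain
\begin{align*}
\frac{d}{d\ub}\|U\|_{L^2(S_{u,\ub})}^2=\int_{S_{u,\ub}}\bigl(2\Om\,U\cdot\nabs_4 U+\Om\trch|U|^2\bigr)\,d\mu_\slg.
\end{align*}
Substituting $\nabs_4 U=-\la_0\trch U+F$ collapses the integrand to $(1-2\la_0)\Om\trch|U|^2+2\Om\,U\cdot F$. Writing $\trch=\trcht+\tfrac{2}{r}$ with $\trcht,\log\Om\in\Ga$, Lemma \ref{decayGa} and the Sobolev embedding $H^2(S)\hookrightarrow L^\infty(S)$ yield $\|\Om\trch\|_{L^\infty(S_{u,\ub})}\les 1$ uniformly in $V_*$ (using $r\geq 1/2$). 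Cauchy--Schwarz and division by $2\|U\|_{L^2(S)}$ then give
\begin{align*}
\frac{d}{d\ub}\|U\|_{L^2(S_{u,\ub})}\les\|U\|_{L^2(S_{u,\ub})}+\|F\|_{L^2(S_{u,\ub})},
\end{align*}
and Gr\"onwall on the interval $[\de,\ub]$ of length $\leq 1$ produces \eqref{transubU}.

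Part (2) is analogous. The one new point is that $\pr_u=\Om e_3-\bbb^A\pr_{\phi^A}$ carries a horizontal contribution; however, since $\bbb$ is tangent to $S_{u,\ub}$, expanding $\pr_u(|V|^2\sqrt{|\slg|})$ via $\pr_u\slg_{AB}=2\Om\chib_{AB}-(\mathcal{L}_\bbb\slg)_{AB}$ and $\slg^{AB}(\mathcal{L}_\bbb\slg)_{AB}=2\sdivs\bbb$, the $\bbb$-dependent terms assemble into a total divergence $\sdivs(\bbb|V|^2)$ that vanishes upon integration over the compact $2$-sphere. One is left with exactly the same structure as in part (1), with $\chib,\trchb$ in place of $\chi,\trch$; the bound $\|\Om\trchb\|_{L^\infty(S_{u,\ub})}\les 1$ follows in the same way from $\trchbt\in\Ga$ and $r\geq 1/2$, and Gr\"onwall on $[u_0,u]\subseteq[-2,-1/2]$ yields \eqref{transuV}. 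There is essentially no hard step in the proof; the only point worth verifying carefully is that $\|\Om\trch\|_{L^\infty}$ and $\|\Om\trchb\|_{L^\infty}$ are genuinely $O(1)$ (not merely $O(\ep)$) on $V_*$, which is guaranteed because the bootstrap controls only the small deviations $\trcht,\trchbt,\log\Om$ while the Minkowskian pieces $\pm 2/r$ are bounded by $4$.
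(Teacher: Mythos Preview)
Your argument is correct and is the standard transport-lemma computation. The paper does not actually prove this lemma; it simply cites Lemma~4.1.5 in \cite{kn} and Lemma~6.1 in \cite{Shen24}, whose proofs proceed along exactly the lines you sketch. Two small remarks: first, in part~(2) the implicit constant in $\les$ will in general depend on $\la_0$ (for $\la_0\geq 1/2$ the term $(1-2\la_0)\Om\trchb|V|^2$ has an unfavorable sign and enters the Gr\"onwall exponent), but this is harmless since all applications in the paper use fixed values of $\la_0$; second, your treatment of the $\bbb$-contribution via the divergence theorem is correct and is indeed the point that distinguishes the incoming case from the outgoing one in this gauge.
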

\begin{proof}
See Lemma 4.1.5 in \cite{kn} and Lemma 6.1 in \cite{Shen24}.
\end{proof}
\begin{prop}\label{sobolevinequality}
We have the following inequality:
\begin{align*}
    \|\phi\|_{L^\infty(S_{u,\ub})}\les\|\phi^{(2)}\|_{L^2(S_{u,\ub})}.
\end{align*}
\end{prop}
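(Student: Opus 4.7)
The plan is to derive this as a direct consequence of the classical Sobolev embedding $W^{2,2}(S)\hookrightarrow L^\infty(S)$ on a compact $2$--surface of uniformly bounded geometry. In contrast with the scale-invariant Proposition \ref{sobolev} used in the short-pulse region, here the leaves have area radius $\simeq 1$, so no weights are needed, and the task reduces to ensuring that the embedding constant is uniform over $V_*$.

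First, in $V_*$ the area radius $r = \ub - u \in [\tfrac{1}{2} + \de, 3]$, hence $r \simeq 1$. Under the bootstrap assumptions \eqref{BB}, Lemma \ref{decayGa} provides $\|\Ga^{(s+3)}\|_{L^2(S_{u,\ub})} \les \ep$, and a preliminary rough Sobolev bound on a reference leaf (or directly the output of Theorem \ref{shortpulsecone} on $H_{u_0}$ and $\Hb_\de$) yields $L^\infty$ smallness of $\hch$, $\hchb$, $\rho$, $\trcht$ and $\trchbt$. The Gauss equation \eqref{gauss} then gives
\begin{equation*}
    \K = -\tfrac{1}{4}\trch\,\trchb + \tfrac{1}{2}\hch\c\hchb - \rho = r^{-2} + \lot,
\end{equation*}
so that $\K$ is uniformly positive and bounded on every $S_{u,\ub}\subseteq V_*$.

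Next, by integrating the evolution equation $\nabs_4\slg = 2\Om\chi$ along $H_u^{(\de,\ub)}$ starting from $S_{u,\de}$, whose induced metric is controlled by Theorem \ref{shortpulsecone}, and using the smallness of $\hch$ and of $\trcht$ in $L^\infty$, the induced metric $\slg$ on $S_{u,\ub}$ remains uniformly close to the round metric of radius $r$. This yields uniform lower/upper bounds on the area, diameter, and injectivity radius of $S_{u,\ub}$ throughout $V_*$.

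With this uniform geometric control in hand, the classical embedding $W^{2,2}(S)\hookrightarrow L^\infty(S)$ on a closed $2$--surface gives, with a constant depending only on the uniform area and curvature bounds,
\begin{equation*}
    \|\phi\|_{L^\infty(S_{u,\ub})} \les \|\phi\|_{L^2(S_{u,\ub})} + \|\nabs\phi\|_{L^2(S_{u,\ub})} + \|\nabs^2\phi\|_{L^2(S_{u,\ub})} \les \|\phi^{(2)}\|_{L^2(S_{u,\ub})}.
\end{equation*}
The main technical obstacle lies in the previous step: verifying that the Sobolev constant does not degenerate on any leaf of $V_*$. This in turn rests on propagating closeness of $\slg$ to a round reference metric from the initial hypersurfaces throughout $V_*$ via the transport argument above; once this is achieved, the conclusion follows from standard $2$--dimensional Sobolev theory on compact surfaces.
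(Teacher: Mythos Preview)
Your argument is correct and is precisely the content of Lemma 4.1.3 in \cite{kn}, which is all the paper invokes here; you have simply unpacked the citation by verifying the uniform geometric bounds on $S_{u,\ub}$ that make the $W^{2,2}\hookrightarrow L^\infty$ embedding constant uniform over $V_*$. There is no substantive difference in approach.
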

\begin{proof}
See Lemma 4.1.3 of \cite{kn}.
\end{proof}
\begin{prop}\label{commu}
We have the following schematic commutator formulae:
\begin{align*}
    [\Om\nabs_4,r\nabs]&=\Ga\c\nabs+\Ga^{(1)},\\
    [\Om\nabs_3,r\nabs]&=\Ga\c\nabs+\Ga^{(1)}.
\end{align*}
\end{prop}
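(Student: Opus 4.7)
\medskip

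\noindent\textbf{Proof proposal.} The plan is to reduce both identities to Lemma~\ref{comm} by applying the Leibniz rule
\[
[\Om\nabs_4,\,r\nabs_B]U \;=\; \big(\Om e_4(r)\big)\nabs_B U \;+\; r\,[\Om\nabs_4,\nabs_B]U,
\]
and the analogous formula for $e_3$. First I read off $\Om e_4(r)$ and $\Om e_3(r)$ from the coordinate expression \eqref{metricg} together with the relations $N=\pr_{\ub}=\Om e_4$ and $\Nb=\pr_u+\bbb=\Om e_3$: since $r=\ub-u$ and $r$ is independent of $\phi^A$, we obtain $e_4(r)=\Om^{-1}$ and $e_3(r)=-\Om^{-1}$, that is $\Om e_4(r)=1$ and $\Om e_3(r)=-1$.

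Next I substitute Lemma~\ref{comm} and decompose the leading Ricci coefficient using the standard splitting $\chi_{BC}=\tfrac12\trch\,\slg_{BC}+\hch_{BC}$ together with the renormalization $r\trch = 2 + r\trcht$ from \eqref{dftrchttrchbt}. Writing out the outgoing case,
\begin{align*}
r[\Om\nabs_4,\nabs_B]U &\;=\; -\Om r\chi_{BC}\nabs_C U \;+\; r\Om\big(\chi\c\etab+\b\big)\c U \\
&\;=\; -\Om\nabs_B U - \tfrac12\Om r\,\trcht\,\nabs_B U - \Om r\,\hch_{BC}\nabs_C U \;+\; \Ga^{(1)}\c U,
\end{align*}
where the last term gathers the non-differentiated lower-order pieces (products of $\chi,\etab,\b$ multiplied by the bounded factor $r\Om$), which fall into $\Ga^{(1)}$. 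Combining with $\Om e_4(r)\nabs_B U = \nabs_B U$, the top-order flat piece cancels, leaving
\[
[\Om\nabs_4,\,r\nabs_B]U \;=\; (1-\Om)\nabs_B U - \tfrac12\Om r\,\trcht\,\nabs_B U - \Om r\,\hch_{BC}\nabs_C U + \Ga^{(1)}\c U.
\]
Since $1-\Om = -\log\Om + O((\log\Om)^2)$ with $\log\Om\in\Ga$ (which is small under the bootstrap $\Ok\les\ep$), and since $\trcht,\hch$ are $\Ga$-type small quantities controlled at the same level in the transition regime, each surviving piece has the schematic form $\Ga\c\nabs$. This produces the first identity.

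The incoming identity is proved by the mirror-image argument with $\chi\!\mapsto\!\chib$, $\etab\!\mapsto\!\eta$, $\b\!\mapsto\!\bb$, using $r\trchb = -2 + r\,\trchbt$ from \eqref{dftrchttrchbt} together with $\Om e_3(r) = -1$. The trace part of $-\Om r\chib_{BC}\nabs_C$ now produces $+\Om\nabs_B U$, which cancels against $\Om e_3(r)\nabs_B U = -\nabs_B U$ up to a $(\Om-1)\nabs_B$ remainder absorbable into $\Ga\c\nabs$ via $\log\Om\in\Ga$. The only nontrivial point in the entire proof is correctly identifying this top-order cancellation between the term coming from differentiating $r$ and the trace part of $-\Om r\chi_{BC}\nabs_C$ (respectively $-\Om r\chib_{BC}\nabs_C$); once the renormalized traces $\trcht,\trchbt$ are used the cancellation is automatic, and the remainder is routine bookkeeping in the schematic notation of Definition~\ref{gamma}. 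I therefore anticipate no essential obstacle.
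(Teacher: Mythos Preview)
Your proof is correct and follows essentially the same approach as the paper, which simply says ``It follows directly from Lemma~\ref{comm} and Definition~\ref{gamma}.'' You have supplied the details the paper omits: the Leibniz rule, the computation $\Om e_4(r)=1$ and $\Om e_3(r)=-1$, and the cancellation of the top-order piece via $r\trch=2+r\trcht$ (resp.\ $r\trchb=-2+r\trchbt$). One minor remark: strictly speaking $\hch,\hchb\notin\Ga$ under Definition~\ref{gamma}, but the paper itself uses the schematic $\Ga$ loosely to include them (e.g., writing $\hch\rho$ as $\Ga\c\Rl$ in the Bianchi equations), so your handling of the $\hch_{BC}\nabs_C$ term as ``$\Ga$-type'' is consistent with the paper's conventions.
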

\begin{proof}
    It follows directly from Lemma \ref{comm} and Definition \ref{gamma}.
\end{proof}
\subsection{Estimates for characteristic initial data}\label{secII}
In this section, we prove Theorem \ref{MM0}. More precisely, we prove the following propositions, which estimate, respectively, the norms on $\Hb_\de^{(u_0,-\frac{1}{2})}$ and $H_{u_0}^{(\de,1)}$.
\begin{prop}\label{Hbde}
We have the following estimates on $\Hb_\de^{(u_0,-\frac{1}{2})}$:
    \begin{align*}
        \left\|\nabs^{\leq s+3}\left(\b,\rho,\si,\bb,\aa,\trcht,\trchbt,\hch,\hchb,\om,\omb,\eta,\etab,\ze,\log\Om,\bbb,\slgc\right)\right\|_{L^2(S_{u,\ub})}&\les\ep_0,\\
       \left\|\nabs^{\leq s+2}\a\right\|_{L^2(S_{u,\ub})}&\les\ep_0.
    \end{align*}
\end{prop}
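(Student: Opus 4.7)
The plan is to combine the short-pulse estimates of Theorem \ref{shortpulsecone}---which apply in particular on $\Hb_\de^{(u_0,-\frac{1}{2})}$---with $\nabs_3$ transport along $\Hb_\de$ using the null structure and Bianchi equations (Propositions \ref{nulles} and \ref{bianchiequations}) and Lemma \ref{evolutionlemma}. Throughout I will use the standing assumptions $\de\leq a^{-2}$ and $\ep_0=a^{-1}$.

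First I would directly import the estimates of Theorem \ref{shortpulsecone} for the \emph{normal} quantities $\rho,\si,\bb,\aa,\omb,\eta,\etab,\ze,\hchb,\log\Om,\bbb,\slgc,\trchc,\trchbc$ (with angular derivatives up to order $s+3$, via the $H^{s+4}(S_{u,\ub})$ statement in that theorem). A straightforward bookkeeping with $\de\leq a^{-2}$ gives for example $|\rho|,|\si|\les\de a\leq\ep_0$, $|\eta|,|\etab|,|\ze|,|\hchb|\les\de\af\les a^{-3/2}\les\ep_0$, $|\omb|\les\de^2 a\les\ep_0$, $|\log\Om|\les\de\les\ep_0$, and similarly for the remaining items. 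The passage from $\trchc,\trchbc$ (centered on $2/|u|$) to $\trcht,\trchbt$ (centered on $2/r$, with $r=\ub-u$) costs only $2/|u|-2/r=O(\de)\les\ep_0$ on $\Hb_\de$.

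Next I would handle the \emph{abnormal} quantities $\hch,\om,\a,\b$, whose naive short-pulse bounds are not $\les\ep_0$. The vanishing conditions \eqref{vanishingconditions}, extended by continuity to $\ub=\de^+$, supply initial data $\hch(u_0,\de)=\a(u_0,\de)=\om(u_0,\de)=0$ at $S_{u_0,\de}$, so I can apply Lemma \ref{evolutionlemma} to the $\nabs_3$ transport equations along $\Hb_\de$. From $\nabs_3\om=\frac{1}{2}\rho+\Ga\c\Ga$ with source $\les\de a$ I would conclude $|\om|\les\ep_0$. From the $\hch$ equation $\nabs_3\hch+\frac{1}{2}\trchb\hch=\nabs\hot\eta+2\omb\hch-\frac{1}{2}\trch\hchb+\eta\hot\eta$ the RHS is controlled by the first-step bounds and has size $\les\de\af\les\ep_0$, so $|\hch|\les\ep_0$. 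The analogous Bianchi equation for $\b$ with source $\nabs\rho+{}^*\nabs\si+\lot$ (all $\les\ep_0$ from step one) gives $\b$, and finally $\a$ follows from $\nabs_3\a+\frac{1}{2}\trchb\a=\nabs\hot\b+\lot$ once $\nabs\b$ is controlled. Angular-derivative bounds up to order $s+3$ (resp.\ $s+2$ for $\a$) are propagated by commuting the transport equations with $\nabs$ via Proposition \ref{commu}; the commutator terms are lower-order and absorbed in the Gronwall-type bound of Lemma \ref{evolutionlemma}.

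The hard part will be the initial value $\b(u_0,\de)$: no vanishing condition is imposed on $\b$ in \eqref{vanishingconditions}, and its value at $S_{u_0,\de}$ is determined by the short-pulse evolution of $\nabs_4\b+2\trch\b=\sld_2\a+\lot$ along $H_{u_0}^{(0,\de)}$ starting from the Minkowskian $\b(u_0,0)=0$. The short-pulse integrand $\sld_2\a$ is of size $\de^{-1}\afd$, so a direct integration over $[0,\de]$ only yields $|\b(u_0,\de)|\les\afd$, which is far larger than the required $\ep_0$. Closing the argument therefore requires extracting an improved estimate for $\b$ on $\Hb_\de$ from the short-pulse analysis---either by exploiting a cancellation in the $\nabs_4$-integration (made possible by the compatibility of the short-pulse profile at $\ub=\de$ with the vanishing data \eqref{vanishingconditions} on $H_{u_0}^{(\de,1)}$), or by refining Theorem \ref{shortpulsecone} at the face $\ub=\de$. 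Once this improvement is in hand, the $\nabs_3$ transport for $\b$ along $\Hb_\de$ preserves the $\ep_0$ bound, and this then feeds into the $\a$ equation to close the argument for all angular derivatives of $\a$ up to order $s+2$.
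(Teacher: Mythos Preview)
Your overall strategy matches the paper's: import the short-pulse bounds of Theorem \ref{shortpulsecone} for the normal quantities, then recover the abnormal ones $\hch,\om,\b,\a$ by $\nabs_3$ transport along $\Hb_\de$ from their values at $S_{u_0,\de}$. The treatment of $\om$, $\hch$ and finally $\a$ is fine.

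The genuine gap is exactly where you flag it: the initial value $\b(u_0,\de)$. You propose to obtain it by integrating $\nabs_4\b$ along $H_{u_0}^{(0,\de)}$ and then hope for a cancellation or a refinement of the short-pulse theorem. Neither is needed, and neither is how the paper closes the argument. The point you are missing is that $\b$ is determined \emph{algebraically on each sphere} by the Codazzi equation \eqref{codazzi}:
\[
\b=-\sdivs\hch+\tfrac12\nabs\trch-\ze\c\big(\hch-\tfrac12\trch\big).
\]
At $S_{u_0,\de}$ the vanishing condition \eqref{vanishingconditions} gives $\hch=0$ (hence $\sdivs\hch=0$), while the short-pulse bounds of Theorem \ref{shortpulsecone} give $|\nabs^{\leq s+3}\trchc|\les\de a\les\ep_0$ and $|\ze|\les\de\af\les\ep_0$. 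Plugging these into Codazzi yields $|\nabs^{\leq s+3}\b|\les\ep_0$ at $S_{u_0,\de}$ directly, with no transport in the $e_4$ direction and no cancellation argument. Once this initial value is in hand, your $\nabs_3$ transport for $\b$ along $\Hb_\de$ (with source $\nabs\rho+{}^*\nabs\si+\lot=O(\ep_0)$) goes through, and then $\a$ follows as you describe.
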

\begin{proof}
As a consequence of Theorem \ref{shortpulsecone} and \eqref{dfep0}, we have\footnote{The difference between $(\trchc,\trchbc)$ and $(\trcht,\trchbt)$ is controlled by $||u|-r|\les\de\ll\ep_0$.}
    \begin{align}\label{consequenceoftrapped}
        \left\|\nabs^{\leq s+4}\left(\rho,\si,\bb,\aa,\trcht,\trchbt,\hchb,\omb,\eta,\etab,\ze,\log\Om,\bbb,\slgc\right)\right\|_{L^2(S_{u,\ub})}\les\ep_0.
    \end{align}
    It remains to estimate $\hch$, $\om$, $\b$ and $\a$ on $\Hb_\de^{(u_0,-\frac{1}{2})}$. We have from Propositions \ref{nulles}, \ref{commutation} and \eqref{consequenceoftrapped}
    \begin{align*}
        \left\|\nabs_3\nabs^{\leq s+3}(\hch,\om,\b)\right\|_{L^2(S_{u,\ub})}\les\ep_0.
    \end{align*}
    Recall that we have\footnote{The estimate for $\b$ follows from \eqref{codazzi} and the initial assumptions \eqref{hchassumption} and \eqref{vanishingconditions}.}
    \begin{align*}
        \hch=0,\qquad \om=0,\qquad |\nabs^{\leq s+3}\b|\les\ep_0\quad\mbox{ on }\; S_{u_0,\de}.
    \end{align*}
    Thus, we obtain by integrating them along $\Hb_\de$:
    \begin{align}\label{esthchomb}
        \left\|\nabs^{\leq s+3}(\hch,\om,\b)\right\|_{L^2(S_{u,\ub})}\les\ep_0\qquad \mbox{ on }\; \Hb_{\de}^{(u_0,-\frac{1}{2})}.
    \end{align}
    Finally, we have from Proposition \ref{bianchiequations} and \eqref{esthchomb}
    \begin{align*}
        \left\|\nabs_3\nabs^{\leq s+2}\a\right\|_{L^2(S_{u,\ub})}\les\ep_0.
    \end{align*}
    Integrating it along $\Hb_\de$ and applying \eqref{vanishingconditions}, we infer
    \begin{align*}
        \left\|\nabs^{\leq s+2}\a\right\|_{L^2(S_{u,\ub})}\les\ep_0.
    \end{align*}
    This concludes the proof of Proposition \ref{Hbde}.
\end{proof}
\begin{prop}\label{Hu0}
    We have the following estimates on $H_{u_0}^{(\de,1)}$:
    \begin{align}\label{Hu0eq}
        \left|\dk^{\leq s+3}\left(\b,\rho,\si,\bb,\trcht,\trchbt,\hchb,\omb,\ze\right)\right|\les\ep_0,
    \end{align}
    where we denote
    \begin{equation}\label{dfdk}
    \dk:=\{\nabs_4,\nabs_3,\nabs\}.
    \end{equation}
\end{prop}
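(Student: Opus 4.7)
The strategy is to integrate the null structure and Bianchi equations along the outgoing null cone $H_{u_0}^{(\de,1)}$, exploiting the vanishing conditions $\hch = \a = \om = 0$ in \eqref{vanishingconditions} to drastically simplify the $\nabs_4$-transport equations, and taking as initial data on the sphere $S_{u_0,\de}$ the $\ep_0$--bounds already provided by Theorem \ref{shortpulsecone} (since $\de a \leq \ep_0$). In particular, note that angular and $\nabs_4$ derivatives of the identically vanishing quantities $\hch,\a,\om$ also vanish along $H_{u_0}^{(\de,1)}$, which is what produces all the algebraic simplifications below.

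Concretely, I would proceed in four layers. First, the vanishing of $\hch$ and $\om$ reduces the equation $\nabs_4\trch+\frac{1}{2}(\trch)^2 = -|\hch|^2-2\om\trch$ to the pure Riccati equation $\nabs_4\trch = -\frac{1}{2}(\trch)^2$, which integrates explicitly and forces $\trch$ to match its Minkowskian value $2/r$ along $H_{u_0}^{(\de,1)}$ up to an $O(\ep_0)$ error inherited from $S_{u_0,\de}$; hence $\trcht = O(\ep_0)$. Analogously, the $\nabs_4$--equations for $\hchb$, $\eta$, $\etab$, $\omb$, and $\trchbt$ in Proposition~\ref{nulles} collapse to linear transport equations whose source terms are products of already-controlled geometric quantities; integrating in $\ub$ from $\ub=\de$ propagates the $\ep_0$--bound to the full range $[\de,1]$. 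The torsion $\ze$ is then recovered from \eqref{nullidentities}.

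Second, the curvature components are handled through the Bianchi $\nabs_4$--equations of Proposition~\ref{bianchiequations}. With $\a=0$ and $\om=0$, the equation for $\b$ becomes $\nabs_4\b+2\trch\,\b = 0$, while the $\nabs_4$--equations for $\rho,\si,\bb$ are driven only by Ricci coefficients already controlled in the first step. Integration in $\ub$ again propagates the $\ep_0$--bound from $S_{u_0,\de}$. Third, $\nabs_4$--derivatives on $H_{u_0}^{(\de,1)}$ are read off directly from these transport equations, and angular derivatives are obtained by commuting $\nabs$ with $\nabs_4$ via Proposition~\ref{commu} and integrating the commuted equations. Fourth, the transversal $\nabs_3$--derivatives of the listed quantities are recovered by \emph{using} the corresponding $\nabs_3$ null structure and Bianchi equations as algebraic identities, which express $\nabs_3\phi$ for $\phi\in\{\trchbt,\hchb,\omb,\etab,\b,\rho,\si,\bb\}$ in terms of angular derivatives of $\phi$ and of quantities already estimated. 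Higher-order mixed derivatives $\dk^{\leq s+3}$ are then controlled by induction on total derivative order, with the commutator identities in Proposition~\ref{commu} producing only lower-order terms that close under induction.

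The main obstacle is the careful algebraic bookkeeping at each order: one must verify that the simplifications produced by $\hch = \a = \om = 0$ persist after every commutation with $\nabs_4$ and $\nabs$, that the $\nabs_3$--equations used as algebraic identities for transversal derivatives do not reintroduce the vanishing quantities $\a,\hch,\om$ or their $\nabs_3$--derivatives on the right-hand side in an uncontrolled way, and that the regularity level $s+3$ is consistently inherited from the $s+4$--control on $S_{u_0,\de}$ supplied by Theorem~\ref{shortpulsecone}. Once the induction closes, Sobolev embedding on $S_{u,\ub}$ (Proposition~\ref{sobolevinequality}) upgrades the $L^2$--bounds to the pointwise estimate \eqref{Hu0eq}.
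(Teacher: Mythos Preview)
Your proposal is correct and follows essentially the same approach as the paper, which simply refers to Chapter~2 of \cite{Chr}: exploit the outgoing geodesic foliation (so $\om=0$, $\etab=-\ze$) together with the vanishing conditions $\hch=\a=0$ to integrate the $\nabs_4$--transport equations along $H_{u_0}^{(\de,1)}$ from the $\ep_0$--controlled data on $S_{u_0,\de}$, then recover transversal $\nabs_3$--derivatives from the null structure and Bianchi equations, iterating via commutation. One small inaccuracy worth flagging: not every quantity in your list has a direct $\nabs_3$--equation available as an algebraic identity (e.g.\ $\nabs_3\omb$ and $\nabs_3\eta$ are absent from Proposition~\ref{nulles}), so for those you must instead commute $\nabs_3$ into the corresponding $\nabs_4$--equation and integrate the resulting transport equation for $\nabs_3\phi$ --- but you already indicate this inductive commutation step, so the argument closes.
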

\begin{proof}
Recall that $H_{u_0}$ is endowed with an outgoing geodesic foliation and we have
\begin{align}\label{hchaomvanishing}
    \hch=0,\qquad\quad \a=0,\qquad\quad\om=0,\qquad\quad\etab=-\ze\quad \mbox{ on }\; H_{u_0}^{(\de,1)}.
\end{align}
Proceeding as in Chapter 2 of \cite{Chr},\footnote{More precisely, all quantities can be estimated by their integrating transport equations of $\nabs_4$ along $H_{u_0}$ and applying \eqref{hchaomvanishing}.} we obtain \eqref{Hu0eq} as stated.
\end{proof}
Combining Propositions \ref{Hbde} and \ref{Hu0}, this concludes the proof of Theorem \ref{MM0}.
\subsection{Energy estimates for curvature components}\label{secRR}
In this section, we prove Theorem \ref{MM1} based on the energy estimates for the Bianchi equations. The strategy used here can be considered as the $r^p$--weighted estimates used in \cite{holzegel,Shen22,Shen24} in the particular case $p=0$.
\subsubsection{Estimates for general Bianchi pairs}
The following lemma provides the general structure of Bianchi pairs.
\begin{lem}\label{keypoint4}
Let $k=1,2$ and $a_{(1)}$, $a_{(2)}$ be real numbers. Then, we have the following properties:
\begin{enumerate}
    \item If $\psi_{(1)},h_{(1)}\in\sk_k$ and $\psi_{(2)},h_{(2)}\in \sk_{k-1}$ satisfying
    \begin{align}
    \begin{split}\label{bianchi14}
    \nabs_3(\psi_{(1)})+a_{(1)}\trchb\,\psi_{(1)}&=-k\sld_k^*(\psi_{(2)})+h_{(1)},\\
    \nabs_4(\psi_{(2)})+a_{(2)}\trch\,\psi_{(2)}&=\sld_k(\psi_{(1)})+h_{(2)}.
    \end{split}
    \end{align}
Then, the pair $(\psi_{(1)},\psi_{(2)})$ satisfies
\begin{align}
\begin{split}\label{div4}
    &\bdiv(|\psi_{(1)}|^2e_3)+k\bdiv(|\psi_{(2)}|^2e_4)\\
    +&(2a_{(1)}-1)\trchb|\psi_{(1)}|^2+k(2a_{(2)}-1)\trch|\psi_{(2)}|^2\\
    =&2k\sdivs(\psi_{(1)}\cdot\psi_{(2)})+2\psi_{(1)}\cdot h_{(1)}+2k\psi_{(2)}\cdot h_{(2)}+\Ga\left(|\psi_{(1)}|^2+|\psi_{(2)}|^2\right).
\end{split}
\end{align}
    \item If $\psi_{(1)},h_{(1)}\in\sk_{k-1}$ and $\psi_{(2)},h_{(2)}\in\sk_k$ satisfying
\begin{align}
\begin{split}\label{bianchi24}
\nabs_3(\psi_{(1)})+a_{(1)}\trchb\,\psi_{(1)}&=\sld_k(\psi_{(2)})+h_{(1)},\\
\nabs_4(\psi_{(2)})+a_{(2)}\trch\,\psi_{(2)}&=-k\sld_k^*(\psi_{(1)})+h_{(2)}.
\end{split}
\end{align}
Then, the pair $(\psi_{(1)},\psi_{(2)})$ satisfies
\begin{align}
\begin{split}\label{div24}
    &k\bdiv(|\psi_{(1)}|^2e_3)+\bdiv(|\psi_{(2)}|^2e_4)\\
    +&k(2a_{(1)}-1)\trchb|\psi_{(1)}|^2+(2a_{(2)}-1)\trch|\psi_{(2)}|^2\\
    =&2\sdivs(\psi_{(1)}\cdot\psi_{(2)})+2k\psi_{(1)}\cdot h_{(1)}+2\psi_{(2)}\cdot h_{(2)}+\Ga\left(|\psi_{(1)}|^2+|\psi_{(2)}|^2\right).
\end{split}
\end{align}
\end{enumerate}
\end{lem}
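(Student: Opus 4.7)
The plan is to specialize the weighted divergence identity of Lemma \ref{keypoint} (equivalently, Lemma 5.1 of \cite{ShenWan}) to the unweighted case $p = 0$ and to repackage the resulting nonlinear error terms in the schematic $\Ga$--class of Definition \ref{gamma}. Since the finite transition region bears no explicit $|u|^p$ weight, the proof is slightly cleaner than its short-pulse counterpart.

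First I would contract each Bianchi-pair equation with $2\psi_{(i)}$ and use the metric-compatibility $\nabs_3 \slg = \nabs_4 \slg = 0$ (which follows from $\g(\D_\mu e_A, e_B)$ identities combined with \eqref{nullidentities}) to rewrite the time derivatives as $e_3(|\psi_{(1)}|^2)$ and $e_4(|\psi_{(2)}|^2)$. For the pair \eqref{bianchi14}, this produces
\begin{align*}
e_3(|\psi_{(1)}|^2) + 2 a_{(1)} \trchb |\psi_{(1)}|^2 &= -2k\,\sld_k^*\psi_{(2)}\cdot\psi_{(1)} + 2\psi_{(1)}\cdot h_{(1)},\\
e_4(|\psi_{(2)}|^2) + 2 a_{(2)} \trch |\psi_{(2)}|^2 &= 2\,\sld_k\psi_{(1)}\cdot\psi_{(2)} + 2\psi_{(2)}\cdot h_{(2)},
\end{align*}
with an analogous pair for \eqref{bianchi24}.

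Next I would invoke the pointwise $L^2$--adjointness identity for the Hodge operators of Definition \ref{hodgeop},
\begin{equation*}
\sld_k\psi_{(1)}\cdot\psi_{(2)} \,-\, \sld_k^*\psi_{(2)}\cdot\psi_{(1)} \,=\, \sdivs\bigl(\psi_{(1)}\cdot\psi_{(2)}\bigr),
\end{equation*}
valid for $k = 1, 2$ with $\psi_{(1)} \in \sk_k$, $\psi_{(2)} \in \sk_{k-1}$, and established by a direct integration by parts together with the symmetry and tracelessness of elements of $\sk_2$. Adding the first line above to $k$ times the second, the two cross-terms fuse into a single total surface divergence, producing the $\sdivs(\psi_{(1)} \cdot \psi_{(2)})$ term on the right of \eqref{div4}; the analogous combination for \eqref{bianchi24}, obtained by instead multiplying the $\nabs_3$-equation by $k$, yields \eqref{div24}.

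Finally I would promote $e_3(|\psi_{(1)}|^2)$ and $e_4(|\psi_{(2)}|^2)$ to full spacetime divergences via $\bdiv(fX) = X(f) + f\,\bdiv(X)$ combined with the null-frame computation
\begin{equation*}
\bdiv(e_3) = \trchb - 2\omb,\qquad \bdiv(e_4) = \trch - 2\om,
\end{equation*}
which follows directly from the null decomposition \eqref{defga} together with \eqref{nullidentities}. This shifts the coefficients of $\trchb|\psi_{(1)}|^2$ and $\trch|\psi_{(2)}|^2$ from $2a_{(1)}, 2a_{(2)}$ to $2a_{(1)} - 1, 2a_{(2)} - 1$, exactly as stated, and leaves behind $-2\omb|\psi_{(1)}|^2$ and $-2\om|\psi_{(2)}|^2$ contributions, both of which fall cleanly into the schematic error $\Ga(|\psi_{(1)}|^2 + |\psi_{(2)}|^2)$ since $\om, \omb \in \Ga$ by Definition \ref{gamma}. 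The main obstacle I anticipate is purely bookkeeping: one must verify that every correction arising from the frame divergences of $e_3, e_4$ involves only quantities already inside $\Ga$, in particular $\om, \omb$ but not the potentially large $\hch, \hchb$. This is precisely the reason Definition \ref{gamma} places $\om, \omb$ inside $\Ga$ while excluding $\hch, \hchb$; once these membership facts are recorded, the remaining algebra is a direct substitution.
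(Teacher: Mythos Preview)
Your proposal is correct and follows essentially the same approach as the paper, which simply cites the identity as the $p=0$ case of Lemma 4.2 in \cite{Shen22}; you have spelled out the underlying computation (contract, apply the Hodge adjointness identity, then convert $e_3,e_4$ derivatives to spacetime divergences via $\bdiv(e_3)=\trchb-2\omb$, $\bdiv(e_4)=\trch-2\om$), and your bookkeeping of the $\om,\omb$ residuals into the schematic $\Ga$-class is exactly the point.
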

\begin{proof}
It follows by taking $p=0$ in Lemma 4.2 of \cite{Shen22}.
\end{proof}
\begin{prop}\label{keyintegral4}
Let $(\psi_{(1)},\psi_{(2)})$ be a Bianchi pair that satisfies \eqref{bianchi14} or \eqref{bianchi24}. Then, we have the following properties:
\begin{itemize}
\item In the case $1-2a_{(1)}\geq 0$ and $2a_{(2)}-1\geq 0$, we have
\begin{align}
\begin{split}\label{casepp}
&\quad\;\|\psi_{(1)}\|_{L^2(\cuvs)}^2+\|\psi_{(2)}\|^2_{L^2(\ucuvs)}\\
&\les\|\psi_{(1)}\|_{L^2(H_{u_0}^{(\de,\ub)})}^2+\|\psi_{(2)}\|^2_{L^2(\Hb_{\de}^{(u_0,u)})}\\
&+\int_V|\psi_{(1)}\c h_{(1)}|+|\psi_{(2)}\c h_{(2)}|+|\Ga|\left(|\psi_{(1)}|^2+|\psi_{(2)}|^2\right).
\end{split}
\end{align}
\item In the case $1-2a_{(1)}\leq 0$ and $2a_{(2)}-1>0$, we have
\begin{align}
\begin{split}\label{casenp}
&\quad\;\|\psi_{(1)}\|_{L^2(\cuvs)}^2+\|\psi_{(2)}\|^2_{L^2(\ucuvs)}\\
&\les\|\psi_{(1)}\|_{L^2(H_{u_0}^{(\de,\ub)})}^2+\|\psi_{(2)}\|^2_{L^2(\Hb_{\de}^{(u_0,u)})}\\
&+\int_V|\psi_{(1)}|^2+|\psi_{(1)}\c h_{(1)}|+|\psi_{(2)}\c h_{(2)}|+|\Ga|\left(|\psi_{(1)}|^2+|\psi_{(2)}|^2\right).
\end{split}
\end{align}
\end{itemize}
\end{prop}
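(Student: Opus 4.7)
The plan is to integrate the pointwise divergence identities \eqref{div4} and \eqref{div24} of Lemma \ref{keypoint4} over the spacetime region $V = V(u,\ub)$, using the divergence theorem adapted to the double null foliation. In the $(u,\ub,\phi^A)$ coordinates with metric \eqref{metricg}, the spacetime integral of $\bdiv(f e_4)$ (respectively $\bdiv(f e_3)$) reduces, up to a lapse factor uniformly $1 + O(\ep)$ under the bootstrap \eqref{BB}, to the difference of flux integrals over the outgoing (respectively incoming) null boundary pieces of $\partial V$. Applied to the left-hand side of \eqref{div4} with $f = |\psi_{(1)}|^2$ and $f = |\psi_{(2)}|^2$, this recovers the $L^2$-fluxes appearing on the left of \eqref{casepp} and \eqref{casenp}, modulo initial data contributions on $H_{u_0}^{(\de,\ub)}$ and $\Hb_\de^{(u_0,u)}$. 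The cross term $2k\sdivs(\psi_{(1)}\c\psi_{(2)})$ on the right of \eqref{div4} is a pure angular divergence; fibering $V$ over the $(u,\ub)$-base with spherical fibers $S_{u,\ub}$ and applying Stokes' theorem on each leaf shows that it integrates to zero. The same reduction applies to \eqref{div24}.

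The key step is the sign analysis for the two mass terms $(2a_{(1)}-1)\trchb|\psi_{(1)}|^2$ and $k(2a_{(2)}-1)\trch|\psi_{(2)}|^2$ appearing on the left of \eqref{div4}. Writing $\trch = 2/r + \trcht$ and $\trchb = -2/r + \trchbt$ with $\trcht, \trchbt = O(\ep)$ by Lemma \ref{decayGa}, the leading part of $(2a_{(2)}-1)\trch$ has the sign of $2a_{(2)}-1$, while that of $(2a_{(1)}-1)\trchb$ has the opposite sign of $2a_{(1)}-1$. In Case 1 ($1-2a_{(1)} \geq 0$ and $2a_{(2)}-1 \geq 0$), both leading parts are nonnegative and are discarded from the left-hand side by positivity; the $O(\ep)$ corrections fit into the $\Ga(|\psi_{(1)}|^2+|\psi_{(2)}|^2)$ bulk term on the right. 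In Case 2 ($1-2a_{(1)}\leq 0$ and $2a_{(2)}-1 > 0$), the $(2a_{(1)}-1)\trchb$ contribution changes sign and is instead moved to the right, producing a nonnegative bulk contribution bounded by $C\int_V|\psi_{(1)}|^2$ since $|\trchb| \les r^{-1} \les 1$ on $V_*$. This yields precisely the extra $\int_V|\psi_{(1)}|^2$ term in \eqref{casenp}.

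The main obstacle is the careful bookkeeping of metric factors when reducing the spacetime divergence theorem in double null coordinates: one must verify that the identifications of $\int_V\bdiv(f e_3)$ and $\int_V\bdiv(f e_4)$ with the corresponding differences of null flux integrals produce precisely the unweighted $L^2$-norms of Section \ref{secRnor}, without spurious $\Om$ or $\sqrt{\det\slg}$ weights. Under \eqref{BB}, the lapse and the induced sphere volume form are uniformly comparable to their Minkowskian values on $V_*$, so every discrepancy is $O(\ep)$ times $|\psi|^2$ and is absorbed into the $\Ga(|\psi_{(1)}|^2+|\psi_{(2)}|^2)$ term on the right. This completes the reduction and yields \eqref{casepp} and \eqref{casenp}.
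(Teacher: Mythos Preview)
Your proposal is correct and follows essentially the same approach as the paper, which simply states that the result follows by integrating the identities over $V$ and applying Stokes' formula (referring to Proposition~5.7 of \cite{Shen24} for details). Your write-up is in fact more explicit than the paper's one-line proof: you spell out the sign analysis of the bulk terms $(2a_{(1)}-1)\trchb|\psi_{(1)}|^2$ and $(2a_{(2)}-1)\trch|\psi_{(2)}|^2$ via $\trch=2/r+\trcht$, $\trchb=-2/r+\trchbt$, and you correctly identify that in Case~2 the $\psi_{(1)}$ bulk term changes sign and must be moved to the right, producing the extra $\int_V|\psi_{(1)}|^2$.
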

\begin{proof}
It follows by integrating \eqref{bianchi14} or \eqref{bianchi24} in $V$ and applying Stokes' formula, see Proposition 5.7 of \cite{Shen24}.
\end{proof}
\begin{df}\label{dfdkbb}
We define the weighted angular derivatives $\dkbb$ as follows:
\begin{align*}
    \dkbb U &:= r\sld_2 U,\qquad \forall U\in \sk_2,\\
    \dkbb \xi&:=r\sld_1 \xi,\qquad\,\,\, \forall \xi\in \sk_1,\\
    \dkbb f&:=r\sld_1^* f,\qquad \,\,\forall f\in \sk_0.
\end{align*}
We denote for any tensor $h\in\sk_k$, $k=0,1,2$,
\begin{equation*}
    h^{(0)}:=h,\qquad\quad h^{(i)}:=(h,\dkbb h,...,\dkbb^i h).
\end{equation*}
\end{df}
\subsubsection{Estimates for the Bianchi pair \texorpdfstring{$(\a,\b)$}{}}
\begin{prop}\label{estabab}
We have the following estimate:
    \begin{align*}
        \|\a^{(s+3)}\|_{L^2(\cuvs)}+\|\b^{(s+3)}\|_{L^2(\ucuvs)}\les\ep_0.
    \end{align*}
\end{prop}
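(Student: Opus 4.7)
The plan is to run the energy identity of Proposition \ref{keyintegral4} in the favorable case \eqref{casepp} for the Bianchi pair $(\a,\b)$, commuted with the $r$--weighted angular derivatives $\dkbb^i$ for $0\le i\le s+3$. This mirrors the proof of Proposition \ref{estab} but in the small-data regime of $V_*$, so instead of the scale-invariant bookkeeping we rely directly on the smallness of $\ep$ to close the nonlinear contributions.

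From Proposition \ref{bianchiequations} the pair obeys
\begin{align*}
\nabs_3\a+\tfrac{1}{2}\trchb\,\a&=-2\sld_2^*\b+\Ga\c\Rl,\\
\nabs_4\b+2\trch\,\b&=\sld_2\a+\Ga\c\Rl,
\end{align*}
which is of the form \eqref{bianchi14} with $k=2$, $a_{(1)}=\tfrac{1}{2}$, $a_{(2)}=2$. Since $1-2a_{(1)}=0\ge 0$ and $2a_{(2)}-1=3\ge 0$, Proposition \ref{keyintegral4} supplies the energy identity \eqref{casepp} with no unfavorable bulk term in $|\a|^2$ on the right. Commuting with $\dkbb^i$ and using Proposition \ref{commu} to re-express $[\nabs_3,\dkbb^i]$ and $[\nabs_4,\dkbb^i]$, the system persists at top order up to new nonlinear sources $H_{(1)}^{(i)},H_{(2)}^{(i)}$ which are schematically sums of $\Ga^{(\le j)}\c\Rl^{(\le i-j)}$ over $0\le j\le i$. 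Crucially no term requires more than $s+3$ angular derivatives of any $\Ga$ or of any curvature component, so every factor is controlled by Lemma \ref{decayGa}.

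Applying \eqref{casepp} and summing over $0\le i\le s+3$ gives
$$
\|\a^{(s+3)}\|^2_{L^2(\cuvs)}+\|\b^{(s+3)}\|^2_{L^2(\ucuvs)}\les\Rk_{(0)}^2+\int_V|\Ga|\big(|\a^{(\le s+3)}|^2+|\b^{(\le s+3)}|^2\big)+I_{nl},
$$
where $I_{nl}$ collects the mixed bulk contributions $\int_V|(\a,\b)^{(\le s+3)}|\,|H_{(j)}^{(\le s+3)}|$. Theorem \ref{MM0} controls $\Rk_{(0)}\les\ep_0$. For the linear-in-$\Ga$ bulk term, the Sobolev inequality (Proposition \ref{sobolevinequality}) combined with Lemma \ref{decayGa} yields $\|\Ga\|_{L^\infty(V_*)}\les\ep$, so this term is bounded by $\ep\cdot(\|\a^{(s+3)}\|^2_{L^2(\cuvs)}+\|\b^{(s+3)}\|^2_{L^2(\ucuvs)})$ and absorbed on the left for $\ep$ small. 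For $I_{nl}$, each factor of $\Ga$ is placed in $L^\infty$ (giving $\ep$ by Sobolev) while the two curvature factors are paired with the appropriate flux via Cauchy-Schwarz in $\ub'$ (for outgoing cone fluxes) or in $u'$ (for incoming fluxes), yielding $I_{nl}\les\ep\cdot\Rk^2\le\ep^3=\ep_0^2$ by the choice $\ep=\ep_0^{2/3}$ in \eqref{dfep}.

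The main obstacle is the top-order commutator bookkeeping: at $i=s+3$ one must verify that no source term requires control of $\Ga^{(s+4)}$ or $\Rl^{(s+4)}$, neither of which lies in the bootstrap norms. This is ensured because Proposition \ref{commu} adds at most one $\nabs$ per commutator step, so the worst source at level $s+3$ is of type $\Ga^{(s+3)}\c\Rl$ or $\Ga\c\Rl^{(s+3)}$, both at the top of the available norms. With the derivative count respected, absorption of the linear term together with the nonlinear bound $\ep^3=\ep_0^2$ yields the desired estimate $\|\a^{(s+3)}\|_{L^2(\cuvs)}+\|\b^{(s+3)}\|_{L^2(\ucuvs)}\les\ep_0$.
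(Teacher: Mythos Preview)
Your approach is essentially identical to the paper's: apply \eqref{casepp} to the commuted $(\a,\b)$ system (the coefficients $a_{(1)}=\tfrac12$, $a_{(2)}=2$ indeed place you in the favorable case), control the initial fluxes by $\Rk_{(0)}\les\ep_0$, and bound the nonlinear bulk by $\ep^3=\ep_0^2$ via the bootstrap. One small imprecision: in $I_{nl}$ you cannot always place the $\Ga$ factor in $L^\infty$---for the term $\Ga^{(s+3)}\c\Rl$ that you yourself flag in the last paragraph, putting $\Ga^{(s+3)}$ in $L^\infty$ would require $\|\Ga^{(s+5)}\|_{L^2(S)}$, which is unavailable; instead you must put the undifferentiated $\Rl$ in $L^\infty$ (legitimate since $\Rl\subset\Ga^{(1)}$ and $\Ok$ controls $\Ga^{(s+3)}$) and keep $\Ga^{(s+3)}$ in $L^2(S)$, exactly as in the paper's splitting \eqref{Rlnonlinear}. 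The final bound $\les\ep^3$ is unchanged.
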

\begin{proof}
We have from Proposition \ref{bianchiequations} that
\begin{align*}
\nabs_3\a+\frac{1}{2}\trchb\,\a&=-2\sld_2^*\b+\Ga\c\Rl,\\
\nabs_4\b+2\trch\,\b&=\sld_2\a+\Ga\c\Rl.
\end{align*}
Differentiating it by $\dkbb^i$ and applying Proposition \ref{commu}, we obtain
\begin{align*}
\nabs_3(\dkbb^i\a)+\frac{1}{2}\trchb(\dkbb^i\a)&=-\slD^*(\dkbb^i\b)+\b^{(i-1)}+(\Ga\c\Rl)^{(i)},\\
\nabs_4(\dkbb^i\b)+2\trch(\dkbb^i\b)&=\slD(\dkbb^i\a)+(\Ga\c\Rl)^{(i)},
\end{align*}
where $\slD$ denotes an elliptic Hodge operator. Applying \eqref{casepp}, we deduce by induction
\begin{align*}
\int_\cuvs|\dkbb^i\a|^2+\int_\ucuvs|\dkbb^i\b|^2&\les \int_\cuvss|\dkbb^i\a|^2+\int_\ucuvss|\dkbb^i\b|^2+\int_V \left|\Rl^{(i)}\right|\left|(\Ga\c\Rl)^{(i)}\right|.
\end{align*}
Notice that we have for $i\leq s+3$
\begin{align}
\begin{split}\label{Rlnonlinear}
\int_V \left|\Rl^{(i)}\right|\left|(\Ga\c\Rl)^{(i)}\right|&\les\int_V|\Ga|\left|\Rl^{(i)}\right|\left|\Rl^{(i)}\right|+|\Rl|\left|\Rl^{(i)}\right|\left|\Ga^{(i)}\right|\\
&\les\ep\int_{u_0}^udu'\int_{H_{u'}^{(\de,\ub)}}\left|\Rl^{(i)}\right|^2\\
&+\ep\left(\int_{u_0}^udu'\int_{H_{u'}^{(\de,\ub)}}\left|\Rl^{(i)}\right|^2\right)^\frac{1}{2}\left(\int_V\left|\Ga^{(i)}\right|^2d\ub'du'\right)^\frac{1}{2}\\
&\les\ep^3+\ep^2\left(\int_{u_0}^u\int_{\de}^\ub\int_{S_{u',\ub'}}\left|\Ga^{(i)}\right|^2\right)^\frac{1}{2}\\
&\les\ep_0^2.
\end{split}
\end{align}
Combining with the initial assumption, we infer for $i\leq s+3$
\begin{align*}
    \int_\cuvs|\dkbb^i\a|^2+\int_\cuvs|\dkbb^i\b|^2\les\ep_0^2.
\end{align*}
This concludes the proof of Proposition \ref{estabab}.
\end{proof}
\subsubsection{Estimates for the Bianchi pair \texorpdfstring{$(\b,(\rho,-\si))$}{}}
\begin{prop}\label{estbrbr}
We have the following estimate:
\begin{align*}
    \|\b^{(s+3)}\|_{L^2(\cuvs)}+\|(\rho,\si)^{(s+3)}\|_{L^2(\ucuvs)}\les\ep_0.
\end{align*}
\end{prop}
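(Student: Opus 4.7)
The pair $(\b,(\rho,-\si))$ fits into the Bianchi-pair framework of Lemma \ref{keypoint4}: from Proposition \ref{bianchiequations}, the relevant equations can be written schematically as
\begin{align*}
\nabs_3\b + \trchb\,\b &= -\sld_1^*(\rho,-\si) + \Ga\c\Rl,\\
\nabs_4(\rho,-\si) + \tfrac{3}{2}\trch\,(\rho,-\si) &= \sld_1\b + \Ga\c\Rl,
\end{align*}
which matches \eqref{bianchi14} with $k=1$, $\psi_{(1)}=\b$, $\psi_{(2)}=(\rho,-\si)$, $a_{(1)}=1$ and $a_{(2)}=\tfrac{3}{2}$. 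The plan is to commute with $\dkbb^i$ via Proposition \ref{commu}, which preserves this structure up to lower-order error terms of schematic form $\b^{(i-1)}$ (from commutators with $\sld_1^*$) and $(\Ga\c\Rl)^{(i)}$, exactly as in the derivation for the pair $(\a,\b)$ in Proposition \ref{estabab}.

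The sign pattern for this pair is $1-2a_{(1)} = -1 \leq 0$ and $2a_{(2)}-1 = 2 > 0$, which places us in case \eqref{casenp} of Proposition \ref{keyintegral4}. Applying it yields, for each $i\leq s+3$,
\begin{align*}
\|\dkbb^i\b\|^2_{L^2(\cuvs)} + \|\dkbb^i(\rho,\si)\|^2_{L^2(\ucuvs)} &\les \|\dkbb^i\b\|^2_{L^2(H_{u_0}^{(\de,\ub)})} + \|\dkbb^i(\rho,\si)\|^2_{L^2(\Hb_\de^{(u_0,u)})}\\
&\quad + \int_V|\b^{(i)}|^2 + \int_V |\Rl^{(i)}|\,|(\Ga\c\Rl)^{(i)}|.
\end{align*}
The boundary terms are controlled by $\Rk_{(0)}\les\ep_0$, and the nonlinear bulk term is bounded by $\ep_0^2$ through the same H\"older/bootstrap computation as in \eqref{Rlnonlinear}.

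The genuinely new feature, characteristic of case \eqref{casenp}, is the linear bulk term $\int_V|\b^{(i)}|^2$. This is where the ordering of the Bianchi estimates becomes essential: switching the order of integration and invoking the outgoing-flux estimate on $\b$ already established in Proposition \ref{estabab},
\begin{align*}
\int_V|\b^{(i)}|^2 = \int_\de^\ub \|\b^{(i)}\|^2_{L^2(\Hb_{\ub'}^{(u_0,u)})}\,d\ub' \les \int_\de^\ub \ep_0^2\,d\ub' \les \ep_0^2.
\end{align*}
Summing the resulting bounds over $0\leq i\leq s+3$ concludes the proof.

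The only conceptual point that requires attention—hardly a true obstacle—is that, unlike the pair $(\a,\b)$ treated in case \eqref{casepp}, the pair $(\b,(\rho,-\si))$ lies in the \emph{mixed-sign} case \eqref{casenp}, generating an extra linear bulk term that would otherwise obstruct closure. It is absorbed at no cost precisely because the outgoing flux of $\b$ has already been controlled by $\ep_0$ in Proposition \ref{estabab}. This dependency dictates the order in which the successive Bianchi pairs must be estimated: each pair consumes the flux delivered by its predecessor, and the present proposition in turn supplies the flux of $(\rho,\si)$ needed to close the estimate for the next pair $((\rho,\si),\bb)$.
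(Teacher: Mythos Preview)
Your proof is correct and follows essentially the same approach as the paper: commute with $\dkbb^i$, apply the mixed-sign case \eqref{casenp}, bound the nonlinear bulk via \eqref{Rlnonlinear}, and absorb the linear bulk term $\int_V|\b^{(i)}|^2$ using the $\Hb$-flux of $\b$ from Proposition \ref{estabab}. One minor terminological slip: the flux of $\b$ supplied by Proposition \ref{estabab} is on the \emph{incoming} hypersurfaces $\Hb_{\ub'}$ (i.e.\ $\|\b^{(s+3)}\|_{L^2(\ucuvs)}$), not an outgoing flux, but your displayed formula already uses the correct quantity.
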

\begin{proof}
We have from Proposition \ref{bianchiequations} that
\begin{align*}
\nabs_3\b+\trchb\,\b&=-\sld_1^*(\rho,-\si)+\Ga\c\Rl,\\
\nabs_4(\rho,-\si)+\frac{3}{2}\trch(\rho,-\si)&=\sld_1\b+\Ga\c\Rl.
\end{align*}
Differentiating it by $\dkbb^i$ and applying Proposition \ref{commu}, we obtain
\begin{align*}
\nabs_3(\dkbb^i\b)+\trchb(\dkbb^i\b)&=-\slD^*(\dkbb^i(\rho,-\si))+(\Ga\c\Rl)^{(i)},\\
\nabs_4(\dkbb^i(\rho,-\si))+\frac{3}{2}\trch(\dkbb^i(\rho,-\si))&=\slD(\dkbb^i\b)+(\Ga\c\Rl)^{(i)}.
\end{align*}
Applying \eqref{casenp}, \eqref{Rlnonlinear} and Proposition \ref{estabab}, we deduce for $i\leq s+3$
\begin{align*}
    \int_\cuvs|\dkbb^i\b|^2+\int_\ucuvs|\dkbb^i(\rho,\si)|^2&\les \int_\cuvss|\dkbb^i\b|^2+\int_\ucuvss|\dkbb^i(\rho,\si)|^2+\int_V|\dkbb^i\b|^2\\
    &+\int_V \left|\Rl^{(i)}\right|\left|(\Ga\c\Rl)^{(i)}\right|\\
    &\les\ep_0^2+\int_{\de}^\ub d\ub'\int_{\Hb_{\ub'}^{(u_0,u)}} |\dkbb^i\b|^2\\
    &\les\ep_0^2.
\end{align*}
This concludes the proof of Proposition \ref{estbrbr}.
\end{proof}
\subsubsection{Estimates for the Bianchi pair \texorpdfstring{$((\rho,\si),\bb)$}{}}
\begin{prop}\label{estrbrb}
We have the following estimate:
    \begin{align*}
        \|(\rho,\si)^{(s+3)}\|_{L^2(\cuvs)}+\|\bb^{(s+3)}\|_{L^2(\ucuvs)}\les\ep_0.
    \end{align*}
\end{prop}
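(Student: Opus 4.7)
The plan is to mimic exactly the strategy of Propositions \ref{estabab} and \ref{estbrbr}: identify $((\rho,-\si),\bb)$ as a Bianchi pair of the form \eqref{bianchi24}, commute with $\dkbb^i$, and apply the appropriate case of Proposition \ref{keyintegral4}. From Proposition \ref{bianchiequations}, this pair satisfies, schematically,
\begin{align*}
\nabs_3(\rho,-\si)+\frac{3}{2}\trchb\,(\rho,-\si)&=\sld_1\bb+\Ga\c\Rl,\\
\nabs_4\bb+\trch\,\bb&=-\sld_1^{*}(\rho,-\si)+\Ga\c\Rl,
\end{align*}
corresponding to \eqref{bianchi24} with $k=1$, $a_{(1)}=3/2$, $a_{(2)}=1$. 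Since $1-2a_{(1)}=-2<0$ and $2a_{(2)}-1=1>0$, the applicable identity is \eqref{casenp}, exactly as for the previous pair $(\b,(\rho,-\si))$.

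I would then commute the system with $\dkbb^i$ for $i\leq s+3$; by Proposition \ref{commu} this only produces additional terms of schematic type $\Ga^{(i)}\c \Rl^{(i)}$, already captured by the nonlinearity. Applying \eqref{casenp} would give
\begin{align*}
\int_{\cuvs}|\dkbb^i(\rho,\si)|^2+\int_{\ucuvs}|\dkbb^i\bb|^2\les\;&\int_{\cuvss}|\dkbb^i(\rho,\si)|^2+\int_{\ucuvss}|\dkbb^i\bb|^2\\
&+\int_V|\dkbb^i(\rho,\si)|^2+\int_V\left|\Rl^{(i)}\right|\left|(\Ga\c\Rl)^{(i)}\right|.
\end{align*}
The characteristic initial data contributions are absorbed by the hypothesis $\Rk_{(0)}\les\ep_0$, while the nonlinear term is estimated by $\ep_0^2$ via the same Cauchy--Schwarz/Fubini calculation as in \eqref{Rlnonlinear}, using the bootstrap $\Ok,\Rk\leq\ep$.

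The one new feature — which I view as the only genuine step and not really an obstacle — is the bulk term $\int_V|\dkbb^i(\rho,\si)|^2$ arising from the bad sign $1-2a_{(1)}<0$. I would control it by Fubini, integrating in the $u$--direction first and invoking the already-established incoming flux bound on $(\rho,\si)$ from Proposition \ref{estbrbr}:
\begin{align*}
\int_V|\dkbb^i(\rho,\si)|^2=\int_{\de}^{\ub}\left\|\dkbb^i(\rho,\si)\right\|^2_{L^2(\Hb_{\ub'}^{(u_0,u)})}d\ub'\les\ep_0^2.
\end{align*}
Summing over $i\leq s+3$ then closes the desired bound. In short, the bad sign always appears one equation down the Bianchi hierarchy, but its cost is paid by the incoming-flux estimate proved at the previous step, so the induction propagates cleanly; no qualitatively new argument is required.
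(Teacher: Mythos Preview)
Your overall strategy matches the paper's, but there is a genuine gap in your treatment of the nonlinear terms. You write the right-hand sides schematically as $\Ga\c\Rl$ and then invoke \eqref{Rlnonlinear}. This is incorrect: the $\nabs_3(\rho,\si)$ equations in Proposition \ref{bianchiequations} contain the terms $\frac{1}{2}\hch\c\aa$ and $\frac{1}{2}\hch\c{^*\aa}$, and $\aa\notin\Rl$. The correct schematic is $\Ga\c\Rr$, not $\Ga\c\Rl$. This matters because \eqref{Rlnonlinear} relies on the outgoing-flux bound $\|\Rl^{(i)}\|_{L^2(H_{u'})}\les\ep$, which is not available for $\aa$ (the bootstrap $\Rk\leq\ep$ only controls $\aa$ on incoming cones).

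The paper handles this by proving a separate estimate \eqref{Rrnonlinear}, identical in spirit to \eqref{Rlnonlinear} but with the roles of $u$ and $\ub$ swapped: one integrates in $\ub$ first and uses the incoming-flux bound $\|\Rr^{(i)}\|_{L^2(\Hb_{\ub'})}\les\ep$. Once you make this switch, your bulk-term argument for $\int_V|\dkbb^i(\rho,\si)|^2$ (via Proposition \ref{estbrbr}) and the rest of your proof go through exactly as you wrote them.
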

\begin{proof}
We have from Proposition \ref{bianchiequations} that
\begin{align*}
\nabs_3(\rho,\si)+\frac{3}{2}\trchb(\rho,\si)&=-\sld_1\bb+\Ga\c\Rr,\\
\nabs_4\bb+\trch\,\bb&=\sld_1^*(\rho,\si)+\Ga\c\Rr.
\end{align*}
Differentiating it by $\dkbb^i$ and applying Proposition \ref{commu}, we obtain
\begin{align*}
\nabs_3(\dkbb^i(\rho,\si))+\frac{3}{2}\trchb(\dkbb^i(\rho,\si))&=-\sld_1(\dkbb^i\bb)+(\Ga\c\Rr)^{(i)},\\
\nabs_4(\dkbb^i\bb)+\trch(\dkbb^i\bb)&=\sld_1^*(\dkbb^i(\rho,\si))+(\Ga\c\Rr)^{(i)}.
\end{align*}
Proceeding as in \eqref{Rlnonlinear}, we have for $i\leq s+3$
\begin{align}
\begin{split}\label{Rrnonlinear}
\int_V \left|\Rr^{(i)}\right|\left|(\Ga\c\Rr)^{(i)}\right|&\les\int_V|\Ga|\left|\Rr^{(i)}\right|\left|\Rr^{(i)}\right|+|\Rr|\left|\Rr^{(i)}\right|\left|\Ga^{(i)}\right|\\
&\les\ep\int_{\de}^\ub d\ub'\int_{\Hb_{\ub'}^{(u_0,u)}}\left|\Rr^{(i)}\right|^2\\
&+\ep\left(\int_\de^\ub d\ub'\int_{\Hb_{\ub'}^{(u_0,u)}}\left|\Rr^{(i)}\right|^2\right)^\frac{1}{2}\left(\int_V\left|\Ga^{(i)}\right|^2d\ub'du'\right)^\frac{1}{2}\\
&\les\ep^3+\ep^2\left(\int_{u_0}^u\int_{\de}^\ub\int_{S_{u',\ub'}}\left|\Ga^{(i)}\right|^2\right)^\frac{1}{2}\\
&\les\ep_0^2.
\end{split}
\end{align}
Applying \eqref{casenp}, \eqref{Rrnonlinear} and Proposition \ref{estbrbr}, we deduce for $i\leq s+3$
\begin{align*}
    \int_\cuvs|\dkbb^i(\rho,\si)|^2+\int_\ucuvs|\dkbb^i\bb|^2&\les \int_\cuvss|\dkbb^i(\rho,\si)|^2+\int_\ucuvss|\dkbb^i\bb|^2+\int_V|\dkbb^i(\rho,\si)|^2\\
    &+\int_V \left|\Rl^{(i)}\right|\left|(\Ga\c\Rl)^{(i)}\right|\\
    &\les\ep_0^2+\int_{\de}^\ub d\ub'\int_{\Hb_{\ub'}^{(u_0,u)}}|\dkbb^i(\rho,\si)|^2\\
    &\les\ep_0^2.
\end{align*}
This concludes the proof of Proposition \ref{estrbrb}.
\end{proof}
\subsubsection{Estimates for the Bianchi pair \texorpdfstring{$(\bb,\aa)$}{}}
\begin{prop}\label{estbaba}
We have the following estimate:
    \begin{align*}
        \|\bb^{(s+3)}\|_{L^2(\cuvs)}+\|\aa^{(s+3)}\|_{L^2(\ucuvs)}\les\ep_0.
    \end{align*}
\end{prop}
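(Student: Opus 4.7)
The plan is to estimate the Bianchi pair $(\bb,\aa)$ in exactly the same spirit as Propositions \ref{estabab}--\ref{estrbrb}, treating it as the last (and most delicate) instance of the framework of Lemma \ref{keypoint4} and Proposition \ref{keyintegral4}. From Proposition \ref{bianchiequations} we have the schematic system
\begin{align*}
\nabs_3\bb+2\trchb\,\bb&=-\sdivs\aa+\Ga\c\Rr,\\
\nabs_4\aa+\tfrac{1}{2}\trch\,\aa&=-\nabs\hot\bb+\Ga\c\Rr,
\end{align*}
which is a Bianchi pair of the form \eqref{bianchi24} with $\psi_{(1)}=\bb\in\sk_1$, $\psi_{(2)}=\aa\in\sk_2$, $k=2$, $a_{(1)}=2$ and $a_{(2)}=\tfrac{1}{2}$. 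Differentiating by $\dkbb^i$ for $i\leq s+3$ and commuting with $\nabs_3$, $\nabs_4$ by Proposition \ref{commu}, I obtain
\begin{align*}
\nabs_3(\dkbb^i\bb)+2\trchb(\dkbb^i\bb)&=-\slD^{*}(\dkbb^i\aa)+(\Ga\c\Rr)^{(i)},\\
\nabs_4(\dkbb^i\aa)+\tfrac{1}{2}\trch(\dkbb^i\aa)&=\slD(\dkbb^i\bb)+(\Ga\c\Rr)^{(i)},
\end{align*}
for an appropriate Hodge-type operator $\slD$.

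Next, I check that case \eqref{casenp} of Proposition \ref{keyintegral4} applies: $1-2a_{(1)}=-3\leq 0$ and $2a_{(2)}-1=0$. The borderline vanishing of $2a_{(2)}-1$ is benign: in the divergence identity \eqref{div24} the coefficient in front of $\trch|\aa|^2$ is exactly zero, so no sign-indefinite $\aa$-bulk term is generated, while the $\bb$-bulk term produced by the unfavorable sign of $1-2a_{(1)}$ is harmless after integration because $|\trchb|\les 1$ on $V_*$. Applying Proposition \ref{keyintegral4} therefore yields
\begin{align*}
\|\dkbb^i\bb\|_{L^2(\cuvs)}^2+\|\dkbb^i\aa\|_{L^2(\ucuvs)}^2
&\les\|\dkbb^i\bb\|_{L^2(H_{u_0}^{(\de,\ub)})}^2+\|\dkbb^i\aa\|_{L^2(\Hb_\de^{(u_0,u)})}^2\\
&\quad+\int_V|\dkbb^i\bb|^2+\int_V\left|\Rr^{(i)}\right|\left|(\Ga\c\Rr)^{(i)}\right|.
\end{align*}

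To conclude, the initial-data fluxes are $O(\ep_0^2)$ by Theorem \ref{MM0}, and the trilinear error term $\int_V|\Rr^{(i)}||(\Ga\c\Rr)^{(i)}|\les\ep_0^2$ is handled by exactly the argument of \eqref{Rrnonlinear} (bootstrap bound on $\Ga^{(i)}$ combined with Cauchy-Schwarz and Lemma \ref{decayGa}). The one genuinely new ingredient is controlling the residual bulk term $\int_V|\dkbb^i\bb|^2$ generated by the $1-2a_{(1)}\leq 0$ branch. Since Proposition \ref{estrbrb} has already upgraded the incoming flux of $\bb$ to
\begin{equation*}
\|\bb^{(s+3)}\|_{L^2(\Hb_{\ub'}^{(u_0,u)})}\les\ep_0,\qquad \ub'\in[\de,1],
\end{equation*}
I can estimate
\begin{equation*}
\int_V|\dkbb^i\bb|^2=\int_\de^\ub\|\dkbb^i\bb\|_{L^2(\Hb_{\ub'}^{(u_0,u)})}^2d\ub'\les\ep_0^2,
\end{equation*}
and summing over $i\leq s+3$ yields the desired estimate $\|\bb^{(s+3)}\|_{L^2(\cuvs)}+\|\aa^{(s+3)}\|_{L^2(\ucuvs)}\les\ep_0$.

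The main subtlety (rather than obstacle) is the borderline coefficient $2a_{(2)}-1=0$, which is why Proposition \ref{keyintegral4} as stated requires a strict inequality only in the $(\rho,\si,\bb)$ case; for the present pair the corresponding $\aa$-bulk term drops out identically, so no additional Gronwall argument is needed, and one closes the estimate by feeding in the previously-obtained incoming flux bound for $\bb$. This proposition therefore completes the induction across the four Bianchi pairs and, together with Propositions \ref{estabab}--\ref{estrbrb}, finishes the proof of Theorem \ref{MM1}.
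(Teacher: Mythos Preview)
Your proposal is correct and follows essentially the same route as the paper: write $(\bb,\aa)$ as a Bianchi pair of type \eqref{bianchi24}, commute with $\dkbb^i$, apply the integrated identity, control the nonlinear error by \eqref{Rrnonlinear}, and absorb the unfavorable $\bb$--bulk via the incoming flux already obtained in Proposition \ref{estrbrb}. Your discussion of the borderline coefficient $2a_{(2)}-1=0$ is in fact more explicit than the paper's, which simply invokes \eqref{casenp} without comment; the only cosmetic omission is the lower-order linear term $\bb^{(i-1)}$ arising from the Hodge commutation (cf.\ \eqref{dddd}), which the paper records in the commuted $\nabs_4\aa$ equation but which is harmlessly absorbed by induction on $i$.
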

\begin{proof}
We have from Proposition \ref{bianchiequations} that
\begin{align*}
\nabs_3\bb+2\trchb\,\bb&=-\sld_2\aa+\Ga\c\Rr,\\
\nabs_4\aa+\frac{1}{2}\trch\aa&=2\sld_2^*\bb+\Ga\c\Rr.
\end{align*}
Differentiating it by $\dkbb^i$ and applying Proposition \ref{commu}, we obtain
\begin{align*}
\nabs_3(\dkbb^i\bb)+2\trchb(\dkbb^i\bb)&=-\slD(\dkbb^i\aa)+(\Ga\c\Rr)^{(i)},\\
\nabs_4(\dkbb^i\aa)+\frac{1}{2}\trch(\dkbb^i\aa)&=\slD^*(\dkbb^i\bb)+\bb^{(i-1)}+(\Ga\c\Rr)^{(i)}.
\end{align*}
Applying \eqref{casenp}, \eqref{Rrnonlinear} and Proposition \ref{estrbrb}, we deduce for $i\leq s+3$
\begin{align*}
    \int_\cuvs|\dkbb^i\bb|^2+\int_\ucuvs|\dkbb^i\aa|^2&\les \int_\cuvss|\dkbb^i\bb|^2+\int_\ucuvss|\dkbb^i\aa|^2+\int_V|\dkbb^i\bb|^2\\
    &+\int_V \left|\Rr^{(i)}\right|\left|(\Ga\c\Rr)^{(i)}\right|\\
    &\les\ep_0^2+\int_{\de}^\ub d\ub'\int_{\Hb_{\ub'}^{(u_0,u)}}|\dkbb^i\bb|^2\\
    &\les\ep_0^2.
\end{align*}
This concludes the proof of Proposition \ref{estbaba}.
\end{proof}
Combining Propositions \ref{estabab}--\ref{estbaba}, this concludes the proof of Theorem \ref{MM1}.
\subsection{\texorpdfstring{$L^2(S_{u,\ub})$}{}--estimates for Ricci coefficients and curvature}\label{secOO}
In this section, we prove Theorem \ref{MM2} by integrating the null structure equations along the outgoing and incoming null cones.
\subsubsection{Estimates for curvature components}
\begin{prop}\label{estL2R}
We have the following estimates:
\begin{align*}
    \|(\a,\b,\rho,\si,\bb,\aa)^{(s+2)}\|_{L^2(S_{u,\ub})}\les\ep_0.
\end{align*}
\end{prop}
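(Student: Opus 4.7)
The plan is to derive $L^2(S_{u,\ub})$ bounds for each individual curvature component by integrating its Bianchi transport equation (in either the $\nabs_3$ or $\nabs_4$ direction) starting from the characteristic cone where its initial data is controlled. The essential tool is that Theorem \ref{MM1} has already supplied the flux bounds $\|\Rl^{(s+3)}\|_{L^2(\cuvs)}+\|\Rr^{(s+3)}\|_{L^2(\ucuvs)}\les\ep_0$. Combined with the fact that the transport intervals are of order one, Cauchy--Schwarz converts these fluxes into $L^1$-in-the-transport-parameter bounds,
\begin{align*}
\int_\de^\ub \|\phi\|_{L^2(S_{u,\ub'})}\,d\ub'&\les \|\phi\|_{L^2(\cuvs)},\\
\int_{u_0}^u \|\phi\|_{L^2(S_{u',\ub})}\,du'&\les \|\phi\|_{L^2(\ucuvs)},
\end{align*}
which is exactly what is needed to feed into Lemma \ref{evolutionlemma}.

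First I would treat the extremal components $\a$ and $\aa$. For $\a$, I commute $\nabs_3 \a+\frac{1}{2}\trchb\,\a = -2\sld_2^*\b + \Ga\cdot\Rl$ with $\nabs^i$ using Proposition \ref{commu} for $i\leq s+2$, then apply the incoming transport estimate in Lemma \ref{evolutionlemma} starting from $H_{u_0}^{(\de,1)}$, on which Proposition \ref{Hu0} combined with the hypothesis $\a=0$ in \eqref{vanishingconditions} makes the initial term vanish. The resulting linear right-hand side $\int_{u_0}^u \|\b^{(i+1)}\|_{L^2(S_{u',\ub})}\,du'$ is then $\les \|\b^{(s+3)}\|_{L^2(\ucuvs)}\les\ep_0$ by Cauchy--Schwarz. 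The case of $\aa$ is fully symmetric: I use $\nabs_4\aa+\frac{1}{2}\trch\,\aa=2\sld_2^*\bb+\Ga\cdot\Rr$, integrate forward from $\Hb_\de$ using the initial bound from Proposition \ref{Hbde}, and control the right-hand side by the flux of $\bb$.

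Second, I would handle the intermediate components $\b$, $\rho$, $\si$, $\bb$ by choosing for each a Bianchi equation whose direction matches an available initial surface. For instance, $\nabs_4\b$ is integrated from $\Hb_\de$ (initial data $\|\b^{(s+3)}\|_{L^2(S_{u,\de})}\les\ep_0$ by Proposition \ref{Hbde}), with right-hand side $\sld_2\a$ controlled via the flux $\|\a^{(s+3)}\|_{L^2(\cuvs)}$; $\nabs_3\bb$ is integrated from $H_{u_0}$ (initial data from Proposition \ref{Hu0}), with right-hand side $\sld_2\aa$ controlled via the flux of $\aa$; for $(\rho,\si)$, either $\nabs_4$ from $\Hb_\de$ or $\nabs_3$ from $H_{u_0}$ works, with right-hand side $\sld_1\b$ or $\sld_1\bb$ absorbed by the corresponding flux.

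All nonlinear contributions are schematically of the form $\Ga\cdot R$ or $\Ga^{(i)}\cdot R + \Ga \cdot R^{(i)}$; these are handled by H\"older together with the Sobolev embedding in Proposition \ref{sobolevinequality} and the bootstrap \eqref{BB}, which supplies a factor $\les\ep$ in the $L^\infty$ slot, hence an overall quadratic smallness $\les \ep\cdot\ep \les \ep_0$ after the transport integration, exactly as in the flux estimate \eqref{Rlnonlinear}. I expect no genuine obstacle here; the only real care required is the combinatorial bookkeeping of which Bianchi pair to integrate in which direction so that every curvature factor appearing on the right-hand side is either already controlled at the current order by a previous step or is one order higher in angular derivatives and hence controlled by its flux via Cauchy--Schwarz. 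Because $|\ub-\de|,|u-u_0|\les 1$ throughout $V_*$, no weight degenerates and all constants are harmless.
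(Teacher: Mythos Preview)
Your proposal is correct and follows essentially the same approach as the paper: integrate each curvature component along a null direction using Lemma~\ref{evolutionlemma}, feed in the initial data from Propositions~\ref{Hbde}--\ref{Hu0}, and convert the flux bounds from Theorem~\ref{MM1} into $L^1$-in-transport-parameter bounds via Cauchy--Schwarz. The only cosmetic difference is the choice of direction for the intermediate components: the paper integrates all of $\b,\rho,\si,\bb$ via $\nabs_3$-transport from $H_{u_0}$ (using the $\ucuvs$-fluxes of $(\rho,\si)$, $\bb$, $\aa$ on the right-hand sides), whereas you integrate $\b$ via $\nabs_4$ from $\Hb_\de$ using the $\cuvs$-flux of $\a$; both work equally well since all the relevant fluxes and initial traces are already in hand.
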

\begin{proof}
We have from Proposition \ref{bianchiequations}
\begin{align*}
    \nabs_3\a+\frac{1}{2}\trchb\,\a=\b^{(1)}+\Ga\c\Rl.
\end{align*}
Differentiating it by $\dkbb^{i-1}$ and applying Proposition \ref{commu}, we deduce
\begin{align*}
    \nabs_3(\dkbb^{i-1}\a)+\frac{1}{2}\trchb(\dkbb^{i-1}\a)=\b^{(i)}+(\Ga\c\Rl)^{(i-1)}.
\end{align*}
Applying Lemma \ref{evolutionlemma} and Proposition \ref{estabab}, we infer for $i\leq s+3$
\begin{align*}
\|\dkbb^{i-1}\a\|_{L^2(S_{u,\ub})}&\les\|\dkbb^{i-1}\a\|_{L^2(S_{u_0,\ub})}+\int_{u_0}^u\|\b^{(i)}\|_{L^2(S_{u',\ub})}+\|(\Ga\c\Rl)^{(i-1)}\|_{L^2(S_{u',\ub})}du'\\
&\les\left(\int_{u_0}^u\|\b^{(i)}\|_{L^2(S_{u',\ub})}^2du'\right)^\frac{1}{2}+\ep^2\\
&\les\ep_0.
\end{align*}
The following estimates can be deduced similarly via $\nabs_3$-transport:
\begin{align*}
     \|(\b,\rho,\si,\bb)^{(s+2)}\|_{L^2(S_{u,\ub})}\les\ep_0.
\end{align*}
Finally, we have from Proposition \ref{bianchiequations}
\begin{align*}
    \nabs_4\aa+\frac{1}{2}\trch\,\aa=\bb^{(1)}+\Ga\c\Rr.
\end{align*}
Differentiating it by $\dkbb^{i-1}$ and applying Proposition \ref{commu}, we deduce
\begin{align*}
    \nabs_4(\dkbb^{i-1}\aa)+\frac{1}{2}\trch(\dkbb^{i-1}\aa)=\bb^{(i)}+(\Ga\c\Rr)^{(i-1)}.
\end{align*}
Applying Lemma \ref{evolutionlemma} and Proposition \ref{estbaba}, we infer for $i\leq s+3$
\begin{align*}
    \|\dkbb^{i-1}\aa\|_{L^2(S_{u,\ub})}&\les\|\dkbb^{i-1}\aa\|_{L^2(S_{u,\de})}+\int_{\de}^\ub\|\b^{(i)}\|_{L^2(S_{u,\ub'})}+\|(\Ga\c\Rl)^{(i-1)}\|_{L^2(S_{u,\ub'})}d\ub'\\
    &\les\ep_0+\left(\int_\de^\ub\|\b^{(i)}\|_{L^2(S_{u,\ub'})}^2d\ub'\right)^\frac{1}{2}+\ep^2\\
    &\les\ep_0.
\end{align*}
This concludes the proof of Proposition \ref{estL2R}.
\end{proof}
\subsubsection{Estimates for \texorpdfstring{$\om$}{}, \texorpdfstring{$\omb$}{} and \texorpdfstring{$\log\Om$}{}}
\begin{prop}\label{estomombOm}
    We have the following estimates:
    \begin{align*}
        \left\|(\om,\omb,\log\Om)^{(s+3)}\right\|_{L^2(S_{u,\ub})}\les\ep_0.
    \end{align*}
\end{prop}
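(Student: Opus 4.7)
The plan is to derive each of the three bounds from an appropriate first-order transport equation commuted up to $s+3$ angular derivatives, integrating along null generators and feeding in the initial-data estimates of Theorem \ref{MM0} together with the curvature fluxes of Theorem \ref{MM1}. The natural directions are $\nabs_4$ for $\omb$ and $\log\Om$, and $\nabs_3$ for $\om$, as dictated by the null structure equations in Proposition \ref{nulles} and the identities \eqref{nullidentities}.

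For $\omb$, I would start from the equation $\nabs_4\omb = \tfrac{1}{2}\rho + \Ga\c\Ga$ (read schematically off Proposition \ref{nulles}, grouping the $\eta,\etab,\om,\omb$ quadratic terms into $\Ga\c\Ga$). Commuting with $\dkbb^i$ and using Proposition \ref{commu} produces
\[
\nabs_4(\dkbb^i\omb) \;=\; \rho^{(i)} + (\Ga\c\Ga)^{(i)},
\]
after which Lemma \ref{evolutionlemma}(1) yields
\[
\|\dkbb^i\omb\|_{L^2(S_{u,\ub})} \;\les\; \|\dkbb^i\omb\|_{L^2(S_{u,\de})} + \int_\de^\ub \|\rho^{(i)}\|_{L^2(S_{u,\ub'})}\,d\ub' + \int_\de^\ub \|(\Ga\c\Ga)^{(i)}\|_{L^2(S_{u,\ub'})}\,d\ub'.
\]
The boundary contribution on $\Hb_\de$ is $O(\ep_0)$ by Theorem \ref{MM0}. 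The linear curvature piece is bounded by Cauchy--Schwarz in $\ub'$ and the flux estimate $\|\rho^{(i)}\|_{L^2(\cuvs)}\les\ep_0$ provided by Proposition \ref{estrbrb}. The quadratic piece, after placing one factor in $L^\infty$ via Proposition \ref{sobolevinequality} and the other in $L^2(S)$, is $O(\ep^2)=O(\ep_0^{4/3})\ll\ep_0$ by the bootstrap \eqref{BB} and Lemma \ref{decayGa}.

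The $\om$ bound is obtained by the same template, but from $\nabs_3\om = \tfrac{1}{2}\rho + \Ga\c\Ga$ integrated in $u$ via Lemma \ref{evolutionlemma}(2). Here the initial data on $H_{u_0}$ vanishes identically because $\om\equiv 0$ on $H_{u_0}^{(\de,1)}$ by \eqref{vanishingconditions}, and hence $\dkbb^i\om=0$ there for every $i\ge 0$ since $\dkbb$ consists of angular derivatives tangent to $H_{u_0}$. The bulk terms are treated as before, now using $\|\rho^{(i)}\|_{L^2(\ucuvs)}\les\ep_0$ from Proposition \ref{estrbrb}. Finally, the estimate for $\log\Om$ follows by integrating $\nabs_4\log\Om = -2\om$ (from \eqref{nullidentities}) in the $e_4$-direction starting from $\Hb_\de$, using Lemma \ref{evolutionlemma}(1); the boundary value $\|\log\Om\|\les\ep_0$ on $\Hb_\de$ is furnished by Theorem \ref{MM0}, and the forcing $-2\om$ is now $O(\ep_0)$ by the estimate just proved.

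The only real subtlety is commutator bookkeeping. Each application of $\dkbb=r\sld$ to a $\nabs_3$- or $\nabs_4$-equation produces, via Proposition \ref{commu}, additional terms of the schematic form $\Ga\c\nabs(\cdot) + \Ga^{(1)}\c(\cdot)$, which iterated $s+3$ times generate a cascade of products $\Ga^{(j)}\c(\Ga\mbox{ or }\Rl)^{(i-j)}$ with $j+(i-j)\le s+3$. I would handle these by always placing the higher-order factor in $L^2(S)$ and the lower-order factor (of order $\le s+1$) in $L^\infty$ via Sobolev (Proposition \ref{sobolevinequality}) and the lower-order portion of Lemma \ref{decayGa}; all such nonlinear contributions then close at size $O(\ep^2)\ll\ep_0$ and do not interfere with the linear structure driving the bound.
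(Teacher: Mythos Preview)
Your approach is essentially identical to the paper's: transport $\om$ via $\nabs_3$, transport $\omb$ and $\log\Om$ via $\nabs_4$, commute with $\dkbb^i$, and feed in the initial-data bounds together with the curvature fluxes and the bootstrap. One small slip: for the $\om$ estimate you need $\|\rho^{(i)}\|_{L^2(\ucuvs)}\les\ep_0$, which comes from Proposition~\ref{estbrbr} (the Bianchi pair $(\b,(\rho,-\si))$), not from Proposition~\ref{estrbrb}; the latter controls $\rho$ only along $\cuvs$.
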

\begin{proof}
We have from Proposition \ref{nulles}
    \begin{align*}
        \nabs_3\om=\frac{1}{2}\rho+\Ga\c\Ga,\qquad\quad \nabs_4\omb=\frac{1}{2}\rho+\Ga\c\Ga.
    \end{align*}
Differentiating them by $\dkbb^i$ and applying Proposition \ref{commu}, we infer
    \begin{align*}
        \nabs_3(\dkbb^i\om)=\rho^{(i)}+\Ga\c\Ga^{(i)},\qquad\quad \nabs_4(\dkbb^i\omb)=\rho^{(i)}+\Ga\c\Ga^{(i)}.
    \end{align*}
Applying Lemma \ref{evolutionlemma}, we deduce for $i\leq s+3$
\begin{align*}
\|\dkbb^i\om\|_{L^2(S_{u,\ub})}&\les\|\dkbb^i\om\|_{L^2(S_{u_0,\ub})}+\int_{u_0}^u\|\rho^{(i)}\|_{L^2(S_{u',\ub})}+\|\Ga\|_{L^\infty(S_{u',\ub})}\|\Ga^{(i)}\|_{L^2(S_{u',\ub})}du'\\
&\les\left(\int_{u_0}^u\|\rho^{(i)}\|^2_{L^2(S_{u',\ub})}du'\right)^\frac{1}{2}+\ep^2\\
&\les\ep_0,
\end{align*}
where we used Proposition \ref{estbrbr} at the last step. Similarly, we have for $i\leq s+3$
\begin{align*}
\|\dkbb^i\omb\|_{L^2(S_{u,\ub})}&\les\|\dkbb^i\omb\|_{L^2(S_{u,\de})}+\int_{u_0}^u\|\rho^{(i)}\|_{L^2(S_{u,\ub'})}+\|\Ga\|_{L^\infty(S_{u,\ub'})}\|\Ga^{(i)}\|_{L^2(S_{u,\ub'})}d\ub'\\
&\les\left(\int_{\de}^\ub\|\rho^{(i)}\|^2_{L^2(S_{u,\ub'})}d\ub'\right)^\frac{1}{2}+\ep^2\\
&\les\ep_0,
\end{align*}
where we used Proposition \ref{estrbrb} at the last step. Thus, we obtain
\begin{align}\label{omombest}
    \|(\om,\omb)^{(s+3)}\|_{L^2(S_{u,\ub})}\les\ep_0.
\end{align}
Finally, we have from \eqref{nullidentities}
\begin{align*}
    \nabs_4\log\Om=-2\om.
\end{align*}
Differentiating them by $\dkbb^i$ and applying Proposition \ref{commu}, we infer
\begin{align*}
    \nabs_4(\dkbb^i\log\Om)=\om^{(i)}+\Ga\c\Ga^{(i)}.
\end{align*}
Applying Lemma \ref{evolutionlemma} and \eqref{omombest}, we infer for $i\leq s+3$
\begin{align*}
\|(\log\Om)^{(i)}\|_{L^2(S_{u,\ub})}&\les\|(\log\Om)^{(i)}\|_{L^2(S_{u,\de})}+\int_\de^\ub\|\om^{(i)}\|_{L^2(S_{u,\ub'})}+\|\Ga\|_{L^\infty(S_{u,\ub'})}\|\Ga^{(i)}\|_{L^2(S_{u,\ub'})}d\ub'\\
&\les\ep_0.
\end{align*}
This concludes the proof of Proposition \ref{estomombOm}.
\end{proof}
\subsubsection{Estimates for \texorpdfstring{$\trcht$}{} and \texorpdfstring{$\trchbt$}{}}
\begin{prop}\label{esttrchttrchbt}
    We have the following estimates:
    \begin{align*}
        \left\|(\trcht,\trchbt)^{(s+3)}\right\|_{L^2(S_{u,\ub})}\les\ep_0.
    \end{align*}
\end{prop}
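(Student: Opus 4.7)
The plan is to derive scalar transport equations for $\trcht$ and $\trchbt$, modeled on the identity $\nabs_4\trchc=\Gab\c\Gab$ used in Proposition~\ref{esttrchc}, and then apply the $e_4$- and $e_3$-transport estimates of Lemma~\ref{evolutionlemma} in the same spirit as the proof of Proposition~\ref{estomombOm}. Starting from $\nabs_4\trch+\frac{1}{2}(\trch)^2=-|\hch|^2-2\om\trch$ in Proposition~\ref{nulles} and using the identities $e_4(r)=\Om^{-1}$ and $e_3(r)=-\Om^{-1}$ (which follow from $N=\pr_\ub$ and $\Nb=\pr_u+\bbb$), a short manipulation based on $\trch=\trcht+\frac{2}{r}$ and $\trchb=\trchbt-\frac{2}{r}$ yields
\begin{align*}
\nabs_4\trcht+\frac{1}{2}\trch\,\trcht&=-\frac{\trcht}{r}+\frac{2(\Om^{-1}-1)}{r^2}-|\hch|^2-2\om\,\trch,\\
\nabs_3\trchbt+\frac{1}{2}\trchb\,\trchbt&=\frac{\trchbt}{r}+\frac{2(\Om^{-1}-1)}{r^2}-|\hchb|^2-2\omb\,\trchb.
\end{align*}

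I would then commute these equations with $\dkbb^i$ for $i\leq s+3$ using Proposition~\ref{commu}; the resulting commutator terms are bilinear in bootstrap quantities and hence harmless. Applying Lemma~\ref{evolutionlemma} with $\la_0=\frac{1}{2}$ and integrating in $\ub\in[\de,\ub]$ for $\trcht$ (with initial data on $S_{u,\de}$ controlled by Proposition~\ref{Hbde}) and in $u\in[u_0,u]$ for $\trchbt$ (with initial data on $S_{u_0,\ub}$ controlled by Proposition~\ref{Hu0}), each source term on the right-hand side can be bounded by $\ep_0$: the factor $(\Om^{-1}-1)/r^2$ reduces to $\log\Om$ up to bounded multipliers and is $\les\ep_0$ by Proposition~\ref{estomombOm}; the product $\om\,\trch$ splits into $\om\trcht+2\om/r$ and is bounded by $\ep\c\ep_0+\ep_0\les\ep_0$ via the $\om$-estimate of Proposition~\ref{estomombOm} combined with the bootstrap on $\trcht$; the quadratic term $|\hch|^2$ (resp.~$|\hchb|^2$) is estimated in $L^2(S)$ by $\|\hch\|_{L^\infty(S)}\|\hch^{(i)}\|_{L^2(S)}\les\ep^2=\ep_0^{4/3}\les\ep_0$ via Proposition~\ref{sobolevinequality}; and the linear term $\trcht/r$ (resp.~$\trchbt/r$) is absorbed back into the LHS via Gr\"onwall, with bounded Gr\"onwall factor since $r\simeq 1$ throughout $V_*$.

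The key subtlety is that $\hch,\hchb\notin\Ga$ (see Definition~\ref{gamma}), so these quantities are not directly bootstrapped by $\Ok\leq\ep$. To close the quadratic estimate on $|\hch|^2$ at top order, I would invoke the Codazzi equations \eqref{codazzi}, which express $\sdivs\hch$ (resp.~$\sdivs\hchb$) as a linear combination of $\nabs\trch$ and $\b$ (resp.~$\nabs\trchb$ and $\bb$) up to quadratic corrections, and then apply the elliptic Hodge estimates of Proposition~\ref{ellipticLp}. Combined with the bootstrap $\Ok\leq\ep$ and the improved curvature bound $\Rk\les\ep_0$ from Theorem~\ref{MM1}, this produces $\|\hch^{(\leq s+3)}\|_{L^2(S_{u,\ub})}\les\ep$, which together with $\ep^2\ll\ep_0$ closes the argument.
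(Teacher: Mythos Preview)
Your proposal is correct and follows essentially the same strategy as the paper: derive transport equations for $\trcht$ and $\trchbt$ from the Raychaudhuri equations, commute with $\dkbb^i$, and apply Lemma~\ref{evolutionlemma} together with the $(\om,\omb,\log\Om)$ bounds from Proposition~\ref{estomombOm}. The paper arranges the coefficient as $\trch\,\trcht$ (i.e.\ $\la_0=1$) rather than $\frac{1}{2}\trch\,\trcht$, which avoids your Gr\"onwall step, and it absorbs $|\hch|^2$ into the schematic $\Ga\c\Ga$; your explicit treatment of $|\hch|^2$ via Codazzi and the bootstrap is in fact more careful, since $\hch\notin\Ga$ as you correctly observe, and is precisely the mechanism the paper uses later in Proposition~\ref{hchhchbest}.
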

\begin{proof}
We have from Proposition \ref{nulles}
    \begin{align*}
        \nabs_4\trch=-\frac{1}{2}(\trch)^2-2\om\trch+\Ga\c\Ga,\\
        \nabs_3\trchb=-\frac{1}{2}(\trchb)^2-2\omb\trchb+\Ga\c\Ga.
    \end{align*}
    Recalling that $r=\ub-u$, we have
    \begin{align*}
        e_4\left(\frac{2}{r}\right)&=-\frac{2}{r^2}e_4(r)=-\frac{2}{\Om r^2},\qquad\quad e_3\left(\frac{2}{r}\right)=-\frac{2}{r^2}e_3(r)=\frac{2}{\Om r^2}.
    \end{align*}
    Hence, we obtain
    \begin{align*}
        \nabs_4\trcht+\trch\,\trcht&=(\om,\log\Om)^{(0)}+\Ga\c\Ga,\\
        \nabs_3\trchbt+\trchb\,\trchbt&=(\omb,\log\Om)^{(0)}+\Ga\c\Ga.
    \end{align*}
    Differentiating them by $\dkbb^i$ and applying Proposition \ref{commu}, we infer
    \begin{align*}
    \nabs_4(\dkbb^i\trcht)+\trch(\dkbb^i\trcht)&=(\om,\log\Om)^{(i)}+\Ga\c\Ga^{(i)},\\
    \nabs_3(\dkbb^i\trchbt)+\trchb(\dkbb^i\trchbt)&=(\omb,\log\Om)^{(i)}+\Ga\c\Ga^{(i)}.
    \end{align*}
    Applying Lemma \ref{evolutionlemma} and Proposition \ref{estomombOm}, we infer for $i\leq s+3$
\begin{align*}
\|\dkbb^i\trcht\|_{L^2(S_{u,\ub})}&\les\|\dkbb^i\trcht\|_{L^2(S_{u,\de})}+\int_{\de}^u\|(\om,\log\Om)^{(i)}\|_{L^2(S_{u,\ub'})}d\ub'\\
&+\int_{\de}^u\|\Ga\|_{L^\infty(S_{u,\ub'})}\|\Ga^{(i)}\|_{L^2(S_{u,\ub'})}d\ub'\\
&\les\ep_0+\ep^2\les\ep_0.
\end{align*}
Similarly, we also have for $i\leq s+3$
\begin{align*}
    \|\dkbb^i\trchbt\|_{L^2(S_{u,\ub})}\les\ep_0.
\end{align*}
This concludes the proof of Proposition \ref{esttrchttrchbt}.
\end{proof}
\subsubsection{Estimates for \texorpdfstring{$\eta$}{}, \texorpdfstring{$\etab$}{} and \texorpdfstring{$\ze$}{}}
\begin{prop}\label{estetaetab}
We have the following estimates:
\begin{align*}
    \left\|(\eta,\etab,\ze)^{(s+3)}\right\|_{L^2(S_{u,\ub})}\les\ep_0.
\end{align*}
\end{prop}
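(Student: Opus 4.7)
The plan is to derive coupled transport estimates for $\eta$ along $e_4$ and $\etab$ along $e_3$, then close via a two-dimensional Grönwall argument. Starting from the null structure equations $\nabs_4\eta = -\chi\c(\eta-\etab)-\b$ and $\nabs_3\etab = -\chib\c(\etab-\eta)+\bb$ of Proposition \ref{nulles}, I decompose $\chi = \hch + \frac{1}{2}\trch\slg$ and $\chib = \hchb + \frac{1}{2}\trchb\slg$ and rewrite them schematically as
\begin{align*}
\nabs_4\eta + \frac{1}{2}\trch\,\eta &= \frac{1}{2}\trch\,\etab + \Ga\c\Ga - \b,\\
\nabs_3\etab + \frac{1}{2}\trchb\,\etab &= \frac{1}{2}\trchb\,\eta + \Ga\c\Ga + \bb.
\end{align*}

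I would then differentiate both equations by $\dkbb^i$ for $i \leq s+3$ using the commutator formulas of Proposition \ref{commu}, and apply Lemma \ref{evolutionlemma} (with $\la_0 = 1/2 \geq 0$) along $e_4$ and $e_3$ respectively, producing the coupled integral inequalities
\begin{align*}
\|\dkbb^i\eta\|_{L^2(S_{u,\ub})} &\les \|\dkbb^i\eta\|_{L^2(S_{u,\de})} + \int_\de^\ub \Big(\|\dkbb^i\etab\|_{L^2(S_{u,\ub'})} + \|\b^{(i)}\|_{L^2(S_{u,\ub'})} + \|(\Ga\c\Ga)^{(i)}\|_{L^2(S_{u,\ub'})}\Big) d\ub',\\
\|\dkbb^i\etab\|_{L^2(S_{u,\ub})} &\les \|\dkbb^i\etab\|_{L^2(S_{u_0,\ub})} + \int_{u_0}^u \Big(\|\dkbb^i\eta\|_{L^2(S_{u',\ub})} + \|\bb^{(i)}\|_{L^2(S_{u',\ub})} + \|(\Ga\c\Ga)^{(i)}\|_{L^2(S_{u',\ub})}\Big) du'.
\end{align*}
The initial data on $\Hb_\de$ and $H_{u_0}$ are bounded by $\ep_0$ via Theorem \ref{MM0}. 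The curvature integrals are controlled by Cauchy--Schwarz combined with the flux bounds $\|\b^{(s+3)}\|_{L^2(\cuvss)} + \|\bb^{(s+3)}\|_{L^2(\ucuvss)} \les \ep_0$ from Theorem \ref{MM1}. The quadratic terms obey $\|(\Ga\c\Ga)^{(i)}\|_{L^2} \les \|\Ga^{(2)}\|_{L^2}\|\Ga^{(i)}\|_{L^2} \les \ep^2 \les \ep_0$ by Proposition \ref{sobolevinequality} and the bootstrap $\Ok \leq \ep$, since $\ep = \ep_0^{2/3}$ with $\ep_0\ll 1$.

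Setting $Z^{(i)}(u,\ub) := \|\dkbb^i\eta\|_{L^2(S_{u,\ub})} + \|\dkbb^i\etab\|_{L^2(S_{u,\ub})}$, the two inequalities combine into
\begin{align*}
Z^{(i)}(u,\ub) \les \ep_0 + \int_\de^\ub Z^{(i)}(u,\ub')\,d\ub' + \int_{u_0}^u Z^{(i)}(u',\ub)\,du'.
\end{align*}
Bounding $Z^{(i)}$ by the monotone envelope $g(t):=\sup\{Z^{(i)}(u',\ub'):u'-u_0\leq t,\ \ub'-\de\leq t\}$, which satisfies a standard one-dimensional Grönwall inequality over the bounded interval $t\in[0,3/2]$, yields $Z^{(i)}(u,\ub) \les \ep_0$ throughout $V_*$. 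Finally, using the identity $\ze = \frac{1}{2}(\eta - \etab)$ from \eqref{nullidentities}, we conclude $\|\ze^{(s+3)}\|_{L^2(S_{u,\ub})} \les \ep_0$ as well.

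The main obstacle is that, unlike the single-transport arguments used in Propositions \ref{estomombOm}--\ref{esttrchttrchbt}, the equations for $\eta$ and $\etab$ couple to each other at leading order through the $\frac{1}{2}\trch\,\etab$ and $\frac{1}{2}\trchb\,\eta$ source terms (which come from the $\frac{1}{r}$ part of $\chi$ and $\chib$ and are \emph{not} small). Thus $\eta$ and $\etab$ must be estimated simultaneously via a coupled two-direction Grönwall closure rather than by direct integration along a single null cone.
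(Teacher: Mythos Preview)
Your approach is correct and takes a genuinely different route from the paper. The paper sidesteps the cross-coupling you identified by introducing the modified mass aspect functions $[\mu]:=-\sdivs\eta-\rho+\frac{1}{2}\hch\c\hchb+\frac{1}{4}\trch\trchb$ and its conjugate $[\mub]$, which satisfy \emph{decoupled} transport equations $\nabs_4[\mu]+\trch[\mu]=\rho^{(0)}+\Ga\c\Ga^{(1)}$ and $\nabs_3[\mub]+\trchb[\mub]=\rho^{(0)}+\Ga\c\Ga^{(1)}$; the $\frac{1}{4}\trch\trchb$ shift is precisely what cancels the $O(1/r)$ cross terms. After estimating $\dkbb^{i-1}[\mu]$ and $\dkbb^{i-1}[\mub]$ by single-direction transport, the paper recovers $\eta^{(s+3)}$ via the div--curl system $(\sdivs\eta,\curls\eta)$ and the elliptic estimate of Proposition~\ref{ellipticLp}. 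Your direct coupled-transport-plus-Gr\"onwall argument is more elementary and avoids this algebraic trick, at the cost of a (harmless, since $V_*$ is bounded) exponential constant.

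One technical caveat: your envelope $g(t):=\sup\{Z^{(i)}(u',\ub'):u'-u_0\leq t,\ \ub'-\de\leq t\}$ does \emph{not} reduce to a one-dimensional Gr\"onwall as claimed. Plugging the integral inequality back into this envelope yields only $g(t)\les\ep_0+Ct\,g(t)$ (since, e.g., when $u-u_0=t$ the bound $Z(u,\ub')\leq g(t)$ is sharp for every $\ub'$), and this cannot be absorbed once $Ct\gtrsim 1$. Two easy fixes: either replace the square by the triangle $\{(u'-u_0)+(\ub'-\de)\leq t\}$, whose envelope genuinely satisfies $g(t)\les\ep_0+C\int_0^t g(s)\,ds$; or, more cleanly, substitute the $\etab$-inequality into the $\eta$-inequality to obtain $\|\dkbb^i\eta\|(u,\ub)\les\ep_0+C\int_{u_0}^u\int_\de^\ub\|\dkbb^i\eta\|(u',\ub')\,d\ub'\,du'$, then apply a standard one-dimensional Gr\"onwall to $\Phi(u):=\sup_{\ub'\in[\de,1]}\|\dkbb^i\eta\|(u,\ub')$.
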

\begin{proof}
We recall the following null structure equations from Proposition \ref{nulles}:
\begin{align*}
    \nabs_4\eta&=-\chi\c(\eta-\etab)-\b,\\
    \nabs_4\etab&=-\chib\c(\etab-\eta)+\bb.
\end{align*}
We define the mass aspect functions
\begin{align}
    \begin{split}
    \mu&:=-\sdivs\eta-\rho+\frac{1}{2}\hch\c\hchb,\\
    \mub&:= -\sdivs\etab-\rho+\frac{1}{2}\hch\c\hchb.\label{defmu}
    \end{split}
\end{align}
By a direct computation, we obtain\footnote{See for example Lemma 4.3.1 in \cite{kn}.}
\begin{align}
\begin{split}\label{computationlongmu}
\nabs_4\mu+\trch\,\mu&=\rho^{(0)}+\frac{1}{2}\tr\chi\,\mub+\Ga\c\Ga^{(1)}, \\
\nabs_3\mub+\trchb\,\mub&=\rho^{(0)}+\frac{1}{2}\trchb\,\mu+\Ga\c\Ga^{(1)}.
\end{split}
\end{align}
We also introduce the following modifications of $\mu$ and $\mub$:
\begin{equation}\label{mumc}
    [\mu]:=\mu+\frac{1}{4}\trch\trchb,\qquad [\mub]:=\mub+\frac{1}{4}\trch\trchb.
\end{equation}
Applying \eqref{defmu} and Proposition \ref{nulles}, we have\footnote{See for example (5.37) in \cite{Shen22} for the detailed computation.}
\begin{align*}
    \nabs_4[\mu]+\trch\,[\mu]&=\rho^{(0)}+\Ga\c\Ga^{(1)},\\
    \nabs_3[\mub]+\trchb\,[\mub]&=\rho^{(0)}+\Ga\c\Ga^{(1)}.
\end{align*}
Differentiating them by $\dkbb^i$ and applying Proposition \ref{commu}, we infer
\begin{align*}
    \nabs_4(\dkbb^{i-1}[\mu])+\trch(\dkbb^{i-1}[\mu])&=\rho^{(i-1)}+\Ga\c\Ga^{(i)},\\
    \nabs_3(\dkbb^{i-1}[\mub])+\trchb(\dkbb^{i-1}[\mub])&=\rho^{(i-1)}+\Ga\c\Ga^{(i)}.
\end{align*}
Applying Lemma \ref{evolutionlemma} and Proposition \ref{estrbrb}, we obtain for $2\leq i\leq s+3$
\begin{align}
\begin{split}\label{dkbmuest}
    \|\dkbb^{i-1}[\mu]\|_{L^2(S_{u,\ub})}&\les\|\dkbb^{i-1} [\mu]\|_{L^2(S_{u,\de})}+\int_\de^\ub \|\rho^{(i-1)}\|_{L^2(S_{u,\ub'})}d\ub'\\
    &+\int_\de^\ub\|\Ga\|_{L^\infty(S_{u,\ub'})}\|\Ga^{(i)}\|_{L^2(S_{u,\ub'})}d\ub'\\
    &\les\ep_0+\ep^2+\left(\int_\de^\ub\|\rho^{(i-1)}\|_{L^2(S_{u,\ub'})}^2 d\ub'\right)^\frac{1}{2}\\
    &\les\ep_0.
\end{split}
\end{align}
Recall that we have from \eqref{torsion}, \eqref{defmu} and \eqref{mumc}
\begin{align*}
    \sdivs\eta=-[\mu]+\frac{1}{4}\trch\trchb-\rho+\frac{1}{2}\hch\c\hchb,\qquad\curls\eta=\si-\frac{1}{2}\hch\wedge\hchb.
\end{align*}
Applying Proposition \ref{ellipticLp}, we infer
\begin{align*}
    \|\eta^{(s+3)}\|_{L^2(S_{u,\ub})}&\les \|(\trcht,\trchbt,\rho,\si)^{(s+2)}\|_{L^2(S_{u,\ub})}+\|(\dkbb[\mu])^{(s+1)}\|_{L^2(S_{u,\ub})}\\
    &+\|\Ga\|_{L^\infty(S_{u,\ub})}\|\Ga^{(s+2)}\|_{L^2(S_{u,\ub})}\\
    &\les\ep_0+\ep^2\les\ep_0,
\end{align*}
where we used \eqref{dkbmuest} and Propositions \ref{estL2R} and \ref{esttrchttrchbt} at the last step. The estimate for $\etab$ is similar and is left to the reader. Combining with \eqref{nullidentities}, this concludes the proof of Proposition \ref{estetaetab}.
\end{proof}
\subsubsection{Estimates for \texorpdfstring{$\hch$}{} and \texorpdfstring{$\hchb$}{}}
\begin{prop}\label{hchhchbest}
We have the following estimates:
    \begin{align*}
        \left\|(\hch,\hchb)^{(s+3)}\right\|_{L^2(S_{u,\ub})}\les\ep_0.
    \end{align*}
\end{prop}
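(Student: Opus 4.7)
The approach is to reduce the estimates on $\hch$ and $\hchb$ to elliptic estimates on $2$--spheres via the Codazzi equations \eqref{codazzi}, which schematically read
\begin{align*}
\sld_2 \hch &= \tfrac{1}{2}\nabs\trcht - \b + \Ga\c\Ga, \\
\sld_2 \hchb &= \tfrac{1}{2}\nabs\trchbt + \bb + \Ga\c\Ga,
\end{align*}
where $\nabs\trch = \nabs\trcht$ and $\nabs\trchb = \nabs\trchbt$ since $r$ is a function of $u,\ub$ only. Here I exploit that all quantities on the right-hand sides, together with all Gauss-type curvature terms appearing in the commutators below, have already been controlled in previous propositions in this section.

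The plan is to iterate the Hodge estimate of Proposition \ref{ellipticLp} for $\sld_2$ on $\sk_2$. First, apply $\dkbb^{i}$ to the Codazzi identity and commute using Proposition \ref{commu} to obtain
\begin{align*}
\sld_2(\dkbb^{i}\hch) &= (\dkbb\trcht)^{(i)} + \b^{(i)} + (\Ga\c\Ga)^{(i)}, \\
\sld_2(\dkbb^{i}\hchb) &= (\dkbb\trchbt)^{(i)} + \bb^{(i)} + (\Ga\c\Ga)^{(i)},
\end{align*}
for $0\le i\le s+2$, where the commutator error is absorbed in the $\Ga\c\Ga^{(i+1)}$-type terms. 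Then Proposition \ref{ellipticLp} gives
$$
\|\hch^{(s+3)}\|_{L^2(S_{u,\ub})} \les \|\trcht^{(s+3)}\|_{L^2(S_{u,\ub})} + \|\b^{(s+2)}\|_{L^2(S_{u,\ub})} + \|(\Ga\c\Ga)^{(s+2)}\|_{L^2(S_{u,\ub})},
$$
and similarly for $\hchb$ with $\trchbt$ and $\bb$ on the right.

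To close the estimates, use Proposition \ref{esttrchttrchbt} to bound $\|\trcht^{(s+3)},\trchbt^{(s+3)}\|_{L^2(S_{u,\ub})} \les \ep_0$, Proposition \ref{estL2R} to bound $\|\b^{(s+2)},\bb^{(s+2)}\|_{L^2(S_{u,\ub})} \les \ep_0$, and Lemma \ref{decayGa} together with the Sobolev inequality of Proposition \ref{sobolevinequality} to bound
$$
\|(\Ga\c\Ga)^{(s+2)}\|_{L^2(S_{u,\ub})} \les \|\Ga\|_{L^\infty(S_{u,\ub})}\|\Ga^{(s+2)}\|_{L^2(S_{u,\ub})} \les \ep^2 \les \ep_0,
$$
where the final inequality uses $\ep = \ep_0^{2/3}$ together with $\ep_0 \ll 1$ from $a \gg 1$.

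The main technical point is the bookkeeping of the regularity budget: the Codazzi equation trades one angular derivative of $\hch$ for one angular derivative of $\trch$ and a curvature term, so gaining $s+3$ derivatives of $\hch$ costs $s+3$ derivatives of $\trcht$ and $s+2$ of $\b$, both of which are exactly at the threshold we have already established. No genuinely new analytic idea is required beyond this regularity matching, but care is needed to ensure that all the commutator errors $[\dkbb^i, \sld_2]$ absorb into the cubic nonlinearity $\Ga\c\Ga^{(i+1)}$, which is handled by Proposition \ref{commu} and the scaling of $r$.
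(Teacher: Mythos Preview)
Your approach is essentially the same as the paper's: both use the Codazzi equations together with the elliptic estimate of Proposition \ref{ellipticLp}, and both close by invoking the already-established bounds on $\trcht,\trchbt$ (Proposition \ref{esttrchttrchbt}) and on $\b,\bb$ (Proposition \ref{estL2R}).

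One small omission: in writing the Codazzi equations schematically as $\sld_2\hch=\tfrac12\nabs\trcht-\b+\Ga\c\Ga$, you have dropped a \emph{linear} term in $\ze$. Indeed, $\ze\c\tfrac12\trch=\tfrac12\ze\,\trcht+\tfrac{1}{r}\ze$, and since $r\simeq 1$ in the transition region the second summand is genuinely linear, not quadratic. The paper records this explicitly as a $\ze^{(0)}$ term on the right-hand side and invokes Proposition \ref{estetaetab} to control it by $\ep_0$. Without that citation your argument would only bound this term by $\ep$ via the bootstrap (Lemma \ref{decayGa}), which does not suffice. Adding Proposition \ref{estetaetab} to your list of inputs fixes this immediately.
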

\begin{proof}
    We have from \eqref{codazzi}
    \begin{align*}
        \sdivs\hch&=\trcht^{(1)}+\b^{(0)}+\ze^{(0)}+\Ga\c\Ga,\\
        \sdivs\hchb&=\trchbt^{(1)}+\bb^{(0)}+\ze^{(0)}+\Ga\c\Ga.
    \end{align*}
    Applying Propositions \ref{ellipticLp}, \ref{estL2R}, \ref{esttrchttrchbt} and \ref{estetaetab}, we obtain
    \begin{align*}
        \|(\hch,\hchb)^{(s+3)}\|_{L^2(S_{u,\ub})}&\les\|(\trcht,\trchbt)^{(s+3)}\|_{L^2(S_{u,\ub})}+\|(\b,\bb,\ze)^{(s+2)}\|_{L^2(S_{u,\ub})}\\
        &+\|\Ga\|_{L^\infty(S_{u,\ub})}\|\Ga^{(s+2)}\|_{L^2(S_{u,\ub})}\\
        &\les\ep_0+\ep^2\les\ep_0.
    \end{align*}
    This concludes the proof of Proposition \ref{hchhchbest}.
\end{proof}
Combining Propositions \ref{estL2R}--\ref{hchhchbest}, this concludes the proof of Theorem \ref{MM2}.
\subsection{Proof of Theorem \ref{mainstability}}
We now use Theorems \ref{MM0}--\ref{MM2} to prove Theorem \ref{mainstability}.
\begin{df}\label{bootstrapstability}
For any $u_*\in[u_0,-\frac{1}{2}]$, let $\aleph(u_*)$ be the set of spacetimes $V(u_*,1)$ associated with a double null foliation $(u,\ub)$ in which we have the following bounds:
\begin{align}
    \Ok\leq\ep,\qquad\quad\Rk\leq\ep.\label{BB2}
\end{align}
\end{df}
\begin{df}\label{defbootstability}
We denote by $\mathcal{U}$ the set of values $u_*$ such that $\aleph(u_*)\ne\emptyset$.
\end{df}
The assumptions of Theorem \ref{mainstability} and Theorem \ref{MM0} imply that
\begin{equation*}
    \Ok_{(0)}+\Rk_{(0)}\les\ep_0.
\end{equation*}
Combining with the local existence theorem, we deduce that \eqref{B2} holds if $u_*$ is sufficiently close to $u_0$. So, we have $\mathcal{U}\ne\emptyset$.\\ \\
Define $u_*$ as the supremum of the set $\mathcal{U}$. We assume by contradiction that $u_*<-\frac{1}{2}$. In particular, we may assume $u_*\in\mathcal{U}$. We consider the region $V(u_*,1)$. Applying Theorem \ref{MM1}, we obtain
\begin{equation*}
    \Rk\les\ep_0.
\end{equation*}
Then, we apply Theorem \ref{MM2} to deduce
\begin{equation*}
    \Ok\les\ep_0.
\end{equation*}
Applying local existence results,\footnote{See Theorem 1 in \cite{luk}.} we can extend $V(u_*,1)$ to $V(u_*+\nu,1)$ for a $\nu$ sufficiently small. We denote $\widetilde{\Ok}$ and $\widetilde{\Rk}$ the norms in the extended region. We have
\begin{equation*}
    \widetilde{\Ok}\les\ep_0,\qquad\qquad \widetilde{\Rk}\les\ep_0,
\end{equation*}
as a consequence of continuity. We deduce that $V(u_*+\nu,1)$ satisfies all the properties in Definition \ref{bootstrapstability}, and so $\aleph(u_*+\nu)\ne\emptyset$, which is a contradiction. Thus, we have $u_*=-\frac{1}{2}$. Moreover, we have
\begin{equation*}
    \Ok\les\ep_0,\qquad\quad \Rk\les\ep_0\qquad\mbox{ in }\;\; V\left(-\frac{1}{2},1\right).
\end{equation*}
This concludes the proof of Theorem \ref{mainstability}.
\subsection{Short-pulse annulus and barrier annulus}
We first prove the following consequence of Theorem \ref{mainstability}.
\begin{prop}\label{estCs}
    Under the assumptions of Theorem \ref{mainstability}, we have:
    \begin{align*}
    \|\g-\etabf\|_{C^s(\V,\etabf)}\les\ep_0,
\end{align*}
where $\etabf$ denotes the Minkowski metric and $\V:=V(-\frac{1}{2},1)$ denotes the transition region.
\end{prop}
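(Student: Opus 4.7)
The goal is to upgrade the $L^2(S_{u,\ub})$ control furnished by Theorem \ref{mainstability} into a pointwise $C^s$ estimate on $\g-\etabf$ in a Minkowskian Cartesian chart on $\V$. I would proceed in three steps.

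First, I would apply the Sobolev embedding of Proposition \ref{sobolevinequality} on each sphere $S_{u,\ub}\subset\V$ to convert the bounds $\|\Ga^{(s+3)}\|_{L^2(S_{u,\ub})}\les\ep_0$ and $\|(\a,\b,\rho,\si,\bb,\aa)^{(s+2)}\|_{L^2(S_{u,\ub})}\les\ep_0$ into pointwise $L^\infty(S_{u,\ub})$ estimates of the same order, at the cost of two angular derivatives. In particular this yields pointwise control of size $\ep_0$ on $\log\Om$, $\bbb^A$, and $\slgc_{AB}:=\slg_{AB}-r^2\slgo_{AB}$ together with their angular derivatives up to order $s+1$.

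Second, I would use the transport equations
\[
\nabs_4\log\Om=-2\om, \qquad \nabs_4\bbb^A=-4\Om^2\ze^A, \qquad \nabs_4\slg_{AB}=2\Om\chi_{AB},
\]
their $\nabs_3$-counterparts extracted from Proposition \ref{nulles}, and the commutator identities of Proposition \ref{commu}, in order to express every mixed null-frame derivative $\nabs_3^j\nabs_4^k\nabs^\ell$ of $(\log\Om,\bbb,\slg)$ of total order at most $s$ as a polynomial in the Ricci coefficients and curvature components already controlled in the first step. This yields pointwise estimates of size $\ep_0$ on all such mixed derivatives.

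Third, I would introduce a Minkowskian Cartesian chart $(x^0,x^1,x^2,x^3)$ on $\V$, via $x^0=(\ub+u)/\sqrt{2}$, $x^i=r\,\omega^i(\phi^A)$ with $r=(\ub-u)/\sqrt{2}$ and $\omega^i$ the standard embedding of the round sphere. Since $r$ stays in a compact interval bounded away from zero on $\V$, this chart is smooth and non-degenerate, and $O(1)$-close (with $s$ controlled derivatives) to the analogous chart on Minkowski space. Writing each partial $\pr_{x^\mu}$ as a smooth bounded combination of $(e_3,e_4,\nabs)$ and expressing the components of $\g-\etabf$ in this chart using the formula \eqref{metricg} together with $\Om_\etabf=1$, $\bbb_\etabf^A=0$, $\slg_\etabf=r^2\slgo$ for Minkowski, the pointwise bounds of the first two steps immediately give $\|\g-\etabf\|_{C^s(\V,\etabf)}\les\ep_0$.

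The main obstacle is purely bookkeeping: matching the number of mixed $(\nabs_3,\nabs_4,\nabs)$ derivatives available from Theorem \ref{mainstability} with the order $s$ required in the target Cartesian chart, and handling the inductive commutator structure of the second step so that no derivative loss appears. Because the transition region is compact and uniformly non-degenerate, all coordinate-change estimates are $O(1)$, and the claimed bound of size $\ep_0$ follows.
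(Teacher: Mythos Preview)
Your three-step architecture (Sobolev embedding on spheres, control of mixed null-frame derivatives via the structure equations, coordinate change) is exactly the paper's strategy. The coordinate-change step is routine on the compact region $\V$, and the paper does not even spell it out.

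The content, however, is entirely in your Step 2, and your description of it as ``express every mixed derivative as a polynomial in quantities already controlled in Step 1'' is not quite right. Step 1 only gives you \emph{angular} derivatives $\nabs^{\leq s+3}$ of $\Ga$ and $\nabs^{\leq s+2}$ of curvature. To reach $\dk^s(\log\Om,\bbb,\slg)$ with $\dk\in\{\nabs_3,\nabs_4,\nabs\}$ you need mixed derivatives of the Ricci coefficients themselves, and here the null structure equations do \emph{not} close algebraically: there is no $\nabs_4\om$, $\nabs_4\etab$, or $\nabs_4\a$ equation (and symmetrically no $\nabs_3\omb$, $\nabs_3\eta$, $\nabs_3\aa$, $\nabs_3\bbb$ equation). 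So, for instance, $\nabs_4^2\log\Om=-2\nabs_4\om$ cannot be reduced to quantities from Step 1.

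What the paper does for these exceptional quantities is: (i) make a bootstrap assumption on $\|\dk^{\leq s+2}\Ga\|_{L^2(S)}$ and $\|\dk^{\leq s+1}R\|_{L^2(S)}$ at size $\ep$; (ii) for the $\nabs_4^j$-derivatives of $(\a,\om,\etab)$, commute the available $\nabs_3$-transport equation with $\nabs^i\nabs_4^j$, obtaining $\nabs_3(\nabs^i\nabs_4^j\a)=\nabs_4^{\leq j-1}\a^{(i+2)}+O(\ep^2)$, and integrate along $\Hb_\ub^{(u_0,u)}$ by induction on $j$, using that the initial data on $H_{u_0}^{(\de,1)}$ satisfy $\a=\om=0$ (Proposition \ref{Hu0} gives the full $\dk^{\leq s+3}$ control there); (iii) symmetrically, for $\nabs_3^j(\aa,\omb,\eta,\bbb)$, integrate the commuted $\nabs_4$-equation along $H_u^{(\de,\ub)}$ with initial data on $\Hb_\de$ controlled via Theorem \ref{shortpulsecone}. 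This closes the bootstrap at size $\ep_0$. So the obstacle is not ``bookkeeping'' or ``derivative loss'' but the absence of certain transport equations, forcing an integration-plus-induction argument with explicit initial-data control on both characteristic hypersurfaces.
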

\begin{proof}
We have from $\Ok\les\ep_0$ in Theorem \ref{mainstability} that
\begin{align*}
\left\|\nabs^{\leq s+2}(\a,\b,\rho,\si,\bb,\aa)\right\|_{L^2(S_{u,\ub})}&\les\ep_0,\\
\left\|\nabs^{\leq s+3}(\trcht,\trchbt,\hch,\hchb,\om,\omb,\eta,\etab,\ze,\log\Om,\bbb,\slgc)\right\|_{L^2(S_{u,\ub})}&\les\ep_0.
\end{align*}
We now suppose the following bootstrap assumption:\footnote{Recall that $\dk$ is defined in \eqref{dfdk}.}
\begin{align}
\begin{split}\label{bootCs}
\left\|\dk^{\leq s+1}(\a,\b,\rho,\si,\bb,\aa)\right\|_{L^2(S_{u,\ub})}&\leq\ep,\\
\left\|\dk^{\leq s+2}(\trcht,\trchbt,\hch,\hchb,\om,\omb,\eta,\etab,\ze,\log\Om,\bbb,\slgc)\right\|_{L^2(S_{u,\ub})}&\leq\ep,
\end{split}
\end{align}
which will be improved in the end of the proof. As a consequence of Propositions \ref{nulles} and \ref{bianchiequations}, it is sufficient to estimate the following quantities:
\begin{align*}
    \nabs^{i}\nabs_4^{j}\a,\qquad \nabs^i\nabs_4^j(\om,\etab),\qquad\nabs^{i}\nabs_3^{j}\aa,\qquad \nabs^i\nabs_3^j(\omb,\eta,\bbb).
\end{align*}
We have from Proposition \ref{bianchiequations}
\begin{align*}
    \nabs_3(r\a)=\b^{(1)}+\Ga\c\Ga^{(1)}.
\end{align*}
Applying Propositions \ref{bianchiequations} and \ref{commu} and \eqref{bootCs}, we obtain for $i+j\leq s+1$
\begin{align*}
    \nabs_3(\nabs^i\nabs_4^j(r\a))=\nabs_4^{\leq j-1}(r\a)^{(i+2)}+O(\ep^2).
\end{align*}
Integrating it along $\Hb_\ub^{(u_0,u)}$, we deduce by induction that
\begin{align*}
    \|\nabs^i\nabs_4^j\a\|_{L^2(S_{u,\ub})}\les \ep_0,\qquad \forall\; i+j\leq s+1.
\end{align*}
Similarly, we have
\begin{align*}
    \|\nabs^i\nabs_4^j(\om,\etab)\|_{L^2(S_{u,\ub})}\les \ep_0,\qquad \forall\; i+j\leq s+2.
\end{align*}
Next, we have from Proposition \ref{bianchiequations}
\begin{align*}
    \nabs_4(r\aa)=\bb^{(1)}+\Ga\c\Ga^{(1)}.
\end{align*}
Applying Propositions \ref{bianchiequations} and \ref{commu} and \eqref{bootCs}, we obtain for $i+j\leq s+1$
\begin{align*}
    \nabs_4(\nabs^i\nabs_3^j(r\aa))=\nabs_3^{\leq j-1}(r\aa)^{(i+2)}+O(\ep^2).
\end{align*}
Integrating it along $H_u^{(\de,\ub)}$, we deduce by induction that\footnote{Notice that $|\nabs^i\nabs_3^j\aa|\les\ep_0$ holds on $\Hb_\de^{(u_0,-\frac{1}{2})}$ for all $i+j\leq s+1$, which can be proved by applying Theorem \ref{shortpulsecone} and transporting the Bianchi equations for $\nabs_4(\nabs^i\nabs_3^j\aa)$ along $H_u^{(0,\de)}$.}
\begin{align*}
    \|\nabs^i\nabs_3^j\aa\|_{L^2(S_{u,\ub})}\les \ep_0,\qquad \forall\; i+j\leq s+1.
\end{align*}
Similarly, we have
\begin{align*}
    \|\nabs^i\nabs_3^j(\omb,\eta)\|_{L^2(S_{u,\ub})}\les \ep_0,\qquad \forall\; i+j\leq s+2.
\end{align*}
Finally, we have
\begin{align*}
    \nabs_4(\bbb^A)=-4\Om\ze^A,
\end{align*}
which implies from Proposition \ref{commu} that for $i+j\leq s+2$
\begin{align*}
    \nabs_4(\nabs^i\nabs_3^j\bbb)=\nabs^{\leq j-1}\bbb^{(i)}+\nabs^i\nabs^j\ze+O(\ep^2)=O(\ep_0).
\end{align*}
Integrating it along $H_u^{(\de,\ub)}$, we deduce by induction
\begin{align*}
    \|\nabs^i\nabs_3^j\bbb\|_{L^2(S_{u,\ub})}\les \ep_0,\qquad \forall\; i+j\leq s+2.
\end{align*}
Combining the above estimates, we obtain
\begin{align*}
\left\|\dk^{\leq s+1}(\a,\b,\rho,\si,\bb,\aa)\right\|_{L^2(S_{u,\ub})}&\les\ep_0,\\
\left\|\dk^{\leq s+2}(\trcht,\trchbt,\hch,\hchb,\om,\omb,\eta,\etab,\ze,\log\Om,\bbb,\slgc)\right\|_{L^2(S_{u,\ub})}&\les\ep_0,
\end{align*}
which improves \eqref{bootCs}. Combining with Proposition \ref{sobolevinequality}, we infer
\begin{align*}
    \sup_{S_{u,\ub}\subseteq\V}\left\|\dk^{\leq s}(\trcht,\trchbt,\hch,\hchb,\om,\omb,\eta,\etab,\ze,\log\Om,\bbb,\slgc)\right\|_{L^\infty(S_{u,\ub})}\les\ep_0.
\end{align*}
This concludes the proof of Proposition \ref{estCs}.
\end{proof}
\begin{thm}\label{interiorsolution}
    For any $s\in\mathbb{N}$, there exists a sufficiently large $a_0(s)>0$. For any $a>a_0$ and $0<\de\leq a^{-2}$, there exists a spacelike initial data $(\Si,g,k)$ solving \eqref{constrainteq}, endowed with a radial $r$--foliation for $r\in(0,2)$, which satisfies the following properties: \begin{enumerate}
        \item We have\footnote{For any $R>0$, we denote $B_R:=\{p\in\Si/\, r(p)<R\}$.}
        \begin{align}
        \begin{split}\label{diffge}
            (g,k)&=(e,0)\qquad\mbox{ in }\;B_{1-2\de},\\
            \|(g-e,k)\|_{H^s\times H^{s-1}(B_2\setminus\ov{B_1})}&\les a^{-1}.
        \end{split}
        \end{align}
        \item The following estimates hold on $\Si$:
        \begin{equation}\label{notrappingde}
            \tr_\slg(-\th-k)<0,\qquad\quad\tr_\slg(\th-k)>0,
        \end{equation}
        where $\th$ is defined in Definition \ref{dftrapped}.
    \end{enumerate}
    The spacelike initial data $(\Si,g,k)$ is denoted by $(\Si_{\de,a},g_{\de,a},k_{\de,a})$, or simply $\Si(\de,a)$.
\end{thm}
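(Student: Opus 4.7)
The plan is to realize $(\Si,g,k)$ as the constant-time slice $\{u+\ub=-1+2\de\}$ of the spacetime produced by Theorem \ref{mainstability}, extended smoothly through $\Hb_0$ into a Minkowskian continuation. First, I apply Theorem \ref{mainstability} with $u_0=-\frac{3}{2}+\de$ and initial data on $H_{u_0}\cup\Hb_0$ satisfying the short-pulse assumption \eqref{hchassumption} together with \eqref{geqa} on $H_{u_0}^{(0,\de)}$, the vanishing conditions \eqref{vanishingconditions} on $H_{u_0}^{(\de,1)}$, and Minkowskian data on $\Hb_0$. This yields a vacuum $(\M,\g)$ in $[u_0,-\frac{1}{2}]\times[0,1]$ whose short-pulse part ($\ub\in[0,\de]$) is governed by Theorem \ref{shortpulsecone} and whose transition part ($\ub\in[\de,1]$) obeys $\|\g-\etabf\|_{C^s}\les\ep_0=a^{-1}$ by Proposition \ref{estCs}. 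Because $\g|_{\Hb_0}$ is exactly Minkowskian, $(\M,\g)$ extends smoothly through $\Hb_0$ into a flat region; in that extension the slice $\{u+\ub=-1+2\de\}$ fills in a Euclidean ball of radius $1-2\de$. Defining $r:=\ub-u$ on the double-null part and continuing it to the Euclidean radius on the Minkowski extension, $\Si$ becomes a smooth open ball of radius $2$ in which $B_{1-2\de}$, $B_1\setminus\ov{B_{1-2\de}}$, and $B_2\setminus\ov{B_1}$ correspond, respectively, to the Minkowski extension, the short-pulse annulus ($\ub\in(0,\de)$), and the barrier annulus ($\ub\in(\de,\frac{1}{2}+\de)$). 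Since $\Si$ is spacelike in a vacuum spacetime, its induced $(g,k)$ automatically satisfies \eqref{constrainteq}.

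To verify (1), $(g,k)=(e,0)$ on $B_{1-2\de}$ is automatic because this region is by construction a flat slice of Minkowski. On the barrier annulus $B_2\setminus\ov{B_1}$, I express the induced metric and second fundamental form via the future timelike unit normal $\mathbf{T}=\frac{1}{2}(e_3+e_4)$ and the outward spatial normal $N=\frac{1}{2}(e_4-e_3)$: $g$ is determined by $\slg$ together with the tangential ingredients encoded by $\log\Om$ and $\bbb$, while $k(X,Y)=\g(\D_X\mathbf{T},Y)$ reduces on $T\Si$ to an algebraic combination of $\chi$, $\chib$, $\om$, $\omb$, $\eta$, and $\etab$. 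Each of these differs from its Minkowskian value by $O(a^{-1})$ in $C^s$ by Proposition \ref{estCs}, and the corresponding bound transfers to $(g-e,k)$ on the slice. Since $B_2\setminus\ov{B_1}$ is a bounded region, $C^s\times C^{s-1}$ smallness immediately yields the required $H^s\times H^{s-1}$ bound by $a^{-1}$.

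To verify (2), I exploit the identities $\chi=\th-k$ and $\chib=-\th-k$ from \eqref{chichib}, which give $\tr_\slg(\th-k)=\trch$ and $\tr_\slg(-\th-k)=\trchb$, so that the untrapped condition reduces to $\trch>0$ and $\trchb<0$. On $B_{1-2\de}$ the standard Euclidean values $\trch=2/r>0$ and $\trchb=-2/r<0$ hold for any $r>0$. On the barrier annulus, Theorem \ref{mainstability} together with Proposition \ref{estCs} gives $|\trcht|,|\trchbt|\les a^{-1}$, so $\trch=2/r+O(a^{-1})>0$ and $\trchb=-2/r+O(a^{-1})<0$ for $r\in[1,2]$ once $a$ is large. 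On the short-pulse annulus, Theorem \ref{shortpulsecone} supplies $|\trchc|\les\de a$ and $|\trchbc|\les\de$; combining with the correction $2/|u|-2/r=-2\ub/(|u|r)=O(\de)$ yields $\trch=2/r+O(\de a)$ and $\trchb=-2/r+O(\de)$, both with the correct signs since $\de a\leq a^{-1}\ll 1$ and $r\in(1-2\de,1)$. The main technical subtlety is checking smoothness of $\Si$ and of the induced $(g,k)$ across the null hypersurfaces $\Hb_0$ and $\Hb_\de$ where the spacetime geometry changes character; the former holds because $\g$ is exactly Minkowski in a full spacetime neighborhood of $\Hb_0$, while the latter reduces to smoothness of the outgoing short-pulse data on $H_{u_0}$, preserved by the $s$-regular estimates of Theorems \ref{shortpulsecone} and \ref{mainstability}.
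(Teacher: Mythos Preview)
Your proposal is correct and follows essentially the same route as the paper: take $u_0=-\tfrac{3}{2}+\de$, slice the spacetime of Theorem \ref{mainstability} at $\{u+\ub=-1+2\de\}$, extend by the Minkowski region across $\Hb_0$, and read off \eqref{diffge} from Proposition \ref{estCs} and \eqref{notrappingde} from the $\trch,\trchb$ bounds via \eqref{chichib}. Your treatment is in fact more explicit than the paper's (you separate the three annular regions and track the $2/|u|$ versus $2/r$ correction), but the argument is the same.
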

\begin{rk}
    The annulus $A_1:=B_2\setminus\ov{B_1}$ is called the \emph{barrier annulus} and the $B_1\setminus\ov{B_{1-2\de}}$ is called the \emph{short-pulse annulus}.
\end{rk}
\begin{proof}[Proof of Theorem \ref{interiorsolution}]
    As an immediate consequence of Theorem \ref{shortpulsecone} and Proposition \ref{estCs}, the spacetime obtained in Theorem \ref{mainstability} satisfies the following properties:
    \begin{itemize}
        \item It coincides with the Minkowski spacetime for $\ub\leq 0$.
        \item In the region $\left\{(u,\ub)\in\left[u_0,-\frac{1}{2}\right]\times [0,\de]\right\}$, we have
        \begin{align}\label{trchctrchbcnotrapping}
            |\trchc,\trchbc|\les a^{-1}.
        \end{align}
        \item In the region $\left\{(u,\ub)\in\left[u_0,-\frac{1}{2}\right]\times [\de,1]\right\}$, we have
        \begin{align}\label{g-eta}
            \|\g-\etabf\|_{C^s}\les a^{-1}.
        \end{align}
    \end{itemize}
    We now take
    \begin{align*}
        u_0=-\frac{3}{2}+\de,
    \end{align*}
    and we consider the following time-constant slice:
    \begin{align*}
        \Si:=\left\{u+\ub=-1+2\de\right\},
    \end{align*}
    see Figure \ref{fig:shortpulse+stab} for a geometric illustration. 
    \begin{figure}[H]
        \centering
        \begin{tikzpicture}[scale=2.6, decorate]
  \coordinate (A) at (0,-2);
  \coordinate (B) at (0,0);
  \coordinate (C) at (0,0.8);
  \coordinate (D) at (0.1,0.7);
  \coordinate (E) at (0.2,0.8);
  \coordinate (F) at (1.4,-0.6);
  \coordinate (G) at (1.5,-0.5);
  \coordinate (H) at (2,0);
  \coordinate (I) at (2.5,0.5);
  \coordinate (J) at (1,0);
  \coordinate (K) at (0.5,0.5);
  \coordinate (L) at (1.5,1.5);
  \coordinate (M) at (2.5,0);
  \coordinate (N) at (0.8,0);
  \draw (A) -- (C);
  \draw (A) -- (I);
  \draw (F) -- (C);
  \draw (D) -- (E);
  \draw (G) -- (E);
  \draw (K) -- (L);
  \draw (I) -- (L);
  \draw (B) -- (M);
\fill[red!20,opacity=0.35](G) -- (E) -- (D) -- (F) -- cycle;
\fill[blue!30,opacity=0.35](G) -- (I) -- (L) -- (K) -- cycle;
  \node[right] at (1.35,-0.7) {\footnotesize $\ub = 0$};
  \node[right] at (G) {\footnotesize $\ub = \de$};
  \node[above] at (2.3,0.9) {\footnotesize $\ub = 1$};
  \node[above] at (0.75,0.9) {\footnotesize $u = -\frac{1}{2}$};
  \node[below right] at (M) {\footnotesize $\Si=\{u+\ub=-1+2\de\}$};
  \node at (0.5, -0.8) {\footnotesize Minkowski};
  \node[right] at (0.8, -1.3) {\footnotesize $H_{-\frac{3}{2}+\de}$};
  \node[below right] at (1.6,-0.8) {\footnotesize \red{short-pulse}};
\filldraw[red] (J) circle (0.5pt);
\filldraw[red] (H) circle (0.5pt);
\filldraw[red] (N) circle (0.5pt);
\filldraw[orange] (E) circle (0.5pt);
    \draw[->, thick, rounded corners=8pt] (1.1,0.3) to[out=-90, in=90] (J);
    \node[above] at (1.1,0.3) {\footnotesize $S_{-1+\de,\de}$};
    \draw[->, thick, rounded corners=8pt] (0.7,-0.3) to[out=90, in=-90] (N);
    \node[below] at (0.7,-0.3) {\footnotesize $S_{-1+2\de,0}$};
   \draw[->, thick, rounded corners=8pt] (2.1,0.3) to[out=-90, in=90] (H);
   \node[above] at (2.1,0.3) {\footnotesize $S_{-\frac{3}{2}+\de,\frac{1}{2}+\de}$};
    \draw[->, thick, rounded corners=8pt] (-0.3,1.1) to[out=-90, in=90] (E);
    \node[above] at (-0.5,1.1) {\footnotesize {\color{orange}Trapped surface $S_{-\frac{\de a}{4},\de}$}};
    \draw[->, thick, rounded corners=8pt] (0.3,1.5) to[out=-90, in=90] (0.9,0);
    \node[above] at (0.2,1.5) {\footnotesize \red{short-pulse annulus}};
    \draw[->, thick, rounded corners=8pt] (1.9,1.5) to[out=-90, in=90] (1.5,0);
    \node[above] at (2,1.5) {\footnotesize \blue{Barrier annulus}};
    \draw[->, decorate, decoration={snake, amplitude=0.5mm, segment length=2mm}, thin, red]  (1.7,-0.8) -- (1.3,-0.4);
    \draw[draw=blue, thick] (J) -- (H);
    \draw[draw=red, thick] (N) -- (J);
\end{tikzpicture}
        \caption{constant-time slice $(\Si,g,k)=(\Si_{\de,a},g_{\de,a},k_{\de,a})$.}
        \label{fig:shortpulse+stab}
    \end{figure}
    Taking $r:=\ub-u$, we obtain a radial foliation on $\Si$ for $r\in(0,2)$. Notice that $\Si$ coincides with a constant-time slice of Minkowski spacetime for $r\leq 1-2\de$. Moreover, \eqref{notrappingde} follows from \eqref{trchctrchbcnotrapping} and \eqref{chichib}. Finally, we see from \eqref{g-eta} that \eqref{diffge} holds. This concludes the proof of Theorem \ref{interiorsolution}.
\end{proof}
\section{Vacuum geometrostatic manifold}\label{secBL}
In this section, we introduce the \emph{vacuum geometrostatic manifold} and the \emph{Brill-Lindquist metric}, which plays an essential role in the construction of the desired Cauchy data.
\subsection{Brill-Lindquist metric}
\begin{prop}\label{conformalchange}
    Let $g_0$ be a Riemannian metric on a $n$--dimensional manifold. We define the conformal metric:
    \begin{align*}
        g:=U^\frac{4}{n-2}g_0,
    \end{align*}
    where $U>0$ is a conformal factor. Then, the scalar curvature transforms as follows:
    \begin{align*}
        R(g)=U^{-\frac{n+2}{n-2}}\left[R(g_0)U-\frac{4(n-1)}{n-2}\De_{g_0} U\right].
    \end{align*}
\end{prop}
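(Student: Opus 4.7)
The plan is to derive this as a direct corollary of the general conformal transformation law for scalar curvature under $g = e^{2\phi} g_0$, then specialize to the conformal factor $e^{2\phi} = U^{4/(n-2)}$. This is a completely classical computation, so the strategy is to reduce everything to an algebraic identity in the exponent $\frac{4}{n-2}$ after plugging in $\phi = \frac{2}{n-2}\log U$.

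First, I would record (or re-derive) the standard transformation formulas. Under $g = e^{2\phi}g_0$, the Christoffel symbols shift by
\[
\Ga_{ij}^k(g) = \Ga_{ij}^k(g_0) + \de_i^k\pr_j\phi + \de_j^k\pr_i\phi - (g_0)^{kl}(g_0)_{ij}\pr_l\phi,
\]
from which, after tracing appropriately in the Ricci identity, one obtains
\[
R(g) = e^{-2\phi}\Bigl[R(g_0) - 2(n-1)\De_{g_0}\phi - (n-1)(n-2)|\nab_0\phi|_{g_0}^2\Bigr].
\]
I would either quote this standard identity from a textbook or carry out the Ricci computation in two or three lines. The main obstacle — if one can call it that — is simply keeping track of the dimensional constants $n-1$ and $n-2$ in the Ricci trace; no analytic subtlety enters.

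Next, I would set $\phi = \frac{2}{n-2}\log U$ so that $e^{2\phi} = U^{4/(n-2)}$, and compute the two derivative terms:
\[
\nab_0 \phi = \frac{2}{n-2}\frac{\nab_0 U}{U},\qquad |\nab_0\phi|_{g_0}^2 = \frac{4}{(n-2)^2}\frac{|\nab_0 U|_{g_0}^2}{U^2},
\]
and
\[
\De_{g_0}\phi = \frac{2}{n-2}\left(\frac{\De_{g_0}U}{U} - \frac{|\nab_0 U|_{g_0}^2}{U^2}\right).
\]
Substituting into the conformal identity, the gradient terms combine as
\[
-2(n-1)\cdot\frac{-2}{n-2}\frac{|\nab_0 U|_{g_0}^2}{U^2} - (n-1)(n-2)\cdot\frac{4}{(n-2)^2}\frac{|\nab_0 U|_{g_0}^2}{U^2} = 0,
\]
which is the key algebraic cancellation that makes the formula take its clean final form.

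Finally, what survives is
\[
R(g) = U^{-4/(n-2)}\left[R(g_0) - \frac{4(n-1)}{n-2}\frac{\De_{g_0}U}{U}\right] = U^{-4/(n-2)-1}\left[R(g_0)U - \frac{4(n-1)}{n-2}\De_{g_0}U\right],
\]
and since $\frac{4}{n-2} + 1 = \frac{n+2}{n-2}$, this is exactly the claimed identity. The only real check is the gradient cancellation in the previous step, which is the reason the exponent $\tfrac{4}{n-2}$ is singled out (it is the exponent that makes the conformal Laplacian linear in $U$); the rest is bookkeeping of constants.
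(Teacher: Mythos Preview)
Your derivation is correct; this is exactly the standard computation underlying the conformal Laplacian. The paper itself does not give a proof at all but simply cites Chapter~V of Schoen--Yau, so your argument supplies the details that the paper defers to the reference.
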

\begin{proof}
    See Chapter V of \cite{SchoenYau94}.
\end{proof}
\begin{lem}\label{vanishingR}
    Let $g$ be a Riemannian metric on an open subset $\UU\subset\RRR^3$, which takes the following form:
    \begin{align*}
        g=U^4 e,
    \end{align*}
    where $e$ denotes the Euclidean metric. Then $R(g)=0$ on $\UU$ is equivalent to
    \begin{align*}
        \De_e U=0\qquad \mbox{ on }\;\UU.
    \end{align*}
\end{lem}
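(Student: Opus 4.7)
The plan is to obtain the result as a direct specialization of Proposition \ref{conformalchange} to the three-dimensional Euclidean background. First, I would apply Proposition \ref{conformalchange} with $n=3$ and $g_0 = e$, and observe that $g = U^{4/(n-2)} e = U^4 e$ matches the prescribed form, while $R(e) = 0$ since the Euclidean metric is flat. Substituting the numerical values $\frac{n+2}{n-2} = 5$ and $\frac{4(n-1)}{n-2} = 8$ into the conformal transformation formula gives
\begin{equation*}
R(g) = U^{-5}\bigl[R(e)\,U - 8\,\Delta_e U\bigr] = -8\,U^{-5}\,\Delta_e U.
\end{equation*}

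Next, using the hypothesis $U > 0$ on $\UU$, the factor $U^{-5}$ is everywhere positive and finite, so the equation $R(g) = 0$ on $\UU$ is pointwise equivalent to $\Delta_e U = 0$ on $\UU$. This gives both implications simultaneously and concludes the proof.

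I do not expect any obstacle: the statement is essentially a one-line computation once the general conformal scalar curvature identity is in hand. The only minor care needed is to note explicitly that $U>0$ so that division by $U^{-5}$ is legitimate, and that $R(e)=0$ for the flat Euclidean background. No additional auxiliary results from the paper are required.
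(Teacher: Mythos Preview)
Your proof is correct and follows essentially the same route as the paper: specialize Proposition~\ref{conformalchange} to $n=3$, $g_0=e$, use $R(e)=0$ to obtain $R(g)=-8U^{-5}\Delta_e U$, and conclude the equivalence from $U>0$. The only difference is that you make the positivity of $U$ explicit, which is a nice touch but not a substantive deviation.
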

\begin{proof}
Taking $g_0=e$ and $n=3$ in Proposition \ref{conformalchange}, we obtain
\begin{align*}
    R(g)=U^{-5}\left(R(e)U-8\De_e U\right)=-8U^{-5}\De_e U.
\end{align*}
Thus, $R(g)=0$ is equivalent to $\De_e U=0$. This concludes the proof of Lemma \ref{vanishingR}.
\end{proof}
\begin{prop}\label{BrillLindquist}
    We introduce the following Brill-Lindquist metric:
    \begin{align}\label{gBLdf}
        \gBL:=U^4(\x)\,e:=\left(1+\sum_{I=1}^N\frac{m_I}{2|\x-\cb_I|}\right)^4e.
    \end{align}
    Then, for any fixed parameters $N\in\mathbb{N}$, $m_I>0$ and $\cb_I\in\RRR^3$, $(\gBL,0)$ is a solution of the constraint equations \eqref{constrainteq} on $\RRR^3\setminus\{\cb_I\}_{I=1}^N$. Moreover, the Riemannian manifold $(\RRR^3\setminus \{\cb_I\}_{I=1}^N,\gBL)$ is called a \emph{vacuum geometrostatic manifold}.
\end{prop}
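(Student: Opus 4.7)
The plan is to verify the two constraint equations \eqref{constrainteq} directly, exploiting the fact that $k=0$ trivializes the momentum constraint and the conformal structure of $\gBL$ trivializes the Hamiltonian constraint via Lemma \ref{vanishingR}.

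First, since $k \equiv 0$ we have $\tr_{\gBL} k = 0$, $|k|^2_{\gBL} = 0$, and $\sdiv_{\gBL} k = 0$. Consequently, the momentum constraint $\sdiv_g k - \nab \tr_g k = 0$ is automatically satisfied, and the Hamiltonian constraint reduces to $R(\gBL) = 0$ on $\RRR^3 \setminus \{\cb_I\}_{I=1}^N$.

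Second, I invoke Lemma \ref{vanishingR} with $\UU = \RRR^3 \setminus \{\cb_I\}_{I=1}^N$ and $U(\x) = 1 + \sum_{I=1}^N \frac{m_I}{2|\x - \cb_I|}$. The lemma reduces $R(\gBL) = 0$ on $\UU$ to the harmonicity condition
\begin{equation*}
    \De_e U = 0 \qquad \mbox{on } \UU.
\end{equation*}
Since $\De_e$ is linear and $\De_e 1 = 0$, it suffices to check that each term $\frac{m_I}{2|\x - \cb_I|}$ is $e$-harmonic on $\UU$. This is the classical fact that $|\x - \cb_I|^{-1}$ is (up to a multiplicative constant) the fundamental solution of the Euclidean Laplacian on $\RRR^3$, hence harmonic on $\RRR^3 \setminus \{\cb_I\} \supseteq \UU$. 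Summing over $I \in \{1,\dots,N\}$ yields $\De_e U = 0$ on $\UU$, which completes the proof.

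There is no substantive obstacle here: the calculation is immediate once Lemma \ref{vanishingR} is in hand, and the linearity of $\De_e$ together with the well-known harmonicity of $|\x|^{-1}$ away from the origin closes the argument in a single line. The definition of ``vacuum geometrostatic manifold'' is then just terminology attached to the triple $(\RRR^3 \setminus \{\cb_I\}_{I=1}^N, \gBL, 0)$ just verified to solve \eqref{constrainteq}.
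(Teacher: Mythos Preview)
Your proof is correct and follows essentially the same approach as the paper: both invoke Lemma~\ref{vanishingR} to reduce $R(\gBL)=0$ to $\De_e U=0$, then use that each $|\x-\cb_I|^{-1}$ is harmonic away from $\cb_I$. Your version is slightly more explicit in spelling out why $k=0$ handles the momentum constraint, but the substance is identical.
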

\begin{proof}
    For any $I\in\{1,2,\dots,N\}$, $|\x-\cb_I|^{-1}$ is the fundamental solution of the Laplace equation $\De_e U=0$ on $\RRR^3\setminus\{\cb_I\}$. Combining with Lemma \ref{vanishingR}, we obtain
    \begin{equation*}
        R(\gBL)=0\qquad\mbox{ on }\; \RRR^3\setminus\{\cb_I\}_{I=1}^N.
    \end{equation*}
    This concludes the proof of Proposition \ref{BrillLindquist}.
\end{proof}
\begin{rk}\label{rkSchw}
    In the particular case $N=1$, the Brill-Lindquist metric \eqref{gBLdf} becomes the Riemannian Schwarzschild metric in isothermal coordinate
    \begin{equation}\label{Schw-isotermal}
        g=\left(1+\frac{m}{2r}\right)^4e,\qquad\quad r:=|\x|,
    \end{equation}
    which corresponds to the slice $t=0$ of the Schwarzschild metric. More precisely, we have
    \begin{equation}\label{conformalchangeschw}
        g=\left(1-\frac{2m}{\varrho}\right)^{-1}d\varrho^2+\varrho^2d\si_{\mathbb{S}^2}^2,\qquad\quad \varrho:=r\left(1+\frac{m}{2r}\right)^2.
    \end{equation}
\end{rk}
\subsection{Mean curvature of pole-centered spheres}
We now compute the mean curvature of the coordinate spheres for the Brill-Lindquist metric defined in \eqref{gBLdf}.
\begin{prop}\label{meancurvaturebl}
Let $\gBL:=U^4e$ be the Brill-Lindquist metric defined in \eqref{gBLdf}. Assume that the balls\footnote{Recall that $e$ denotes the Euclidean metric, $m_I > 0$ represents masses, and $\cb_I \in \mathbb{R}^3$ denotes the locations of black holes. The condition \eqref{disjointballs} ensures that each region $\{|\x -\cb_I| < \frac{m_I}{2}\}$ contains no other singularities except $\x=\cb_I$ and lies entirely within its own local Schwarzschild asymptotic regime. In particular, the conformal factor $U$ remains smooth on the complement of these balls.}
\begin{align}\label{disjointballs} 
\left\{|\x-\cb_I|<\frac{m_I}{2} \right\} \cap \left\{|\x-\cb_J| < \frac{m_J}{2} \right\} = \emptyset, \qquad \forall\; I \ne J,
\end{align} 
are mutually disjoint. Then, for any $R > 0$ such that $S_R:=\{\x/\, |\x|=R\}$ avoids $\{\cb_I\}_{I=1}^N$, the mean curvature $H_g$ of $S_R$ with respect to $\gBL$ is given explicitly by \begin{align}\label{Hgexplicit} 
H_g(\x) =\frac{2}{U(\x)^3R} \left[1 + \sum_{I=1}^N \frac{m_I}{2|\x-\cb_I|^3}\left(|\cb_I|^2 - R^2\right)\right]. 
\end{align} 
\end{prop}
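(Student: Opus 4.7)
The plan is to reduce the computation to the standard conformal-change formula for mean curvature and then perform a direct algebraic simplification. Since $\gBL = U^4 e$ in dimension $n=3$, writing this as $\gBL = e^{2f}e$ with $f = 2\log U$, the conformal change identity for the mean curvature of a hypersurface in a 3-manifold (trace convention $H=\tr\mathrm{II}$) yields
\begin{equation*}
H_g = e^{-f}\bigl(H_e + 2\,\pr_{\nu_e} f\bigr) = U^{-2}\left(H_e + \frac{4}{U}\pr_{\nu_e} U\right),
\end{equation*}
where $\nu_e = \x/R$ is the Euclidean unit outward normal to $S_R$. The disjointness assumption \eqref{disjointballs} together with the hypothesis that $S_R$ avoids $\{\cb_I\}_{I=1}^N$ guarantees that $U$ is smooth and strictly positive on a neighborhood of $S_R$, so the formula is applicable.

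Next I would plug in the standard Euclidean value $H_e = 2/R$ and differentiate $U$ explicitly from \eqref{gBLdf}:
\begin{equation*}
\pr_{\nu_e} U = \nab U\c\frac{\x}{R} = -\sum_{I=1}^N\frac{m_I\,(\x-\cb_I)\c\x}{2R|\x-\cb_I|^3}.
\end{equation*}
Substituting and factoring out $2/(U^3 R)$ gives
\begin{equation*}
H_g = \frac{2}{U^3 R}\Bigl(U + 2R\,\pr_{\nu_e} U\Bigr) = \frac{2}{U^3 R}\left[1 + \sum_{I=1}^N\frac{m_I}{2|\x-\cb_I|} - \sum_{I=1}^N\frac{m_I(R^2 - \x\c\cb_I)}{|\x-\cb_I|^3}\right].
\end{equation*}

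The final step is a purely algebraic simplification of each summand. Combining the two $I$-th terms over the common denominator $2|\x-\cb_I|^3$ and expanding $|\x-\cb_I|^2 = R^2 - 2\x\c\cb_I + |\cb_I|^2$, one obtains the identity
\begin{equation*}
|\x-\cb_I|^2 - 2(R^2 - \x\c\cb_I) = |\cb_I|^2 - R^2,
\end{equation*}
so that each summand collapses to $m_I(|\cb_I|^2-R^2)/(2|\x-\cb_I|^3)$, yielding exactly \eqref{Hgexplicit}. No step is substantive in difficulty; the main obstacle is simply keeping track of the conformal weight (the mismatch between the $U^{-2}$ factor arising from the conformal formula and the $U^{-3}$ factor in \eqref{Hgexplicit} is absorbed by the pulled-out factor $U + 2R\,\pr_{\nu_e} U$), and verifying the cancellation $|\x-\cb_I|^2 - 2(R^2 - \x\c\cb_I) = |\cb_I|^2 - R^2$ that drives the final form.
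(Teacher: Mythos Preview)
Your proposal is correct and follows essentially the same route as the paper. The only difference is that the paper derives the conformal change identity $H_g = U^{-2}\bigl(H_e + 4U^{-1}\nu_e(U)\bigr)$ via a short first-variation-of-area argument, whereas you invoke it as a standard formula; from that point on the computation and the algebraic identity $|\x-\cb_I|^2 - 2(R^2 - \x\c\cb_I) = |\cb_I|^2 - R^2$ are identical.
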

\begin{proof} Let $S$ be a closed $2$--surface in $(\mathbb{R}^3\setminus\{\cb_I\}_{I=1}^N, \gBL) $. Then the area of $ S $ with respect to $ g = \gBL $ is given by
\begin{align*}
\mathrm{Area}_g(S) = \int_{S} d\sigma_g = \int_{S} U^4(x)\, d\sigma_e.
\end{align*}
Let $ S_t $ be a variation of $ S $ with a variational vector field $V =f\nu_e=U^2f\nu_g $, where $\nu_e$ is the outward Euclidean unit normal and $\nu_g=U^{-2}\nu_e $ is the unit normal with respect to the metric $g$. Then the first variation of area becomes
\begin{align*}
\frac{d}{dt}\mathrm{Area}_g(S_t)\bigg|_{t=0}= \left.\frac{d}{dt} \int_{S_t} U^4\, d\sigma_e \right|_{t=0}= \int_{S} \left(4U^3\, \nab U\c\nu_e +U^4 H_e\right) f\, d\sigma_e.
\end{align*}
Using the identity $ d\si_g =U^4 d\si_e $, we rewrite the expression as
\begin{align*}
\frac{d}{dt}\mathrm{Area}_g(S_t)\bigg|_{t=0} 
= \int_{S} \left(H_e + 4\frac{\nu_e(U)}{U}\right) f\, d\sigma_g.
\end{align*}
Recalling that $V=f\nu_e=U^2f\nu_g $, the variation is expressed in terms of the $g$--unit normal direction with coefficient $U^2f$. Therefore, factoring out this scaling gives
\begin{align*}
\frac{d}{dt}\mathrm{Area}_g(S_t)\bigg|_{t=0} 
= \int_{S} H_g \cdot U^2 f\, d\sigma_g,
\end{align*}
which implies, by comparing both expressions,
\begin{equation}\label{eqn:H_g}
    H_g =U^{-2} \left(H_e + 4\frac{\nu_e(U)}{U} \right).
\end{equation}
In particular, for a coordinate sphere $S=S_R:=\{|\x| = R\} $, we have
\[
H_e = \frac{2}{R}, \qquad\quad \nu_e = \frac{\x}{R}.
\]
Compute the Euclidean normal derivative of $u$: 
\begin{align*} 
\nu_e(U)(\x) &= \nab U(\x) \cdot \nu_e =\sum_{I=1}^N\left(-\frac{m_I}{2}\frac{\x-\cb_I}{|\x-\cb_I|^3}\right) \cdot \frac{\x}{R}= \sum_{I=1}^N\left(-\frac{m_I}{2R}\frac{\x\c(\x-\cb_I)}{|\x-\cb_I|^3}\right). 
\end{align*} 
Injecting it into \eqref{eqn:H_g} yields
\begin{align*} 
H_g(\x) &= \frac{1}{U(\x)^2}\left[\frac{2}{R}-\frac{4}{U(\x)}\sum_{I=1}^N\frac{m_I(\x\cdot(\x-\cb_I))}{2R|\x-\cb_I|^3}\right]\\
&= \frac{2}{RU(\x)^3}\left[U(\x)-\sum_{I=1}^N\frac{m_I(\x\cdot(\x-\cb_I))}{|\x-\cb_I|^3}\right]. 
\end{align*} 
Using the identity 
\begin{align*} 
\frac{1}{2|\x-\cb_I|}-\frac{\x\cdot(\x-\cb_I)}{|\x-\cb_I|^3} = \frac{|\cb_I|^2-|\x|^2}{2|\x-\cb_I|^3},
\end{align*}
we obtain \eqref{Hgexplicit} as stated. This concludes the proof of Proposition \ref{meancurvaturebl}.
\end{proof}
Next, we focus on spheres centered at one of the poles $\cb_J$:
\begin{prop}\label{meanconvex_local} 
Let $J \in \{1,2, \dots, N\}$ and we denote $S_R^{(J)} := \{\x/\,|\x-\cb_J| = R\}$. Suppose \begin{align}\label{disjointlocal} 
\frac{m_J}{2} \leq R < d_J, 
\end{align} 
where
\begin{equation}\label{dJ}
    d_J:=\min_{I \ne J}|\cb_I-\cb_J|
\end{equation}
is the shortest Euclidean distance from other poles to $\cb_J$, see Figure \ref{fig:meanconvex_multiple} for a geometric description. Then, $S_R^{(J)}$ is strictly mean-convex with respect to $\gBL =U^4 e$.
\end{prop}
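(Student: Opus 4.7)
\medskip

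\noindent\textbf{Proof proposal for Proposition \ref{meanconvex_local}.}

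The plan is to reduce Proposition \ref{meanconvex_local} to the already-proved formula \eqref{Hgexplicit} by exploiting the translation invariance of the Brill--Lindquist construction, and then to separate the Schwarzschild self-contribution from $\cb_J$ (which is non-negative by the assumption $R\geq m_J/2$) from the contributions of the remaining poles (which are strictly positive by $R<d_J$).

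First I would set $\y:=\x-\cb_J$. Since the Brill--Lindquist ansatz \eqref{gBLdf} is manifestly invariant under Euclidean translations, $\gBL$ in the $\y$--coordinates is again of Brill--Lindquist type, with the same masses $\{m_I\}_{I=1}^N$ but translated poles $\{\cb_I-\cb_J\}_{I=1}^N$; in particular the $J$--th pole now sits at the origin. In these coordinates the surface $S_R^{(J)}=\{|\y|=R\}$ is a coordinate sphere about the origin, and the disjointness hypothesis \eqref{disjointballs} is preserved, so Proposition \ref{meancurvaturebl} applies verbatim.

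Applying \eqref{Hgexplicit} after the translation gives
\begin{equation*}
H_g(\y)=\frac{2}{U(\y)^3R}\left[1+\sum_{I=1}^N\frac{m_I}{2|\y-(\cb_I-\cb_J)|^3}\bigl(|\cb_I-\cb_J|^2-R^2\bigr)\right].
\end{equation*}
I would then split the sum into the term $I=J$ and the terms $I\ne J$. For $I=J$ the translated pole is at the origin, so $|\cb_J-\cb_J|=0$ and on $S_R^{(J)}$ the contribution evaluates to $-\tfrac{m_J R^2}{2|\y|^3}=-\tfrac{m_J}{2R}$, which combines with the leading $1$ to yield exactly the Schwarzschild bracket $1-\tfrac{m_J}{2R}$ (cf.\ Remark \ref{rkSchw}). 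For $I\ne J$ the hypothesis $R<d_J$ and the definition \eqref{dJ} give $|\cb_I-\cb_J|\geq d_J>R$, hence $|\cb_I-\cb_J|^2-R^2>0$, so each of these terms is strictly positive.

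Putting the two pieces together yields
\begin{equation*}
H_g(\y)=\frac{2}{U(\y)^3R}\left[\left(1-\frac{m_J}{2R}\right)+\sum_{I\ne J}\frac{m_I\bigl(|\cb_I-\cb_J|^2-R^2\bigr)}{2|\y-(\cb_I-\cb_J)|^3}\right],
\end{equation*}
and the hypothesis $R\geq m_J/2$ makes the first bracketed term non-negative while the residual sum is strictly positive (the presence of at least one pole $I\ne J$ is forced by the very use of $d_J<\infty$). Since $U>0$, I conclude $H_g>0$ on $S_R^{(J)}$, which is exactly strict mean-convexity. There is no real obstacle here: the only subtlety is checking that the Schwarzschild-type term $1-m_J/(2R)$ from the $I=J$ summand arises cleanly after translation and is exactly the familiar expression whose sign is dictated by the horizon threshold $R=m_J/2$; once that algebraic identification is made, strict positivity comes for free from the large-separation assumption.
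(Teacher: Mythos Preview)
Your proposal is correct and is essentially the paper's own argument: the paper also translates to $\cb_J=0$ (phrased as ``without loss of generality''), invokes \eqref{Hgexplicit}, separates the $I=J$ term to get $1-\tfrac{m_J}{2R}$, and observes that the remaining summands are strictly positive because $R<d_J\le|\cb_I-\cb_J|$. Your write-up is slightly more explicit about the translation invariance and about why at least one pole $I\ne J$ exists when $d_J<\infty$, but the logic is identical.
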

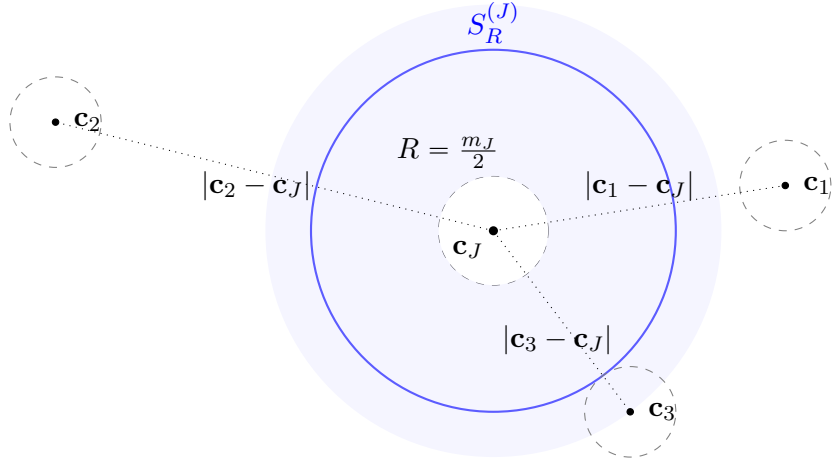
\begin{figure}[H]
\centering
\begin{tikzpicture}[scale=1.2]

\coordinate (O) at (0,0);
\draw[fill=black] (O) circle (1pt);

\draw[dashed] (O) circle (0.6);

\draw[thick, blue] (O) circle (2);

\path[fill=blue!10, opacity=0.4] (O) circle (2.5);
\path[fill=white] (O) circle (0.6);
\path[fill=black] (O) circle (0.05);

\coordinate (P1) at (3.2, 0.5);
\coordinate (P2) at (-4.8, 1.2);
\coordinate (P3) at (1.5, -2);

\foreach \p/\name in {P1/c_1, P2/c_2, P3/c_3}{
    \draw[fill=black] (\p) circle (1pt);
    \edef\labelmath{$\name$}
    \draw[gray, dashed] (\p) circle (0.5);
    \draw[dotted] (O) -- (\p);
}

\node at (1.6,0.5) {$|\cb_1 - \cb_J|$};
\node at (0.7,-1.2) {$|\cb_3 - \cb_J|$};
\node at (-2.6,0.5) {$|\cb_2 - \cb_J|$};

\node [right=3pt] at (P1) {$\cb_1$};
\node [right=3pt] at (P2) {$\cb_2$};
\node [right=3pt] at (P3) {$\cb_3$};
\node [below left] at (0,0) {$\cb_J$};
\node [blue] at (90:2.3) {$S_R^{(J)}$};
\node at (120:1) {$R = \frac{m_J}{2}$};
\end{tikzpicture}
\caption{
Illustration of the region foliated by mean-convex coordinate spheres: the shaded annulus satisfies $\frac{m_J}{2} \leq |\x-\cb_J|<d_J$ for the coordinate sphere $S_R^{(J)}$ centered at $\cb_J = 0$. The dashed balls represent excluded regions $\{|\x - \cb_I| < \frac{m_I}{2}\}$.
}
\label{fig:meanconvex_multiple}
\end{figure}
\begin{proof}
Without loss of generality, we assume that $\cb_J=0$. By Proposition \ref{meancurvaturebl}, we have
\begin{align*} 
H_g(\x) = \frac{2}{U(\x)^3R}\left[1-\frac{m_J}{2R}+\sum_{I \ne J}\frac{m_I(|\cb_I|^2-R^2)}{2|\x-\cb_I|^3}\right], \qquad \x\in S_R.
\end{align*}
Since $R<|\cb_I|$ for $I \ne J$, the summation terms are nonnegative. If $R > \frac{m_J}{2}$, clearly $1 - \frac{m_J}{2R}>0$. If $R=\frac{m_J}{2}$, then $1 - \frac{m_J}{2R}=0$, but the summation is strictly positive, ensuring strict positivity. Hence, $H_g>0$ on $S_R^{(J)}$.
\end{proof}
\subsection{Charges of Brill-Lindquist metric}
We now compute the charges of the coordinate spheres centered at one of the poles $\cb_I$ for the Brill-Lindquist metric defined in \eqref{gBLdf}.
\begin{prop}\label{prop:BL-local}
Let $\gBL$ be the Brill-Lindquist metric defined in \eqref{gBLdf}. For any $I\in\{1,2,\dots,N\}$ and $R\in[32, 64]$, we denote
\begin{equation}\label{yIdf}
\y_I:=\x-\cb_I,\qquad\quad\pr B_R^{(I)}:=\big\{|\y_I|=R\big\},\qquad\quad\nu_I:=\frac{\y_I}{R}.
\end{equation}
Assume that the total mass $M$ and the minimum separation distance of the pole $\cb_I$, denoted by $d_I$, satisfies the following small-mass and large-separation condition:\footnote{Notice that \eqref{dfMd} implies that $d_I^{-1}\gg 1$.}
\begin{equation}\label{dfMd}
M:=\sum_{I=1}^{N}m_I\ll 1,
\qquad\quad d_I:=\min\limits_{J\ne I}|\cb_I-\cb_J|\gg m_I^{-1}.
\end{equation}
Then, we have for $I\in\{1,2,\dots,N\}$ and $l=1,2,3$:
\begin{align}
\E\left[(\gBL,0);\pr B_R^{(I)}\right]&=8\pi m_I+O(m_IM+Md_I^{-1}),\label{BL-EE}\\
\P_l\left[(\gBL,0);\pr B_R^{(I)}\right]&=0,\label{BL-PP}\\
\C_l\left[(\gBL,0);\pr B_R^{(I)}\right]&=(\cb_I)_l\big(8\pi m_I+O(m_IM+Md_I^{-1})\big)+O(m_IM+Md_I^{-1}),\label{BL-CC}\\
\J_l\left[(\gBL,0);\pr B_R^{(I)}\right]&=0.\label{BL-JJ}
\end{align}
\end{prop}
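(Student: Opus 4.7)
The plan is to exploit the conformal flatness $\gBL = U^4 e$ together with two observations: (i) $k = 0$, which kills two of the four charges outright, and (ii) the large-separation hypothesis \eqref{dfMd}, which makes $U$ nearly spherically symmetric about the pole $\cb_I$ on the sphere $\pr B_R^{(I)}$, so that the leading contributions to the angular integrals come from the $I$-th pole alone.

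First, since $k \equiv 0$, both $\tr_e k$ and the integrand $k_{ij} - \de_{ij} \tr_e k$ vanish identically, so Definition \ref{chargesMOT} gives $\P_l = \J_l = 0$ and \eqref{BL-PP}, \eqref{BL-JJ} follow. Second, for $\E$ I compute directly, using $g_{ij} = U^4 \de_{ij}$, that
\[
\pr_i g_{ij} - \pr_j g_{ii} = 4U^3 \pr_j U - 12 U^3 \pr_j U = -8 U^3 \pr_j U,
\]
so that
\[
\E\left[(\gBL,0);\pr B_R^{(I)}\right] = -4\int_{\pr B_R^{(I)}} U^3\, \nu_I(U)\, dS.
\]
Decomposing $U = 1 + \tfrac{m_I}{2|\y_I|} + \sum_{K\ne I} \tfrac{m_K}{2|\y_K|}$, on $\pr B_R^{(I)}$ with $R \in [32,64]$ the $I$-contribution is constant and gives $\nu_I\!\left(\tfrac{m_I}{2|\y_I|}\right) = -\tfrac{m_I}{2R^2}$; for $K \ne I$, the estimate $|\y_K| \ge d_I - R \simeq d_I$ yields $\tfrac{m_K}{2|\y_K|} = O(m_K d_I^{-1})$ and $\nu_I\!\left(\tfrac{m_K}{2|\y_K|}\right) = O(m_K d_I^{-2})$. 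Summing and using $M = \sum_K m_K$ gives pointwise on $\pr B_R^{(I)}$
\[
U = 1 + O(M), \qquad \nu_I(U) = -\tfrac{m_I}{2R^2} + O(M d_I^{-2}).
\]
Multiplying and integrating over the sphere (area $4\pi R^2$ with $R = O(1)$) produces $8\pi m_I + O(m_I M + M d_I^{-1})$, proving \eqref{BL-EE}.

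Third, for $\C_l$, the same conformal-flatness shortcut reduces the integrand in Definition \ref{chargesMOT} to
\[
-4 x_l\, U^3 \nu_I(U) + 2(U^4 - 1)\nu_I^l,
\]
using $(g-e)_{ij} = (U^4-1)\de_{ij}$. I split $x_l = (\cb_I)_l + (\y_I)_l$: the constant piece $(\cb_I)_l$ pulls out of the first integral to give $(\cb_I)_l\cdot \E$, which yields the leading term in \eqref{BL-CC}. For the remaining piece, the angular factor $(\y_I)_l = R\,\nu_I^l$ meets the first-order quantity $-\tfrac{m_I}{2R^2}$, which is constant on $\pr B_R^{(I)}$, and the angular integral $\int_{\pr B_R^{(I)}} \nu_I^l\, dS = 0$ kills this leading contribution by spherical symmetry; only the $O(M d_I^{-2})$ cross-pole remainder survives. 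An identical symmetry cancellation applies to $\int (U^4 - 1)\nu_I^l\, dS$: the dominant $\tfrac{m_I}{2R}$ part of $U - 1$ is constant on the sphere, hence integrates to zero against $\nu_I^l$, leaving an $O(M d_I^{-1})$ remainder from the other poles and the higher powers of $U-1$.

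The main obstacle, modest but requiring care, is the bookkeeping of the cross-pole error terms on $\pr B_R^{(I)}$. Here one writes $\y_K = \y_I + (\cb_I - \cb_K)$ with $|\cb_I - \cb_K| \ge d_I \gg R$ and uses the uniform expansion $|\y_K|^{-1} = |\cb_I - \cb_K|^{-1}\bigl(1 + O(R/d_I)\bigr)$ for its value and derivatives, then aggregates the bounds consistently to the target $O(m_I M + M d_I^{-1})$; the only subtle point is noting that the products $(U^3 - 1)\cdot \tfrac{m_I}{2R^2}$ produce a $O(m_I M)$ contribution while the cross-pole pieces alone produce $O(M d_I^{-1})$, which together yield the asserted error.
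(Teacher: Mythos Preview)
Your proposal is correct and follows essentially the same approach as the paper: kill $\P_l$ and $\J_l$ via $k=0$, use conformal flatness to reduce $\E$ to $-\int\nu_I(U^4)\,dS$, expand $U$ near $\cb_I$ with the large-separation bound $|\y_K|\simeq d_I$ for $K\ne I$, and for $\C_l$ split $x_l=(\cb_I)_l+(\y_I)_l$ and use $\int_{\pr B_R^{(I)}}(\y_I)_l\,dS=0$ to kill the leading angular term. One minor slip: after absorbing the factor $\tfrac12$ from Definition~\ref{chargesMOT}, the $\C_l$ integrand is $-4x_l U^3\nu_I(U)+(U^4-1)\nu_I^l$, not $+2(U^4-1)\nu_I^l$; this is harmless since the leading part of that term integrates to zero by symmetry anyway.
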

\begin{proof}
For $J\ne I$ and $d_I^{-1}\ll 1\leq R$, we have from \eqref{yIdf}
\begin{align*}
\frac{m_J}{2|\x-\cb_J|}&=\frac{m_J}{2\left|\y_I+\cb_I-\cb_J\right|}\\
&=\frac{m_J}{2}\big\langle\y_I+\cb_I-\cb_J,\y_I+\cb_I-\cb_J\big\rangle^{-\frac{1}{2}}\\
&=\frac{m_J}{2}\Big(|\cb_I-\cb_J|^2+2\,\langle\y_I,\cb_I-\cb_J\rangle+|\y_I|^2\Big)^{-\frac{1}{2}}\\
&=\frac{m_J}{2|\cb_I-\cb_J|}(1+O(d_I^{-1}))^{-\frac{1}{2}}\\
&=\frac{m_J}{2|\cb_I-\cb_J|}+O\left(\frac{m_J}{d_I^2}\right),
\end{align*}
which implies
\begin{align*}
    \nu_I\left(\frac{m_J}{2|\x-\cb_J|}\right)=O\left(\frac{m_J}{d_I^2}\right).
\end{align*}
Combining the above estimates, we have on $\pr B_R^{(I)}=\{|\y_I|=R\}$
\begin{align}\label{U}
    U=1+\frac{m_I}{2|\y_I|}+\sum_{J\ne I}\frac{m_J}{2|\x-\cb_J|}=1+\frac{m_I}{2R}+O\left(\frac{M}{d_I}\right).
\end{align}
We also have on $\pr B_R^{(I)}$
\begin{align}\label{nuU}
    \nu_I(U)=\nu_I\left(\frac{m_I}{2|\y_I|}\right)+\sum_{J\ne I}\nu_I\left(\frac{m_J}{2|\x-\cb_J|}\right)=-\frac{m_I}{2R^2}+O\left(\frac{M}{d_I^2}\right).
\end{align}
Applying \eqref{U} and \eqref{nuU}, we obtain on $\pr B_R^{(I)}$
\begin{align}
\begin{split}\label{U3nu}
\nu_I(U^4)&=4U^{3}\nu_I(U)\\
&=4\left(1+\frac{m_I}{2R}+O\left(\frac{M}{d_I}\right)\right)^3\left(-\frac{m_I}{2R^{2}}+O\left(\frac{M}{d_I^2}\right)\right) \\
&=-\frac{2m_I}{R^2}+O(m_IM)+O(Md_I^{-1}).
\end{split}
\end{align}
We now compute from Definition \ref{chargesMOT}, \eqref{gBLdf} and \eqref{U3nu}
\begin{align*}
\E\left[(\gBL,0);\partial B_R^{(I)}\right]&=\frac{1}{2}\int_{\pr B_R^{(I)}}\left(\pr_i g_{ij}-\pr_j g_{ii}\right)(\nu_I)^{j}dS\\
&=\frac{1}{2}\int_{\pr B_R^{(I)}}\big(\pr_j(U^4)-3\pr_j(U^4)\big)(\nu_I)^{j}dS\\
&=-\int_{\pr B_R^{(I)}}\nu_I(U^4)dS\\
&=\int_{\pr B_R^{(i)}}\frac{2m_I}{R^2}+O(m_IM)+O(Md_I^{-1})dS\\
&=8\pi m_I+O(m_IM)+O(Md_I^{-1}),
\end{align*}
which implies \eqref{BL-EE}. Next, we have from \eqref{gBLdf}
\begin{align*}
\pr_i g_{ij}=\pr_j(U^4),\quad\; \pr_j g_{ii}=3\pr_j(U^4),\quad\; (g-e)_{ij}=(U^{4}-1)\delta_{ij},\quad\; (g-e)_{ii}=3(U^4-1).
\end{align*}
We then compute from Definition \ref{chargesMOT} and \eqref{U3nu}
\begin{align*}
&\C_l\left[(\gBL,0);\pr B_R^{(I)}\right]-(\cb_I)_l\E\left[(\gBL,0);\pr B_R^{(I)}\right]\\
=\;&\frac{1}{2}\int_{\pr B_R^{(I)}}\big[(\y_I)_l\left(\pr_i g_{ij}-\pr_j g_{ii}\right)-\de_{il}(g-e)_{ij}+\de_{jl}(g-e)_{ii}\big](\nu_I)^j dS\\
=\;&\frac{1}{2}\int_{\pr B_R^{(I)}}\big[-2(\y_I)_l\pr_j(U^4)+2\de_{jl}(U^4-1)\big](\nu_I)^j dS\\
=\;&\int_{\pr B_R^{(I)}}\left[-\nu_I(U^4)(\y_I)_l+(U^4-1)\frac{(\y_I)_l}{R}\right]dS\\
=\;&\int_{\pr B_R^{(I)}}\left[\frac{4m_I(\y_I)_l}{R^2}+O(m_IM)+O(Md_I^{-1})\right]dS\\
=\;&O(m_IM)+O(Md_I^{-1}),
\end{align*}
where we used the following identity at the last step:
$$
\int_{\partial B_R^{(I)}}(\y_I)_ldS=0,\qquad\forall\; l=1,2,3.
$$
Combining with \eqref{BL-EE}, we infer
\[
\C_l\left[(g,0);\pr B_R^{(I)}\right]=(\cb_I)_l\left(8\pi m_I+O(m_IM)+O(Md_I^{-1})\right)+O(m_IM)+O(Md_I^{-1}),
\]
which implies \eqref{BL-CC}. Finally, \eqref{BL-PP} and \eqref{BL-JJ} follow immediately from Definition \ref{chargesMOT} and the fact that $k=0$. This concludes the proof of Proposition \ref{prop:BL-local}.
\end{proof}
\begin{cor}\label{cor:BL-annulus}
Under the hypotheses of Proposition \ref{prop:BL-local}, we have for $l=1,2,3$:
\begin{align}
\E\left[(\gBL,0);A_{32}^{(I)}\right]&=8\pi m_I+O(m_IM+Md_I^{-1}),\label{annulus-EE}\\
\P_l\left[(\gBL,0);A_{32}^{(I)}\right]&=0,\label{annulus-PP}\\
\C_l\left[((\gBL,0);A_{32}^{(I)}\right]&=(\cb_I)_{l}\big(8\pi m_I+O(m_IM+Md_I^{-1})\big)+O(m_IM+Md_I^{-1}),\label{annulus-CC}\\
\J_l\left[(\gBL,0);A_{32}^{(I)}\right]&=0,\label{annulus-JJ}
\end{align}
where we denote
\[
A_{32}^{(I)}:=A_{32}\left(\cb_I\right)=B_{64}\left(\cb_I\right)\setminus\overline{B_{32}\left(\cb_I\right)},\qquad\forall\; I\in\{1,2,\dots,N\}.
\]
\end{cor}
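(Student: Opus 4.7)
The plan is to reduce the averaged-charge identities on the annulus $A_{32}^{(I)}$ to the pointwise sphere identities already established in Proposition \ref{prop:BL-local}, via the definition of $\Q[(g,k);A_r]$ given in Definition \ref{ADMannulus}. Since Proposition \ref{prop:BL-local} provides the asymptotics for $\E$, $\P_l$, $\C_l$, $\J_l$ on $\partial B_R^{(I)}$ uniformly in $R \in [32,64]$, the corollary should follow by integrating against the mollifier $\eta_{32}$ and using that it has unit total mass.

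More precisely, I first recall from \eqref{cutoffeta}--\eqref{scaleeta} that
\[
\int_{32}^{64} \eta_{32}(r')\, dr' = \int_1^2 \eta(r)\, dr = 1.
\]
Taking $r=32$ in \eqref{eq:charge-avg} and writing the translated averaging with respect to the pole $\cb_I$ (replacing $\partial B_{r'}$ by $\partial B_{r'}^{(I)}$, which is the natural local version centered at $\cb_I$), each component of $\Q[(g_{BL},0);A_{32}^{(I)}]$ is a weighted average:
\[
X\big[(g_{BL},0);A_{32}^{(I)}\big] = \int_{32}^{64} \eta_{32}(r')\, X\big[(g_{BL},0);\partial B_{r'}^{(I)}\big]\, dr',
\]
for $X \in \{\E, \P_l, \C_l, \J_l\}$.

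I then substitute the four identities \eqref{BL-EE}--\eqref{BL-JJ} from Proposition \ref{prop:BL-local}, which hold uniformly for every $R \in [32,64]$ with error terms of the form $O(m_I M + M d_I^{-1})$ that do not depend on $R$. For $\E$, since the leading term $8\pi m_I$ is $R$-independent, the unit-mass property of $\eta_{32}$ yields
\[
\E\big[(g_{BL},0);A_{32}^{(I)}\big] = 8\pi m_I + O(m_I M + M d_I^{-1}),
\]
giving \eqref{annulus-EE}. The identities \eqref{annulus-PP} and \eqref{annulus-JJ} follow trivially because $\P_l$ and $\J_l$ vanish identically on every sphere by \eqref{BL-PP}, \eqref{BL-JJ} (a consequence of $k=0$ and the rotational/translational structure of Definition \ref{chargesMOT}). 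For $\C_l$, again the $R$-dependence resides only in the error terms, so averaging produces \eqref{annulus-CC} verbatim.

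There is essentially no obstacle here: the entire content is in Proposition \ref{prop:BL-local}, and Corollary \ref{cor:BL-annulus} is a direct passage from sphere charges to annular averages. The only minor point to verify is that the uniformity of the error estimates in Proposition \ref{prop:BL-local} across $R \in [32,64]$ is genuine; this is visible in the proof of that proposition, where all remainder terms arise from the expansions $|\cb_I - \cb_J|^{-1} \gg 1$ and $m_I/R \lesssim m_I$, both of which are controlled uniformly on $[32,64]$ under \eqref{dfMd}. Once this uniformity is noted, integrating against $\eta_{32}$ concludes the argument.
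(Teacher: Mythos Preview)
Your proposal is correct and follows essentially the same approach as the paper: both reduce the annular charges to the sphere charges of Proposition \ref{prop:BL-local} via Definition \ref{ADMannulus}, then use the unit-mass property of $\eta_{32}$ and the $R$-uniformity of the error terms on $[32,64]$ to conclude. Your explicit remark about checking uniformity of the remainders is a welcome clarification, but the argument is otherwise identical to the paper's.
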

\begin{proof}
We have from Definition \ref{ADMannulus} and \eqref{BL-EE}
\begin{align*}
\E\left[(g,k);A_{32}^{(I)}\right]&=\int_0^\infty\eta_{32}(r')\E\left[(g,k);\pr B_{r'}^{(I)}\right]\,dr'\\
&=\int_0^\infty \eta_{32}(r')\big[8\pi m_I+O(m_IM+Md_I^{-1})\big]dr'\\
&=8\pi m_I+O(m_IM+Md_I^{-1}),
\end{align*}
which implies \eqref{annulus-EE}. Similarly, we have from Definition \ref{ADMannulus} and \eqref{BL-CC}
\begin{align*}
\C_l\left[(g,k);A_{32}^{(I)}\right]&=\int_0^\infty\eta_{32}(r')\C\left[(g,k);\pr B_{r'}^{(I)}\right]\,dr'\\
&=\int_0^\infty \eta_{32}(r')\Big(\big[8\pi m_I+O(m_IM+Md_I^{-1})\big](\cb_I)_{l}+O(m_IM+Md_I^{-1})\Big)dr'\\
&=(\cb_I)_{l}\big(8\pi m_I+O(m_IM+Md_I^{-1})\big)+O(m_IM+Md_I^{-1}),
\end{align*}
which implies \eqref{annulus-CC}. Finally, \eqref{annulus-PP} and \eqref{annulus-JJ} follow from \eqref{BL-PP} and \eqref{BL-JJ}. This concludes the proof of Corollary \ref{cor:BL-annulus}.
\end{proof}
\subsection{Sobolev norms in annuli}
We now prove the following proposition, which controls the size of $\gBL-e$.
\begin{prop}\label{prop:BL-Sobolev}
Under the hypotheses of Proposition \ref{prop:BL-local}, we have for any $I\in\{1,2,\dots,N\}$ and $s\in\mathbb{N}$:
\begin{equation}\label{eq:Sobolev-bound}
\|\gBL-e\|_{H^{s}(A_{32}^{(I)})}\les m_I.
\end{equation}
\end{prop}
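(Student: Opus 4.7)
\textbf{Proof proposal for Proposition \ref{prop:BL-Sobolev}.}
The plan is to reduce the $H^{s}$-estimate to pointwise $W^{s,\infty}$ control of $U-1$ on $A_{32}^{(I)}$, exploiting the fact that the annulus has bounded Euclidean volume. Since $\gBL-e=(U^{4}-1)\,e$, it suffices to prove
\begin{equation*}
\|U^{4}-1\|_{W^{s,\infty}(A_{32}^{(I)})}\les m_{I},
\end{equation*}
and then use $\|f\|_{H^{s}(A_{32}^{(I)})}\les\|f\|_{W^{s,\infty}(A_{32}^{(I)})}\cdot|A_{32}^{(I)}|^{1/2}\les\|f\|_{W^{s,\infty}(A_{32}^{(I)})}$.

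First, I would split $U-1$ into the self-contribution centered at $\cb_{I}$ and the cross-contributions from the other poles:
\begin{equation*}
U-1=\frac{m_{I}}{2|\y_{I}|}+\sum_{J\ne I}\frac{m_{J}}{2|\x-\cb_{J}|},\qquad \y_{I}=\x-\cb_{I}.
\end{equation*}
On $A_{32}^{(I)}$ we have $|\y_{I}|\in[32,64]$, hence $|\pr^{\a}(|\y_{I}|^{-1})|\les_{\a}1$, so the self-term satisfies $|\pr^{\a}(\frac{m_{I}}{2|\y_{I}|})|\les m_{I}$ for every multi-index $\a$ with $|\a|\leq s$. For $J\ne I$, the large-separation condition $d_{I}\gg m_{I}^{-1}\gg 64$ ensures $|\x-\cb_{J}|\geq|\cb_{I}-\cb_{J}|-64\geq d_{I}/2$ on $A_{32}^{(I)}$, giving $|\pr^{\a}(|\x-\cb_{J}|^{-1})|\les_{\a}d_{I}^{-1-|\a|}\les d_{I}^{-1}$. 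Summing over $J\ne I$,
\begin{equation*}
\Big|\pr^{\a}\Big(\sum_{J\ne I}\frac{m_{J}}{2|\x-\cb_{J}|}\Big)\Big|\les\frac{M}{d_{I}}\les Mm_{I}\les m_{I},
\end{equation*}
where we used $d_{I}^{-1}\les m_{I}$ and $M\ll 1$ from \eqref{dfMd}. Combining these, we obtain $\|U-1\|_{W^{s,\infty}(A_{32}^{(I)})}\les m_{I}$.

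Next, I would pass from $U-1$ to $U^{4}-1$. Writing $U^{4}-1=(U-1)\sum_{j=0}^{3}U^{j}$ and noting that $\|U\|_{L^{\infty}(A_{32}^{(I)})}\les 1$ (since $M\ll 1$ forces $|U-1|\ll 1$), an application of the Leibniz rule yields
\begin{equation*}
\|U^{4}-1\|_{W^{s,\infty}(A_{32}^{(I)})}\les\|U-1\|_{W^{s,\infty}(A_{32}^{(I)})}\cdot\|U\|_{W^{s,\infty}(A_{32}^{(I)})}^{3}\les m_{I},
\end{equation*}
since $\|U\|_{W^{s,\infty}(A_{32}^{(I)})}\les 1+\|U-1\|_{W^{s,\infty}(A_{32}^{(I)})}\les 1$. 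Integrating over the bounded-volume annulus $A_{32}^{(I)}$ then yields \eqref{eq:Sobolev-bound}.

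I do not expect any serious obstacle: every step is a direct pointwise computation. The only subtlety is making sure the cross-term sum $\sum_{J\ne I}$ is controlled uniformly, which relies precisely on the interplay $M\ll 1$ and $d_{I}^{-1}\ll m_{I}$ given in \eqref{dfMd}; this same mechanism is what allows the argument to extend to countably many poles, consistent with Remark \ref{rkcountable}.
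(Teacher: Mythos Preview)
Your proof is correct and follows essentially the same approach as the paper: both obtain pointwise $C^{s}$ control of $\gBL-e=(U^{4}-1)e$ on the annulus by splitting $U-1$ into the self-term $\frac{m_{I}}{2|\y_{I}|}$ (bounded since $|\y_{I}|\in[32,64]$) and the cross-terms (controlled by $M d_{I}^{-1}\les m_{I}$ via the large-separation condition), and then integrate over the finite-volume annulus. The paper writes the expansion of $U^{4}-1$ directly, whereas you pass through $U-1$ first and apply Leibniz, but the argument is the same in substance.
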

\begin{proof}
Notice that we have
$$
32\leq r:=|\mathbf{y}_I|\leq 64\qquad \mbox{ on }\; A_{32}^{(I)}.
$$
Thus, we have from \eqref{gBLdf}
\begin{align*}
    (g-e)_{ij}=(U^4-1)\de_{ij}=\frac{2m_I}{r}\de_{ij}+O\left(\frac{M}{d_I}\right),
\end{align*}
which implies for all $0\leq |\a|\leq s$
\begin{align*}
    \pr^\a(g-e)=O\left(\frac{m_I}{r^{1+|\a|}}\right)+O\left(\frac{M}{d_I}\right)=O(m_I)+O(Md_I^{-1}),\qquad \mbox{ on }\; A_{32}^{(I)}.
\end{align*}
Thus, we infer for $d_I\gg m_I^{-1}$
\begin{align*}
    \|g-e\|_{H^s(A_{32}^{(I)})}^2\les\int_{32}^{64}dr\int_{\pr B_r}|\pr^\a(g-e)|^2dS\les m_I^2+\frac{M^2}{d_I^2}\les m_I^2.
\end{align*}
This concludes the proof of Proposition \ref{prop:BL-Sobolev}.
\end{proof}
\section{Gluing construction of Cauchy initial data}\label{secgluing}
\subsection{Statement of the main theorem}
The goal of this section is to prove the following theorem, which proves the existence of the desired Cauchy initial data.
\begin{thm}\label{mainCauchy}
Let $N\in\mathbb{N}$ and $s\geq 3$. For any Brill-Lindquist metric $\gBL$ defined in \eqref{gBLdf} and satisfying the small-mass and large-separation condition \eqref{dfMd}, there exists a $2N$--parameter $\{(\de_I,a_I)\}_{I=1}^N$ family of triplets $(\Si,g,k)$, which are solutions of \eqref{constrainteq}, satisfying the following properties:
\begin{enumerate}
    \item For any $I\in\{1,2,\dots,N\}$, we have
    \begin{align*}
        (g,k)&=(\gBL,0)\qquad \mbox{ in }\;B_{32}^c(\cb_I),\\
        (g,k)&=(e,0)\qquad\quad\;\mbox{ in }\;B_{1-2\de_I}(\cb_I),
    \end{align*}
    and the following estimate hold:
    \begin{equation}\label{64-1control}
        \|(g-e,k)\|^2_{H^s\times H^{s-1}(B_{64}(\cb_I)\setminus\ov{B_{1}(\cb_I)})}\les m_I.
    \end{equation}
    \item For any $I\in\{1,2,\dots,N\}$, a trapped surface will form in $D^+(B_1(\cb_I))$, the future domain of dependence of $B_1(\cb_I)$.
    \item $(\Si,g,k)$ is free of trapped surfaces.
    \item For any $I\in\{1,2,\dots,N\}$, the coordinate ball $B_{d_I-32}(\cb_I)$ is free of MOTS with $d_I$ defined in \eqref{dJ}.
    \end{enumerate}
    See Figure \ref{FinalID} for a geometric illustration of $(\Si,g,k)$ in the particular case $N=3$.
\begin{figure}[H]
\centering
\tdplotsetmaincoords{70}{120} 
\begin{tikzpicture}[scale=0.46, tdplot_main_coords]
\fill[blue!20, opacity=0.3] (-13,-12,0) -- (12,-12,0) -- (12,12,0) -- (-13,12,0) -- cycle; 
\node at (2,-7,0) {\scalebox{1}{$\gBL$}};
\newcommand{\AnnulusUnit}[4]{%
    \begin{scope}[shift={(#1,#2,0)}, scale=#3]
\begin{scope}
    \fill[blue!40, opacity=0.8] (0,0,0) ellipse (5 and 5);
    \clip (0,0,0) ellipse (2 and 2);
    \fill[white] (0,0,0) ellipse (2 and 2); 
\end{scope}
\begin{scope}
    \fill[red!90, opacity=0.8] (0,0,0) ellipse (2 and 2);
    \clip (0,0,0) ellipse (1.75 and 1.75);
    \fill[white] (0,0,0) ellipse (1.75 and 1.75); 
\end{scope}
\fill[black] (0,0,0) circle (2pt);
\node[left] at (0,0,0) {\scalebox{0.7}{$\cb_{#4}$}};
\draw (0,0,0) ellipse (2 and 2); 
\draw (0,0,0) ellipse (1.75 and 1.75); 
\draw (0,0,0) ellipse (5 and 5);
\draw (0,0,0) ellipse (8 and 8);
\draw[->, thin, rounded corners=4pt] (0,-1,8) to[out=-90, in=90] (0,1,0);
\node[above] at (0.1,-1.2,8) {\scalebox{0.7}{Euclidean}};
\draw[->, thin, rounded corners=4pt] (0,4,11) to[out=-90, in=90] (0,1.9,0);
\node[above] at (0,4,11) {\scalebox{0.7}{\red{Short-pulse annulus}}};
\draw[->, thin, rounded corners=4pt] (0,-4,5) to[out=-90, in=90] (0,-3,0);
\node[above] at (0,-4,5) {\scalebox{0.7}{\blue{Barrier annulus}}};
\draw[->, thin, rounded corners=4pt] (0,-7,2) to[out=-90, in=90] (0,-6,0);
\node[above] at (0,-8,2) {\scalebox{0.7}{Gluing region}};
\draw[dashed] (0,0,-0.25) ellipse (2*11/12 and 2*11/12); 
\foreach \angle in {-40,110}
{\draw[dashed] (0,0,-3) -- ({2*cos(\angle)},{2*sin(\angle)},0);}
\end{scope}
}
\newcommand{\AnnulusUnitplain}[4]{%
    \begin{scope}[shift={(#1,#2,0)}, scale=#3]
\begin{scope}
    \fill[blue!40, opacity=0.8] (0,0,0) ellipse (5 and 5);
    \clip (0,0,0) ellipse (2 and 2);
    \fill[white] (0,0,0) ellipse (2 and 2); 
\end{scope}
\begin{scope}
    \fill[red!90, opacity=0.8] (0,0,0) ellipse (2 and 2);
    \clip (0,0,0) ellipse (1.75 and 1.75);
    \fill[white] (0,0,0) ellipse (1.75 and 1.75); 
\end{scope}
\fill[black] (0,0,0) circle (2pt);
\node[left] at (0,0.8,0) {\scalebox{0.5}{$\cb_{#4}$}};
\draw (0,0,0) ellipse (2 and 2); 
\draw (0,0,0) ellipse (1.75 and 1.75); 
\draw (0,0,0) ellipse (5 and 5);
\draw (0,0,0) ellipse (8 and 8);
\draw[dashed] (0,0,-0.25) ellipse (2*11/12 and 2*11/12); 
\foreach \angle in {-40,110}
{\draw[dashed] (0,0,-3) -- ({2*cos(\angle)},{2*sin(\angle)},0);}
\end{scope}
}
\AnnulusUnit{5}{5.4}{0.73}{1}
\AnnulusUnitplain{-8}{6}{0.45}{2}
\AnnulusUnitplain{-7}{-8}{0.36}{3}
 \draw[->] (9,-8,0) -- (9,-8,9) node[anchor=south]{\scalebox{0.5}{$t$}};
\end{tikzpicture}
\caption{Cauchy initial data in Theorem \ref{mainCauchy}.}
\label{FinalID}
\end{figure}
\end{thm}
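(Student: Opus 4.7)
The plan is an inductive application of the Mao--Oh--Tao obstruction-free gluing theorem \ref{thm:MOT1.7}, sequentially inserting the $N$ interior short-pulse data sets of Theorem \ref{interiorsolution}, centered at each pole $\cb_I$, into the Brill--Lindquist manifold of Theorem \ref{BrillLindquistintro}. For each $I\in\{1,\dots,N\}$, I would pick parameters $(\de_I,a_I)$ with $0<\de_I\leq a_I^{-2}$ and $a_I^{-1}\ll m_I$, and set $(g_{in},k_{in})_I:=(g_{\de_I,a_I},k_{\de_I,a_I})$ on $A_1^{(I)}:=B_2(\cb_I)\setminus\ov{B_1(\cb_I)}$, and $(g_{out},k_{out})_I:=(\gBL,0)$ on $A_{32}^{(I)}$. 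From \eqref{diffge} one obtains $\|(g_{in}-e,k_{in})\|_{H^s\times H^{s-1}(A_1^{(I)})}^2\les a_I^{-2}$ and $|\Q[(g_{in},k_{in});A_1^{(I)}]|\les a_I^{-1}$; Corollary \ref{cor:BL-annulus} and Proposition \ref{prop:BL-Sobolev}, in coordinates centered at $\cb_I$, give $\|\gBL-e\|_{H^s(A_{32}^{(I)})}^2\les m_I^2$, $\E[(\gBL,0);A_{32}^{(I)}]\simeq m_I$, $\P=\J=0$, and $|\C|\les m_IM+Md_I^{-1}\ll m_I$.

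For $I=1$ the mismatch $\De\Q^{(1)}$ then satisfies $\De\E\simeq m_1$, $|\De\P|\les a_1^{-1}\ll \De\E$ (verifying \eqref{eq:obs-free-unit:EP}--\eqref{eq:obs-free-unit:Gamma} with a fixed $\Ga>1$), $\De\E\ll 1$ (verifying \eqref{eq:obs-free-unit:ep}), $|\De\C|+|\De\J|\les m_1M+Md_1^{-1}+a_1^{-1}\ll \De\E$ (verifying \eqref{eq:obs-free-unit:CJ}), and the Sobolev mismatch $\les m_1^2+a_1^{-2}\ll m_1\simeq\De\E$ (verifying \eqref{eq:obs-free-unit:data}), provided $M$ in \eqref{dfMd} is chosen smaller than the thresholds $\eps_o(s,\Ga)^2$ and $\mu_o(s,\Ga)$. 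Theorem \ref{thm:MOT1.7} then yields $(g^1,k^1)$ solving \eqref{constrainteq} on $B_{64}(\cb_1)\setminus\ov{B_1(\cb_1)}$, matching the interior on $A_1^{(1)}$ and $(\gBL,0)$ on $A_{32}^{(1)}$, and satisfying $\|(g^1-e,k^1)\|^2_{H^s\times H^{s-1}}\les m_1$. Since \eqref{dfMd} forces $d_I\gg m_I^{-1}\gg 1$, the balls $\{B_{64}(\cb_I)\}$ are pairwise disjoint, so at each subsequent pole $\cb_J$ the previously modified regions are irrelevant and the same verification goes through verbatim. Iterating $N$ times produces $(g,k):=(g^N,k^N)$ on $\Si:=\RRR^3\setminus\{\cb_I\}_{I=1}^N$ satisfying conclusion~(1).

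Conclusion~(2) follows because the future development of $B_1(\cb_I)$ contains the short-pulse cone of Theorem \ref{shortpulsecone}, whose trapped sphere $S_{-\de_Ia_I/4,\de_I}$ is explicitly constructed there. For conclusions~(3)--(4) I would invoke a standard mean curvature comparison lemma (Proposition \ref{notrapped}): on $B_{1-2\de_I}(\cb_I)$ the data is Euclidean, hence no trapped surfaces or MOTS; on $B_2(\cb_I)\setminus\ov{B_{1-2\de_I}(\cb_I)}$ the sign conditions \eqref{notrappingde} are immediate; on $A_{32}^{(I)}$ and on $B_{d_I-32}(\cb_I)\setminus\ov{B_{32}(\cb_I)}$ the metric is exactly $(\gBL,0)$, and Proposition \ref{meanconvex_local} gives strict mean-convexity of every coordinate sphere $S_R^{(I)}$ with $R\in[m_I/2,d_I)$, which with $k\equiv 0$ provides a barrier foliation excluding both trapped surfaces and MOTS; on the remaining gluing annulus $B_{32}(\cb_I)\setminus\ov{B_2(\cb_I)}$ the bound \eqref{64-1control} together with $m_I\ll 1$ and Sobolev embedding $H^s\hookrightarrow C^1$ (valid since $s\geq 3$) yields $\|(g-e,k)\|_{C^1\times C^0}\ll 1$, small enough to propagate the mean-convexity of the background coordinate spheres through the perturbation.

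\textbf{Main obstacle.} The principal technical difficulty lies in this last step: Theorem \ref{thm:MOT1.7} only yields Sobolev control on $(g-e,k)$ in the gluing annulus, whereas ruling out trapped surfaces and MOTS requires pointwise estimates on the second fundamental form of \emph{arbitrary} embedded $2$-surfaces and on $k$. One must propagate the Sobolev bound through the embedding $H^s\hookrightarrow C^{s-3/2-\eps}$, combine it with the sharp mean-convexity thresholds from Proposition \ref{meanconvex_local} (valid only for $R\in[m_J/2,d_J)$), and implement a barrier-type argument to conclude that no embedded $2$-surface in $B_{d_I-32}(\cb_I)$ can be a MOTS. A secondary subtlety is the uniform absorption of the universal constants $\eps_o(s,\Ga)$ and $\mu_o(s,\Ga)$ from MOT by the choice of $M$ in \eqref{dfMd}, consistent across all $N$ iterated gluings.
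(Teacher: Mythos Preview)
Your proposal is essentially correct and follows the same strategy as the paper: the inductive gluing via Theorem~\ref{thm:MOT1.7}, the verification of \eqref{eq:obs-free-unit:EP}--\eqref{eq:obs-free-unit:data} from Propositions~\ref{ginkin} and~\ref{exteriorQ}, the trapped-surface formation from Theorem~\ref{shortpulsecone}, and the barrier argument using Proposition~\ref{meanconvex_local} together with Sobolev embedding all match the paper's Propositions~\ref{propglue}, \ref{multipletrapped}, and~\ref{notrapped}.

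Two small points. First, after all $N$ gluings the data $(g,k)$ is smooth and Euclidean near each $\cb_I$, so $\Si=\RRR^3$, not $\RRR^3\setminus\{\cb_I\}$. Second, your barrier-foliation argument for conclusion~(3) only covers $\bigcup_I B_{d_I-32}(\cb_I)$; you do not address a compact surface $S$ that meets the far exterior $\Si_{ext}:=\bigcap_I B_{32}^c(\cb_I)$, where the $\cb_I$-centered coordinate spheres need not be mean-convex. The paper closes this with a one-line observation (Step~2 of Proposition~\ref{notrapped}): on $\Si_{ext}$ one has $k\equiv 0$, so at any $p\in S\cap\Si_{ext}$,
\[
\tr_\slg(-\th-k)\,\tr_\slg(\th-k)\big|_p=-(\tr_\slg\th)^2\big|_p\leq 0,
\]
which already violates \eqref{dftrapped}. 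This is simpler than extending the barrier foliation and is worth incorporating.
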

\begin{rk}\label{rkBH}
Properties 2 and 3 in Theorem \ref{mainCauchy} implies that even if the initial data are free of trapped surfaces, there are $N$ trapped surfaces that will form in $D^+(B_1(\cb_I))$, $I=1,2,\dots,N$, respectively. Assuming the weak cosmic censorship conjecture, we have that $N$ black holes will form in the future of $(\Si,g,k)$ in a finite time.
\end{rk}
\begin{rk}
Property 4 in Theorem \ref{mainCauchy} demonstrates the generic nature of the result, i.e. the formation of trapped surface in the future is not a perturbation of a MOTS.
\end{rk}
\begin{prop}[External Stability of the Constructed Initial Data]\label{rkShenKerr}
Let $(\Si, g, k)$ be the initial data set constructed in Theorem \ref{mainCauchy}. Then the external region of its future development,
\[
D^+(\Sigma \setminus B_R),
\]
is future asymptotically stable and converges to a Kerr spacetime of mass $M$, for any sufficiently large $R \gg \max\limits_{1 \leq I \leq N} |\cb_I|$.
\end{prop}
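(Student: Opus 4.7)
The plan is to reduce Proposition \ref{rkShenKerr} to a direct application of an exterior stability result for asymptotically flat vacuum Cauchy data. The key point is that, by causality, the future development $D^+(\Si \setminus B_R)$ depends only on the initial data restricted to $\Si \setminus B_R$, which for $R \gg \max_I |\cb_I|$ coincides exactly with $(\gBL,0)$ by item (1) of Theorem \ref{mainCauchy}.

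First, I would expand the Brill--Lindquist conformal factor at infinity. For $|\x|\geq R\gg\max_I|\cb_I|$, a straightforward Taylor expansion gives
\begin{align*}
\frac{1}{|\x-\cb_I|}=\frac{1}{|\x|}+O\!\left(\frac{|\cb_I|}{|\x|^2}\right),\qquad U(\x)=1+\frac{M}{2|\x|}+O(|\x|^{-2}),
\end{align*}
so that $\gBL$ agrees with the Riemannian Schwarzschild metric of mass $M$ centered at the origin (cf.\ Remark \ref{rkSchw}) up to a correction decaying like $|\x|^{-2}$, which is subleading to the Schwarzschild term. Moreover, since $k\equiv 0$ on $\Si\setminus B_R$ and the data is time-symmetric, the global ADM charges of $(\gBL,0)$ are $(\E,\P,\C,\J)=(8\pi M,\bm{0},\C_{\mathrm{asym}},\bm{0})$, where $\C_{\mathrm{asym}}$ is the barycenter of $\{\cb_I\}$ weighted by $m_I$; in particular $\P$ and $\J$ vanish identically.

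Next, since $M\ll 1$ by the small-mass condition \eqref{dfMd}, the restricted data $(\Si\setminus B_R, \gBL, 0)$ lies in the small-data regime of a Schwarzschild background of mass $M$, hence also in the small-data regime of Minkowski. Applying an exterior stability theorem for spacelike-characteristic or spacelike asymptotically flat Cauchy data---for instance Klainerman--Nicol\`o \cite{kn}, the $r^p$--weighted approach of \cite{Shen22,Shen23,Shen24}, or Hintz \cite{Hintz}---yields that $D^+(\Si\setminus B_R)$ is future causally geodesically complete, possesses a complete future null infinity $\mathcal{I}^+$, and that all geometric quantities decay toward a Kerr spacetime of mass $M$ along $\mathcal{I}^+$. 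Since $\P=\J=0$, the limiting Kerr is in fact Schwarzschild of mass $M$.

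The main technical obstacle is verifying that the multipole expansion of $\gBL$ around the center of mass supplies the subleading decay in the precise weighted Sobolev norms required by the chosen exterior stability theorem. This amounts to a direct computation on $\Si\setminus B_R$: differentiating the explicit formula \eqref{gBLdf} term by term and using $|\x-\cb_I|\geq |\x|-\max_I|\cb_I|\simeq|\x|$, one obtains $\|\pr^\a(\gBL-\gS)\|\les M|\x|^{-2-|\a|}(\max_I|\cb_I|)$ in the exterior, where $\gS$ denotes Schwarzschild of mass $M$. No dynamical matching with the interior gluing region is needed, so the analysis is purely a verification step; the smallness of $M$ and the vanishing of $k$ make the hypotheses of all available exterior stability frameworks straightforward to check.
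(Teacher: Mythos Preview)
Your approach is essentially the same as the paper's: expand $\gBL$ at infinity to recognize it as a Schwarzschild slice of mass $M$ plus a subleading $O(|\x|^{-2})$ correction, then invoke an exterior stability theorem. The paper carries out exactly this expansion and then applies \cite{ShenKerr} (Kerr stability in external regions) directly.

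The one point worth tightening is your choice of reference for the final step. The Minkowski stability results you cite (\cite{kn}, \cite{Shen22,Shen23,Shen24}, \cite{Hintz}) establish future completeness and decay toward Minkowski for small data, but they do not by themselves furnish the statement that the exterior \emph{converges to a Kerr spacetime of mass $M$}. The paper instead appeals to \cite{ShenKerr}, whose conclusion is precisely asymptotic convergence to a member of the Kerr family in the exterior region for data close to Schwarzschild; this matches the proposition's wording exactly. Your additional observations (vanishing of $\P$ and $\J$, hence the limiting Kerr is in fact Schwarzschild) are correct and go slightly beyond what the paper states, but to support the Kerr-convergence claim as phrased you should invoke \cite{ShenKerr} rather than the Minkowski stability literature.
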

\begin{proof}
From Theorem~\ref{mainCauchy}, the initial data $(\Sigma, g, k)$ satisfy
\[
g = \left(1 + \frac{M}{2|\x|} \right)^4 e + O\left( \frac{M}{|\x|^2} \right), \qquad k = 0,
\]
in the region $\Sigma \setminus B_R$, where $R$ is chosen large enough to contain all centers $\cb_I$. That is, the data match a spatial Schwarzschild metric of mass $M$ up to $O(M/|\x|^2)$ decay. Using the Riemannian Schwarzschild metric \eqref{Schw-isotermal} in isothermal coordinate, this becomes
\[
g = g_M + O\left( \frac{M}{|\x|^2} \right), \qquad k = 0,
\]
where $g_M = \left(1 + \frac{M}{2|\x|} \right)^4 e$ is the spatial Schwarzschild metric of mass $M$.

By the main result of \cite{ShenKerr}, the Einstein vacuum equations evolve such asymptotically Schwarzschild data into a spacetime that is future asymptotically close to a Kerr solution. In particular, the external domain of dependence $D^+(\Sigma \setminus B_R)$ is future asymptotically stable and converges to the external region of a Kerr black hole with mass $M$.
\end{proof}
\begin{que}[Interior Evolution]
While the external region $D^+(\Sigma \setminus B_R)$ evolves to the exterior region of a Kerr spacetime, the future evolution of the interior region $J^+(\Sigma \cap B_R)$ remains unknown.\footnote{Here, $J^+(\Si\cap B_R)$ denotes the causal future of $\Si\cap B_R$.} In particular:
\begin{itemize}
  \item How do the individual black holes interact with each other during the evolution?
  \item Is the full future development of $(\Si, g, k)$ globally asymptotically Kerr?
\end{itemize}
Understanding the nonlinear dynamics and potential mergers in the interior region is a fundamental open problem in mathematical general relativity. These questions are central to the relativistic $N$--body problem and are closely related to the \emph{Final State Conjecture} (FSC), which posits that generic asymptotically flat vacuum data evolve, in the large, toward a configuration of finitely many Kerr black holes plus radiative decay. See the Introduction of \cite{Ksurvey} for a broad overview of this conjecture and its connection to black hole stability, rigidity, and merger dynamics.
\end{que}
\begin{rk}\label{rkcountable}
Theorem \ref{mainCauchy} shows that for any $N\in\mathbb{N}$, there exists a family of triplets $(\Si,g,k)$, which solves \eqref{constrainteq} and evolves to $N$ trapped surfaces. The proof also applies for the Brill--Lindquist metric \eqref{gBLdf} with countably many poles $\{\cb_{I}\}_{I=1}^\infty$ under the following small-mass and large-separation condition:
\begin{align*}
    M:=\sum_{I=1}^\infty m_I\ll 1,\qquad\quad d_I:=\min_{J\ne I} |\cb_I-\cb_J|\gg  m_I^{-1}\gg 1.
\end{align*}
More precisely, there exists a family of solutions of \eqref{constrainteq}, which evolves to countably many trapped surfaces. As in Remark \ref{rkBH}, we see that countably many black holes will form in a finite time by assuming the weak cosmic censorship conjecture. However, the external stability in Proposition \ref{rkShenKerr} does not apply in this countable case. We also mention the work \cite{ACP}, which constructs scalar-flat initial data with infinitely many minimal spheres using a gluing scheme based on the Brill--Lindquist metric.
\end{rk}
The proof of Theorem \ref{mainCauchy} is given in the following steps:
\begin{itemize}
    \item In Section \ref{ssecint}, we calculate the charges of the barrier annulus constructed in Theorem \ref{interiorsolution}. We also estimate the Sobolev norms of $(g-e,k)$ on the barrier annulus.
    \item In Section \ref{ssecout}, we summarize the properties of the Brill-Lindquist metric $\gBL$ introduced in Section \ref{secBL}, which will be used in the next section.
    \item In Section \ref{ssecglue}, we perform surgery for $\gBL$ near $\{\cb_I\}_{I=1}^N$. More precisely, for any $I\in\{1,2,\dots,N\}$, we remove the ball $B_{32}(\cb_I)$ and then apply Theorem \ref{thm:MOT1.7} to patch the constant-time slices $\Si(\de_I,a_I)$ constructed in Theorem \ref{interiorsolution}. This constructs the desired Cauchy data in Theorem \ref{mainCauchy}.
    \item In Section \ref{ssecNtrap}, we show that, for the Cauchy data constructed in Section \ref{ssecglue}, $N$ trapped surfaces will form in $D^+(B_1(\cb_I))$, $I\in\{1,2,\dots,N\}$, the future domain of dependence of balls $B_1(\cb_I)$.
    \item In Section \ref{ssecfree}, we prove that the Cauchy data constructed in Section \ref{ssecglue} are free of trapped surfaces and MOTS.
\end{itemize}
\subsection{Interior solutions}\label{ssecint}
\begin{prop}\label{ginkin}
Let $(\Si,g,k):=(\Si_{\de,a},g_{\de,a},k_{\de,a})$ be the constant-time slice constructed in Theorem \ref{interiorsolution} with the parameter $(\de,a)$. Then, we have
    \begin{align}\label{interiorQ}
        \Q[(g,k);A_1]=O(a^{-1}),
    \end{align}
    where $\Q[(g,k);A_1]$ is defined in \eqref{eq:charge-avg} and $A_1:=B_2\setminus\ov{B_1}$. Moreover, we have
    \begin{align}\label{interiorSobolev}
        \|(g-e,k)\|_{H^s\times H^{s-1}(A_1)}\les a^{-1}.
    \end{align}
\end{prop}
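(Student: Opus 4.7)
The second estimate \eqref{interiorSobolev} is essentially immediate: it is exactly the bound $\|(g-e,k)\|_{H^s\times H^{s-1}(B_2\setminus\overline{B_1})}\les a^{-1}$ already recorded in \eqref{diffge} of Theorem \ref{interiorsolution}, noting that $A_1=B_2\setminus\overline{B_1}$ by Definition \ref{ADMannulus}. So the entire content of the proposition lies in the charge estimate \eqref{interiorQ}.

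The plan for \eqref{interiorQ} is to exploit the averaging built into Definition \ref{ADMannulus} in order to convert surface integrals over spheres $\partial B_{r'}$ into a single weighted volume integral over the annulus $A_1$, which can then be controlled by the Sobolev bound \eqref{interiorSobolev}. Concretely, by the coarea formula and the definition \eqref{scaleeta} of $\eta_{r}$,
\begin{align*}
\Q[(g,k);A_1]=\int_1^2\eta_1(r')\int_{\partial B_{r'}}F_\Q(g,k,x)\,dS\,dr'=\int_{A_1}\eta_1(|x|)\,F_\Q(g,k,x)\,dx,
\end{align*}
where $F_\Q$ denotes the pointwise integrand associated with each of the ten components in \eqref{dfQ}. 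Inspection of Definition \ref{chargesMOT} shows that $F_\Q$ is always a sum of terms of the form $c(x)\,\partial(g-e)$ or $c(x)(g-e)$ or $c(x)\,k$, where $c(x)$ is smooth, bounded, and made out of $1$, $x_l$, and ${\in_{ij}}^l x^j$; all of these are uniformly bounded on $A_1\subseteq\overline{B_2}$.

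I would then apply Cauchy–Schwarz on the bounded-volume domain $A_1$:
\begin{align*}
|\Q[(g,k);A_1]|\les\int_{A_1}\big(|\partial(g-e)|+|g-e|+|k|\big)dx\les\|g-e\|_{H^1(A_1)}+\|k\|_{L^2(A_1)}.
\end{align*}
Since the standing assumption $s\geq 3$ gives the continuous inclusions $H^s(A_1)\hookrightarrow H^1(A_1)$ and $H^{s-1}(A_1)\hookrightarrow L^2(A_1)$, the right-hand side is controlled by $\|(g-e,k)\|_{H^s\times H^{s-1}(A_1)}$, and invoking \eqref{interiorSobolev} gives $|\Q[(g,k);A_1]|\les a^{-1}$, which is \eqref{interiorQ}.

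There is no genuine obstacle: the whole point is that Definition \ref{ADMannulus} bypasses the usual trace-theorem loss that would appear if one tried to bound charges on a fixed sphere by a bulk Sobolev norm on the annulus. The only mild care needed is to verify that every entry of the charge vector $\Q$ really is of the schematic form $c(x)\,(\partial(g-e),\,(g-e),\,k)$ with $c$ bounded on $A_1$, which is a termwise reading of Definition \ref{chargesMOT}; this is the step I would write out explicitly but would not expect to produce any surprise.
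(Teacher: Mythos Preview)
Your proof is correct and takes a slightly different (and arguably more economical) route than the paper's. The paper does not use the averaging structure of Definition~\ref{ADMannulus} in any essential way: instead it invokes the stronger $C^s$ bound $\|(g-e,k)\|_{C^s\times C^{s-1}(A_1)}\les a^{-1}$ coming from Proposition~\ref{estCs} (equation~\eqref{g-eta}) to conclude $\Q[(g,k);\partial B_r]=O(a^{-1})$ pointwise for every $r\in(1,2)$, after which the average over $r$ is immediate. Your argument, by contrast, needs only the $H^s$ bound \eqref{diffge} and uses the $\eta_1$--averaging via coarea to convert the surface charges into a volume integral over $A_1$, which is then handled by Cauchy--Schwarz and the inclusion $H^s\hookrightarrow H^1$. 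What your approach buys is that it never requires upgrading from $H^s$ to $C^s$ control; what the paper's approach buys is that it gives the stronger pointwise-in-$r$ statement $\Q[(g,k);\partial B_r]=O(a^{-1})$, not just the averaged one.
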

\begin{proof}
We have from \eqref{g-eta}
    \begin{align}\label{interiorCs}
        \|(g-e,k)\|_{C^s\times C^{s-1}(A_1)}\les a^{-1},
    \end{align}
which implies \eqref{interiorSobolev}. Next, we have from Definition \ref{chargesMOT} that for all $r\in(1,2)$
    \begin{align*}
        \E[(g,k);\pr B_r]&=\frac{1}{2}\int_{\pr B_r}(\pr_ig_{ij}-\pr_jg_{ii})\nu^jdS,\\
        \P_i[(g,k);\pr B_r]&=\int_{\pr B_r}(k_{ij}-\de_{ij}\tr_ek)\nu^jdS,\\
        \C_l[(g,k);\pr B_r]&=\frac{1}{2}\int_{\pr B_r}\big(x_l\pr_ig_{ij}-x_l\pr_jg_{ii}-\de_{il}(g-e)_{ij}+\de_{jl}(g-e)_{ii}\big)\nu^jdS,\\
        \J_l[(g,k);\pr B_r]&=\int_{\pr B_r}(k_{ij}-\de_{ij}\tr_e k)Y_l^i\nu^jdS.
    \end{align*}
    Applying \eqref{interiorCs}, we obtain for $\Q$ defined in \eqref{dfQ}
    \begin{align*}
        \Q[(g,k);\pr B_r]=O(a^{-1}),\qquad \forall\; r\in(1,2).
    \end{align*}
    Combining with \eqref{eq:charge-avg}, we deduce \eqref{interiorQ}. This concludes the proof of Proposition \ref{ginkin}.
\end{proof}
\subsection{Exterior solution}\label{ssecout}
\begin{prop}\label{exteriorQ}
Let $\gBL$ be the Brill-Lindquist metric defined in \eqref{gBLdf} satisfying the small-mass and large-separation condition \eqref{dfMd}. Let $(x^1,x^2,x^3)$ be a coordinate system centered at $\cb_I$.\footnote{In other word, we shift the center of the coordinate system $(x^1,x^2,x^3)$ to $\cb_I$.} Then, we have for $I\in\{1,2,\dots,N\}$:
\begin{align}
\begin{split}\label{BLcharges}
\E\left[(\gBL,0);A_{32}^{(I)}\right]&=8\pi m_I+O(m_IM+Md_I^{-1}),\\
\P_l\left[(\gBL,0);A_{32}^{(I)}\right]&=0,\\
\C_l\left[((\gBL,0);A_{32}^{(I)}\right]&=O(m_IM+Md_I^{-1}),\\
\J_l\left[(\gBL,0);A_{32}^{(I)}\right]&=0.
\end{split}
\end{align}
Moreover, we have the following Sobolev estimates:
\begin{equation}\label{Sobolev-bound}
\|(\gBL-e,0)\|_{H^{s}\times H^{s-1}(A_{32}^{(I)})}\les m_I.
\end{equation}
\end{prop}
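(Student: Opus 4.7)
The statement is essentially a restatement of Corollary \ref{cor:BL-annulus} and Proposition \ref{prop:BL-Sobolev}, with the only change being the choice of coordinate system. The plan is to exploit the translation invariance of the Brill--Lindquist geometry together with the transformation rules of the ADM-type charges under coordinate translations.

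First, I would introduce the shifted coordinate system $\widetilde{\x} := \x - \cb_I$, so that in these coordinates the pole $\cb_I$ sits at the origin $\bm{0}$ while every other pole $\cb_J$ ($J\ne I$) sits at $\cb_J - \cb_I$, still separated from the origin by at least $d_I$. In the shifted frame, the annulus $A_{32}^{(I)}$ becomes the standard annulus $A_{32} = B_{64}\setminus \overline{B_{32}}$. The metric $\gBL$ is a function of the Euclidean distances $|\x - \cb_I|$ and $|\x - \cb_J|$, so under translation it is pulled back to a metric of the same Brill--Lindquist form with masses $m_I$ and new pole positions $\widetilde{\cb}_J = \cb_J - \cb_I$. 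In particular, the small-mass and large-separation condition \eqref{dfMd} is preserved.

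Next, I would apply Corollary \ref{cor:BL-annulus} to the shifted data. The formulas \eqref{annulus-EE}, \eqref{annulus-PP}, \eqref{annulus-JJ} for $\E$, $\P_l$, $\J_l$ do not involve the position of the pole and thus yield the first, second, and fourth identities of \eqref{BLcharges} immediately. For the center-of-mass formula \eqref{annulus-CC}, the pole of interest in the shifted coordinates is at $\bm{0}$, so $(\widetilde{\cb}_I)_l = 0$, and the $(\cb_I)_l(8\pi m_I + O(\cdots))$ term vanishes. Only the error term $O(m_IM + Md_I^{-1})$ survives, which is exactly the third line of \eqref{BLcharges}.

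For the Sobolev bound \eqref{Sobolev-bound}, I would invoke Proposition \ref{prop:BL-Sobolev} directly: its proof is based on pointwise bounds of the form $|\pr^{\a}(g-e)| \lesssim m_I + Md_I^{-1}$ on $A_{32}^{(I)}$, which are intrinsic to the annulus and the pole $\cb_I$ and therefore invariant under the translation. Since $d_I \gg m_I^{-1}$ by \eqref{dfMd}, the $Md_I^{-1}$ contribution is absorbed into the $m_I$ bound, giving \eqref{Sobolev-bound} in the new coordinates.

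There is no substantive obstacle here: the entire content of the proposition is packaged in Section~\ref{secBL}, and the only thing to verify is that shifting the coordinate origin to $\cb_I$ eliminates the leading term in the center of mass while leaving the other charges and the Sobolev norm unchanged. The argument is therefore a one-paragraph translation-invariance check followed by quotation of the two results already proved in Section~\ref{secBL}.
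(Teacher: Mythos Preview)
Your proposal is correct and takes essentially the same approach as the paper: the paper's proof simply notes that \eqref{BLcharges} follows from Corollary~\ref{cor:BL-annulus} together with the fact that $\cb_I=\mathbf{0}$ in the shifted coordinate system, and that \eqref{Sobolev-bound} follows directly from Proposition~\ref{prop:BL-Sobolev}. Your write-up spells out the translation-invariance argument in more detail than the paper does, but the content is identical.
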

\begin{proof}
    Notice that \eqref{BLcharges} follows from Corollary \ref{cor:BL-annulus} the fact that $\cb_I=\mathbf{0}$ in this coordinate system. Moreover, \eqref{Sobolev-bound} follows directly from Proposition \ref{prop:BL-Sobolev}.
\end{proof}
\subsection{Gluing argument}\label{ssecglue}
We are now ready to construct the Cauchy initial data in Theorem \ref{mainCauchy}.
\begin{prop}\label{propglue}
For all $N\in\mathbb{N}$ and $s\geq 3$, there exists an initial data $(g,k)$ solving \eqref{constrainteq} and satisfying for all $I\in\{1,2,\dots,N\}$:
\begin{align*}
    (g,k)&=(\gBL,0)\qquad \mbox{ in }\;B_{32}^c(\cb_I),\\
    (g,k)&=(e,0)\qquad\quad\;\mbox{ in }\;B_{1-2\de_I}(\cb_I),\\
    \|(g-e,k)\|^2_{H^s\times H^{s-1}(B_{64}(\cb_I)\setminus \ov{B_{1}(\cb_I)})}&\les m_I.
\end{align*}
\end{prop}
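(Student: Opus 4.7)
The plan is to construct $(g,k)$ by induction on the pole index $J\in\{1,2,\dots,N\}$, applying Theorem \ref{thm:MOT1.7} at each pole $\cb_J$ to glue the constant-time slice $\Si_{\de_J,a_J}$ of Theorem \ref{interiorsolution} to the Brill-Lindquist metric $\gBL$ across the annular region $B_{64}(\cb_J)\setminus\ov{B_1(\cb_J)}$. The large-separation condition $d_I\gg m_I^{-1}$ in \eqref{dfMd} ensures that the coordinate balls $B_{64}(\cb_I)$ are mutually disjoint, so each gluing surgery is supported in a region that does not interact with the others; this is what permits the inductive construction to proceed one pole at a time without disturbing previously glued structures.

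At each step, I would first fix the coordinate chart $(x^1,x^2,x^3)$ centered at $\cb_J$, pick parameters $(\de_J,a_J)$ with $a_J\gg m_J^{-1}\gg 1$ and $0<\de_J\leq a_J^{-2}$, and set
\begin{equation*}
(g_{in},k_{in}):=(g_{\de_J,a_J},k_{\de_J,a_J}),\qquad (g_{out},k_{out}):=(g^{J-1},k^{J-1}),
\end{equation*}
where $(g^{J-1},k^{J-1})$ denotes the solution constructed at the previous inductive step (and $(g^0,k^0):=(\gBL,0)$). The key point is that $(g_{out},k_{out})$ coincides with $\gBL$ on $A_{32}^{(J)}$ by the locality of the previous surgeries, so the exterior charges on $A_{32}^{(J)}$ are given by Proposition \ref{exteriorQ}. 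I would then compute
\begin{equation*}
\De\Q^{(J)}:=\Q\left[(g_{out},k_{out});A_{32}^{(J)}\right]-\Q\left[(g_{in},k_{in});A_1^{(J)}\right],
\end{equation*}
and verify the four hypotheses \eqref{eq:obs-free-unit:EP}--\eqref{eq:obs-free-unit:CJ}: combining Propositions \ref{ginkin} and \ref{exteriorQ} with \eqref{smallconstant} yields
\begin{equation*}
\De\E=8\pi m_J+O(a_J^{-1}+m_JM+Md_J^{-1})\simeq m_J,\qquad |\De\P,\De\C,\De\J|\les a_J^{-1}+m_JM+Md_J^{-1}\ll m_J,
\end{equation*}
so all four smallness/positivity conditions hold with universal choices of $\ep_o$, $\mu_o$ and $\Ga$. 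The Sobolev smallness \eqref{eq:obs-free-unit:data} follows from \eqref{interiorSobolev} and \eqref{Sobolev-bound}:
\begin{equation*}
\|(g_{in}-e,k_{in})\|_{H^s\times H^{s-1}(A_1^{(J)})}^2+\|(g_{out}-e,k_{out})\|_{H^{s}\times H^{s-1}(A_{32}^{(J)})}^2\les a_J^{-2}+m_J^2\ll m_J\simeq \De\E.
\end{equation*}
Thus Theorem \ref{thm:MOT1.7} produces $(g^J,k^J)$ on $B_{64}(\cb_J)\setminus\ov{B_1(\cb_J)}$ that agrees with $(g_{in},k_{in})$ on $A_1^{(J)}$ and with $(g_{out},k_{out})$ on $A_{32}^{(J)}$, with the quantitative bound $\|(g^J-e,k^J)\|^2_{H^s\times H^{s-1}(B_{64}(\cb_J)\setminus\ov{B_1(\cb_J)})}\les\De\E\les m_J$, which is exactly \eqref{64-1control}.

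After $N$ iterations the desired triple $(g,k):=(g^N,k^N)$ is obtained, and the interior matching $(g,k)=(e,0)$ on $B_{1-2\de_I}(\cb_I)$ follows from the corresponding property of $\Si_{\de_I,a_I}$ stated in Theorem \ref{interiorsolution}. The only real obstacle in carrying this out is bookkeeping the smallness constants: one must check that the hierarchy in \eqref{smallconstant}, together with the universal $\ep_o(s,\Ga)$ and $\mu_o(s,\Ga)$ produced by Theorem \ref{thm:MOT1.7}, can be simultaneously respected at every pole independently of $N$. This is harmless because $\Ga$ may be fixed a priori (say $\Ga=2$), the mass parameters $m_I$ are bounded by $M\ll 1$, and the parameters $a_I$, $d_I$ may be chosen arbitrarily large relative to $m_I^{-1}$; hence the smallness and positivity thresholds of Theorem \ref{thm:MOT1.7} are satisfied uniformly in $I$, closing the induction and completing the proof of Proposition \ref{propglue}.
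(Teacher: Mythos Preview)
Your proposal is correct and follows essentially the same inductive scheme as the paper: perform surgery pole by pole, at each step applying Theorem \ref{thm:MOT1.7} with $(g_{in},k_{in})=(g_{\de_J,a_J},k_{\de_J,a_J})$ and $(g_{out},k_{out})$ equal to the previously constructed solution (which agrees with $\gBL$ on $A_{32}^{(J)}$ by disjointness of the $B_{64}(\cb_I)$), verifying the charge and Sobolev hypotheses exactly via Propositions \ref{ginkin} and \ref{exteriorQ}, and fixing $\Ga=2$ so that the thresholds $\ep_o,\mu_o$ are uniform in $I$. The paper organizes the argument by first writing out the $J=1$ step in full detail before stating the induction, but the content is identical to what you describe.
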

\begin{proof}
The idea of the proof is to perform surgery for the Brill-Lindquist metric $\gBL$ near $\cb_I$. More precisely, we apply the obstruction-free gluing theorem for annuli, as established in \cite{MOT} (restated here as Theorem \ref{thm:MOT1.7}), to patch the constant-time slices $(\Si_{\de_I,a_I},g_{\de_I,a_I},k_{\de_I,a_I})$ constructed in Theorem \ref{interiorsolution}. The proof is divided into 4 steps.\\ \\
{\bf Step 1.} Difference of interior charges and exterior charges.\\ \\
We denote
\begin{align*}
    (g_{in},k_{in}):=(g_{\de_1,a_1},k_{\de_1,a_1}),\qquad\quad (g_{out},k_{out})=(\gBL,0),
\end{align*}
where $(\Si_{\de_1,a_1},g_{\de_1,a_1},k_{\de_1,a_1})$ is the constant-time slice constructed in Theorem \ref{interiorsolution}. Let $(x^1,x^2,x^3)$ be a coordinate system centered at $\cb_1$. We have from Proposition \ref{exteriorQ}
\begin{align*}
\E\left[(g_{out},k_{out});A_{32}^{(1)}\right]&=8\pi m_1+O(m_1M+Md_1^{-1}),\\
\P_l\left[(g_{out},k_{out});A_{32}^{(1)}\right]&=0,\\
\C_l\left[((g_{out},k_{out});A_{32}^{(1)}\right]&=O(m_1M+Md_1^{-1}),\\
\J_l\left[(g_{out},k_{out});A_{32}^{(1)}\right]&=0.
\end{align*}
We also have from \eqref{interiorQ}
\begin{align*}
    \Q[(g_{in},k_{in});A_1]=O(a_1^{-1}),\qquad\quad \Q=(\E,\P,\C,\J).
\end{align*}
Thus, we obtain\footnote{Recall that $A\simeq B$ means $B\les A\les B$.}
\begin{align}\label{DeEestimate}
    \begin{split}
    \De\E&=\E\left[(g_{out},k_{out});A_{32}^{(1)}\right]-\E\left[(g_{in},k_{in});A_1^{(1)}\right]\\
    &=8\pi m_1+O(m_1M+Md_1^{-1}+a_1^{-1})\\
    &\simeq m_1,
    \end{split}
\end{align}
where we used $M\ll 1$, $d_1^{-1}\ll m_1$ and $a_1^{-1}\ll m_1$. We also have
\begin{align}\label{DePCJestimate}
    (\De\P,\De\C,\De\J)=O(m_1M+Md_1^{-1}+a_1^{-1}).
\end{align}
As an immediate consequence of \eqref{DeEestimate} and \eqref{DePCJestimate}, we have
\begin{align*}
    \De\E\gg |\De\P|,\qquad\quad\De\E\ll 1,\qquad\quad|\De\C|+|\De\J|\ll\De\E.
\end{align*}
These imply that the conditions \eqref{eq:obs-free-unit:EP}--\eqref{eq:obs-free-unit:CJ} in Theorem \ref{thm:MOT1.7} hold.\footnote{Let $\Ga=2$ in Theorem \ref{thm:MOT1.7}, the constant $\eps_o$ become uniform constant which only depends on $s$.} \\ \\
{\bf Step 2.} Estimates for Sobolev norms.\\ \\
We have from \eqref{interiorSobolev} and \eqref{Sobolev-bound}
\begin{align*}
    \|(g_{in}-e,k_{in})\|_{H^s\times H^{s-1}(A_1^{(1)})}&\les a_1^{-1},\\
    \|(g_{out}-e,k_{out})\|_{H^s\times H^{s-1}(A_{32}^{(1)})}&\les m_1.
\end{align*}
Thus, we obtain for $a_1^{-1}\ll m_1\ll 1$
\begin{align*}
    \|(g_{in}-e,k_{in})\|_{H^s\times H^{s-1}(A_1^{(1)})}^2+\|(g_{out}-e,k_{out})\|_{H^s\times H^{s-1}(A_{32}^{(1)})}^2\les a_1^{-2}+m_1^2\ll m_1\simeq \De\E,
\end{align*}
which implies that \eqref{eq:obs-free-conc} in Theorem \ref{thm:MOT1.7} holds.\\ \\
{\bf Step 3.} Gluing construction near $\cb_1$.\\ \\
Since the conditions \eqref{eq:obs-free-unit:EP}--\eqref{eq:obs-free-conc} in Theorem \ref{thm:MOT1.7} hold, we have from Theorem \ref{thm:MOT1.7} that there exists a solution of \eqref{constrainteq}
$$
(\gt,\kt)\in H^{s}\times H^{s-1}\left(B_{64}(\cb_1)\setminus\ov{B_1(\cb_1)}\right),
$$
which satisfies\footnote{Recall that $A_r^{(I)}=B_{2r}(\cb_I)\setminus\ov{B_r(\cb_I)}$.}
\begin{align*}
    (\gt,\kt)=(g_{\de_1,a_1},k_{\de_1,a_1}) \quad\mbox{ in }\;A_1^{(1)},\qquad (\gt,\kt)=(\gBL,0) \quad\mbox{ in }\; A_{32}^{(1)},
\end{align*}
and the following estimate hold:
\begin{equation}
    \|(\gt-e,\kt)\|_{H^s\times H^{s-1}\left(B_{64}(\cb_1)\setminus\ov{B_1(\cb_1)}\right)}^2\les\De\E\simeq m_1.
\end{equation}
We then define $(g^1,k^1)$ as follows:
        \begin{align*}
            (g^1,k^1)&=(\gBL,0)\qquad\qquad\;\mbox{ in }\;B_{32}^c(\cb_1),\\
            (g^1,k^1)&=(\gt,\kt)\qquad\quad\qquad\;\,\mbox{ in }\;B_{64}(\cb_1)\setminus\ov{B_1(\cb_1)},\\
            (g^1,k^1)&=(g_{\de_1,a_1},k_{\de_1,a_1})\quad\;\;\;\,\mbox{ in }\;B_{1}(\cb_1),
        \end{align*}
which is well defined in $\RRR^3\setminus\{\cb_I\}_{I=2}^N$ and solves \eqref{constrainteq}. Moreover, we have by construction
\begin{align*}
    (g^1,k^1)&=(\gBL,0)\qquad \mbox{ in }\;B_{32}^c(\cb_1),\\
    (g^1,k^1)&=(e,0)\qquad\quad\;\;\,\mbox{ in }\;B_{1-2\de_1}(\cb_1),\\
    \|(g^1-e,k^1)\|^2_{H^s\times H^{s-1}(B_{64}(\cb_1)\setminus \ov{B_{1}(\cb_1)})}&\les m_1.
\end{align*}
{\bf Step 4.} Gluing construction of $(g,k)$.\\ \\
We now construct the desired Cauchy data $(g,k)$ by induction. Assume that there exists $(g^J,k^J)$ solving \eqref{constrainteq} in $\RRR^3\setminus\{\cb_I\}_{I=J+1}^N$ and satisfies for all $I\in\{1,2,\dots,J\}$:
\begin{align*}
    (g^J,k^J)&=(\gBL,0)\qquad \mbox{ in }\;B_{32}^c(\cb_I),\\
    (g^J,k^J)&=(e,0)\qquad\quad\;\mbox{ in }\;B_{1-2\de_I}(\cb_I),\\
    \|(g^J-e,k^J)\|^2_{H^s\times H^{s-1}(B_{64}(\cb_I)\setminus \ov{B_{1}(\cb_I)})}&\les m_I.
\end{align*}
Then, we introduce a coordinate system $(x^1,x^2,x^3)$ centered at $\cb_{J+1}$. We denote
\begin{align*}
    (g_{in},k_{in}):=(g_{\de_{J+1},a_{J+1}},k_{\de_{J+1},a_{J+1}}),\qquad\quad (g_{out},k_{out}):=(g^J,k^J).
\end{align*}
Notice from $d_J\gg 1$ that we have $(g_{out},k_{out})=(\gBL,0)$ in $A_{32}^{(J+1)}$, proceeding as in Step 1, we infer
\begin{align*}
    \De\E&\simeq m_{J+1},\qquad\quad|\De\P,\De\C,\De\J|\les m_{J+1}M+Md_{J+1}^{-1}+a_{J+1}^{-1}.
\end{align*}
Thus, we obtain that \eqref{eq:obs-free-unit:EP}--\eqref{eq:obs-free-unit:CJ} in Theorem \ref{thm:MOT1.7} hold. Similarly, proceeding as in Step 2, we have
\begin{align*}
    \|(g_{in}-e,k_{in})\|_{H^s\times H^{s-1}(A_1^{(J+1)})}&\les a_{J+1}^{-1},\\
    \|(g_{out}-e,k_{out})\|_{H^s\times H^{s-1}(A_{32}^{(J+1)})}&\les m_{J+1},
\end{align*}
which implies \eqref{eq:obs-free-unit:data} in Theorem \ref{thm:MOT1.7} holds. Applying Theorem \ref{thm:MOT1.7} and proceeding as in Step 3, we deduce the existence of $(g^{J+1},k^{J+1})$, which solves \eqref{constrainteq} and satisfies for all $I\in\{1,2,\dots,J+1\}$:
\begin{align*}
    (g^{J+1},k^{J+1})&=(\gBL,0)\qquad\;\; \mbox{ in }\;B_{32}^c(\cb_I),\\
    (g^{J+1},k^{J+1})&=(e,0)\qquad\quad\;\;\;\mbox{ in }\;B_{1-2\de_I}(\cb_I),\\
    \|(g^{J+1}-e,k^{J+1})\|^2_{H^s\times H^{s-1}(B_{64}(\cb_I)\setminus \ov{B_{1}(\cb_I)})}&\les m_I.
\end{align*}
We have by induction that there exists $(g,k):=(g^N,k^N)$ solving \eqref{constrainteq} and satisfies for all $I\in\{1,2,\dots,N\}$:
\begin{align*}
    (g,k)&=(\gBL,0)\qquad\mbox{ in }\;B_{32}^c(\cb_I),\\
    (g,k)&=(e,0)\qquad\quad\;\mbox{ in }\;B_{1-2\de_I}(\cb_I),\\
    \|(g-e,k)\|^2_{H^s\times H^{s-1}(B_{64}(\cb_I)\setminus \ov{B_1(\cb_I)})}&\les m_I.
\end{align*}
This concludes the proof of Proposition \ref{propglue}.
\end{proof}
\subsection{Formation of multiple trapped surfaces}\label{ssecNtrap}
\begin{prop}\label{multipletrapped}
Let $(\Si,g,k)$ be the Cauchy initial data constructed in Proposition \ref{propglue}. Then, for any $I\in\{1,2,\dots,N\}$, a trapped surface will form in the future domain of dependence of $B_1(\cb_I)$.
\end{prop}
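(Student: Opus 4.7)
The plan is to reduce the statement to the trapped surface formation result already established in Theorem \ref{shortpulsecone}, combined with geometric uniqueness for \eqref{EVE} and a direct computation of the causal past in double-null coordinates.

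First, by the gluing construction of Proposition \ref{propglue}, the Cauchy data $(g,k)$ coincide with $(g_{\de_I,a_I},k_{\de_I,a_I})$ on $B_1(\cb_I)$, and the latter are a constant-time slice $\Si_{\de_I,a_I}=\{u+\ub=-1+2\de_I\}$ of the short-pulse/transition spacetime $(\M_I,\g_I)$ produced in Theorems \ref{shortpulsecone} and \ref{mainstability}. By the standard Cauchy stability and geometric uniqueness theorem for \eqref{EVE}, the maximal future development of $(\Si,g,k)$ restricted to $D^+(B_1(\cb_I))$ is isometric to the future development of $B_1(\cb_I)\cap\Si_{\de_I,a_I}$ computed inside $(\M_I,\g_I)$. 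It therefore suffices to exhibit a trapped surface inside this latter domain of dependence.

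Next, I will take as candidate the $2$--sphere $S_{-\de_I a_I/4,\de_I}\subset\M_I$, which Theorem \ref{shortpulsecone} already shows to be trapped. Since $\de_I a_I\leq 1$, the point $(u,\ub)=(-\de_I a_I/4,\de_I)$ satisfies $u+\ub>-1+2\de_I$, so this sphere lies strictly to the future of $\Si_{\de_I,a_I}$. Because $u$ and $\ub$ are optical functions whose values both increase along any future-directed causal curve, the causal past of $S_{-\de_I a_I/4,\de_I}$ in $(\M_I,\g_I)$ is contained in $\{u\leq-\de_I a_I/4,\;\ub\leq\de_I\}$. On the slice $\Si_{\de_I,a_I}$ one has $r=\ub-u$ and $\ub=(r-1+2\de_I)/2$, so the bound $\ub\leq\de_I$ is equivalent to $r\leq 1$, i.e.\ to lying in $\ov{B_1(\cb_I)}$. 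Consequently the causal past of $S_{-\de_I a_I/4,\de_I}$ meets $\Si_{\de_I,a_I}$ only inside $\ov{B_1(\cb_I)}$, which yields $S_{-\de_I a_I/4,\de_I}\subset D^+(B_1(\cb_I)\cap\Si_{\de_I,a_I})$ inside $(\M_I,\g_I)$.

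Transporting this picture via the isometry from the first step produces a trapped $2$--sphere inside $D^+(B_1(\cb_I))$ in the maximal future development of $(\Si,g,k)$; the trapped conditions $\trch<0$ and $\trchb<0$ are intrinsic to the $2$--sphere together with the ambient null frame, so they transfer under the isometry. The main, and essentially the only, obstacle will be the bookkeeping in double-null coordinates needed to verify that $B_1(\cb_I)\cap\Si_{\de_I,a_I}$ is exactly the subset $\{\ub\leq\de_I\}$ of $\{u+\ub=-1+2\de_I\}$, and that $(-\de_I a_I/4,\de_I)$ indeed lies strictly above the slice; both follow from the definition $r=\ub-u$ fixed in Theorem \ref{mainstability} and from the smallness $\de_I a_I\leq 1$. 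Beyond this, the argument requires no new analytic input beyond the estimates already furnished by Sections \ref{sectrapped} and \ref{secstability}.
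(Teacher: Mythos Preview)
Your proposal is correct and follows essentially the same approach as the paper: both reduce to Theorem \ref{shortpulsecone} by observing that on $B_1(\cb_I)$ the glued data coincide with the short-pulse slice and that $D^+(B_1(\cb_I))$ is causally independent of the complement, so the trapped sphere $S_{-\de_I a_I/4,\de_I}$ already produced there lies inside. The paper's proof is a three-line assertion of this fact, whereas you spell out the double-null bookkeeping showing $J^-(S_{-\de_I a_I/4,\de_I})\cap\Si_{\de_I,a_I}\subset\{r\leq 1\}$; this is a welcome elaboration, not a different route.
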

\begin{proof}
    For any $I\in\{1,2,\dots,N\}$, the ball $B_1(\cb_I)$ is obtained from the evolution of the characteristic initial data given in Theorem \ref{shortpulsecone}. The future domain of dependence $D^+(B_1(\cb_I))$ is causally independent of $B_1^c(\cb_I)$. Thus, we have from Theorem \ref{shortpulsecone} that a trapped surface will form in $D^+(B_1(\cb_I))$. Hence, $N$ trapped surfaces will form, respectively, in $D^+(B_1(\cb_I))$ for $I\in\{1,2,\dots,N\}$.
\end{proof}
\begin{rk}
    The conclusion of Proposition \ref{multipletrapped} also holds if we replace the lower bound condition \eqref{geqa} by the following anisotropic lower bound condition:
        \begin{align}\label{geqaaniso}
        \sup_{(x^1,x^2)\in\mathbb{S}^2}\int_0^\de |u_\infty|^2|\hch|^2(u_\infty,\ub',x^1,x^2)d\ub'\geq \de a.
    \end{align}
    In fact, the lower bound condition \eqref{geqa} is only used to ensure the formation of trapped surfaces. However, as a consequence of Theorem 3 in \cite{KLR}, the weaker condition \eqref{geqaaniso} also implies the formation of trapped surfaces.
\end{rk}
\subsection{Free of trapped surfaces}\label{ssecfree}
The following comparison principle for mean curvature plays an essential role in the proof of Proposition \ref{notrapped}.
\begin{lem}\label{meancurvaturecomparison}
    Let $S_1$ and $S_2$ be two closed $2$--surfaces in $(\Si,g)$. Assume that $S_1$ is enclosed in $S_2$ and they are tangent at $p\in S_1\cap S_2$. Then, we have
    \begin{align*}
        H_1(p)\geq H_2(p)
    \end{align*}
    where $H_i$ denotes the mean curvature of $S_i$ for $i=1,2$.
\end{lem}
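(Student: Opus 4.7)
The plan is to reduce the lemma to an elementary Hessian comparison at the single point $p$, using $g$-geodesic normal coordinates to eliminate all Christoffel and curvature contributions at $p$. Classical mean curvature comparison is purely local at $p$, and the curvature of the ambient metric plays no role once the coordinates are correctly chosen.

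First, I would introduce $g$-geodesic normal coordinates $(x^1, x^2, x^3)$ centered at $p$, with $\partial_3|_p$ aligned with the common $g$-unit outward normal $\nu$ to $S_1$ and $S_2$; this is well defined since $T_p S_1 = T_p S_2$ by the tangency hypothesis. In these coordinates $g_{ij}(p) = \delta_{ij}$ and all Christoffel symbols of $g$ vanish at $p$, so every $g$-geometric quantity evaluated at $p$ coincides with its Euclidean counterpart.

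Next, I would represent each $S_i$ locally near $p$ as a graph $x^3 = f_i(x^1, x^2)$ over the common tangent plane $\{x^3 = 0\}$. The tangency condition gives $f_i(0) = 0$ and $\nabla f_i(0) = 0$ for $i = 1, 2$. The enclosure hypothesis, together with the choice of $\nu$ as the common outward normal, identifies the region enclosed by $S_i$ locally with $\{x^3 < f_i\}$; the inclusion of enclosed regions then forces $f_1 \leq f_2$ in a neighborhood of the origin. Setting $w := f_2 - f_1$, one has $w \geq 0$ with $w(0) = 0$ and $\nabla w(0) = 0$, so the origin is a local minimum of $w$, and therefore $D^2 w(0) \geq 0$ as symmetric bilinear forms on $T_p S_1$.

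Finally, at $p$ the mean curvature of the graph $x^3 = f_i$ with respect to the outward normal $\nu$ reduces to $H_i(p) = -\tr D^2 f_i(0)$: the denominator $\sqrt{1 + |\nabla f_i|^2}$ equals $1$ at the origin because $\nabla f_i(0) = 0$, and the Christoffel corrections appearing in the general formula vanish at $p$ in geodesic normal coordinates. Taking the trace of the Hessian inequality above then yields $H_1(p) - H_2(p) = -\tr D^2 f_1(0) + \tr D^2 f_2(0) = \tr D^2 w(0) \geq 0$, as desired. The only real subtlety is the bookkeeping of sign conventions—choosing a consistent outward normal, correctly identifying the side on which each $S_i$ encloses its interior, and fixing the sign in the graph formula for $H$; once these are fixed, the argument is purely local and algebraic at $p$, with no global or curvature-dependent input.
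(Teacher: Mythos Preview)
Your proposal is correct and follows essentially the same approach as the paper: represent both surfaces locally as graphs over the common tangent plane, use tangency to get vanishing value and gradient at $p$, use the enclosure to get a sign on the difference of graph functions and hence positive semidefiniteness of its Hessian at $p$, and reduce the mean curvature at $p$ to the Laplacian of the graph function. The only cosmetic difference is that you work in geodesic normal coordinates so that the Hessian is Euclidean, whereas the paper writes $H_i(p)=\Delta_g u_i(p)$ directly; your explicit flagging of the sign bookkeeping is, if anything, more careful than the paper's treatment.
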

\begin{proof}
Let $\nu$ be the common outward unit normal at $p$. In a local coordinate chart around $p$, we write $S_i$ as the graph of a function $u_i:U \subset \Si \to \RRR$ over the common tangent plane, so that $u_1 \geq u_2$, with equality and zero gradient at $p$. Recall that the mean curvature at $p$ is given by
\[H_i(p) = \div_g \left( \frac{\nab u_i}{\sqrt{1+|\nab u_i|^2}}\right)(p).\]
Since $\nab u_1(p) = \nab u_2(p) = 0$, this reduces to $H_i(p)=\De_g u_i(p)$, where $\De_g$ is the Laplacian with respect to $g$. As $u_1-u_2\geq 0$ with a local minimum at $p$, we have
\begin{equation*}
\De_g u_1(p) \geq \De_g u_2(p).
\end{equation*}
This concludes the proof of Lemma \ref{meancurvaturecomparison}.
\end{proof}
We are now ready to prove the following proposition.
\begin{prop}\label{notrapped}
    The initial data $(\Si,g,k)$ constructed in Proposition \ref{propglue} are free of trapped surfaces. Moreover, $B_{d_I-32}(\cb_I)$ are free of MOTS for any $I\in\{1,2,\dots,N\}$.
\end{prop}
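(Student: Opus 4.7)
The plan is to apply the mean curvature comparison principle (Lemma~\ref{meancurvaturecomparison}) at the outermost point of any putative trapped surface (or MOTS), using the coordinate spheres $S_R^{(I)}:=\{|\x-\cb_I|=R\}$ as mean-convex barriers. The argument follows the standard comparison scheme, but requires verifying the strict mean-convexity $H_{S_R^{(I)}}>\tr_\slg k$ separately in three distinct subregions of the glued data (short-pulse annulus, barrier annulus, gluing region), in addition to the pure Brill--Lindquist region already handled by Proposition~\ref{meanconvex_local}.

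First, I would localize the support of any trapped surface. The trapped conditions $\tr_\slg(-\th-k)<0$ and $\tr_\slg(\th-k)<0$ together force $\tr_\slg k>|\tr_\slg\th|\ge 0$ pointwise on $S$, so any trapped surface must sit entirely in the open set $\{k\ne 0\}$. By Proposition~\ref{propglue}, $(g,k)=(e,0)$ in each $B_{1-2\de_I}(\cb_I)$ and $(g,k)=(\gBL,0)$ outside $\bigcup_I B_{32}(\cb_I)$, hence
\[
\{k\ne 0\}\subseteq\bigcup_{I=1}^N\Big(B_{32}(\cb_I)\setminus B_{1-2\de_I}(\cb_I)\Big).
\]
The large-separation condition $d_I\gg 64$ makes these annuli pairwise disjoint, so any connected trapped surface $S$ lies entirely in a single annulus $B_{32}(\cb_I)\setminus B_{1-2\de_I}(\cb_I)$.

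Given such $S$, I would let $p\in S$ attain $R_*:=\max_{\x\in S}|\x-\cb_I|\in(1-2\de_I,32]$. At $p$ the surface $S$ sits inside $\overline{B_{R_*}(\cb_I)}$ and is tangent to $S_{R_*}^{(I)}$ with matching outward unit normals; since $\tr_\slg k(p)$ depends only on the common tangent plane, it agrees for the two surfaces at $p$. Lemma~\ref{meancurvaturecomparison} then yields $H_S(p)\ge H_{S_{R_*}^{(I)}}(p)$, reducing the proof to the pointwise bound $H_{S_{R_*}^{(I)}}(p)>\tr_\slg k(p)$. I would verify this in three subregimes: (i) for $R_*\in(1-2\de_I,1]$ this is exactly \eqref{notrappingde} of Theorem~\ref{interiorsolution}; (ii) for $R_*\in(1,2]$ the estimate \eqref{diffge} together with the Sobolev embedding $H^s\hookrightarrow C^1$ (valid in $\RRR^3$ for $s\ge 3$) gives $\|g-e\|_{C^1}+\|k\|_{C^0}\les a_I^{-1}$, so $H_{S_{R_*}^{(I)}}=\frac{2}{R_*}+O(a_I^{-1})$ dominates $|\tr_\slg k|=O(a_I^{-1})$ since $a_I^{-1}\ll 1$; (iii) for $R_*\in(2,32]$ the estimate \eqref{64-1control} and the same embedding yield $\|g-e\|_{C^1}+\|k\|_{C^0}\les\sqrt{m_I}$, so $H_{S_{R_*}^{(I)}}=\frac{2}{R_*}+O(\sqrt{m_I})$ dominates $|\tr_\slg k|=O(\sqrt{m_I})$ since $\sqrt{m_I}\ll 1/16\le 2/R_*$. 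In every case $H_S(p)>\tr_\slg k(p)$, contradicting the trapped inequality $H_S(p)<\tr_\slg k(p)$. The MOTS statement in $B_{d_I-32}(\cb_I)$ follows by the same scheme with the strict inequality replaced by the marginal equality $H_S(p)=\tr_\slg k(p)$; the only new regime $R_*\in(32,d_I-32]$ lies in the pure Brill--Lindquist region ($k\equiv 0$), and Proposition~\ref{meanconvex_local} applies (since $m_I/2\ll 32\le R_*<d_I$) to give $H_{S_{R_*}^{(I)}}(p)>0=\tr_\slg k(p)$, again contradicting the MOTS equality.

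The principal technical point is the upgrade from the $H^s\times H^{s-1}$ control of \eqref{diffge} and \eqref{64-1control} to the pointwise $C^1\times C^0$ control of $(g-e,k)$ used in subregimes (ii)--(iii); this is precisely where the regularity assumption $s\ge 3$ enters, so that $H^s\hookrightarrow C^1$ in dimension three. Once that pointwise control is available, the rest is a clean geometric application of the mean curvature comparison principle along the mean-convex foliation by coordinate spheres centered at each pole.
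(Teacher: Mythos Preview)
Your proof is correct and follows essentially the same strategy as the paper: rule out trapped surfaces (and MOTS) by comparing mean curvature at the outermost point against the mean-convex foliation of coordinate spheres centered at $\cb_I$, using the same regime-by-regime bounds (\eqref{notrappingde}, \eqref{diffge}/\eqref{64-1control} via Sobolev embedding, and Proposition~\ref{meanconvex_local}).

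The one genuine organizational difference is in how you localize. The paper splits into two steps: Step~1 handles surfaces contained in $B_{d_I-32}(\cb_I)$ by the comparison argument, and Step~2 separately disposes of any surface touching the exterior $\Si_{ext}=\bigcap_I B_{32}^c(\cb_I)$ via the algebraic identity $\tr_\slg(-\th-k)\tr_\slg(\th-k)=-(\tr_\slg\th)^2\le 0$ when $k=0$. You instead observe up front that the trapped (or MOTS) inequalities force $\tr_\slg k>0$ pointwise, hence any such surface lies entirely in $\{k\ne 0\}\subseteq\bigcup_I\big(B_{32}(\cb_I)\setminus B_{1-2\de_I}(\cb_I)\big)$, which by the large-separation assumption confines a connected component to a single annulus. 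This is a clean replacement for the paper's Step~2 and slightly streamlines the argument. A minor imprecision: in your MOTS paragraph you add the ``new regime'' $R_*\in(32,d_I-32]$, but your own localization already forces $R_*<32$ for MOTS as well (since you showed $\tr_\slg k>0$ pointwise there too); the extra case is harmless but redundant. Your subregime (ii) could likewise be absorbed into (i) via \eqref{notrappingde} or into (iii) via \eqref{64-1control}, but splitting it out does no harm.
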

Motivated by Conjecture 3.1 in \cite{SormaniStavrov}, which states that in (vacuum) geometrostatic manifolds with small mass and large separation, the only outermost minimal surfaces are those enclosing individual poles, we pose the following question regarding the global geometry of our constructed initial data:
\begin{que}[Absence of MOTS in the constructed initial data]
Is it true that the initial data $(\Si, g, k)$ constructed in Proposition~\ref{propglue} contains no marginally outer trapped surfaces (MOTS) anywhere?
\end{que}
\begin{proof}[Proof of Proposition \ref{notrapped}] The proof is divided into 2 steps.\\ \\
{\bf Step 1.} Absence of trapped surfaces and MOTS in $B_{d_I-32}(\cb_I)$.\\ \\
Let $I\in\{1,2,\dots,N\}$. We have from Proposition \ref{propglue}
\begin{align*}
     \|(g-e,k)\|^2_{H^s\times H^{s-1}(B_{64}(\cb_I)\setminus \ov{B_{1}(\cb_I)})}\les m_I.
\end{align*}
Consider the radial foliation given by $r=|\x|$, we obtain for $s\geq 3$
\begin{align*}
    |k|\les M^\frac{1}{2},\qquad\quad \left|\tr_\slg\th-\frac{2}{r}\right|\les M^\frac{1}{2}\qquad\mbox{ on }\;\pr B_r(\cb_I),\quad\forall\; r\in(1,64),
\end{align*}
where $\tr_\slg\th$ denotes the mean curvature of $\pr B_r(\cb_I)$. Thus, we obtain for $M\ll 1$
\begin{align*}
    \tr_\slg\th>\frac{1}{40}>0\qquad\mbox{ on }\;\pr B_r(\cb_I),\quad\forall\; r\in(1,64).
\end{align*}
We also have from \eqref{trchctrchbcnotrapping} and \eqref{chichib} that for $a_I\gg 1$ 
\begin{align*}
    \tr_\slg\th=\trch-\trchb>1\qquad\mbox{ on }\;\pr B_r(\cb_I),\quad\forall\; r\in(0,1].
\end{align*}
Combining the above inequalities, we infer
\begin{align}
    \tr_\slg\th>\frac{1}{40}>0\qquad\mbox{ on }\;\pr B_r(\cb_I),\quad\forall\; r\in(0,64).
\end{align}
Next, recalling from Proposition \ref{propglue} that $(g,k)=(\gBL,0)$ in $B_{d_I-32}(\cb_I)\setminus\ov{B_{32}(\cb_I)}$, we have from Proposition \ref{meanconvex_local}
\begin{align}\label{extnomots}
    \tr_\slg\th>0\qquad\mbox{ on }\;\pr B_r(\cb_I),\quad\forall\; r\in(32,d_I-32).
\end{align}
For any $S$ be a compact $2$--surface embedded in $\Si$ satisfying $S\subseteq B_{d_I-32}(\cb_I)$, we denote $B_{r_S}(\cb_I)$ be the innermost ball centered at $\cb_I$ which includes $S$. Then, $S$ and $S_{r_S}:=\pr B_{r_S}(\cb_I)$ tangent at some point $p$. Applying Lemma \ref{meancurvaturecomparison}, we obtain
    \begin{align}\label{meancurvaturecom}
        \tr_\slg\th_{S}\big|_p\geq \tr_\slg\th_{S_r}\big|_p.
    \end{align}
In the case $r_S<64$, we have from \eqref{meancurvaturecom} and the fact that $|k|\les M^\frac{1}{2}$
\begin{align}\label{trch>0}
    \tr_\slg(\th_S-k_S)\big|_p>\frac{1}{50}>0.
\end{align}
In the case $r_S\in(32,d_I-32)$, we have from \eqref{extnomots}, \eqref{meancurvaturecom} and the fact that $k|_p=0$
\begin{align*}
    \tr_\slg(\th_S-k_S)\big|_p\geq\tr_\slg(\th_{S_r})>0.
\end{align*}
Thus, we deduce that $S$ is neither a trapped surface nor a MOTS.\\ \\
{\bf Step 2.} Absence of trapped surfaces in $\Si_{ext}:=\bigcup_{I=1}^N B_{32}^c(\cb_I)$.\\ \\
We have by construction in Proposition \ref{propglue}
    \begin{align*}
        (g,k)=(\gBL,0) \qquad\mbox{ in }\; \Si_{ext}.
    \end{align*}
    Let $S\subseteq\Si$ be a compact, embedded smooth $2$--surface satisfying $S\cap\Si_{ext}\ne\emptyset$. Notice that we have
    \begin{align*}
        \tr_\slg(-\th-k)\tr_\slg(\th-k)=(\tr_\slg k)^2-(\tr_\slg\th)^2,
    \end{align*}
    which implies at $p\in S\cap\Si_{ext}$
    \begin{align}\label{extnotrapping}
        \tr_\slg(-\th-k)\tr_\slg(\th-k)\big|_p=-(\tr_\slg\th)^2\big|_p+(\tr_\slg k)^2\big|_p=-(\tr_\slg\th)^2\big|_p\leq 0,
    \end{align}
    where we used the fact that $k|_p$. Note that \eqref{extnotrapping} contradicts \eqref{dftrapped} at $p\in S$. Thus, we deduce that $S$ is not a trapped surface. This concludes the proof of Proposition \ref{notrapped}.
\end{proof}
Combining Propositions~\ref{propglue}, \ref{multipletrapped}, and \ref{notrapped}, this concludes the proof of Theorem~\ref{mainCauchy}.

\vspace{0.1cm}
\small{Dawei Shen: Department of Mathematics, Columbia University, New York, NY, 10027. \\
Email: \textit{ds4350@columbia.edu}\\ \\
Jingbo Wan: Department of Mathematics, Columbia University, New York, NY, 10027.\\
Email: \textit{jingbowan@math.columbia.edu}}
\end{document}